\documentclass[onecolumn,12pt]{IEEEtran}
\usepackage{setspace}  
\doublespacing
\usepackage{fullpage}

\usepackage{color}
\usepackage{ifpdf}
\usepackage{graphicx}
\usepackage{amsmath}
\usepackage{amsthm}
\usepackage{amsfonts}
\usepackage{amssymb}
\usepackage{eucal}
\usepackage{subfigure}
\usepackage{array}
\usepackage{url}
\usepackage{multirow}
\usepackage{dblfloatfix}
\usepackage[colorlinks=true,pdfstartview=FitV,linkcolor=black,citecolor=black,urlcolor=blue,plainpages=false]{hyperref}
\usepackage[numbers,sort&compress]{natbib}
\graphicspath{{pic//}}
\usepackage{enumerate}
\newtheorem{theorem}{Theorem}
\newtheorem{lemma}{Lemma}
\newtheorem{remark}{Remark}

\newtheorem{corollary}{Corollary}

\usepackage{comment}
\usepackage[normalem]{ulem}

\usepackage{tikz}
\usetikzlibrary{arrows}
\usetikzlibrary{arrows,positioning}
\usetikzlibrary{decorations.markings}
\tikzset{
    >=stealth',
    punkt/.style={
           rectangle,
           rounded corners,
           draw=black, very thick,
           text width=6.5em,
           minimum height=2em,
           text centered},
    pil/.style={
           ->,
           thick,
           shorten <=2pt,
           shorten >=2pt,}
    pir/.style={
           <-,
           thick,
           shorten <=2pt,
           shorten >=2pt,}
}
\definecolor{KeynoteRed}{rgb}{.678,.051, .051}
\definecolor{KeynoteBlue}{rgb}{0.008, 0.443, 0.60}
\definecolor{KeynoteLightblue}{rgb}{.635, .914, .973}
\definecolor{KeynoteYellow}{rgb}{0.859, 0.584, 0.212}
\definecolor{KeynoteYellow}{rgb}{0.859, 0.584, 0.212}
\definecolor{KeynoteSlate}{rgb}{0.239, 0.271, 0.322}
\definecolor{KeynoteGray}{rgb}{0.498, 0.529, 0.529}
\definecolor{KeynoteTextGray}{rgb}{0.325, 0.325, 0.325}
\definecolor{KeynoteLightGray}{rgb}{0.706, 0.706, 0.706}
\definecolor{KeynoteBlueGray}{rgb}{0.471, 0.533, 0.620}
\definecolor{ECEpurple}{rgb}{.169, .18, .455}
\definecolor{ECEcyan}{rgb}{.41, .62, .72}
\definecolor{ECEgray}{rgb}{.788, .827, .859}
\definecolor{ECEblueGray}{rgb}{61.2, 70.6, 70.6}
\definecolor{ECEblueGray}{rgb}{61.2, 70.6, 70.6}
\definecolor{RiceBlue}{rgb}{0, .14, .41}

\title{Vector Bin-and-Cancel for MIMO Distributed Full-Duplex
}
\date {}

    \author{
      Jingwen Bai, Chris Dick and Ashutosh Sabharwal, \emph{Fellow, IEEE}\footnote{J. Bai and A. Sabharwal are with Department of Electrical and Computer Engineering
      Rice University, Houston, TX 77005, USA, e-mail:\{jingwen.bai,ashu\}@rice.edu. C. Dick is with Xilinx Inc., San Jose, CA, 95124 USA, e-mail: chris.dick@xilinx.com. This work was partially supported by NSF CNS-1012921, NSF CNS-1161596 Xilinx and Intel.}}

\vspace{15pt}

\begin{document}
\maketitle


\begin{abstract} 
In a multi-input multi-output~(MIMO) full-duplex network, where an in-band full-duplex infrastructure node communicates with two half-duplex mobiles supporting simultaneous up- and downlink flows, the inter-mobile interference between the up-  and downlink mobiles limits the system performance. We study the impact of leveraging an out-of-band side-channel between mobiles in such network under different channel models. For time-invariant channels, we aim to characterize the  generalized degrees-of-freedom~($\mathsf{GDoF}$) of the side-channel assisted MIMO full-duplex network. For slow-fading channels, we focus on the diversity-multiplexing tradeoff (DMT) of the system with various assumptions as to the availability of channel state information at the transmitter~(CSIT). The key to the optimal performance is a vector bin-and-cancel strategy leveraging Han-Kobayashi message splitting, which is shown to achieve the system capacity region to within a constant bit.
We quantify how the side-channel improve the $\mathsf{GDoF}$ and DMT compared to a system without the extra orthogonal spectrum. 
The insights gained from our analysis reveal: i) the tradeoff between spatial resources from multiple antennas at different nodes and spectral resources of the side-channel, and ii) the interplay between the channel uncertainty at the transmitter and use of the side-channel. 
\end{abstract} 


\section{introduction}
Increasingly, mobile devices have multiple radios to simultaneously access different parts of the spectrum, e.g. cellular and ISM bands. The ability of simultaneous  access to multiple parts of the spectrum provides an opportunity to use multiple bands in new and unique ways. A common method is to use the two bands to access both cellular and ISM band networks (notably WiFi) at the same time and is now an integral part of cellular provider data strategy to offload cellular traffic to WiFi networks~\cite{3Goffload}. In this paper, we will consider the use of  device-to-device~(D2D) wireless channels between mobile devices, to serve as side-channels to \emph{aid} main-channels communication with the infrastructure nodes.
For example, the main network could be on a cellular band while the wireless side-channel could be on an unlicensed ISM band. The conventional use of D2D involves establishing peer-to-peer communication \cite{D2Doverlay}, forming virtual MIMO by cooperative communication \cite{D2Dvmimo} or offloading cellular traffic~\cite{han2010cellular}. In contrast, we propose to use the D2D side-channel for interference management to improve the cellular capacity, a scenario which we labeled as \emph{ISM-in-cellular} communication~\cite{Jingwen,JingwenTWC,cellnet14}.

In this paper, we will study how the side-channel will impact the system performance in a two-user MIMO full-duplex network. 
In-band full-duplex operation promises to double the spectral efficiency as compared to the half-duplex counterpart which uses either time division or frequency division for transmission and reception. 
It is in fact feasible to design near-perfect full-duplex base stations owing to the available freedom (bigger size, non-battery-powered operation) in their designs~(e.g., see \cite{duarte2012design,Everett13Paper,ashu13} and the references therein).  And in-band full-duplex has already become part of the ongoing standard both in 3GPP~\cite{3gppFD}  and 802.11-ax~\cite{wifiMassiveMIMOFD}. Thus, we envision that the first use of full-duplex capabilities might be in small cell infrastructure~\cite{Smallcell}, supporting legacy half-duplex mobile nodes.

In Fig.~\ref{fig.1},  a full-duplex capable base station~(BS) communicates with two half-duplex mobiles simultaneously to support one uplink~(UL) and one downlink~(DL) flow. A major bottleneck in this network is the inter-mobile interference from uplink mobile (node M1) to downlink mobile node (node M2), because of which the degrees-of-freedom of the network collapse to one when all nodes are equipped with single antenna (SISO)~\cite{JingwenTWC}. As a result, we proposed a \emph{distributed full-duplex architecture}~\cite{JingwenTWC} to leverage the wireless side-channels to mitigate inter-mobile interference. In the case of MIMO scenario, one driving question is if and how the spatial degree-of-freedom, i.e. number of antennas at the base station and mobiles, will be correlated to the spectral degrees-of-freedom offered by the side-channel. 
\begin{figure}[h!]
  \centering
    \includegraphics[width=0.35\textwidth,trim = 70mm
      50mm 65mm 35mm, clip]{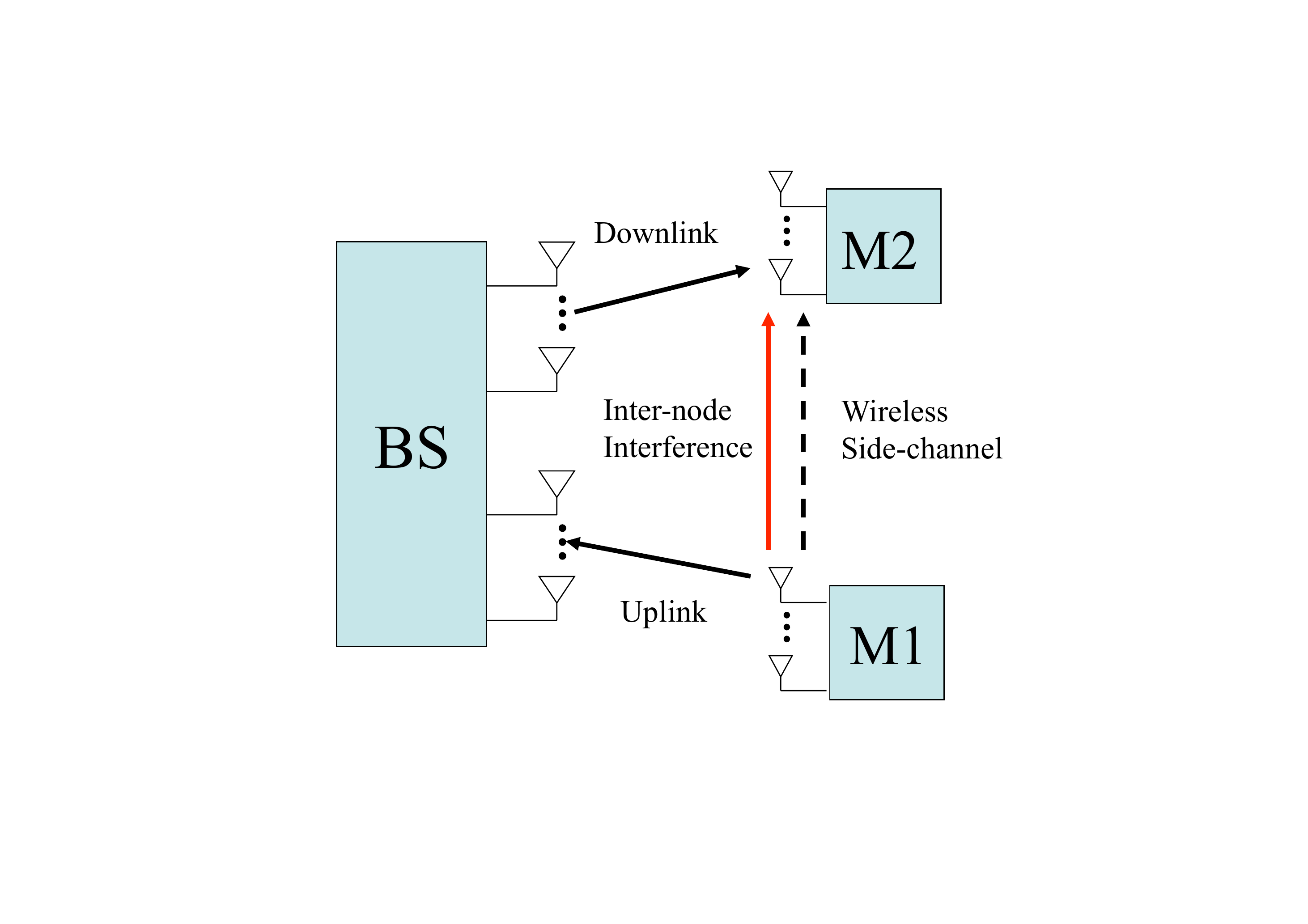}
  \caption{\emph{MIMO full-duplex network: inter-mobile interference becomes an important factor when the full-duplex infrastructure node communicates with uplink and downlink mobile nodes simultaneously. }}
\label{fig.1}
\end{figure}

In our setup, we assume that uplink node~M1 has $M_{\mathrm{ul}}$ transmit antennas, the downlink node~M2 has $N_{\mathrm{dl}}$ receive antennas, the full-duplex BS has $M_{\mathrm{dl}}$ and $N_{\mathrm{ul}}$ transmit and receive antennas, respectively. The bandwidth of the side-channel between the mobiles is $W$-fold compared to the main-channel. We summarize the main results in this work as follows.
\begin{enumerate}
\item In the time-invariant channels, we obtain the capacity region to within a constant bit achieved by a vector bin-and-cancel scheme. We also analyze the role of channel uncertainty at the transmitter and characterize the $\mathsf{GDoF}$ as a function of antenna numbers and side-channel bandwidth under different assumptions of CSIT. The insights gained from $\mathsf{GDoF}$ reveal the tradeoff between spatial resources from multiple antennas and spectral resources of the side-channels as well as the interplay between the channel uncertainty at the transmitter and the use of side-channel. In the case when BS has more antennas than mobiles, if there are more downlink receive antennas than uplink transmit antennas, i.e., $N_{\mathrm{dl}}\geq M_{\mathrm{ul}}$, there is no benefit to obtain CSIT since with and without CSIT achieve the same degrees-of-freedom.
On the other hand, if $M_{\mathrm{ul}}>N_{\mathrm{dl}}$, having CSIT require less side-channel bandwidth to achieve no-interference performance. Thus we conclude that having more spatial degree-of-freedom at the interfered downlink receiver or larger side-channel bandwidth can simplify transceiver design by ruling out the necessity to obtain CSIT. 

\item In slow-fading channels, we derive the general DMT regarding different assumptions of CSIT. Specifically, we quantify the bandwidth of the side-channel required to compensate for \emph{lack of} CSIT such that the DMT without CSIT achieves the optimal DMT with CSIT. Interestingly, in the case when $M_{\rm dl}=N_{\rm ul}=M\geq M_{\rm ul},N_{\rm dl}$, the required bandwidth is inversely proportional to the number of antennas at the BS, i.e., $W\propto\frac{1}{M}$. The caveat is that the side-channel channel SNR, in the meantime, has to grow with the number of antennas at BS. The result provides guidance towards system design: larger number of BS antennas, e.g. recent discussions on massive MIMO~\cite{larsson2013massive}, can help reduce the required side-channel bandwidth to combat inter-mobile interference. 

We also observe the dependency of CSIT and the antenna number ratio between the mobiles. For the symmetric DMT,  when $M_{\mathrm{ul}}>N_{\mathrm{dl}}$, without side-channel, the lack of CSIT will result in performance loss. However, larger side-channel bandwidth will help bridge the performance gap. On the other hand, when $N_{\mathrm{dl}}\geq M_{\mathrm{ul}}$, there is no benefit to obtain CSIT to achieve no-interference DMT since, with and without CSIT, one requires the same amount of side-channel bandwidth to completely eliminate the effect of interference. Hence in the protocol design, the scheduler could possibly group downlink user with more receive antennas to eliminate the overhead of acquiring CSIT.

\item We evaluate the required side-channel bandwidth to achieve the no-interference $\mathsf{GDoF}$ and DMT under different channel models such that the effect of inter-mobile interference can be completely eliminated via side-channel. The key difference in the findings between the two channel scenarios, for instance,  when $M_{\rm dl}=N_{\rm ul}=M\geq M_{\rm ul}, N_{\rm dl}$, is that in $\mathsf{GDoF}$ analysis under time-invariant channels, the required $W$ does not depend on the antenna number ratio between the mobiles; while in DMT analysis under slow-fading channels, required $W$ is a function of the antennas number ratio $A=\frac{\max(M_{\mathrm{ul}},N_{\mathrm{dl}})}{\min(M_{\mathrm{ul}},N_{\mathrm{dl}})}$ and $W\propto\frac{1}{A}$. 
The impact on the system design is that we should schedule up- and downlink user pair with higher antenna ratio to cancel out interference with reduced side-channel bandwidth.

\end{enumerate}

The rest of paper is organized as follows. Section~\ref{sec2} presents the system model.
In Section~III, we show that a vector bin-and-cancel scheme achieves within a constant gap of the capacity region in time-invariant channels. We give a characterization of $\mathsf{GDoF}$ which reveals tradeoff between spatial resources from multiple antennas and spectral resources of the side-channels under both CSIT and no-CSIT assumptions. In Section IV, we derive the general DMT with and without CSIT in slow-fading channels. We also study the spatial and spectral tradeoff between multiple antennas and side-channel on the symmetric DMT.
Section V concludes the paper.

\emph{Notations:} We use $A^\dagger$ to denote Hermitian of $A$, and $|A|$ to denote the determinant of $A$. We use $(x)^+$ to denote $\max(x,0)$. We use $\mathcal{CN}(0,Q)$ to denote a circularly symmetric complex Gaussian distribution with zero mean and covariance matrix $Q$. We use ${I_{N}}$ to denote identity matrix of rank $N$. We use $f(\rho)\doteq g(\rho)$ to denote that $\lim_{\rho\rightarrow \infty}\frac{\mathrm{log}f(\rho)}{\mathrm{log}g(\rho)}=1$. We use $A\preceq B$ to denote that matrix $B-A$ is a positive-semidefinite positive~(p.s.d) matrix. 

\section{System Model} \label{sec2}
In this section, we describe the system model to be used for the rest of the paper. We assume the full-duplex BS is equipped with $M_{\mathrm{dl}}$ transmit antennas for the downlink and $N_{\mathrm{ul}}$ receive antennas for the uplink. The uplink mobile M1 is equipped with $M_{\mathrm{ul}}$ transmit antennas and downlink mobile M2 is equipped with $N_{\mathrm{dl}}$ receive antennas. Besides the main-channel which includes uplink, downlink and interference link, there also exists an out-of-band wireless side-channel between the uplink mobile and downlink mobile.
\begin{figure}[h!]
  \centering
    \includegraphics[width=0.5\textwidth,trim = 45mm
      65mm 55mm 50mm, clip]{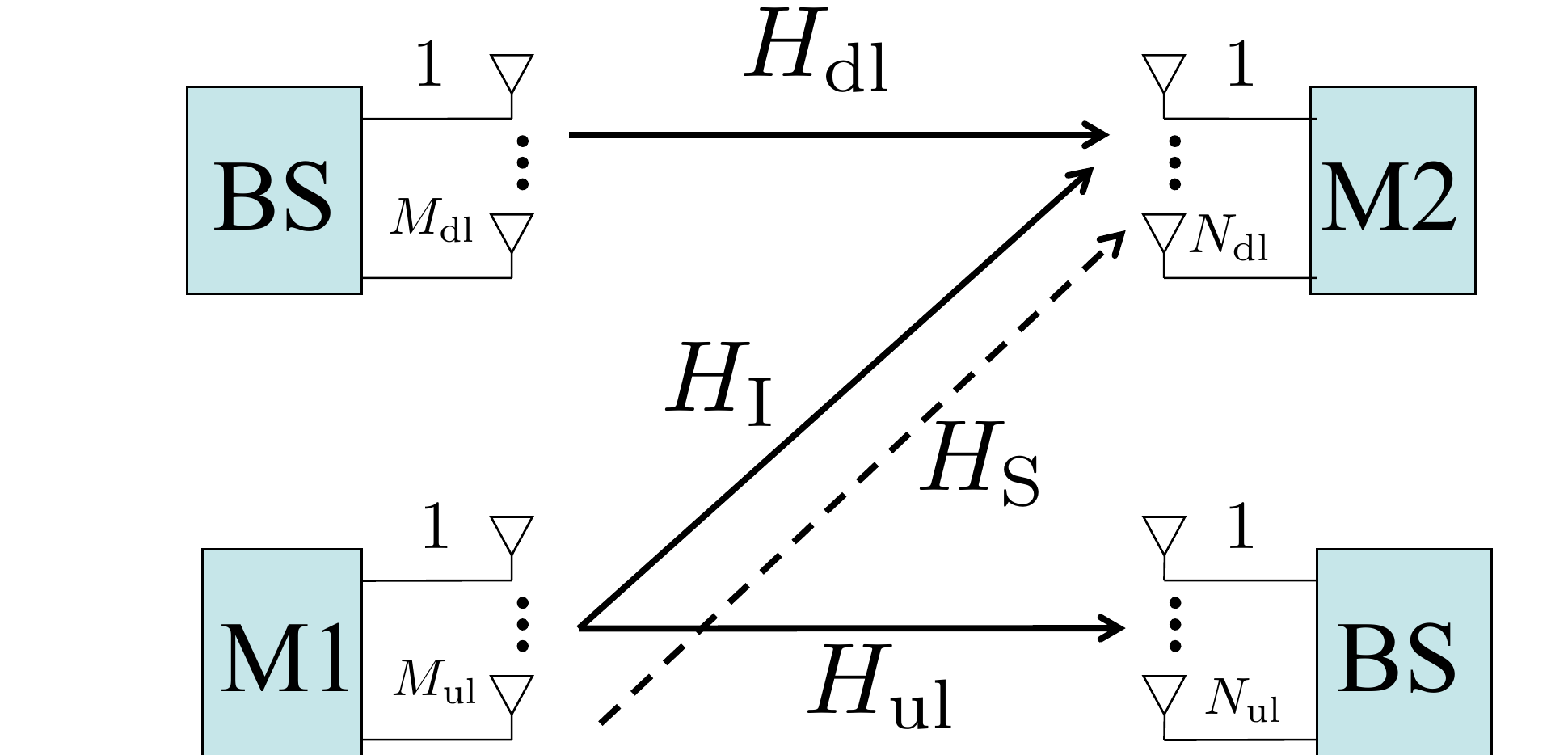}
  \caption{Channel model: $(M_{\mathrm{dl}},N_{\mathrm{dl}},M_{\mathrm{ul}},N_{\mathrm{ul}})$ side-channel assisted MIMO full-duplex network.}
\label{SystemModel}
\end{figure}
We refer to the channel model shown in Fig.~\ref{SystemModel} as $(M_{\mathrm{dl}},N_{\mathrm{dl}},M_{\mathrm{ul}},N_{\mathrm{ul}})$ side-channel assisted MIMO full-duplex network.
Let $W_m$ and $W_s$ denote the bandwidth of the main-channel and side-channel, respectively.
Parameter $W=\frac{W_s}{W_m}$ represents the bandwidth ratio of the side-channel to that of the main-channel.

Since one of the transmitter and receiver is co-located in the same node, the base-station BS, the uplink message received by the BS is causally known to the BS transmitter for downlink transmission. As a result, the side-channel assisted full-duplex network can be viewed as a Z-interference channel with implicit feedback and an out-of-band side-channel.

We assume that the channel parameters in our system model consist of two components: a small-scale fading factor due to multipath and a large-scale fading factor due to path loss. 
We denote the small-scale fading channels matrix as $\mathcal{H}=\{H_{\mathrm{dl}},H_{\mathrm{ul}},H_{\mathrm{I}},H_{\mathrm{S}}\}$, where each entry in $\mathcal{H}$ represents the small-scale fading channel matrix for the downlink, uplink, inter-mobile interference channel and the side-channel, as shown in Fig.~\ref{SystemModel}. We assume that all entries in $H_{k}$, where $k\in\{\mathrm{dl,ul,I,S}\}$, are mutually independent and identically distributed~(i.i.d.) according to $\mathcal{C}\mathcal{N}(0,1)$  and all channel matrices are full rank with probability one. We will consider two different scenarios for the small-scale fading. 
\begin{itemize}
\item \emph{Time-invariant channels}: $\mathcal{H}$ is fixed during the entire communication period.
\item \emph{Slow-fading channels}: $\mathcal{H}$ remains unchanged during each fade duration or coherence time, and varies i.i.d. between distinct fade periods. 
\end{itemize}

As for the large-scale fading factor, it captures the channel attenuation due to distance. Thus the channel attenuation between the transmitter and receiver is the same for every transmit-receive antenna pair. Hence  the channel attenuation for each channel is denoted by a scalar $\gamma_k$, where $k\in\{\mathrm{dl,ul,I,S}\}$.    
The transmitter at BS and uplink node M1 have a maximum power budget $P_{\rm dl}$ and $P_{\rm ul}$, respectively.
To simplify the notation, let $\rho_{\mathrm{dl}}=\gamma_{\mathrm{dl}}P_{\mathrm{dl}}$, $\rho_{\mathrm{ul}}=\gamma_{\mathrm{ul}}P_{\mathrm{ul}}$, $\rho_{\mathrm{S}}=\gamma_{\mathrm{S}}P_{\mathrm{ul}}$ and $\rho_{\mathrm{I}}=\gamma_{\mathrm{I}}P_{\mathrm{ul}}$, which denotes the average signal-to-noise ratio and interference-to-noise ratio at each receive antenna with additive Gaussian noise of unit variance.

Next, we describe the channel input-output relationships as follows.
\subsubsection{Uplink}
The node M1 will split the transmit power between main-channel and side-channel, i.e., $\bar{\lambda}P_{\rm ul}$ and ${\lambda}P_{\rm ul}$ for  main-channel and side-channel data transmission, respectively. We define $\bar{\lambda}=1-\lambda,\lambda\in[0,1]$.
Thus the received uplink signal $Y_{\mathrm{ul}}\in\mathbb{C}^{N_{\mathrm{ul}}\times 1}$ at BS is given by
\begin{gather}
\begin{aligned}
Y_{\mathrm{ul}}(t)&=\sqrt{\bar{\lambda}\rho_\mathrm{ul}}H_{\mathrm{ul}}X_{\mathrm{ul}}(t)+Z_{\mathrm{ul}}(t),
\end{aligned}
\end{gather}
where $X_{\mathrm{ul}}(t)\in\mathbb{C}^{M_{\mathrm{ul}}\times 1}$ is the uplink vector signal; $H_{\mathrm{ul}}\in\mathbb{C}^{N_{\mathrm{ul}}\times M_{\mathrm{ul}}}$ represents uplink channel and $Z_{\mathrm{ul}}(t)\in\mathbb{C}^{N_{\mathrm{ul}}\times 1}$ is the receiver additive Gaussian noise which contains i.i.d. $\mathcal{C}\mathcal{N}(0,1)$ entries.
\subsubsection{Downlink}
The received downlink signal $Y_{\mathrm{dl}}\in\mathbb{C}^{N_{\mathrm{dl}}\times 1}$
at the node M2 is a combination of the downlink signal and the interfering uplink signal, and is given by
\begin{gather}
\begin{aligned}
Y_{\mathrm{dl}}(t)&=\sqrt{\rho_\mathrm{dl}}H_{\mathrm{dl}}X_{\mathrm{dl}}(t)+\sqrt{\bar{\lambda}\rho_{\mathrm{I}}}H_{\mathrm{I}}X_{\mathrm{ul}}(t)+Z_{\mathrm{dl}}(t),
\end{aligned}
\end{gather}
where $X_{\mathrm{dl}}(t)\in\mathbb{C}^{M_{\mathrm{dl}}\times 1}$ is the downlink vector signal; $H_{\mathrm{dl}}\in\mathbb{C}^{N_{\mathrm{dl}}\times M_{\mathrm{dl}}}$ is the downlink channel matrix and $H_{\mathrm{I}}\in\mathbb{C}^{N_{\mathrm{dl}}\times M_{\mathrm{ul}}}$ is the inter-mobile interference channel matrix; $Z_{\mathrm{dl}}(t)\in\mathbb{C}^{N_{\mathrm{dl}}\times 1}$ is the receiver additive Gaussian noise which contains i.i.d. $\mathcal{C}\mathcal{N}(0,1)$ entries. 
\subsubsection{Side-channel}
We assume that the number of side-channel antennas are same as the main-channel. 
Thus the received signal $Y_{\mathrm{S}}\in\mathbb{C}^{N_{\mathrm{dl}}\times 1}$ at the node M2 is given by
\begin{gather}
\begin{aligned}
Y_{\mathrm{S}}(t)&=\sqrt{\lambda\rho_{\mathrm{S}}}H_{\mathrm{S}}X_{\mathrm{S}}(t)+Z_{\mathrm{S}}(t),
\end{aligned}
\end{gather}
where $X_{\mathrm{S}}(t)\in\mathbb{C}^{M_{\mathrm{ul}}\times 1}$ is the side-channel vector signal; $H_{\mathrm{S}}\in\mathbb{C}^{N_{\rm dl}\times M_{\rm ul}}$ is the channel matrix of the side-channel; $Z_{\mathrm{dl}}(t)\in\mathbb{C}^{N_{\mathrm{dl}}\times 1}$ is the Gaussian noise added to the side-channel which contains i.i.d. $\mathcal{C}\mathcal{N}(0,W)$ entries. Note that the noise variance of each entry in the side-channel is $W$ times larger than that in the main-channel.

The power constraint of the input signals is given as:
\begin{gather}
\frac{1}{L}\sum_{t=1+Lk}^{L(k+1)}\text{Trace}\bigg(\mathbb{E}[X_i(t)X_i(t)^\dagger]\bigg)\leq 1,~k\in\mathbb{N}, i\in\{\rm dl,ul,S\},\label{pcc}
\end{gather}
where in time-invariant channels, $k=0$, and $L$ denotes the entire communication duration; in slow-fading channels, $L$ denotes the coherence time.\footnote{In the rest of the paper, we omit the time-index t in the expressions.}

We define the strength level of different links with respect to nominal SNR, $\rho$, in decibels\footnote{We can set $\rho=\rho_{\mathrm{dl}}$ or $\rho_{\mathrm{ul}}$ such that either $\alpha_{\mathrm{dl}}=1$ or $\alpha_{\mathrm{ul}}=1$.}
\begin{gather}
\alpha_i=\frac{\mathrm{log}\rho_{i}}{\mathrm{log}\rho},~i\in\{\mathrm{dl,ul,I,S}\}.
\label{level}
\end{gather}
Note that the above normalization allows different links to have disparate strength.


\section{Vector Bin-and-cancel Scheme} \label{bcscheme}
A full-duplex node can be viewed as ``two nodes," with a co-located transmitter and receiver, that are connected by an \emph{infinite} capacity link. Inspired by this interpretation, in \cite{JingwenTWC}, we proposed a \emph{distributed full-duplex} architecture which is enabled by a wireless side-channel of \emph{finite} bandwidth when the transmitter and interfered receiver are not co-located.
When channel knowledge is known globally, we showed that a bin-and-cancel scheme achieves the capacity region to within 1~bit/s/Hz of the capacity region for all channel parameters in SISO case~\cite{JingwenTWC}. 

In this section, we will study the capacity region in MIMO case under different assumptions of channel uncertainty at the transmitter.
CSIT plays a critical role in MIMO interference channels. With CSIT, the transmitter can design the precoding matrix to steer the direction of the transmit signal to achieve higher rate. However, the cost of obtaining CSIT is also prohibitive since the receiver has to feed back the channel knowledge within the coherence time which incurs operational overhead. Thus it is crucial to explore the role of channel uncertainty at the transmitter in system performance. We assume that the receiver-side channel information is always available as the receiver can track the instantaneous channel from the training pilots. In what follows, we will study the capacity region in time-invariant channels. Next, we will present how CSIT and the use of side-channel is correlated, we also characterize the spatial and spectral tradeoff between multiple antennas at different nodes and spectral resources provided by side-channel.
\subsection{Capacity Region to Within a Constant Gap With CSIT}
\subsubsection{Outer Bound}
\begin{lemma}\label{outerbound}
Given the channel realization $\mathcal{H}$, the capacity region $\mathcal{C(H)}$ of the side-channel assisted MIMO full-duplex network is outer bounded by
\begin{gather}
\begin{aligned}
R_{\mathrm{dl}}&\leq W_m\bigg(\mathrm{log}\left|I_{N_{\mathrm{dl}}}+\rho_{\mathrm{dl}}H_{\mathrm{dl}}H_{\mathrm{dl}}^\dagger\right|\bigg)\triangleq \overline{C}_{\mathrm{dl}},\\
R_{\mathrm{ul}}&\leq W_m\bigg( \mathrm{log}\left|I_{N_{\mathrm{ul}}}+\bar{\lambda}\rho_{\mathrm{ul}}H_{\mathrm{ul}}H_{\mathrm{ul}}^\dagger\right|\bigg)\triangleq\overline{C}_{\mathrm{ul}},\\
R_{\mathrm{dl}}+R_{\mathrm{ul}}&\leq W_m\bigg(\mathrm{log}\left|I_{N_{\mathrm{dl}}}+\rho_{\mathrm{dl}}H_{\mathrm{dl}}H_{\mathrm{dl}}^\dagger+\bar{\lambda}\rho_{\mathrm{I}}H_{\mathrm{I}}H_{\mathrm{I}}^\dagger\right|+W\mathrm{log}\left|I_{N_{\mathrm{dl}}}+\frac{\lambda\rho_{\mathrm{S}}}{W}H_{\mathrm{S}}H_{\mathrm{S}}^{\dagger}\right|\\
&+\mathrm{log}\left|I_{N_{\mathrm{ul}}}+\bar{\lambda}\rho_{\mathrm{ul}}H_{\mathrm{ul}}(I_{M_{\mathrm{ul}}}+\bar{\lambda}\rho_{\mathrm{I}}H_{\mathrm{I}}^\dagger H_{\mathrm{I}})^{-1}H_{\mathrm{ul}}^\dagger\right|+N_{\mathrm{dl}}\bigg)\triangleq\overline{C}_{\mathrm{sum}},
 \label{ubequ}
\end{aligned}
\end{gather}
\end{lemma}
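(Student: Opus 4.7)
My plan is to derive each of the three inequalities from Fano's inequality applied at the relevant receiver, then invoke a genie-aided argument tailored to the Z-interference-with-feedback-plus-side-channel structure noted earlier in the paper. The two single-user bounds come out as essentially cut-set bounds on each point-to-point link, while the sum-rate bound requires introducing a genie signal at the BS that is statistically matched to the interference observed at M2.

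For $\overline{C}_{\mathrm{dl}}$, Fano's inequality gives $nR_{\mathrm{dl}}\le I(W_{\mathrm{dl}};Y_{\mathrm{dl}}^n,Y_{\mathrm{S}}^n)+n\epsilon_n$. Since $W_{\mathrm{dl}}$ and $W_{\mathrm{ul}}$ are independent messages and M1 has no feedback, $(X_{\mathrm{ul}}^n,X_{\mathrm{S}}^n)$ is a deterministic function of $W_{\mathrm{ul}}$; conditioning on $W_{\mathrm{ul}}$ only enlarges the mutual information, and given $W_{\mathrm{ul}}$ the interference in $Y_{\mathrm{dl}}^n$ can be subtracted while $Y_{\mathrm{S}}^n$ carries no information about $W_{\mathrm{dl}}$. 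Gaussian maximum entropy under the per-symbol unit-trace input covariance then yields the stated point-to-point MIMO capacity. For $\overline{C}_{\mathrm{ul}}$, $Y_{\mathrm{ul}}^n$ does not involve $X_{\mathrm{dl}}$ and only $\bar{\lambda}P_{\mathrm{ul}}$ of M1's power feeds the main channel, so the standard single-user MIMO bound with effective SNR $\bar{\lambda}\rho_{\mathrm{ul}}$ follows directly.

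For $\overline{C}_{\mathrm{sum}}$, I would supply the BS receiver with a genie signal $V^n$ defined by $V(t)=\sqrt{\bar{\lambda}\rho_{\mathrm{I}}}H_{\mathrm{I}}X_{\mathrm{ul}}(t)+\tilde{Z}(t)$ with $\tilde{Z}(t)\sim\mathcal{C}\mathcal{N}(0,I_{N_{\mathrm{dl}}})$, coupled to $Z_{\mathrm{dl}}(t)$ so that residual noise-entropy terms collapse. Adding this side information only enlarges the RHS, so $nR_{\mathrm{ul}}\le I(W_{\mathrm{ul}};Y_{\mathrm{ul}}^n,V^n)+n\epsilon_n$. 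Splitting this as $I(W_{\mathrm{ul}};V^n)+I(W_{\mathrm{ul}};Y_{\mathrm{ul}}^n\mid V^n)$, I would bound the conditional term by $I(X_{\mathrm{ul}}^n;Y_{\mathrm{ul}}^n\mid V^n)$; under unit-covariance Gaussian $X_{\mathrm{ul}}$ the MMSE covariance given $V$ is $(I_{M_{\mathrm{ul}}}+\bar{\lambda}\rho_{\mathrm{I}}H_{\mathrm{I}}^\dagger H_{\mathrm{I}})^{-1}$ per symbol, which after Gaussian max entropy produces the third term of $\overline{C}_{\mathrm{sum}}$. Recombining $I(W_{\mathrm{ul}};V^n)$ with $I(W_{\mathrm{dl}};Y_{\mathrm{dl}}^n,Y_{\mathrm{S}}^n)$ via the chain rule, and using that $V$ is distributionally identical to the interference-plus-noise at M2, the downlink piece collapses to $h(Y_{\mathrm{dl}}^n)-h(Z_{\mathrm{dl}}^n)$ and the side-channel piece to $h(Y_{\mathrm{S}}^n)-h(Z_{\mathrm{S}}^n)$, yielding (again by Gaussian max entropy) the first two terms, with the bandwidth factor $W$ and the $W$-scaled side-channel noise combining to give $W\log|I_{N_{\mathrm{dl}}}+\frac{\lambda\rho_{\mathrm{S}}}{W}H_{\mathrm{S}}H_{\mathrm{S}}^\dagger|$. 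The constant $N_{\mathrm{dl}}$ absorbs the leftover noise-entropy mismatch.

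The principal obstacle is handling the implicit feedback from BS's full-duplex operation: because $X_{\mathrm{dl}}(t)$ depends on both $W_{\mathrm{dl}}$ and the past $Y_{\mathrm{ul}}^{t-1}$, the chain-rule decompositions must respect this causal dependence. The standard remedy is to further condition on $Z_{\mathrm{ul}}^n$ inside terms like $h(Y_{\mathrm{dl}}^n,Y_{\mathrm{S}}^n\mid W_{\mathrm{dl}},W_{\mathrm{ul}})$: once $(W_{\mathrm{dl}},W_{\mathrm{ul}},Z_{\mathrm{ul}}^n)$ are fixed all transmit vectors become deterministic and the conditional entropy collapses to $h(Z_{\mathrm{dl}}^n)+h(Z_{\mathrm{S}}^n)$. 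A secondary subtlety is choosing the correlation of $\tilde{Z}^n$ with $(Z_{\mathrm{dl}}^n,Z_{\mathrm{S}}^n)$ so that all leftover noise entropies aggregate into the single clean $N_{\mathrm{dl}}$; matrix identities such as $|I+AB|=|I+BA|$ and the p.s.d.\ ordering $\preceq$ are then used to massage the determinantal expressions into the form stated in the lemma.
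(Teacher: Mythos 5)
Your two single-user bounds and your choice of genie are the right objects, but the sum-rate converse as you have sketched it does not survive the BS's implicit feedback, and your proposed remedy makes the bound strictly looser than $\overline{C}_{\mathrm{sum}}$. The cancellation you rely on—recombining $I(\omega_{\mathrm{ul}};V^n)$ with the downlink term so that the positive $h(V^n)$ is absorbed—needs $h(Y_{\mathrm{dl}}^n\mid \omega_{\mathrm{dl}})\geq h(V^n)$, which holds without feedback because $X_{\mathrm{dl}}^n$ is then a function of $\omega_{\mathrm{dl}}$ alone and can be stripped off, leaving exactly the interference-plus-noise. Here $X_{\mathrm{dl},i}$ depends on $Y_{\mathrm{ul}}^{i-1}$ and is therefore correlated with $X_{\mathrm{ul}}^n$, so no such comparison is available. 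Your fix—conditioning on $(\omega_{\mathrm{ul}},Z_{\mathrm{ul}}^n)$ so that $h(Y_{\mathrm{dl}}^n,Y_{\mathrm{S}}^n\mid \omega_{\mathrm{dl}},\omega_{\mathrm{ul}},Z_{\mathrm{ul}}^n)$ collapses to $h(Z_{\mathrm{dl}}^n)+h(Z_{\mathrm{S}}^n)$—destroys precisely the cancellation you need: nothing is left to absorb $h(V^n)$, and the resulting sum bound exceeds the stated one by roughly $\mathrm{log}\left|I_{N_{\mathrm{dl}}}+\bar{\lambda}\rho_{\mathrm{I}}H_{\mathrm{I}}H_{\mathrm{I}}^\dagger\right|$, which is not a constant. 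The paper's proof keeps the cancellation alive despite feedback by (i) defining the genie with the \emph{same noise realization}, $V_{\mathrm{ul}}=\sqrt{\bar{\lambda}\rho_{\mathrm{I}}}H_{\mathrm{I}}X_{\mathrm{ul}}+Z_{\mathrm{dl}}$, so that $V_{\mathrm{ul},i}$ is a deterministic function of $(X_{\mathrm{dl},i},Y_{\mathrm{dl},i})$, and (ii) never splitting the terms the way you do: it bounds the pair $h(Y_{\mathrm{ul}}^n\mid\omega_{\mathrm{dl}})-h(Y_{\mathrm{dl}}^n\mid\omega_{\mathrm{dl}})$ jointly, rewriting it as $h(Y_{\mathrm{ul}}^n\mid Y_{\mathrm{dl}}^n,\omega_{\mathrm{dl}})-h(Y_{\mathrm{dl}}^n\mid Y_{\mathrm{ul}}^n,\omega_{\mathrm{dl}})$ and using $X_{\mathrm{dl},i}=f(\omega_{\mathrm{dl}},Y_{\mathrm{ul}}^{i-1})$ to arrive at $\sum_i h(Y_{\mathrm{ul},i}\mid V_{\mathrm{ul},i})-\sum_i h(Z_{\mathrm{dl},i})$; this is the step your decomposition has no substitute for.

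Separately, you misattribute the additive $N_{\mathrm{dl}}$: it is not a leftover noise-entropy mismatch and cannot be produced by tuning the correlation of $\tilde{Z}$ with $(Z_{\mathrm{dl}},Z_{\mathrm{S}})$. It appears in the single-letter step because the feedback correlates $X_{\mathrm{dl}}$ with $X_{\mathrm{ul}}$, so the covariance of $Y_{\mathrm{dl}}$ contains cross terms $\sqrt{\bar{\lambda}\rho_{\mathrm{dl}}\rho_{\mathrm{I}}}H_{\mathrm{dl}}Q_{\mathrm{d,u}}H_{\mathrm{I}}^\dagger$ plus its Hermitian, and the paper bounds their contribution via the p.s.d.\ argument of Lemma~\ref{Lemmamatrix}, $\mathrm{log}\left|I_{N_{\mathrm{dl}}}+(I_{N_{\mathrm{dl}}}+G_{\mathrm{dl}})^{-1}G_{\mathrm{ul}}\right|\leq \mathrm{log}\left|2I_{N_{\mathrm{dl}}}\right|=N_{\mathrm{dl}}$. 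Your outline never confronts these feedback-induced cross-covariance terms, so even after repairing the entropy manipulations you would still be missing the argument that actually generates, and justifies, the $+N_{\mathrm{dl}}$ in $\overline{C}_{\mathrm{sum}}$.
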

\begin{proof}
See Appendix~\ref{outer}. Note that if the interference channel~($\rho_{\rm I}$) or side-channel quality ($W\rho_{\rm S}$) exceeds certain threshold such that $C_{\rm sum}\geq C_{\rm dl}+C_{\rm ul}$, the capacity is just trivially outer bounded by the first two individual constraints in (\ref{ubequ}).
\end{proof}

\subsubsection{Achievable Rate Region}
A vector bin-and-cancel scheme based on a simple Han-Kobayashi coding strategy achieves the following rate region when CSIT is available. The scheme will be elucidated later in Section~\ref{vbc}.
\begin{lemma}\label{innerbound}
The achievable rate region $\mathcal{R}_{\rm BC}(\mathcal{H})$ of the side-channel assisted MIMO three-node full-duplex network for time-invariant channels is
\begin{gather}
\begin{aligned}
R_{\mathrm{dl}}&\leq \overline{C}_{\mathrm{dl}}-W_m c_1,\\
R_{\mathrm{ul}}&\leq \overline{C}_{\mathrm{ul}}-W_m c_2,\\
R_{\mathrm{dl}}+R_{\mathrm{ul}}&\leq \overline{C}_{\mathrm{sum}}-W_m(c_1+c_2),
 \label{ubequ1}
\end{aligned}
\end{gather}
where \begin{gather}
\begin{aligned}
c_{\mathrm{1}}&=\min\{M_{\mathrm{dl}}+M_{\mathrm{ul}},N_{\mathrm{dl}}\}\mathrm{log}(\max\{M_{\mathrm{dl}},M_{\mathrm{ul}}\})+\hat{m}_{\mathrm{I}},\\
c_{\mathrm{2}}&=(m_{\mathrm{ul}}+Wm_{\mathrm{I}})\mathrm{log}M_{\mathrm{ul}}+m_{X}\mathrm{log}(M_{\mathrm{ul}}+1),\hat{m}_{\mathrm{I}}=m_{\mathrm{I}}\mathrm{log}\left(1+\frac{1}{M_{\mathrm{ul}}}\right),\\
m_{\mathrm{dl}}&=\min\{M_{\mathrm{dl}},N_{\mathrm{dl}}\}, m_{\mathrm{ul}}=\min\{M_{\mathrm{ul}},N_{\mathrm{ul}}\}, m_{X}\!=\max\{M_{\mathrm{ul}},N_{\mathrm{dl}}\},m_{\mathrm{I}}=\min\{M_{\mathrm{ul}},N_{\mathrm{dl}}\}.\label{gap}
\end{aligned}
\end{gather}
\end{lemma}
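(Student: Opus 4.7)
My plan is to exhibit an explicit vector Han--Kobayashi coding strategy, write down the resulting Han--Kobayashi inner bound with the side-channel as a cooperative link carrying a bin index of the common uplink message, and then bound each rate expression against the outer bound of Lemma~\ref{outerbound} term by term. The scheme I would use: M1 splits $X_{\mathrm{ul}}=X_{\mathrm{ul},c}+X_{\mathrm{ul},p}$ into independent Gaussian common and private parts with covariances $Q_{c}+Q_{p}=\tfrac{1}{M_{\mathrm{ul}}}I_{M_{\mathrm{ul}}}$; the bin index of the common codeword is sent on the side-channel using $X_{\mathrm{S}}\sim\mathcal{CN}\bigl(0,\tfrac{1}{M_{\mathrm{ul}}}I_{M_{\mathrm{ul}}}\bigr)$; M2 first decodes $X_{\mathrm{ul},c}$ jointly from $(Y_{\mathrm{dl}},Y_{\mathrm{S}})$, subtracts it, and then decodes $X_{\mathrm{dl}}$ while the residual $X_{\mathrm{ul},p}$ is treated as noise; BS cancels its own downlink signal and decodes $X_{\mathrm{ul}}$ from $Y_{\mathrm{ul}}$. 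The canonical MIMO bin-and-cancel choice, which I would use throughout, is to set $Q_{p}$ so that the private interference seen at M2, $\bar{\lambda}\rho_{\mathrm{I}}H_{\mathrm{I}}Q_{p}H_{\mathrm{I}}^{\dagger}$, has all eigenvalues at most one, i.e.\ at the noise floor; a concrete choice is $Q_{p}=\tfrac{1}{M_{\mathrm{ul}}}\bigl(I_{M_{\mathrm{ul}}}+\bar{\lambda}\rho_{\mathrm{I}}H_{\mathrm{I}}^{\dagger}H_{\mathrm{I}}\bigr)^{-1}$ and $Q_{c}=\tfrac{1}{M_{\mathrm{ul}}}I_{M_{\mathrm{ul}}}-Q_{p}$.

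Next I would write the three achievable constraints obtained from the Han--Kobayashi region combined with the side-channel (after Fourier--Motzkin elimination of the common-rate variable):
\begin{align*}
R_{\mathrm{dl}}&\leq W_{m}\log\bigl|I_{N_{\mathrm{dl}}}+\rho_{\mathrm{dl}}H_{\mathrm{dl}}H_{\mathrm{dl}}^{\dagger}(I+\bar{\lambda}\rho_{\mathrm{I}}H_{\mathrm{I}}Q_{p}H_{\mathrm{I}}^{\dagger})^{-1}\bigr|,\\
R_{\mathrm{ul}}&\leq W_{m}\log\bigl|I_{N_{\mathrm{ul}}}+\bar{\lambda}\rho_{\mathrm{ul}}H_{\mathrm{ul}}Q_{\mathrm{ul}}H_{\mathrm{ul}}^{\dagger}\bigr|,\\
R_{\mathrm{dl}}+R_{\mathrm{ul}}&\leq W_{m}\log\bigl|I_{N_{\mathrm{dl}}}+\rho_{\mathrm{dl}}H_{\mathrm{dl}}H_{\mathrm{dl}}^{\dagger}(I+\bar{\lambda}\rho_{\mathrm{I}}H_{\mathrm{I}}Q_{p}H_{\mathrm{I}}^{\dagger})^{-1}\bigr|+W_{m}\log\bigl|I_{N_{\mathrm{ul}}}+\bar{\lambda}\rho_{\mathrm{ul}}H_{\mathrm{ul}}Q_{p}H_{\mathrm{ul}}^{\dagger}\bigr|\\
&\quad+W_{s}\log\bigl|I_{N_{\mathrm{dl}}}+\tfrac{\lambda\rho_{\mathrm{S}}}{W M_{\mathrm{ul}}}H_{\mathrm{S}}H_{\mathrm{S}}^{\dagger}\bigr|.
\end{align*}
With my choice of $Q_{p}$, the private interference factor $(I+\bar{\lambda}\rho_{\mathrm{I}}H_{\mathrm{I}}Q_{p}H_{\mathrm{I}}^{\dagger})^{-1}$ is sandwiched between $\tfrac{1}{2}I$ and $I$, so each achievable expression differs from the corresponding outer bound term only by a determinant that can be controlled by a function of the antenna counts.

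The bulk of the work is then a sequence of determinant manipulations. For $R_{\mathrm{dl}}$, I would use $\log|I+AB|=\log|I+BA|$ and the bound $\log|I+\bar{\lambda}\rho_{\mathrm{I}}H_{\mathrm{I}}Q_{p}H_{\mathrm{I}}^{\dagger}|\leq m_{\mathrm{I}}\log(1+1/M_{\mathrm{ul}})=\hat{m}_{\mathrm{I}}$, together with the generic gap bound $\log|I+\rho(A+B)|-\log|I+\rho A|\le\min\{M_{\mathrm{dl}}+M_{\mathrm{ul}},N_{\mathrm{dl}}\}\log(\max\{M_{\mathrm{dl}},M_{\mathrm{ul}}\})$ coming from Hadamard/AM--GM applied to the eigenvalues; that reproduces the constant $c_{1}$. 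For $R_{\mathrm{ul}}$, the full covariance $\tfrac{1}{M_{\mathrm{ul}}}I$ vs.\ unit trace yields the $m_{\mathrm{ul}}\log M_{\mathrm{ul}}$ term; the extra $W m_{\mathrm{I}}\log M_{\mathrm{ul}}$ and $m_{X}\log(M_{\mathrm{ul}}+1)$ terms in $c_{2}$ would arise from swapping between the side-channel determinant $|I+\tfrac{\lambda\rho_{\mathrm{S}}}{WM_{\mathrm{ul}}}H_{\mathrm{S}}H_{\mathrm{S}}^{\dagger}|$ in the inner bound and $|I+\tfrac{\lambda\rho_{\mathrm{S}}}{W}H_{\mathrm{S}}H_{\mathrm{S}}^{\dagger}|$ in the outer bound, and from bounding the ``$H_{\mathrm{ul}}(I+\bar{\lambda}\rho_{\mathrm{I}}H_{\mathrm{I}}^{\dagger}H_{\mathrm{I}})^{-1}H_{\mathrm{ul}}^{\dagger}$'' block in $\overline{C}_{\mathrm{sum}}$ by the unconstrained $H_{\mathrm{ul}}Q_{\mathrm{ul}}H_{\mathrm{ul}}^{\dagger}$.

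The main obstacle I anticipate is the combinatorial matching of gap terms to the precise constants $c_{1},c_{2}$ in \eqref{gap}: several natural determinant bounds give loose constants, and one has to route through the right identity (e.g.\ $|I+AB|=|I+BA|$, Woodbury, or factoring out $\min\{M_{\mathrm{dl}},M_{\mathrm{ul}}\}$) to hit exactly $\min\{M_{\mathrm{dl}}+M_{\mathrm{ul}},N_{\mathrm{dl}}\}\log(\max\{M_{\mathrm{dl}},M_{\mathrm{ul}}\})$ rather than a larger expression. A secondary issue is that when $\overline{C}_{\mathrm{sum}}\geq \overline{C}_{\mathrm{dl}}+\overline{C}_{\mathrm{ul}}$ (the very-strong-interference or large-$W$ regime noted after Lemma~\ref{outerbound}), the sum-rate constraint is inactive, and I would need to verify that the same scheme, possibly with $\lambda=0$, meets the two individual constraints to within the same $c_{1},c_{2}$; this is routine once the two single-user inequalities have been established. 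The complete proof would then be deferred to an appendix, but the steps above give a constant-gap certificate for every channel realization $\mathcal{H}$.
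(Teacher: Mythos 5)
Your coding scheme is the same as the paper's (Han--Kobayashi splitting with $Q_p=\tfrac{1}{M_{\rm ul}}(I+\bar{\lambda}\rho_{\rm I}H_{\rm I}^\dagger H_{\rm I})^{-1}$, equal-power downlink, and binning of the common part over the side-channel), but the rate region you write down for it is not the one the scheme yields, and the error is fatal for the sum-rate part of the lemma. After Fourier--Motzkin elimination the correct sum constraint is $R_{\rm dl}+R_{\rm ul}\leq I(U_{\rm ul};Y_{\rm ul}|S_{\rm ul})+I(X_{\rm dl},S_{\rm ul};Y_{\rm dl})+I(X_{\rm S};Y_{\rm S})$, i.e.\ it contains the \emph{joint} mutual information of the downlink signal and the uplink common message at M2, whose determinant is $\log\bigl|I_{N_{\rm dl}}+\tfrac{\rho_{\rm dl}}{M_{\rm dl}}H_{\rm dl}H_{\rm dl}^\dagger+\tfrac{\bar{\lambda}\rho_{\rm I}}{M_{\rm ul}}H_{\rm I}H_{\rm I}^\dagger\bigr|$ minus a bounded term. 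You instead use only the conditional term $I(X_{\rm dl};Y_{\rm dl}|S_{\rm ul})$, so your sum expression is (up to constants) $\log|I+\rho_{\rm dl}H_{\rm dl}H_{\rm dl}^\dagger|+\log|I+\bar{\lambda}\rho_{\rm ul}H_{\rm ul}Q_pH_{\rm ul}^\dagger|+W_s\log|\cdot|$, which is missing the contribution of the interference link at the downlink receiver. The outer bound $\overline{C}_{\rm sum}$ contains $\log|I_{N_{\rm dl}}+\rho_{\rm dl}H_{\rm dl}H_{\rm dl}^\dagger+\bar{\lambda}\rho_{\rm I}H_{\rm I}H_{\rm I}^\dagger|$, and the difference between that and $\log|I_{N_{\rm dl}}+\rho_{\rm dl}H_{\rm dl}H_{\rm dl}^\dagger|$ grows without bound in the channel parameters (e.g.\ strong interference, or $N_{\rm dl}>M_{\rm dl}$ where $H_{\rm I}$ adds rank), so no constant $c_1+c_2$ depending only on antenna numbers can close it. Your claim that ``each achievable expression differs from the corresponding outer bound term only by a determinant controlled by the antenna counts'' is exactly where this breaks. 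The fix is the paper's route: have M2 jointly decode $(\omega_{\rm dl},\omega_{{\rm ul},c})$ as a MAC aided by the side-channel bin index, keep $I(X_{\rm dl},S_{\rm ul};Y_{\rm dl})$ in the sum constraint, and bound it below by $\log|I+\rho_{\rm dl}H_{\rm dl}H_{\rm dl}^\dagger+\bar{\lambda}\rho_{\rm I}H_{\rm I}H_{\rm I}^\dagger|-\hat{m}_{\rm I}-\min\{M_{\rm dl}+M_{\rm ul},N_{\rm dl}\}\log(\max\{M_{\rm dl},M_{\rm ul}\})$.

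A second, related gap: your region drops the constraint that M2 must actually be able to recover the common part, namely $R_{\rm ul}\leq I(U_{\rm ul};Y_{\rm ul}|S_{\rm ul})+I(S_{\rm ul};Y_{\rm dl}|X_{\rm dl})+I(X_{\rm S};Y_{\rm S})$, which Fourier--Motzkin produces alongside $R_{\rm ul}\leq I(S_{\rm ul},U_{\rm ul};Y_{\rm ul})$. Without it your stated region is larger than what the scheme achieves, and with it you still owe an argument that this second branch is within $W_mc_2$ of $\overline{C}_{\rm ul}$; the paper does this by combining $I(U_{\rm ul};Y_{\rm ul}|S_{\rm ul})$ and $I(S_{\rm ul};Y_{\rm dl}|X_{\rm dl})$ through Sylvester's determinant identity, $\log|I+\bar{\lambda}\rho_{\rm ul}H_{\rm ul}(I+\bar{\lambda}\rho_{\rm I}H_{\rm I}^\dagger H_{\rm I})^{-1}H_{\rm ul}^\dagger|+\log|I+\bar{\lambda}\rho_{\rm I}H_{\rm I}H_{\rm I}^\dagger|\geq\log|I+\bar{\lambda}\rho_{\rm ul}H_{\rm ul}H_{\rm ul}^\dagger|$ up to the stated constants. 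Your successive-cancellation variant (decode $S_{\rm ul}$ first treating $X_{\rm dl}$ as noise) would replace $I(S_{\rm ul};Y_{\rm dl}|X_{\rm dl})$ by $I(S_{\rm ul};Y_{\rm dl})$, which is again off by an SNR-dependent amount, so joint decoding (or an explicit chain-rule/corner-point argument) is needed, not optional. The remaining determinant bookkeeping you sketch ($\hat{m}_{\rm I}$ from $\bar{\lambda}\rho_{\rm I}H_{\rm I}K_uH_{\rm I}^\dagger\preceq\tfrac{1}{M_{\rm ul}}I$, the $m_{\rm ul}\log M_{\rm ul}$ and $Wm_{\rm I}\log M_{\rm ul}$ terms from equal power allocation) matches the paper and is fine.
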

\begin{proof}
See Section~\ref{vbc} for description of the achievability and Appendix~\ref{inner} for the rate calculation.
\end{proof}
Based on the lemmas above, we will state the result of constant-bit gap to capacity region under time-invariant channels in the following theorem.
\begin{theorem} \label{mainthe1}
For the side-channel assisted two-user MIMO full-duplex network under time-invariant channels, the achievable rate region $\mathcal{R}_{\rm BC}(\mathcal{H})$ is within $\max\{c_1,c_2\}$ bit/s/Hz of the capacity region $\mathcal{C(H)}$, where $c_i,~i=1,2$ is given in (\ref{gap}).
\end{theorem}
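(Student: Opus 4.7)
The plan is to exploit the fact that the outer bound of Lemma~\ref{outerbound} and the inner bound of Lemma~\ref{innerbound} have precisely the same structural form: both are three-inequality polyhedra, with one bound on $R_{\mathrm{dl}}$, one on $R_{\mathrm{ul}}$, and one on $R_{\mathrm{dl}}+R_{\mathrm{ul}}$, and the achievable region's constraints are obtained from the outer-bound constraints by a constant additive shift. So the theorem reduces to a simple set-inclusion argument after a coordinate translation of the rate pair.

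First, I would fix any rate pair $(R_{\mathrm{dl}},R_{\mathrm{ul}})\in\mathcal{C}(\mathcal{H})$, so that by Lemma~\ref{outerbound} it satisfies $R_{\mathrm{dl}}\le \overline{C}_{\mathrm{dl}}$, $R_{\mathrm{ul}}\le\overline{C}_{\mathrm{ul}}$, and $R_{\mathrm{dl}}+R_{\mathrm{ul}}\le\overline{C}_{\mathrm{sum}}$. Then I would consider the shifted pair $(R'_{\mathrm{dl}},R'_{\mathrm{ul}})=\bigl((R_{\mathrm{dl}}-W_m c_1)^+,(R_{\mathrm{ul}}-W_m c_2)^+\bigr)$ and verify termwise that it lies inside $\mathcal{R}_{\mathrm{BC}}(\mathcal{H})$: subtracting $W_m c_1$ from the downlink constraint reproduces exactly the first inequality of Lemma~\ref{innerbound}, subtracting $W_m c_2$ from the uplink constraint reproduces the second, and subtracting $W_m(c_1+c_2)$ from the sum constraint reproduces the third. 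Each of these checks is a one-line inequality comparison and needs no channel-dependent manipulation.

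Finally, I would conclude by noting that for every point of $\mathcal{C}(\mathcal{H})$, the corresponding shifted point in $\mathcal{R}_{\mathrm{BC}}(\mathcal{H})$ differs from the original by at most $W_m c_1$ in the downlink coordinate and at most $W_m c_2$ in the uplink coordinate, hence by at most $W_m\max\{c_1,c_2\}$ per component, which in normalized bit/s/Hz units is $\max\{c_1,c_2\}$. Because $c_1$ and $c_2$ in \eqref{gap} depend only on the antenna counts $(M_{\mathrm{dl}},N_{\mathrm{dl}},M_{\mathrm{ul}},N_{\mathrm{ul}})$ and on the side-channel bandwidth ratio $W$, and not on $\mathcal{H}$ or any of the SNR parameters $\rho_{\mathrm{dl}},\rho_{\mathrm{ul}},\rho_{\mathrm{I}},\rho_{\mathrm{S}}$, the resulting gap is uniform over all channel realizations and operating SNRs, delivering the constant-bit claim.

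There is essentially no genuine obstacle in this step — all the real work has been offloaded to Lemmas~\ref{outerbound} and~\ref{innerbound}. The only point worth being careful about is ensuring that the $c_1,c_2$ quantities defined in \eqref{gap} match exactly the offsets that appear when upper-bounding the differential-entropy and log-determinant terms in the achievability and converse derivations; in particular, one should confirm that the sum-rate gap telescopes cleanly to $c_1+c_2$ with no leftover cross terms arising from the vector Han--Kobayashi common/private split and the side-channel noise inflation by $W$. If that telescoping indeed holds as stated, then Theorem~\ref{mainthe1} follows immediately from the shift argument above.
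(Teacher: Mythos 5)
Your proposal is correct and takes essentially the same route as the paper: the paper's proof is exactly this constraint-by-constraint shift of any point of $\mathcal{C}(\mathcal{H})$ into $\mathcal{R}_{\rm BC}(\mathcal{H})$ using Lemmas~\ref{outerbound} and~\ref{innerbound}, followed by converting the bit/s offset into bit/s/Hz. The only cosmetic difference is the normalization: the paper divides the rate offset by the total bandwidth $W_m+W_s$ and therefore states the achievable shifted point as $\big((R_{\mathrm{dl}}-(W_m+W_s)c_1)^+,(R_{\mathrm{ul}}-(W_m+W_s)c_2)^+\big)$, whereas you shift by $W_m c_i$ and normalize by $W_m$; both yield a gap of at most $\max\{c_1,c_2\}$ bit/s/Hz.
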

\begin{proof}
The proof is straightforward. From Lemma~\ref{outerbound} and Lemma~\ref{innerbound}, we can calculate the rate difference and divide it by the total bandwidth $W_m+W_s$ of the system.
In other word, for any given rate pair $(R_{\mathrm{dl}},R_{\mathrm{ul}})\in\mathcal{C(H)}$~(bit/s), the rate pair $\big((R_{\mathrm{dl}}-(W_m+W_s)c_1)^+,(R_{\mathrm{ul}}-(W_m+W_s)c_2)^+\big)$ is achievable in $\mathcal{R}_{\rm BC}(\mathcal{H})$.
\end{proof}
In the SISO case, we can easily verify that the vector bin-and-cancel achieves the capacity region to within one bit.
\subsection{Achievability}\label{vbc}
In this section, we will describe the vector bin-and-cancel scheme used to show the achievability in Lemma~\ref{innerbound}. 
In vector bin-and-cancel, we use Han-Kobayashi \cite{han1981new} style common-private message splitting with a simple power splitting.
The common message can be decoded at both receivers while the private message can only be decoded at the intended receiver. The downlink message $\omega_{\mathrm{dl}}$ only consists of 
private message for the downlink receiver which is of size $2^{nR_{\rm dl}}$, and is encoded into codeword $X_{\mathrm{dl}}$. The uplink message is divided into the common part $\omega_{\mathrm{ul},c}$ of size $2^{nR_{\mathrm{ul},c}}$
and the private part  $\omega_{\mathrm{ul},p}$ of size $2^{nR_{\mathrm{ul},p}}$. The uplink codeword is then obtained by superposition of the codewords of both $\omega_{\mathrm{ul},c}$ and $\omega_{\mathrm{ul},p}$,
\begin{gather}
\begin{aligned}
X_{\mathrm{ul}}=S_{\mathrm{ul}}+U_{\mathrm{ul}}, \nonumber
\end{aligned}
\end{gather}
where $S_{\mathrm{ul}}$ and $U_{\mathrm{ul}}$ are the codewords of uplink common message $\omega_{\mathrm{ul},c}$ and private message~$\omega_{\mathrm{ul},p}$, respectively.

Next, we partition the uplink common message $\omega_{\mathrm{ul},c}$: the common message set is divided into $2^{nR_{\rm S}}$ equal size bins such that $\mathcal{B}(l)=\left[(l-1)2^{n(R_{\mathrm{ul},c}-R_{\rm S})}+1:l2^{n(R_{\mathrm{ul},c}-R_{\rm S})}\right], l\in[1:2^{nR_{\rm S}}]$. The total number of bin indices $2^{nR_{\rm S}}$ is determined by the strength of the side-channel,~$\alpha_{\mathrm{S}}$, and the bandwidth ratio~$W$. The bin index $l$ is then encoded into codeword $X_{\mathrm{S}}$ and sent from the uplink transmit antenna arrays over the side-channel, which is shown in Fig.~\ref{BCP1}.

All the codewords are mutually independent complex Gaussian random vectors with covariance matrices given as follows to satisfy the power constraint given in (\ref{pcc}):
\begin{gather}
\begin{aligned}
\mathbb{E}(X_{\mathrm{dl}}X_{\mathrm{dl}}^\dagger)&=\frac{1}{M_{\mathrm{dl}}}I_{M_{\mathrm{dl}}},~~ \mathbb{E}(U_{\mathrm{ul}}U_{\mathrm{ul}}^\dagger)=\frac{1}{M_{\mathrm{ul}}}(I_{M_{\mathrm{ul}}}+\bar{\lambda}\rho_{\mathrm{I}}H_{\mathrm{I}}^\dagger H_{\mathrm{I}})^{-1}\\
\mathbb{E}(S_{\mathrm{ul}}S_{\mathrm{ul}}^\dagger)&=\frac{1}{M_{\mathrm{ul}}}I_{M_{\mathrm{ul}}}-\mathbb{E}(U_{\mathrm{ul}}U_{\mathrm{ul}}^\dagger),~~\mathbb{E}(X^s_{\mathrm{ul}}X_{\mathrm{ul}}^{s\dagger})=\frac{1}{M_{\mathrm{ul}}}I_{M_{\mathrm{ul}}},
\label{powerallocate}
\end{aligned}
\end{gather}
where $\lambda\in(0,1), \bar{\lambda}+\lambda=1$.
The parameter $\lambda$ denotes the fraction of power allocated to the side-channel. For the power splitting between the uplink private and common message, we set the power of the private message such that its received signal strength is below the noise floor at each unintended receiver's antenna. And we allocate the power of the codewords equally among the transmit antenna array.
\begin{figure}[h!]
\begin{minipage}[b]{0.5\linewidth}
  \centering
     \scalebox{0.3}{\includegraphics[trim = 50mm
      50mm 45mm 52mm, clip]{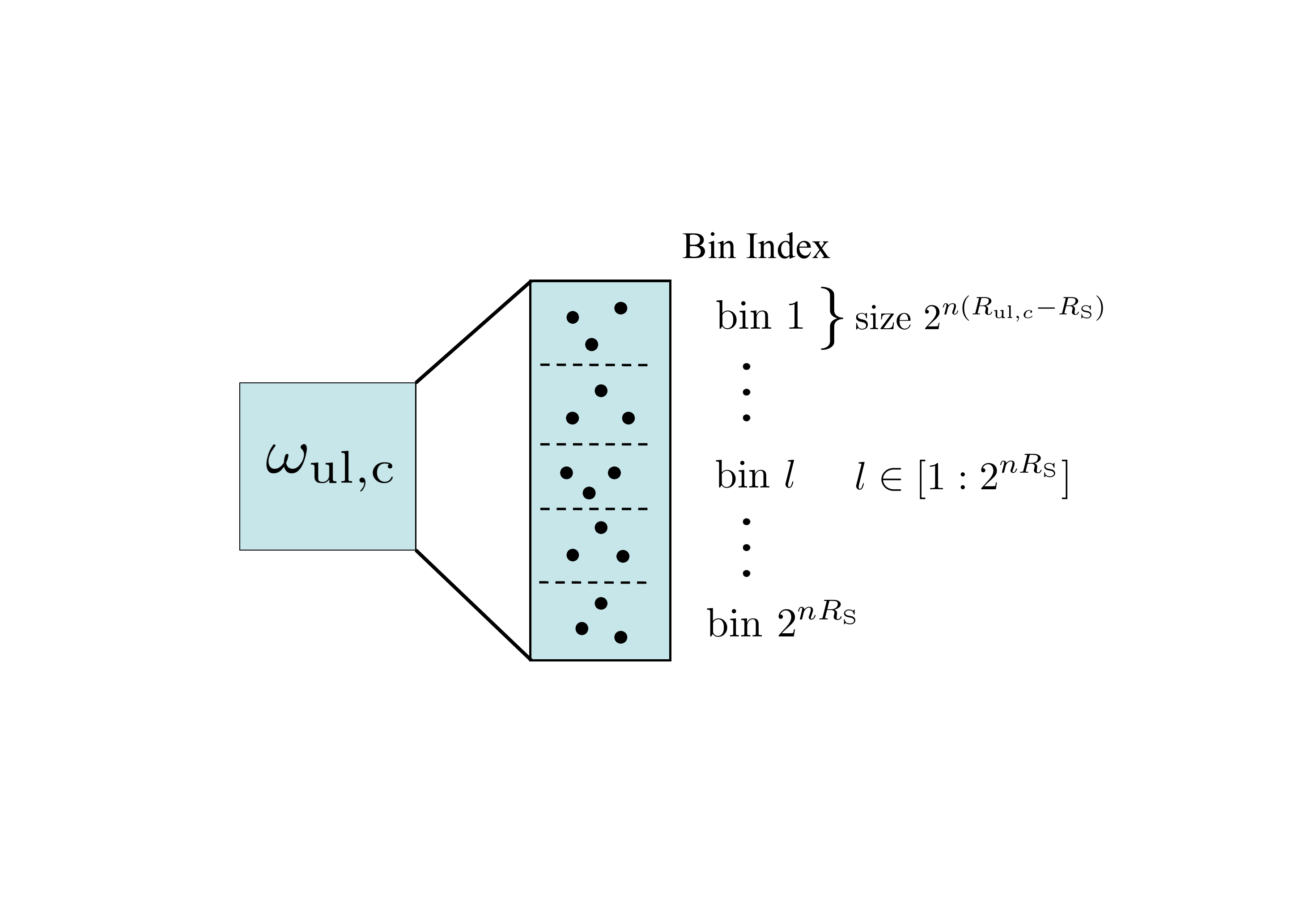}}
  \caption{Binning of the common message at uplink transmitter.}
\label{BCP1}
\end{minipage}
\hspace{0.1cm}
\begin{minipage}[b]{0.4\linewidth}
  \centering
    \scalebox{0.3}{\includegraphics[trim = 50mm
      55mm 50mm 53mm, clip]{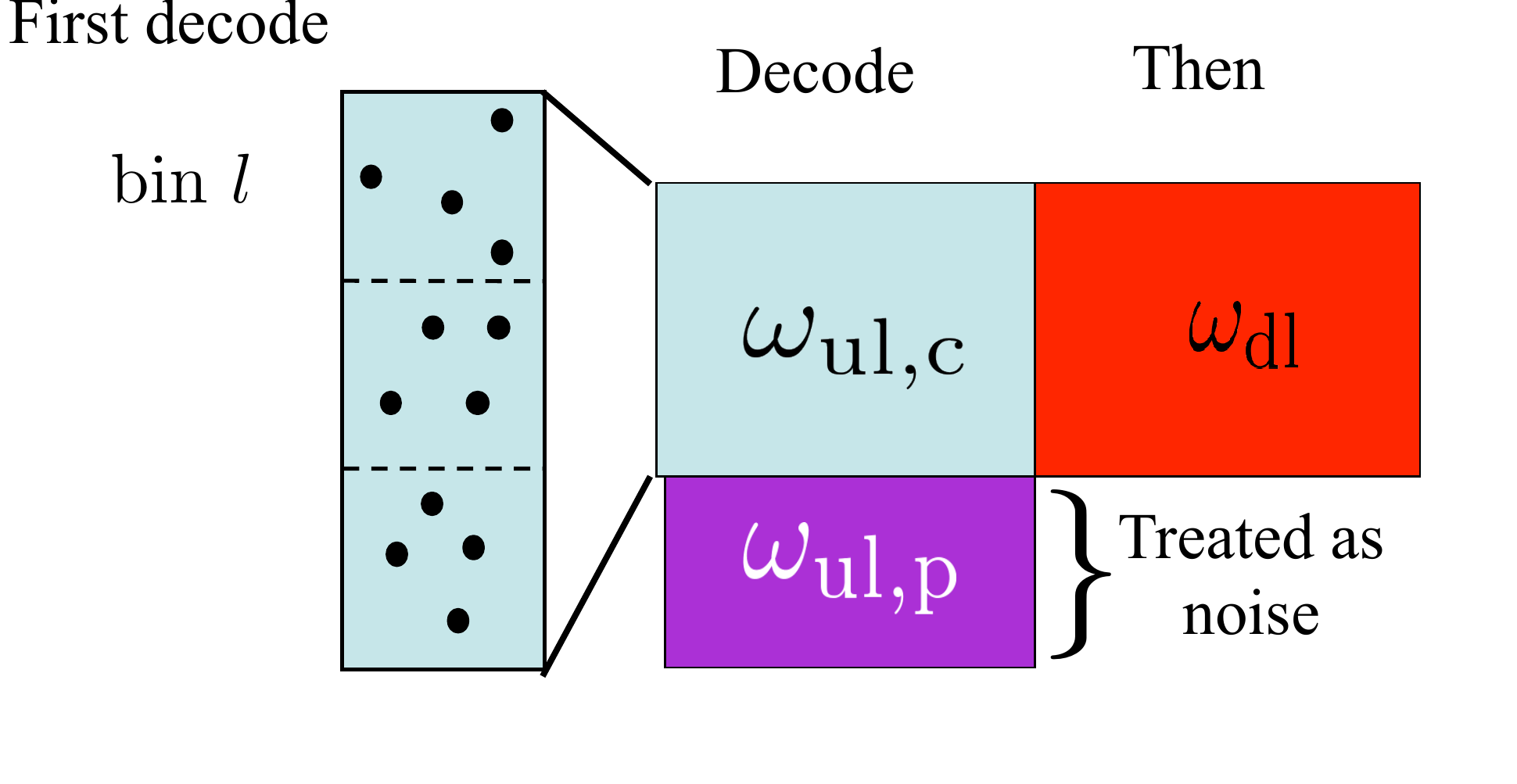}}
  \caption{Decoding at downlink receiver.}
\label{BCP2}
\end{minipage}
\end{figure}

Now we describe the decoding process. The decoding at the BS is straightforward. Upon receiving $Y_{\mathrm{ul}}$, the BS decodes ($\omega_{\mathrm{ul},c}, \omega_{\mathrm{ul},p}$). The achievable rate region of ($R_{\mathrm{ul},c},R_{\mathrm{ul},p}$) is the capacity region of multiple-access channel denoted as $\mathcal{C}_1$, where
\begin{gather}
\begin{aligned}
R_{\mathrm{ul},c}&\leq I(S_{\mathrm{ul}};Y_{\mathrm{dl}}|X_{\mathrm{dl}})\\
R_{\mathrm{ul},p}&\leq I(U_{\mathrm{ul}};Y_{\mathrm{ul}}|S_{\mathrm{ul}})\\
R_{\mathrm{ul},c}+R_{\mathrm{ul},p}&\leq I(S_{\mathrm{ul}},U_{\mathrm{ul}};Y_{\mathrm{ul}})
\label{discreteAR}
\end{aligned}
\end{gather}

The decoding at the downlink receiver has two stages as shown in Fig.~\ref{BCP2}. In stage one, upon receiving  $Y_{\mathrm{S}}$, the downlink receiver first decodes the bin index $l$ from the side-channel. In stage two, upon receiving $Y_{\mathrm{dl}}$, the downlink receiver decodes ($\omega_{\rm dl}, \omega_{\mathrm{ul},c}$) with the help of side-channel information while treating uplink private message $\omega_{\mathrm{ul},p}$ as noise.\footnote{With the assistance of the bin index, more uplink common message can be decoded which otherwise is restricted by the interference link.} This is a multiple-access channel~(MAC) with side-channel whose capacity region denoted as $\mathcal{C}_2$ is given in~\cite{JingwenTWC} (see Lemma 1), hence  we have
\begin{gather}
\begin{aligned}
R_{\mathrm{dl}}&\leq I(X_{\mathrm{dl}};Y_{\mathrm{dl}}|S_{\mathrm{ul}})\\
R_{\mathrm{ul},c}&\leq I(S_{\mathrm{ul}};Y_{\mathrm{dl}}|X_{\mathrm{dl}})+I(X_{\mathrm{S}};Y_{\mathrm{S}})\\
R_{\mathrm{dl}}+R_{\mathrm{ul},c}&\leq I(X_{\mathrm{dl}},S_{\mathrm{ul}};Y_{\mathrm{dl}})+I(X_{\mathrm{S}};Y_{\mathrm{S}}).
\label{discreteAR}
\end{aligned}
\end{gather}
The achievable rate region of side-channel assisted full-duplex network is the set of all $(R_{\rm dl},R_{\rm ul})$ such that $R_{\rm dl}, R_{\rm ul} = R_{\mathrm{ul},c} +R_{\mathrm{ul},p}$ satisfying that $(R_{\mathrm{ul},c},R_{\mathrm{ul},p})\in \mathcal{C}_1$ and $(R_{\mathrm{dl}},R_{\mathrm{ul},c})\in \mathcal{C}_2$. Using Fourier-Motzkin elimination, the achievable rate pairs $(R_{\rm dl},R_{\rm ul})$ are constrained by the following rate region
\begin{gather}
\begin{aligned}
R_{\mathrm{dl}}&\leq I(X_{\mathrm{dl}};Y_{\mathrm{dl}}|S_{\mathrm{ul}})\\
R_{\mathrm{ul}}&\leq \min \{I(S_{\mathrm{ul}},U_{\mathrm{ul}};Y_{\mathrm{ul}}),I(U_{\mathrm{ul}};Y_{\mathrm{ul}}|S_{\mathrm{ul}})+I(S_{\mathrm{ul}};Y_{\mathrm{dl}}|X_{\mathrm{dl}})+I(X_{\mathrm{S}};Y_{\mathrm{S}})\}\\
R_{\mathrm{dl}}+R_{\mathrm{ul}}&\leq I(U_{\mathrm{ul}};Y_{\mathrm{ul}}|S_{\mathrm{ul}})+I(X_{\mathrm{dl}},S_{\mathrm{ul}};Y_{\mathrm{dl}})+I(X_{\mathrm{S}};Y_{\mathrm{S}}).
\label{discreteAR}
\end{aligned}
\end{gather}
The achievable rate region given above is calculated in Appendix~\ref{inner}, thus we can obtain the explicit achievable rate expression in Lemma~2.
\subsection{High SNR Approximation}
From Theorem~\ref{mainthe1}, vector bin-and-cancel scheme achieves the capacity region to within a constant bit for \emph{all} values of channel parameters under time-invariant channels. In the high SNR limit, a constant number of bits~(which do not vary with respect to SNR) are insignificant and
can be ignored on the scale of interest. Therefore we can establish the high SNR capacity region approximation to within~$\mathcal{O}(1)$ in the following corollary.
\begin{corollary}\label{capacity}
For a given the channel realization $\mathcal{H}$, vector bin-and-cancel is asymptotically capacity achieving and the asymptotic capacity approximation $\mathcal{C(H)}$ is given by
\begin{gather}
\begin{aligned}
\mathcal{C(H)}\doteq\Bigg\{(R_{\mathrm{dl}},R_{\mathrm{ul}}):R_{\mathrm{dl}}&\leq W_m\mathrm{log}\left|I_{N_{\mathrm{dl}}}+\rho_{\mathrm{dl}}H_{\mathrm{dl}}H_{\mathrm{dl}}^\dagger\right|\triangleq C_{\mathrm{dl}},\\
R_{\mathrm{ul}}&\leq W_m\mathrm{log}\left|I_{N_{\mathrm{ul}}}+\bar{\lambda}\rho_{\mathrm{ul}}H_{\mathrm{ul}}H_{\mathrm{ul}}^\dagger\right|\triangleq C_{\mathrm{ul}},\\
R_{\mathrm{dl}}+R_{\mathrm{ul}}&\leq W_m\bigg(\mathrm{log}\left|I_{N_{\mathrm{dl}}}+\rho_{\mathrm{dl}}H_{\mathrm{dl}}H_{\mathrm{dl}}^\dagger+\bar{\lambda}\rho_{\mathrm{I}}H_{\mathrm{I}}H_{\mathrm{I}}^\dagger\right|+W\mathrm{log}\left|I_{N_{\mathrm{dl}}}+\frac{\lambda\rho_{\mathrm{S}}}{W}H_{\mathrm{S}}H_{\mathrm{S}}^{\dagger}\right|\\
&+\mathrm{log}\left|I_{N_{\mathrm{ul}}}+\bar{\lambda}\rho_{\mathrm{ul}}H_{\mathrm{ul}}(I_{M_{\mathrm{ul}}}+\bar{\lambda}\rho_{\mathrm{I}}H_{\mathrm{I}}^\dagger H_{\mathrm{I}})^{-1}H_{\mathrm{ul}}^\dagger\right|\bigg)\triangleq C_{\mathrm{sum}}\Bigg\}.
\label{hignSNRcap}
\end{aligned}
\end{gather}
\end{corollary}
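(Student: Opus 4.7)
The plan is to deduce Corollary~\ref{capacity} directly from the outer bound in Lemma~\ref{outerbound} and the inner bound in Lemma~\ref{innerbound}, using the fact that the $\doteq$ notation is insensitive to any additive term that does not scale with $\log\rho$. First I would observe that the right-hand sides of (\ref{hignSNRcap}) are exactly the three outer-bound expressions of Lemma~\ref{outerbound}, except that the additive constant $N_{\mathrm{dl}}$ appearing inside $\overline{C}_{\mathrm{sum}}$ has been absorbed. Since $N_{\mathrm{dl}}$ is a fixed integer independent of $\rho$, it satisfies $\lim_{\rho\to\infty}\log(N_{\mathrm{dl}})/\log\rho=0$, so the outer bound of Lemma~\ref{outerbound} is of the form $C_{\mathrm{sum}}+\mathcal{O}(1)$, which is equivalent to $C_{\mathrm{sum}}$ under $\doteq$. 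This establishes the converse direction.

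Next I would invoke Theorem~\ref{mainthe1}: for every $(R_{\mathrm{dl}},R_{\mathrm{ul}})\in\mathcal{C(H)}$, the pair $\bigl((R_{\mathrm{dl}}-(W_m+W_s)c_1)^+, (R_{\mathrm{ul}}-(W_m+W_s)c_2)^+\bigr)$ lies in $\mathcal{R}_{\mathrm{BC}}(\mathcal{H})$, where $c_1$ and $c_2$ are given in (\ref{gap}). Inspecting (\ref{gap}) shows that $c_1$ and $c_2$ depend only on $M_{\mathrm{dl}},N_{\mathrm{dl}},M_{\mathrm{ul}},N_{\mathrm{ul}}$ and $W$, and in particular are constants with respect to $\rho$. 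Therefore each of the three bounds in (\ref{ubequ1}) differs from the corresponding bound in (\ref{hignSNRcap}) by an additive $\mathcal{O}(1)$ term that is again washed out by the $\doteq$ limit, because any dominant term on the right-hand side of (\ref{hignSNRcap}) grows like $\Theta(\log\rho)$ whenever $\alpha_i>0$ for some $i$ (a standard property of $\log\bigl|I+\rho H H^\dagger\bigr|$).

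I would conclude the argument by combining the two directions: the achievable region $\mathcal{R}_{\mathrm{BC}}(\mathcal{H})$ and the outer bound of Lemma~\ref{outerbound} sandwich $\mathcal{C(H)}$, and both agree with the region on the right-hand side of (\ref{hignSNRcap}) up to $\mathcal{O}(1)$. Under the $\doteq$ equivalence, this yields the claimed asymptotic characterization, and simultaneously shows that vector bin-and-cancel is asymptotically optimal since its achievable region meets the outer bound to within $\mathcal{O}(1)$.

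I do not expect any step here to be a genuine obstacle: the heavy lifting has already been done in Lemma~\ref{outerbound} (converse) and Lemma~\ref{innerbound} (achievability with an explicit gap). The only subtlety is being careful that $c_1,c_2$ and the additive $N_{\mathrm{dl}}$ truly do not hide a $\rho$-dependence, which a direct inspection of (\ref{gap}) confirms. In effect, the corollary is the high-SNR specialization of Theorem~\ref{mainthe1}, obtained by dividing both sides by $\log\rho$ and letting $\rho\to\infty$.
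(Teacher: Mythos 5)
Your proposal is correct and follows essentially the same route as the paper: the paper justifies Corollary~\ref{capacity} by noting that, from Theorem~\ref{mainthe1}, the achievable region of vector bin-and-cancel is within a constant number of bits (independent of $\rho$) of the outer bound in Lemma~\ref{outerbound}, and such constants (including the additive $N_{\mathrm{dl}}$ term and the gaps $c_1,c_2$) are washed out under the $\doteq$ high-SNR equivalence. Your careful check that $c_1$, $c_2$, and $N_{\mathrm{dl}}$ carry no hidden $\rho$-dependence is exactly the observation the paper relies on.
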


The high SNR capacity approximation can be used to derive the generalized degrees
of freedom~($\mathsf{GDoF}$). The $\mathsf{GDoF}$ captures the asymptotic behavior of the capacity and the corresponding optimal schemes, allowing different links to grow at disparate rates.

The $\mathsf{GDoF}$ region is defined as follows \footnote{Notice that our definition deviates slightly from the conventional definition of $\mathsf{GDoF}$ in that we account for the asymmetric bandwidths of different links and the rate is calculated as bit/s instead of bit/s/Hz.}
\begin{gather}
\begin{aligned}
\bigg\{(\mathsf{DoF}_{\mathrm{dl}},\mathsf{DoF}_{\mathrm{ul}}):\mathsf{DoF}_{i}=\lim_{\rho\rightarrow\infty}\frac{R_i(\rho_i)}{W_m\mathrm{log}\rho},~i\in\{\mathrm{dl,ul}\}~ \text{and}~(R_{\mathrm{dl}},R_{\mathrm{ul}})\in\mathcal{C(H)}\bigg\},
\end{aligned}
\end{gather}
where $W_m\mathrm{log}\rho$ is the point-to-point main-channel capacity with nominal $\mathsf{SNR}$ in bit/s. $\mathsf{DoF}_{\mathrm{dl}}$ and $\mathsf{DoF}_{\mathrm{ul}}$ denote the degrees of freedom~($\mathsf{DoF}$) of downlink and uplink, respectively. Using high SNR capacity approximation, we state the $\mathsf{GDoF}$ region as follows.
\begin{corollary}\label{csitgdof}
Assuming $\alpha_{\mathrm{dl}}=\alpha_{\mathrm{ul}}=1$, the $\mathsf{GDoF}$ region of $(M_{\mathrm{dl}},N_{\mathrm{dl}},M_{\mathrm{ul}},N_{\mathrm{ul}})$ side-channel assisted MIMO full-duplex network satisfies the following constraints
\begin{gather}
\begin{aligned}
  &\mathsf{DoF}_{\mathrm{dl}}\leq m_{\mathrm{dl}},~~\mathsf{DoF}_{\mathrm{ul}}\leq m_{\mathrm{ul}},\\
  &\mathsf{DoF}_{\mathrm{dl}}+\mathsf{DoF}_{\mathrm{ul}}\leq f\Big(N_{\mathrm{ul}},\big((1-\alpha_{\mathrm{I}})^+,m_{\mathrm{I}}\big),\big(1,(M_{\mathrm{ul}}-N_{\mathrm{dl}})^+\big)\Big)\\
  &+f\big(N_{\mathrm{dl}},(\alpha_{\mathrm{I}},M_{\mathrm{ul}}),(1,M_{\mathrm{dl}})\big)+Wf\big(N_{\mathrm{dl}},(\alpha_{\mathrm{S}},M_{\mathrm{ul}})\big),
\end{aligned}
\end{gather}
where $m_{\mathrm{dl}}=\min\{M_{\mathrm{dl}},N_{\mathrm{dl}}\}, m_{\mathrm{ul}}=\min\{M_{\mathrm{ul}},N_{\mathrm{ul}}\}, m_{\mathrm{I}}=\min\{M_{\mathrm{ul}},N_{\mathrm{dl}}\}$ as defined in (\ref{gap}); function $f\big(x,(y_1,x_1),(y_2,x_2)\big)=\min\{x,x_1\}y_1^++\min\{(x-x_1)^+,x_2\}y_2^+$ for $y_1\geq y_2$.
\end{corollary}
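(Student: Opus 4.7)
The plan is to derive the GDoF region by taking the high-SNR limit of the capacity approximation $\mathcal{C(H)}$ given in Corollary~\ref{capacity}. Setting $\rho_i=\rho^{\alpha_i}$, dividing each rate bound by $W_m\log\rho$, and letting $\rho\to\infty$ produces a set of inequalities in $\mathsf{DoF}_{\rm dl},\mathsf{DoF}_{\rm ul}$. Theorem~\ref{mainthe1} guarantees that the inner and outer bounds differ only by an additive $\mathcal{O}(1)$ constant that vanishes under this normalization, so the resulting region is simultaneously an outer bound and achievable by vector bin-and-cancel; no separate achievability argument is needed.

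The two individual constraints follow from the standard high-SNR identity $\log|I+\rho HH^\dagger|=\min(M,N)\log\rho+\mathcal{O}(1)$ for a generic full-rank $N\times M$ channel, which immediately gives $\mathsf{DoF}_{\rm dl}\leq m_{\rm dl}$ and $\mathsf{DoF}_{\rm ul}\leq m_{\rm ul}$ (the power-splitting factor $\bar\lambda\in[0,1]$ is absorbed into the $\mathcal{O}(1)$ term since $\alpha_{\rm dl}=\alpha_{\rm ul}=1$). For the sum bound, I would treat each of the three log-determinant terms in $C_{\rm sum}$ separately. The side-channel term $W\log|I+\tfrac{\lambda\rho_{\rm S}}{W}H_{\rm S}H_{\rm S}^\dagger|$ scales at high SNR as $Wm_{\rm I}\alpha_{\rm S}^+\log\rho+\mathcal{O}(1)$, matching $Wf(N_{\rm dl},(\alpha_{\rm S},M_{\rm ul}))$. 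The downlink-plus-interference term $\log|I+\rho_{\rm dl}H_{\rm dl}H_{\rm dl}^\dagger+\bar\lambda\rho_{\rm I}H_{\rm I}H_{\rm I}^\dagger|$ is the received-covariance log-determinant of a virtual two-user MIMO multiple-access channel at the downlink receiver, with $M_{\rm dl}$ antennas at strength $1$ and $M_{\rm ul}$ antennas at strength $\alpha_{\rm I}$; the classical MAC sum-GDoF formula, obtained by peeling off the stronger signal's eigendirections first, yields exactly $f(N_{\rm dl},(\alpha_{\rm I},M_{\rm ul}),(1,M_{\rm dl}))$.

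The third term is the main obstacle. I would apply the pushthrough identity to rewrite
\begin{align*}
\log\left|I_{N_{\rm ul}}+\bar\lambda\rho_{\rm ul}H_{\rm ul}Q^{-1}H_{\rm ul}^\dagger\right|=\log\left|Q+\bar\lambda\rho_{\rm ul}H_{\rm ul}^\dagger H_{\rm ul}\right|-\log\left|Q\right|,
\end{align*}
with $Q=I+\bar\lambda\rho_{\rm I}H_{\rm I}^\dagger H_{\rm I}$, reducing the noise-whitened uplink term to a difference of two log-determinants on $\mathbb{C}^{M_{\rm ul}}$. A generic-position subspace argument then handles the high-SNR limit: the row spaces of $H_{\rm I}^\dagger H_{\rm I}$ and $H_{\rm ul}^\dagger H_{\rm ul}$ have dimensions $m_{\rm I}$ and $m_{\rm ul}$ and are in general position inside $\mathbb{C}^{M_{\rm ul}}$. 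When $\alpha_{\rm I}\geq 1$, the $m_{\rm I}$ dominant eigenvalues of the combined matrix cancel exactly against $\log|Q|$, leaving $\min(N_{\rm ul},(M_{\rm ul}-N_{\rm dl})^+)\log\rho$ contributed by uplink directions orthogonal to $H_{\rm I}^\dagger$. When $\alpha_{\rm I}<1$, the uplink eigenvalues dominate in their $m_{\rm ul}$-dimensional span while the interference retains $\rho^{\alpha_{\rm I}}$-eigenvalues in the complementary subspace, and the residual after subtracting $\log|Q|$ matches the weaker $(1-\alpha_{\rm I})$-strength summand of $f$. Careful bookkeeping of the effective dimensions across the two regimes recovers $f(N_{\rm ul},((1-\alpha_{\rm I})^+,m_{\rm I}),(1,(M_{\rm ul}-N_{\rm dl})^+))$, completing the derivation of the sum-GDoF bound.
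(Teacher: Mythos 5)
Your proposal is correct, and it effectively supplies the computation that the paper delegates rather than spells out: the paper's own ``proof'' of this corollary is a pointer to the technique of \cite{GDOFMIMOIC} (Appendix C there) plus an operational interpretation through the vector bin-and-cancel scheme --- the uplink private streams sent partly in the null space of $H_{\mathrm{I}}$ and partly at power $\rho^{-\alpha_{\mathrm{I}}}$ (signal-space plus signal-scale alignment) account for the first summand, while the downlink message and uplink common message arriving at the $N_{\mathrm{dl}}$-antenna receiver as a side-channel-assisted MAC account for the other two. Your route instead evaluates the pre-logs of the three log-determinant terms in the constant-gap region of Corollary~\ref{capacity}/Theorem~\ref{mainthe1} directly, which settles outer bound and achievability in one stroke; the only genuinely delicate step, the noise-whitened uplink term, you reduce via Sylvester's identity $\log\left|I_{N_{\mathrm{ul}}}+\bar{\lambda}\rho_{\mathrm{ul}}H_{\mathrm{ul}}Q^{-1}H_{\mathrm{ul}}^{\dagger}\right|=\log\left|Q+\bar{\lambda}\rho_{\mathrm{ul}}H_{\mathrm{ul}}^{\dagger}H_{\mathrm{ul}}\right|-\log\left|Q\right|$ with $Q=I_{M_{\mathrm{ul}}}+\bar{\lambda}\rho_{\mathrm{I}}H_{\mathrm{I}}^{\dagger}H_{\mathrm{I}}$, and a case check over $\alpha_{\mathrm{I}}\lessgtr 1$ and the relative sizes of $M_{\mathrm{ul}},N_{\mathrm{ul}},N_{\mathrm{dl}}$ confirms that the residual exponent equals the first $f$-term, so your bookkeeping does close. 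The one caution is the ordering convention in $f$: the formula is stated for $y_1\geq y_2$, so in regimes where the literal argument order violates this (e.g.\ $\alpha_{\mathrm{I}}\geq 1$ in the private-message term, or $\alpha_{\mathrm{I}}<1$ in the MAC term) the pairs must be applied stronger-exponent-first; your ``peel off the stronger signal's eigendirections first'' prescription does exactly this implicitly, and making that swap explicit completes the argument.
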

\begin{proof}
The proof is akin to~\cite{GDOFMIMOIC}~(see Appendix C), so we will only provide an interpretation of the $\mathsf{GDoF}$ result here.

First, the $\mathsf{DoF}$ of downlink and uplink is limited by the number of transmit and receive antennas, much like the point-to-point MIMO channel.
Next we will explain the sum $\mathsf{GDoF}$. Let $\mathsf{DoF}_{\mathrm{ul},c}$ and $\mathsf{DoF}_{\mathrm{ul},p}$ denote the $\mathsf{DoF}$ of the uplink common message and private message, respectively. 


Adopting the singular value decomposition~(SVD), we can decompose the interference channel as $H_{\mathrm{I}}=U\Lambda V^{\dagger}$, where $U$ and $V$ are $N_{\mathrm{dl}}\times N_{\mathrm{dl}}$ and $M_{\mathrm{ul}}\times M_{\mathrm{ul}}$ unitary matrices, respectively, $\Lambda$ is $N_{\mathrm{dl}}\times M_{\mathrm{ul}}$ diagonal matrix containing singular values of $H_{\mathrm{I}}$. Thus $H_{\mathrm{I}}$ is decomposed into $m_{\mathrm{I}}$ parallel channels, leaving $(M_{\rm ul}-m_{\rm I})^+=(M_{\mathrm{ul}}-N_{\mathrm{dl}})^+$ effective inputs at uplink transmitter that do not cause any interference to the downlink receiver.
The uplink transmitter divides the private streams into two parts.  The first part is sent along the $(M_{\mathrm{ul}}-N_{\mathrm{dl}})^+$-dimensional null space of interference channel $H_{\mathrm{I}}$ and reaches BS at an SNR of $\rho$ with $N_{\mathrm{ul}}$ receive antennas. In the remaining $m_{\rm I}$ dimensions, the second part is transmitted at a power level of $\rho^{-\alpha_{\mathrm{I}}}$ such that it reaches the unintended receiver at the noise floor and reaches BS at an SNR of $\rho^{(1-\alpha_{\mathrm{I}})^+}$.  The process can be viewed as a combination of signal space and signal scale interference alignment.
Thus the $\mathsf{DoF}$ of the uplink private message is
\begin{gather}
\begin{aligned}
\mathsf{DoF}_{\mathrm{ul},p}=f\Big(N_{\mathrm{ul}},\big((1-\alpha_{\mathrm{I}})^+,m_{\mathrm{I}}\big),\big(1,(M_{\mathrm{ul}}-N_{\mathrm{dl}})^+\big)\Big).
\label{sum1}\end{aligned}
\end{gather}

Since the common message can be decoded at both receivers, the downlink receiver with $N_{\mathrm{dl}}$ receive antennas is a side-channel assisted multiple access channel receiver. The downlink message $\omega_{\mathrm{dl}}$ reaches the downlink receiver at an SNR of $\rho$ with $M_{\mathrm{dl}}$ transmit antennas. The uplink common message $\omega_{\mathrm{ul},c}$ reaches the downlink receiver through both main-channel at an SNR of $\rho^{\alpha_{\mathrm{I}}}$ and side-channel as an orthogonal spectral space at an SNR of $\rho^{W\alpha_{\mathrm{S}}}$ with $M_{\mathrm{ul}}$ transmit antennas. Thus we have 
\begin{gather}
\begin{aligned}
\mathsf{DoF}_{\mathrm{dl}}+\mathsf{DoF}_{\mathrm{ul},c}=f\big(N_{\mathrm{dl}},(\alpha_{\mathrm{I}},M_{\mathrm{ul}}),(1,M_{\mathrm{dl}})\big)+Wf\big(N_{\mathrm{dl}},(\alpha_{\mathrm{S}},M_{\mathrm{ul}})\big). \label{gdof2}
\end{aligned}
\end{gather}
Combining (\ref{sum1}) and (\ref{gdof2}) leads to the sum $\mathsf{GDoF}$.
\end{proof}

\begin{remark}
When $W=0$, i.e., there is no side-channel, the $\mathsf{GDoF}$ is the same as that of MIMO Z-interference channel in \cite{GDOFMIMOIC}, hence we conclude that the implicit feedback at the full-duplex capable BS does not help improve $\mathsf{GDoF}$ regime in the two-user MIMO full-duplex network. 
This is due to the fact there is only one-sided interference. When $W>0$, the implicit feedback is still not useful in terms of $\mathsf{GDoF}$. because our scheme does not rely on any feedback.
\end{remark}

\subsection{Special Cases}
In this section, we give several special cases to illustrate the $\mathsf{GDoF}$ results above.

\begin{theorem}(Case A)\label{sc1}
When $M_{\rm dl}=M_{\rm ul}=M, N_{\rm dl}=N_{\rm ul}=N$, and $\alpha_{\rm ul}=\alpha_{\rm dl}=1$, the sum $\mathsf{GDoF}$ per antenna denoted as $\frac{\mathsf{GDoF}_{\rm sum}}{\min(M,N)}$ for the symmetric side-channel assisted MIMO full-duplex network is given by
\begin{gather}
\begin{aligned}
\frac{\mathsf{GDoF}_{\rm sum}}{\min(M,N)}=
  \begin{cases}
\min\Big\{2, 2-\left(2-\frac{\max(M,N)}{\min(M,N)}\right)^+\alpha_{\rm I}+W\alpha_{\rm S}\Big\}&\alpha_{\rm I}<1, \\
  \min\Big\{2,\alpha_{\rm I}+\frac{\max(M,N)}{\min(M,N)}-1+W\alpha_{\rm S})\Big\}&\alpha_{\rm I}\geq 1.
   \end{cases}\nonumber
\end{aligned}
\end{gather}
\end{theorem}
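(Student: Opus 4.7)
The plan is to specialize the general sum bound from Corollary~\ref{csitgdof} to the symmetric setting and then evaluate the three $f$-terms by a case analysis. With $M_{\rm dl}=M_{\rm ul}=M$, $N_{\rm dl}=N_{\rm ul}=N$ and $\alpha_{\rm dl}=\alpha_{\rm ul}=1$, the quantities $m_{\rm dl}$, $m_{\rm ul}$, $m_{\rm I}$ all collapse to $\min(M,N)$, the side-channel term becomes $Wf(N,(\alpha_{\rm S},M))=W\min(M,N)\alpha_{\rm S}$, and the two individual bounds $\mathsf{DoF}_{\rm dl},\mathsf{DoF}_{\rm ul}\le\min(M,N)$ account for the outer $\min\{2,\cdot\}$ after normalization by $\min(M,N)$.

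The remaining work is to compute the two nontrivial $f$-terms. For the uplink-private term $f\!\bigl(N,((1-\alpha_{\rm I})^+,\min(M,N)),(1,(M-N)^+)\bigr)$ I would branch first on whether $\alpha_{\rm I}<1$ or $\alpha_{\rm I}\ge 1$, since this decides which stream has the higher signal level and therefore priority in the MAC-like decomposition that $f$ encodes. Within each branch I would further split on the antenna ratio: (i) $M\ge 2N$ (the $(M-N)^+$-dimensional null space of $H_{\rm I}$ already exhausts all $N$ receive dimensions and no aligned stream fits), (ii) $N\le M<2N$ (the null space takes $M-N$ dimensions and the remaining $2N-M$ carry the aligned stream at level $(1-\alpha_{\rm I})^+$), and (iii) $M<N$ (no null space). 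The downlink-plus-common term $f(N,(\alpha_{\rm I},M),(1,M))$ receives an analogous split, with the dominant stream being the downlink at level $1$ when $\alpha_{\rm I}<1$ and the uplink-common at level $\alpha_{\rm I}$ when $\alpha_{\rm I}\ge 1$.

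Combining the three terms case-by-case and then regrouping produces, in the regime $\alpha_{\rm I}<1$, a sum bound of the form $2\min(M,N)-\bigl(2\min(M,N)-\max(M,N)\bigr)^+\alpha_{\rm I}+W\min(M,N)\alpha_{\rm S}$, which after dividing by $\min(M,N)$ gives $2-\bigl(2-\tfrac{\max(M,N)}{\min(M,N)}\bigr)^+\alpha_{\rm I}+W\alpha_{\rm S}$. In the regime $\alpha_{\rm I}\ge 1$, the factor $(1-\alpha_{\rm I})^+$ kills the aligned part of the private stream while $\alpha_{\rm I}$ is the priority level in the downlink decoder, and the same case split yields $\min(M,N)(\alpha_{\rm I}-1)+\max(M,N)+W\min(M,N)\alpha_{\rm S}$; normalization produces $\alpha_{\rm I}+\tfrac{\max(M,N)}{\min(M,N)}-1+W\alpha_{\rm S}$. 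Intersecting each expression with the trivial $2$ from the individual bounds gives exactly the two branches of the theorem.

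The main obstacle is not conceptual but bookkeeping: I have to make sure the $\min\bigl((x-x_1)^+,x_2\bigr)$ factor inside $f$ is evaluated correctly in each antenna-ratio sub-case, and that sub-cases in which the raw sum exceeds $2\min(M,N)$ (notably $M\ge 2N$ or $N\ge 2M$ together with $\alpha_{\rm I}\ge 1$) are correctly collapsed by the outer cap so that the final two-line formula holds uniformly. I would verify continuity at the boundaries $M=N$, $M=2N$, $N=2M$, and $\alpha_{\rm I}=1$ as a sanity check on the algebra.
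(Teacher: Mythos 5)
Your proposal is correct and follows essentially the same route the paper intends: Theorem~\ref{sc1} is obtained by specializing the $\mathsf{GDoF}$ region of Corollary~\ref{csitgdof} to $M_{\rm dl}=M_{\rm ul}=M$, $N_{\rm dl}=N_{\rm ul}=N$, evaluating the three $f$-terms with priority given to the higher-exponent stream, and capping by the individual bounds $2\min(M,N)$. Your case bookkeeping (including collapsing the $M\geq 2N$ and $N\geq 2M$ sub-cases via the cap, and the unified expressions $2m-(2m-\max(M,N))^+\alpha_{\rm I}+Wm\alpha_{\rm S}$ and $m(\alpha_{\rm I}-1)+\max(M,N)+Wm\alpha_{\rm S}$ with $m=\min(M,N)$) checks out against the stated formula.
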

In this case, one can observe that the sum $\mathsf{GDoF}$ per antenna increases linearly with the antenna ratio $\frac{\max(M,N)}{\min(M,N)}$ and side-channel quality $W\alpha_{\rm S}$.

Another case of interest is when the BS has more  antennas than mobile clients, i.e., $M_{\mathrm{dl}}, N_{\mathrm{ul}}\geq M_{\mathrm{ul}}, N_{\mathrm{dl}}$. This scenario is almost always true in practical systems and the ongoing trend is that the BS can accommodate many antennas such as in massive MIMO systems~\cite{larsson2013massive}, while the small-form factor mobiles will have a relatively fewer antennas due to its physical size constraint.
\begin{theorem}(Case B)\label{sc2}
When BS has more antennas than mobiles, i.e., $M_{\mathrm{dl}}, N_{\mathrm{ul}}\geq M_{\mathrm{ul}}, N_{\mathrm{dl}}$ with $\alpha_{\rm ul}=\alpha_{\rm dl}=1$, the sum $\mathsf{GDoF}$ per antenna denoted as $\frac{\mathsf{GDoF}_{\rm sum}}{\min(M_{\rm ul},N_{\rm dl})}$ is given as
\begin{gather}
\begin{aligned}
\frac{\mathsf{GDoF}_{\rm sum}}{\min(M_{\rm ul},N_{\rm dl})}=
  \begin{cases}
\min\Big\{\frac{m_X}{m_{\rm I}}+1, \frac{m_X}{m_{\rm I}}+1-\alpha_{\rm I}+W\alpha_{\rm S}\Big\}&\alpha_{\rm I}<1 \\
  \min\Big\{\frac{m_X}{m_{\rm I}}+1,\frac{m_X}{m_{\rm I}}-1+\alpha_{\rm I}+W\alpha_{\rm S}\Big\}&\alpha_{\rm I}\geq 1.
   \end{cases}\nonumber
\end{aligned}
\end{gather}
where $m_X=\max(M_{\rm ul},N_{\rm dl}), m_{\rm I}=\min (M_{\rm ul},N_{\rm dl})$.
\end{theorem}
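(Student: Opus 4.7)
The plan is to specialize the general sum-$\mathsf{GDoF}$ bound from Corollary~\ref{csitgdof} to the antenna regime $M_{\rm dl}, N_{\rm ul}\geq M_{\rm ul}, N_{\rm dl}$ and then normalize by $m_{\rm I}=\min(M_{\rm ul},N_{\rm dl})$. Under this antenna dominance, $m_{\rm dl}=N_{\rm dl}$ and $m_{\rm ul}=M_{\rm ul}$, so the two individual $\mathsf{DoF}$ bounds combine into $\mathsf{DoF}_{\rm dl}+\mathsf{DoF}_{\rm ul}\leq N_{\rm dl}+M_{\rm ul}=m_X+m_{\rm I}$; dividing by $m_{\rm I}$ gives the first branch $\frac{m_X}{m_{\rm I}}+1$ inside both $\min$'s.

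Next I evaluate the three $f$-terms making up the non-trivial sum bound. In the uplink-at-BS term, the stream widths sum to exactly $M_{\rm ul}$ via the identity $m_{\rm I}+(M_{\rm ul}-N_{\rm dl})^+=M_{\rm ul}$, and the hypothesis $N_{\rm ul}\geq M_{\rm ul}$ ensures every stream fits within the receive dimensions, so the term collapses, independent of stream ordering, to $m_{\rm I}(1-\alpha_{\rm I})^++(M_{\rm ul}-N_{\rm dl})^+$. The side-channel term is simply $Wm_{\rm I}\alpha_{\rm S}$ since $\min(N_{\rm dl},M_{\rm ul})=m_{\rm I}$.

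The remaining downlink $f$-term is where the case split arises. The convention in Corollary~\ref{csitgdof} places the stronger stream first, and the relative strengths of the downlink private stream (strength $1$) and the uplink common stream (strength $\alpha_{\rm I}$) swap at $\alpha_{\rm I}=1$. For $\alpha_{\rm I}\geq 1$ the listed ordering applies and, since $M_{\rm dl}\geq N_{\rm dl}$ makes the inner $\min$ on the weaker stream trivial, the term equals $m_{\rm I}\alpha_{\rm I}+(N_{\rm dl}-M_{\rm ul})^+$. For $\alpha_{\rm I}<1$ I re-invoke $f$ with the arguments reordered so the downlink private is the stronger stream; the $M_{\rm dl}\geq N_{\rm dl}$ antennas already saturate the $N_{\rm dl}$ receive dimensions, so the term collapses to $N_{\rm dl}$.

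Assembling the three contributions and applying the elementary identities $(M_{\rm ul}-N_{\rm dl})^++N_{\rm dl}=m_X$ and $(M_{\rm ul}-N_{\rm dl})^++(N_{\rm dl}-M_{\rm ul})^+=m_X-m_{\rm I}$ yields the unnormalized sum bound $m_X+m_{\rm I}(1-\alpha_{\rm I})+Wm_{\rm I}\alpha_{\rm S}$ for $\alpha_{\rm I}<1$ and $(m_X-m_{\rm I})+m_{\rm I}\alpha_{\rm I}+Wm_{\rm I}\alpha_{\rm S}$ for $\alpha_{\rm I}\geq 1$. Normalizing by $m_{\rm I}$ and intersecting with the individual bound $\frac{m_X}{m_{\rm I}}+1$ delivers the claimed closed form. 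I expect the only real obstacle to be the book-keeping across the two regimes: respecting the $y_1\geq y_2$ convention in $f$, and verifying that the two oppositely-signed positive parts $(M_{\rm ul}-N_{\rm dl})^+$ and $(N_{\rm dl}-M_{\rm ul})^+$ fuse into the clean $m_X-m_{\rm I}$ in the $\alpha_{\rm I}\geq 1$ case.
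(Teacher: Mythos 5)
Your proposal is correct and follows essentially the same route as the paper, which presents Theorem~\ref{sc2} as a direct specialization of the $\mathsf{GDoF}$ region in Corollary~\ref{csitgdof} to the regime $M_{\mathrm{dl}}, N_{\mathrm{ul}}\geq M_{\mathrm{ul}}, N_{\mathrm{dl}}$; your evaluation of the three $f$-terms, the handling of the $y_1\geq y_2$ ordering at the $\alpha_{\rm I}=1$ split, and the identities $(M_{\rm ul}-N_{\rm dl})^++N_{\rm dl}=m_X$ and $(M_{\rm ul}-N_{\rm dl})^++(N_{\rm dl}-M_{\rm ul})^+=m_X-m_{\rm I}$ all check out and supply the algebra the paper leaves implicit.
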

\begin{figure}[h!]
  \centering
    \includegraphics[width=0.4\textwidth,trim = 0mm
      53mm 8mm 65mm, clip]{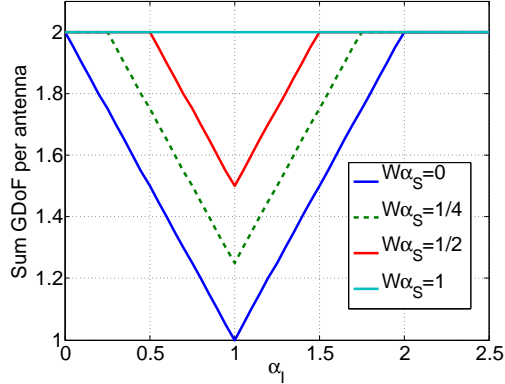}
  \caption{The sum $\mathsf{GDoF}$ per antenna for $M_{\rm ul}=N_{\rm dl}$ when BS has an excess of antennas. }
\label{gdofgain}
\end{figure}
Figure~\ref{gdofgain} illustrates how the sum $\mathsf{GDoF}$ per antenna varies as the side-channel quality changes when $M_{\rm ul}=N_{\rm dl}$ given an excess of antennas at BS. When $W\alpha_{\rm S}= 0$, i.e., there is no side-channel, the curve maintains ``V'' shape as in the Z-interference channel. When $W\alpha_{\rm S}$ increases, the curve gradually becomes a lifted ``V'' and finally reach the maximum sum $\mathsf{GDoF}$ per antenna of 2 for all regimes that one can achieve without interference.

We also give an example to clarify the  $\mathsf{DoF}$ of  vector bin-and-cancel in Case B assuming $\alpha_{\rm I}=\alpha_{\rm S}=1$. Using the standard MIMO SVD of channel matrices, the interference channel and side-channel can be converted to $ m_{\mathrm{I}}=\min\{N_{\mathrm{dl}},M_{\mathrm{ul}}\}$ parallel paths from uplink node $\mathrm{Tx_U}$ to downlink node $\mathrm{Rx_D}$.
In Fig.~\ref{zfdia}, the diagonalized interference and side-channel paths are depicted in bold.
\begin{figure}[h!]
  \centering
    \includegraphics[width=0.52\textwidth,trim = 51mm
      40mm 35mm 52mm, clip]{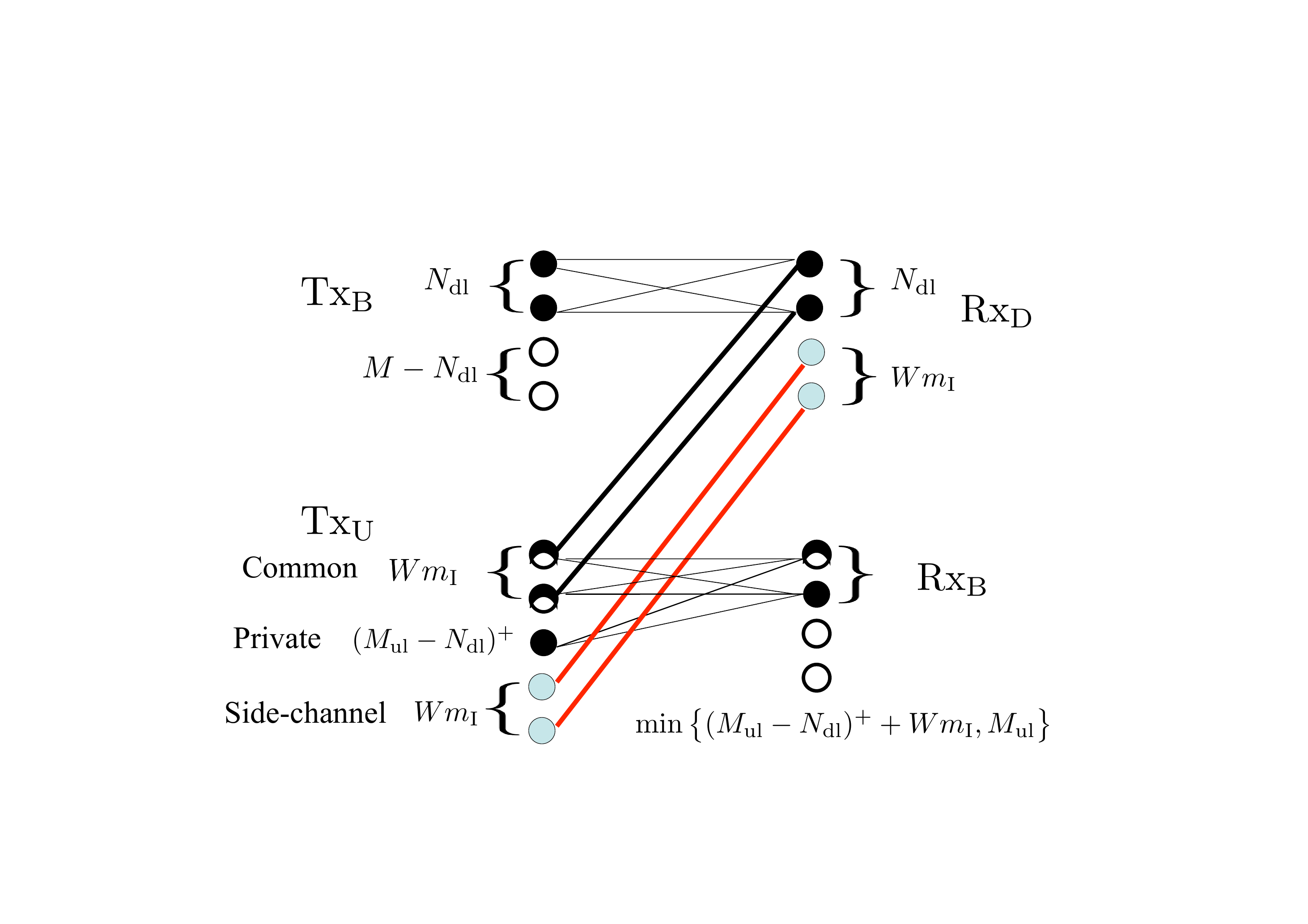}
  \caption{The $\mathsf{DoF}$-optimal scheme of two-user side-channel assisted MIMO full-duplex network when $M_{\rm ul}\geq N_{\rm dl}$.}
\label{zfdia}
\end{figure}

In Fig.~\ref{zfdia}, the base station $\mathrm{Tx_B}$ sends $N_{\mathrm{dl}}$ independent streams to downlink node $\mathrm{Rx_D}$, which is indicated by the black circles. Uplink node $\mathrm{Tx_U}$ sets $(1-W)m_{\mathrm{I}}$ effective inputs \footnote{The effective input is a product of the unitary matrices by SVD and the initial input vector.} to zero, which is indicated by the white circles; $\mathrm{Tx_U}$ then sends $(M_{\mathrm{ul}}-N_{\mathrm{dl}})^+$ independent private streams in the null space of the signal from $\mathrm{Tx_B}$, and $Wm_{\mathrm{I}}$ common message which can be heard at $\mathrm{Rx_D}$. Using vector bin-and-cancel, each transmitter sends $Wm_{\mathrm{I}}$ streams of its common message to the interfering receiver through the side-channel, which is indicated by the blue circles. At the downlink receiver $\mathrm{Rx_D}$, $Wm_{\mathrm{I}}$ streams of the interfering message can be canceled out, thus downlink can achieve $N_{\mathrm{dl}}$ $\mathsf{DoF}$s and uplink can achieve $\min\big\{(M_{\mathrm{ul}}-N_{\mathrm{dl}})^++Wm_{\mathrm{I}},M_{\mathrm{ul}}\big\}$ $\mathsf{DoF}$s. Thus, in total, we can obtain $\min\big\{\max\{N_{\mathrm{dl}},M_{\mathrm{ul}}\}+Wm_{\mathrm{I}},N_{\mathrm{dl}}+M_{\mathrm{ul}}\big\}$ $\mathsf{DoF}$s.

\subsection{GDoF Without CSIT}

Acquiring the CSIT incurs a large overhead, especially in a MIMO system with many antennas. Hence it is of practical interest to study the  $\mathsf{GDoF}$ performance of the system without CSIT. 

We first describe the encoding and decoding strategy under the no-CSIT assumption. Both transmitters encode their messages using independent Gaussian codebooks for the main-channel. The uplink transmitter sends common message only, and applies vector bin-and-cancel scheme. The side-channel bins all the uplink message and encodes the bin indices using an independent Gaussian codebook. From the downlink user's perspective, the channel is a MAC with side-channel. At the decoding process, the downlink user uses joint maximum likelihood~(ML) decoder to decode both downlink message and uplink messages with the help of side-channel. Hence we can obtain the achievable rate region $\mathcal{R}^{\text{No-CSIT}}$ as 
\begin{gather}
\begin{aligned}
\mathcal{R}^{\text{No-CSIT}}&=\Bigg\{(R_{\mathrm{dl}},R_{\mathrm{ul}}):R_{\mathrm{dl}}\leq W_m\mathrm{log}\left|I_{N_{\mathrm{dl}}}+\frac{\rho_{\mathrm{dl}}}{M_{\mathrm{dl}}}H_{\mathrm{dl}}H_{\mathrm{dl}}^\dagger\right|,\\
R_{\mathrm{ul}}&\leq W_m\min\left\{\mathrm{log}\left|I_{N_{\mathrm{ul}}}+\frac{\bar{\lambda}\rho_{{\mathrm{ul}}}}{M_{\mathrm{ul}}}H_{\mathrm{ul}}H_{\mathrm{ul}}^\dagger\right|,\mathrm{log}\left|I_{N_{\mathrm{dl}}}+\frac{\bar{\lambda}\rho_{\mathrm{I}}}{M_{\mathrm{ul}}}H_{\mathrm{I}}H_{\mathrm{I}}^\dagger\right|+W\mathrm{log}\left|I_{N_{\mathrm{dl}}}+\frac{\lambda\rho_{\mathrm{S}}}{W M_{\mathrm{ul}}}H_{\mathrm{S}}H_{\mathrm{S}}^{\dagger}\right|\right\},\\
R_{\mathrm{dl}}+R_{\mathrm{ul}}&\leq W_m\bigg(\mathrm{log}\left|I_{N_{\mathrm{dl}}}+\frac{\rho_{\mathrm{dl}}}{M_{\mathrm{dl}}}H_{\mathrm{dl}}H_{\mathrm{dl}}^\dagger+\frac{\bar{\lambda}\rho_{\mathrm{I}}}{M_{\mathrm{ul}}}H_{\mathrm{I}}H_{\mathrm{I}}^\dagger\right|+W\mathrm{log}\left|I_{N_{\mathrm{dl}}}+\frac{\lambda\rho_{\mathrm{S}}}{W M_{\mathrm{ul}}}H_{\mathrm{S}}H_{\mathrm{S}}^{\dagger}\right|\bigg)\Bigg\}, \label{nocsitar}
\end{aligned}
\end{gather}
where $\lambda\in(0,1)$, for instance, we can fix $\lambda=\bar{\lambda}=0.5$. The achievable rate region given above can be calculated easily from Equation~(\ref{discreteAR}) with uplink private message set to null and equal power allocation among transmit antennas which does not require any CSIT.

Now we can obtain the lower bound of the $\mathsf{GDoF}$ under the no-CSIT assumption.
\begin{corollary}\label{nocsitgdof}
Assuming $\alpha_{\mathrm{dl}}=\alpha_{\mathrm{ul}}=1$ and no-CSIT, the achievable $\mathsf{GDoF}$ region of $(M_{\mathrm{dl}},N_{\mathrm{dl}},M_{\mathrm{ul}},N_{\mathrm{ul}})$ side-channel assisted MIMO full-duplex network satisfies the following constraints
\begin{gather}
\begin{aligned}
  &\mathsf{DoF}_{\mathrm{dl}}\leq m_{\mathrm{dl}},~~\mathsf{DoF}_{\mathrm{ul}}\leq \min\left\{m_{\mathrm{ul}},\alpha_{\rm I}m_{\rm I}+W\alpha_{\rm S}m_{\rm I}\right\},\\
  &\mathsf{DoF}_{\mathrm{dl}}+\mathsf{DoF}_{\mathrm{ul}}\leq f\big(N_{\mathrm{dl}},(\alpha_{\mathrm{I}},M_{\mathrm{ul}}),(1,M_{\mathrm{dl}})\big)+Wf\big(N_{\mathrm{dl}},(\alpha_{\mathrm{S}},M_{\mathrm{ul}})\big).
\end{aligned}
\end{gather}
\end{corollary}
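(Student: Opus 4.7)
The plan is to derive the achievable GDoF region by taking the high-SNR limit of the rate region $\mathcal{R}^{\text{No-CSIT}}$ already written down in (\ref{nocsitar}); no new coding construction is required, since the no-CSIT scheme (independent Gaussian codebooks with equal per-antenna power allocation together with side-channel binning of the uplink message) is already specified above the statement. The workhorse is the standard MIMO high-SNR identity
\[
\log\!\left|I_n + c\,\rho^{\alpha} A A^{\dagger}\right| \;=\; \min\!\bigl(n,\operatorname{rank}(A)\bigr)\,\alpha^{+}\log\rho \;+\; \mathcal{O}(1),
\]
valid for any positive constant $c$ independent of $\rho$ and any $A$ whose nonzero singular values are bounded away from $0$ and $\infty$. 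Since $H_{\mathrm{dl}}, H_{\mathrm{ul}}, H_{\mathrm{I}}, H_{\mathrm{S}}$ are full-rank almost surely, this asymptotic applies term-by-term to every log-det in (\ref{nocsitar}).

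For the two individual bounds, I would divide through by $W_m\log\rho$ and send $\rho\to\infty$. The downlink bound collapses to $\mathsf{DoF}_{\mathrm{dl}}\le m_{\mathrm{dl}}$ because $\alpha_{\mathrm{dl}}=1$ and $H_{\mathrm{dl}}$ has rank $m_{\mathrm{dl}}$. The uplink bound is a minimum of two log-dets: the first yields $m_{\mathrm{ul}}$ (rank of $H_{\mathrm{ul}}$ at $\alpha_{\mathrm{ul}}=1$), while the second splits into the interference-link log-det, contributing $\alpha_{\mathrm{I}} m_{\mathrm{I}}$, plus the side-channel log-det, contributing $W\alpha_{\mathrm{S}} m_{\mathrm{I}}$. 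The power-splitting constants $\bar\lambda/M_{\mathrm{ul}}$ and $\lambda/(W M_{\mathrm{ul}})$ are absorbed into the $\mathcal{O}(1)$ term, so the inner minimum reproduces the stated $\mathsf{DoF}_{\mathrm{ul}}$ upper bound.

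For the sum-rate bound, the first log-det is the capacity of a virtual two-user MAC seen at the downlink receiver, with $N_{\mathrm{dl}}$ receive antennas and two virtual transmitters at strength exponents $1$ and $\alpha_{\mathrm{I}}$, having $M_{\mathrm{dl}}$ and $M_{\mathrm{ul}}$ transmit antennas respectively. Its GDoF follows from successive decoding of the stronger stream followed by the weaker one, which is exactly the closed form $f\bigl(N_{\mathrm{dl}},(\alpha_{\mathrm{I}},M_{\mathrm{ul}}),(1,M_{\mathrm{dl}})\bigr)$ introduced in Corollary~\ref{csitgdof}, with the understanding that the two argument pairs are listed in nonincreasing order of $\alpha$ and swapped when $\alpha_{\mathrm{I}}<1$. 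The remaining side-channel log-det contributes $W f\bigl(N_{\mathrm{dl}},(\alpha_{\mathrm{S}},M_{\mathrm{ul}})\bigr)=W\alpha_{\mathrm{S}}m_{\mathrm{I}}$ by the same single-matrix identity. Any fixed $\lambda\in(0,1)$, say $\lambda=\tfrac{1}{2}$, is sufficient since $\lambda$ enters only through the hidden constants.

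The only mildly subtle step is the two-matrix log-det appearing in the sum-rate bound; everything else is a mechanical application of the single-matrix asymptotic. The absence of any private-stream term of the form $\bigl(1,(M_{\mathrm{ul}}-N_{\mathrm{dl}})^{+}\bigr)$ (which shows up in the CSIT bound of Corollary~\ref{csitgdof}) is exactly the structural consequence of lacking CSIT: without knowledge of $H_{\mathrm{I}}$, the uplink transmitter cannot steer a private stream into the null space of the interference channel, and the uplink rate is instead constrained by the side-channel-assisted MAC decoded at the downlink user.
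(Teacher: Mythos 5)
Your proposal is correct and follows essentially the same route as the paper: the paper also obtains this corollary by evaluating the high-SNR exponents of the explicit no-CSIT achievable region in (\ref{nocsitar}) (referring back to the same argument used for the CSIT case), which is exactly your term-by-term log-det asymptotics, including the two-exponent MAC term giving $f\big(N_{\mathrm{dl}},(\alpha_{\mathrm{I}},M_{\mathrm{ul}}),(1,M_{\mathrm{dl}})\big)$ and the side-channel term giving $W\alpha_{\mathrm{S}}m_{\mathrm{I}}$. Your closing observation about the missing null-space private-stream term matches the paper's own interpretation of the no-CSIT scheme.
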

\begin{proof}
The achievable $\mathsf{GDoF}$ region without CSIT can be derived following the same argument as in the case with CSIT.
\end{proof}
\begin{remark}
Comparing the Corollaries~\ref{csitgdof} and \ref{nocsitgdof}, we conclude that when $\alpha_{\rm I}\geq 1$ and $N_{\rm dl}\geq M_{\rm ul}$, acquiring CSIT is of no use as the $\mathsf{GDoF}$ without CSIT achieves the optimal $\mathsf{GDoF}$ with CSIT. 
In the strong interference regime where $\mathsf{INR}>\mathsf{SNR}$, larger number of receiver antennas is sufficient to null out the interference to achieve the optimal $\mathsf{GDoF}$ regime.

\end{remark}

\subsection{Spatial and Spectral Tradeoff in $\mathsf{GDoF}$}
In this section, we will compare three systems: (i) the side-channel assisted full-duplex network with CSIT, (ii) the side-channel assisted full-duplex network without CSIT, and (iii) an idealized full-duplex network without interference, i.e., a parallel uplink and a downlink channel; the last network provides us the benchmark for the best possible performance.  By comparing these three systems, we aim to quantify the relationship between the spatial resources of multiple antennas and spectral resources of the side-channel. We start by presenting several corollaries to Theorems~\ref{sc1} and~\ref{sc2}.


\begin{corollary}(Case A with CSIT)\label{coroIF}
The effect of interference can be completely eliminated if the bandwidth ratio of the side-channel to main-channel satisfies the following condition, 
\begin{gather}
\begin{aligned}
W_{\text{CSIT}}=
 \begin{cases}
 \frac{\alpha_{\rm I}}{\alpha_{\rm S}}\left(2-\frac{\max(M,N)}{\min(M,N)}\right)^+, &\text{for}~ \alpha_{\rm I}<1,\\
\frac{1}{\alpha_{\rm S}}\left(3-\frac{\max(M,N)}{\min(M,N)}-\alpha_{\rm I}\right)^+, &\text{for}~\alpha_{\rm I}\geq 1.\label{gamma} \end{cases}
\end{aligned}
\end{gather}
\end{corollary}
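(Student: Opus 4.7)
The target is the no-interference benchmark: a parallel uplink--downlink pair in which each link achieves its point-to-point $\mathsf{GDoF}$ of $\min(M,N)$, so the sum $\mathsf{GDoF}$ per antenna is exactly $2$. The plan is therefore to start from the explicit closed-form expression in Theorem~\ref{sc1} and find the smallest $W$ for which the inner $\min\{2,\cdot\}$ attains the value $2$. Since the second argument of this $\min$ is strictly increasing in $W\alpha_{\rm S}$, any $W$ above the threshold preserves the no-interference performance, so the threshold itself is the answer.

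For the weak-to-moderate interference regime $\alpha_{\rm I}<1$, the second argument of the $\min$ in Theorem~\ref{sc1} is $2-\bigl(2-\tfrac{\max(M,N)}{\min(M,N)}\bigr)^+\alpha_{\rm I}+W\alpha_{\rm S}$. Setting this expression equal to $2$ and solving for $W$ yields $W_{\text{CSIT}}=\tfrac{\alpha_{\rm I}}{\alpha_{\rm S}}\bigl(2-\tfrac{\max(M,N)}{\min(M,N)}\bigr)^+$. The outer $(\cdot)^+$ captures the degenerate case $\max(M,N)\geq 2\min(M,N)$, in which the receive array already has enough spatial dimensions to render the inter-mobile interference harmless at the $\mathsf{GDoF}$ scale, so no side-channel spectrum is required.

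For the strong interference regime $\alpha_{\rm I}\geq 1$, the same manipulation on the second argument $\alpha_{\rm I}+\tfrac{\max(M,N)}{\min(M,N)}-1+W\alpha_{\rm S}$ gives $W_{\text{CSIT}}=\tfrac{1}{\alpha_{\rm S}}\bigl(3-\tfrac{\max(M,N)}{\min(M,N)}-\alpha_{\rm I}\bigr)^+$. Here the $(\cdot)^+$ handles the situation where the interference exponent together with the antenna asymmetry already pushes the expression above $2$, again eliminating the need for a side-channel.

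Because the entire argument reduces to a one-line algebraic solve applied to the formula already established in Theorem~\ref{sc1}, there is no substantive technical obstacle. The only care needed is bookkeeping around the two $(\cdot)^+$ clips for degenerate antenna ratios, and noting the converse direction: for $W$ strictly below the stated threshold, the second argument of the $\min$ is strictly less than $2$, so the tight characterization in Theorem~\ref{sc1} implies that no achievable scheme can close the remaining gap to the no-interference benchmark. This both proves the sufficiency of $W_{\text{CSIT}}$ and justifies calling it the exact required bandwidth ratio.
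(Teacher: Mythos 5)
Your proposal is correct and matches the paper's (implicit) argument: the corollary is obtained exactly by setting the second argument of the $\min$ in Theorem~\ref{sc1} equal to the no-interference value $2$ and solving for $W$, with the $(\cdot)^+$ clips covering the degenerate antenna-ratio and strong-interference cases. Your added converse remark (that $W$ below the threshold leaves the sum $\mathsf{GDoF}$ strictly below $2$ per antenna) is a harmless strengthening, justified since the $\mathsf{GDoF}$ characterization with CSIT is exact via the constant-gap result.
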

From Corollary~\ref{coroIF}, we can see that the required bandwidth ratio is a linearly decreasing function of the antenna number ratio $\frac{\max(M,N)}{\min(M,N)}$ to achieve the interference-free performance. 
Therefore the spatial resources of the number of antennas at transmitters and receivers is interchangeable with the spectral resources of the side-channel bandwidth to eliminate interference. 
The intuition behind it is that the additional spatial signaling dimension to perform transmit/receive beamforming is equivalent to leveraging the extra spectral signaling dimension of the side-channel for interference cancellation.

From Corollary~\ref{nocsitgdof}, we can also find out the required bandwidth ratio under the no-CSIT assumption in order to achieve the no-interference upper bound. The required bandwidth ratio without CSIT in Case A for $\alpha_{\rm I}=1$ is given by
\begin{gather}
\begin{aligned}
W_{\text{No-CSIT}}=
 \begin{cases}
\frac{1}{\alpha_{\rm S}}\left(2-\frac{N}{M}\right)^+, &\text{for}~ N\geq M,\\
\frac{1}{\alpha_{\rm S}}, &\text{for}~M>N. \end{cases}
\end{aligned}
\end{gather}
\begin{figure}[h!]
\begin{minipage}[b]{0.5\linewidth}
  \centering
     \scalebox{0.4}{\includegraphics[trim = 80mm
      40mm 75mm 60mm, clip]{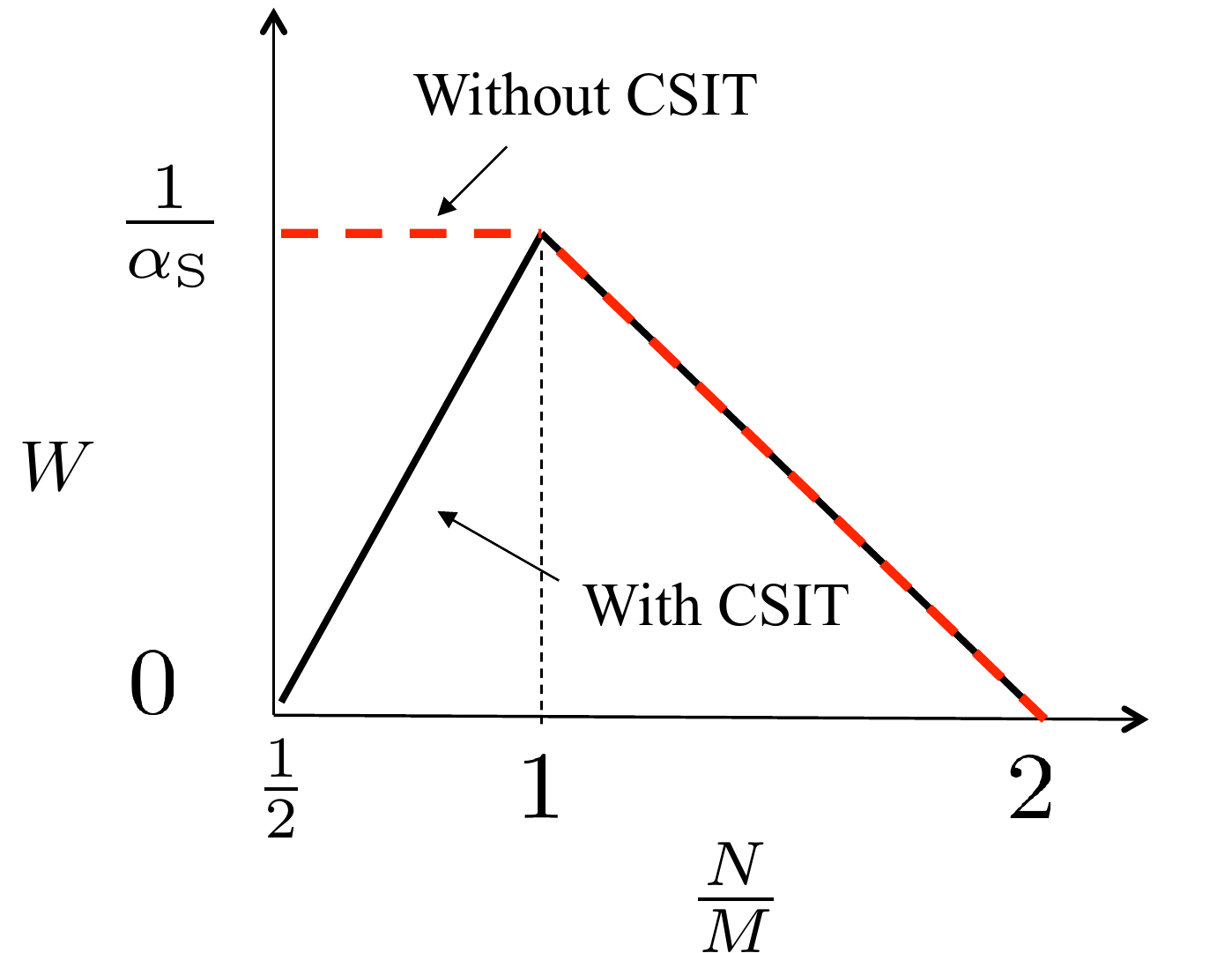}}
  \caption{Spatial spectral tradeoff in Case A when $\alpha_{\rm I}=1$.}
\label{fig1}
\end{minipage}
\begin{minipage}[b]{0.4\linewidth}
  \centering
    \scalebox{0.4}{\includegraphics[trim = 60mm
      35mm 50mm 55mm, clip]{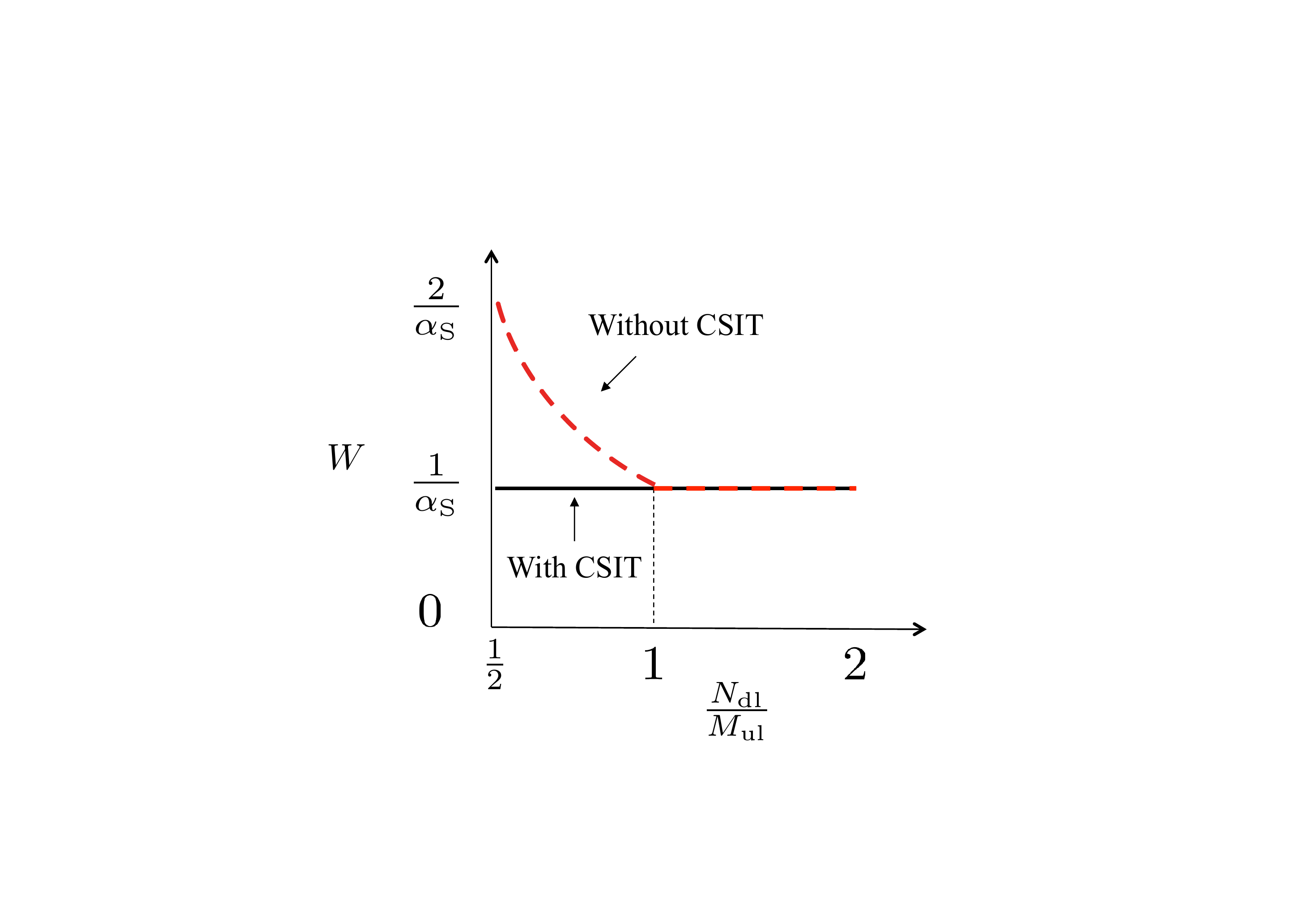}}
  \caption{Spatial spectral tradeoff in Case B when $\alpha_{\rm I}=1$.}
\label{fig2}
\end{minipage}
\end{figure}
\begin{corollary}(Case B with CSIT)\label{coroIF2}
The effect of interference can be completely eliminated if the bandwidth ratio of the side-channel to main-channel satisfies the following condition, 
\begin{gather}
\begin{aligned}
W_{\text{CSIT}}=
 \begin{cases}
 \frac{\alpha_{\rm I}}{\alpha_{\rm S}}, &\text{for}~ \alpha_{\rm I}<1,\\
\frac{(2-\alpha_{\rm I})^+}{\alpha_{\rm S}}, &\text{for}~\alpha_{\rm I}\geq 1.\nonumber \end{cases}
\end{aligned}
\end{gather}
\end{corollary}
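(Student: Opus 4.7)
The plan is to read off Corollary~\ref{coroIF2} directly from the sum-$\mathsf{GDoF}$ formula given in Theorem~\ref{sc2}. Specifically, the two-term $\min$ in that theorem has one branch, $\tfrac{m_X}{m_{\rm I}}+1$, that does not depend on $\alpha_{\rm I}$, $\alpha_{\rm S}$, or $W$, and a second branch that does. I claim the first branch is exactly the no-interference sum $\mathsf{GDoF}$ per antenna, so eliminating the effect of interference amounts to making the second branch at least as large as the first. Solving this inequality for $W$ yields the stated threshold.

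First I would verify the no-interference benchmark. Under the Case~B hypothesis $M_{\rm dl},N_{\rm ul}\ge M_{\rm ul},N_{\rm dl}$ with $\alpha_{\rm dl}=\alpha_{\rm ul}=1$, an idealized parallel uplink/downlink (no inter-mobile interference and no side-channel needed) achieves $\mathsf{DoF}_{\rm dl}=m_{\rm dl}=N_{\rm dl}$ and $\mathsf{DoF}_{\rm ul}=m_{\rm ul}=M_{\rm ul}$, so the sum $\mathsf{GDoF}$ per antenna $\min(M_{\rm ul},N_{\rm dl})$ equals
\[
\frac{N_{\rm dl}+M_{\rm ul}}{\min(M_{\rm ul},N_{\rm dl})}=\frac{m_X}{m_{\rm I}}+1,
\]
which matches the first branch of the $\min$ in Theorem~\ref{sc2}. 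Hence interference is completely eliminated precisely when the second branch is $\ge \tfrac{m_X}{m_{\rm I}}+1$.

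Next I would do the two-case algebra. For $\alpha_{\rm I}<1$, the second branch reads $\tfrac{m_X}{m_{\rm I}}+1-\alpha_{\rm I}+W\alpha_{\rm S}$; the condition $\tfrac{m_X}{m_{\rm I}}+1-\alpha_{\rm I}+W\alpha_{\rm S}\ge \tfrac{m_X}{m_{\rm I}}+1$ simplifies to $W\alpha_{\rm S}\ge \alpha_{\rm I}$, giving $W_{\rm CSIT}=\alpha_{\rm I}/\alpha_{\rm S}$. For $\alpha_{\rm I}\ge 1$, the second branch is $\tfrac{m_X}{m_{\rm I}}-1+\alpha_{\rm I}+W\alpha_{\rm S}$; requiring this to be $\ge \tfrac{m_X}{m_{\rm I}}+1$ gives $W\alpha_{\rm S}\ge 2-\alpha_{\rm I}$, i.e. $W_{\rm CSIT}=(2-\alpha_{\rm I})^+/\alpha_{\rm S}$, where the positive part takes care of the ``very strong interference'' regime $\alpha_{\rm I}\ge 2$ in which no side-channel is needed because the interference is already decodable and cancellable through the main-channel alone.

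The step I expect to be least mechanical is the identification of the no-interference benchmark with the antenna-independent branch $\tfrac{m_X}{m_{\rm I}}+1$: one must check that the individual $\mathsf{DoF}$ outer bounds $\mathsf{DoF}_{\rm dl}\le m_{\rm dl}$ and $\mathsf{DoF}_{\rm ul}\le m_{\rm ul}$ from Corollary~\ref{csitgdof} are simultaneously achievable in the no-interference limit, which under Case~B holds because $m_{\rm dl}=N_{\rm dl}$ and $m_{\rm ul}=M_{\rm ul}$. The remaining work is purely algebraic manipulation of Theorem~\ref{sc2}, so no new achievability or converse argument is required.
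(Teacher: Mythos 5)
Your proposal is correct and follows essentially the same route as the paper: Corollary~\ref{coroIF2} is read off from Theorem~\ref{sc2} by requiring the $W$-dependent branch of the $\min$ to reach the interference-free cap $\frac{m_X}{m_{\rm I}}+1$ (which, as you check, equals the no-interference sum $\mathsf{GDoF}$ per antenna $\frac{N_{\rm dl}+M_{\rm ul}}{m_{\rm I}}$ under the Case~B antenna assumptions), and the two-case algebra $W\alpha_{\rm S}\geq\alpha_{\rm I}$ for $\alpha_{\rm I}<1$ and $W\alpha_{\rm S}\geq 2-\alpha_{\rm I}$ for $\alpha_{\rm I}\geq 1$ gives exactly the stated thresholds, including the $(\cdot)^+$ for $\alpha_{\rm I}\geq 2$.
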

We observe that in Case B, the required side-channel bandwidth to achieve the no-interference sum $\mathsf{GDoF}$ is not affected by the number of antennas in the system but received interference signal strength and side-channel signal strength levels.
For $\alpha_{\rm I}<1$, lower interference level
requires less side-channel bandwidth while for $\alpha_{\rm I}\geq1$, higher interference level leads to smaller side-channel bandwidth requirement.

In Case B, we can also derive the required bandwidth ratio under the no-CSIT assumption from Corollary~\ref{nocsitgdof}, to achieve the no-interference performance. The required bandwidth ratio without CSIT in Case B for $\alpha_{\rm I}=1$ is given by
\begin{gather}
\begin{aligned}
W_{\text{No-CSIT}}=
 \begin{cases}
\frac{1}{\alpha_{\rm S}}, &\text{for}~ N_{\rm dl}\geq M_{\rm ul},\\
\frac{M_{\rm ul}}{N_{\rm dl}\alpha_{\rm S}}, &\text{for}~M_{\rm ul}>N_{\rm dl}. \end{cases}
\end{aligned}
\end{gather}

In Figs.~\ref{fig1} and~\ref{fig2}, we show the spatial and spectral tradeoff in both Case A and Case B when $\alpha_{\rm I}=1$. We observe that when there are more downlink receive antennas than uplink transmit antennas, obtaining CSIT is unavailing since with and without CSIT require the same amount of side-channel bandwidth to completely eliminate interference. However, when we have more uplink transmit antennas, if we do not have CSIT, the extra spatial degrees-of-freedom are wasted and we need more side-channel bandwidth to achieve the no-interference performance.
\begin{figure}
\centering
\subfigure[$N_{\mathrm{dl}}\geq M_{\mathrm{ul}}$]{\scalebox{0.62}{\input{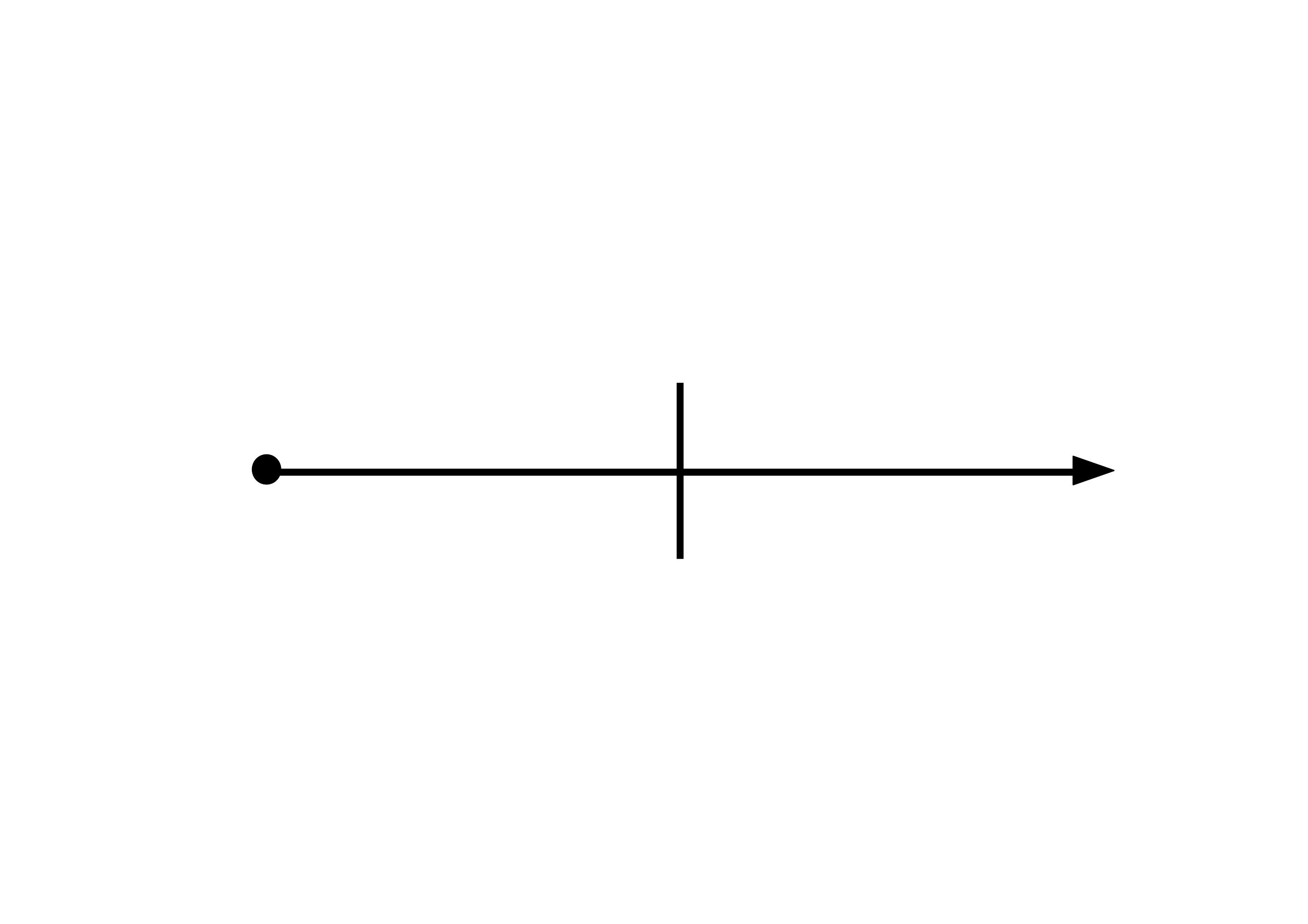tex_t}}\label{fig:sub1}}
\hspace{0.6cm}
\subfigure[$M_{\mathrm{ul}}> N_{\mathrm{dl}}$]{\scalebox{0.62}{\input{pic/gdofm2gn1.pdftex_t}}\label{fig:sub2}}
\caption{Comparison of the three systems in $\mathsf{DoF}$ as a function of the side-channel bandwidth when $\alpha_{\rm I}=1$.}\label{seccom}
\end{figure}

In Fig.~\ref{seccom}, we give an illustration of the comparisons of the three systems in $\mathsf{DoF}$ as a function of the side-channel bandwidth when there is an excess of BS antennas.

\section{Diversity and multiplexing tradeoff of MIMO distributed full-duplex} \label{optimalDMT}

In this section, we consider a slow-fading scenario. When the channel experiences slow fading,  an important metric to characterize the MIMO system performance is the diversity and multiplexing tradeoff~(DMT), which delineates the asymptotic tradeoff between data rate and reliability in the high SNR limit.
The optimal DMT, first introduced in MIMO point-to-point channels~\cite{DMTTse}, represents the optimal diversity gain $d^*(r)$ for each multiplexing gain $r$ among all possible schemes.  Similar to our definition of $\mathsf{GDoF}$, we define the multiplexing gain of both downlink and uplink channel in our system as follows
\begin{gather}
r_i=\lim_{\rho\rightarrow\infty}\frac{R_i(\rho_i)}{W_m\mathrm{log}\rho},~i\in\{\mathrm{dl,ul}\},
\end{gather}
where $R_{\mathrm{dl}}$ and $R_{\mathrm{ul}}$ are the achievable rates (bit/s) of downlink and uplink, respectively.

Assuming the overall average error probability is $P_e(r_{\mathrm{dl}},r_{\mathrm{ul}})$, the DMT is
\begin{gather}
d(r_{\mathrm{dl}},r_{\mathrm{ul}})=\lim_{\rho\rightarrow\infty}\frac{-\mathrm{log}P_e(r_{\mathrm{dl}},r_{\mathrm{ul}})}{\mathrm{log}\rho}.
\end{gather}
We define $d^{\text{opt}}(r_{\mathrm{dl}},r_{\mathrm{ul}})$ as the supremum of $d(r_{\mathrm{dl}},r_{\mathrm{ul}})$ computed over all possible schemes. Thus $d^{\text{opt}}(r_{\mathrm{dl}},r_{\mathrm{ul}})$ is the optimal DMT of the system. 

In this section, we will study the DMT performance under different assumptions regarding the availability of CSIT. We assume that the channel knowledge is known at the receivers.
In the following, we will first obtain the optimal DMT with CSIT which can be achieved by vector bin-and-cancel as described in Section \ref{bcscheme}.
Next, we study the case without CSIT and derive the corresponding achievable DMT. Finally, based on the DMT result, we will investigate the spatial and spectral tradeoff as well as the interplay between CSIT and side-channel.
\subsection{With CSIT Case}\label{csitcase}
In a slow fading scenario, the channel matrices remain fixed over a fade period with a short-term power constraint given in~(\ref{pcc}), thus the capacity region in time-invariant channels can serve as instantaneous capacity region in each fade period. We define the outage event as the target rate pair not contained in the instantaneous capacity region: $\mathfrak{B}\triangleq\{(R_{\mathrm{dl}},R_{\mathrm{ul}})\notin \mathcal{C(H)} \}$, where $\mathcal{C(H)}$ is given in Corollary~\ref{capacity}. 
From \cite{DMTTse}, it can be easily shown that $P^*_e(r_{\mathrm{dl}},r_{\mathrm{ul}})\doteq\mathrm{Pr}(\mathfrak{B})$, where $P^*_e(r_{\mathrm{dl}},r_{\mathrm{ul}})$ is the infimum of the overall average error probability among all possible schemes. In the high SNR limit, we can obtain that
\[\mathrm{Pr}(\mathfrak{B})\doteq\max_{i\in\{\rm dl,ul,{\mathrm{sum}}\}}\mathrm{Pr}\left(C_{i}< R_i\right),~\implies\rho^{-d^*(r_{\mathrm{dl}},r_{\mathrm{ul}})}\doteq\max_{i\in\{\rm dl,{\mathrm{ul}},\mathrm{sum}\}}\mathrm{Pr}\left(C_{i}< R_i\right),\]
where $C_i$ is given in (\ref{hignSNRcap}) and $R_{\mathrm{sum}}=R_{\mathrm{dl}}+R_{\mathrm{ul}}$.
Thus the optimal diversity order is
\begin{gather}
\begin{aligned}
d^*(r_{\mathrm{dl}},r_{\mathrm{ul}})=\min_{i\in\{{\mathrm{dl}},{\mathrm{ul}},{\mathrm{sum}}\}}d_{\mathfrak{B}_i}(r_i),~~
\label{optimald}
\text{where}~d_{\mathfrak{B}_i}(r_i)=\lim_{\rho\to\infty}-\frac{\mathrm{log}\mathrm{Pr}(C_{i}< W_m r_i\mathrm{log}\rho)}{\mathrm{log}\rho},
\end{aligned}
\end{gather}

In Section~\ref{bcscheme}, we showed that vector bin-and-cancel achieves the asymptotic capacity region. Hence in the asymptotic DMT characterization,  the optimal DMT with CSIT can be achieved by vector bin-and-cancel which only requires CSIT of the interference channel between the up- and downlink nodes since the uplink message splitting depends on the interference channel.
The derivation of the optimal DMT curve of side-channel assisted MIMO full-duplex network follows from two steps. 
In \cite{DMTTse}, we know that the optimal DMT for MIMO point-to-point channel is $d_{M,N}(r)=(M-r)(N-r)$, which is a piecewise linear curve joining the integer point $r\in[0,\min(M,N)]$. For a general channel level $\alpha_{i}, i\in\{\mathrm{dl},\mathrm{ul}\}$ of a point-to-point channel, we will invoke Lemma~\ref{lemmacite}~(in Appendix~\ref{citelemma}) for our calculation. Hence we first obtain the optimal diversity order of each individual downlink and uplink given as
\begin{gather}
\begin{aligned}
d_{\mathfrak{B}_i}(r_i)=\alpha_{i}d_{M_i,N_i}\left(\frac{r_i}{\alpha_{i}}\right), \forall r_i\in[0,\min\{M_i,N_i\}\alpha_{i}], i\in\{{\mathrm{dl}},{\mathrm{ul}}\}. \label{diP2P}
\end{aligned}
\end{gather}
Next, we evaluate $d_{\mathfrak{B}_{\mathrm{sum}}}(r_{\mathrm{sum}})$ in the following lemma.
\begin{lemma}\label{doss}
The diversity order with CSIT given the sum multiplexing gain of both uplink and downlink is the minimum of the following objective function:
\begin{gather}
\begin{aligned}
d_{\mathfrak{B}_{\mathrm{sum}}}(r_{\mathrm{sum}})=&\min_{\bar{\mu},\bar{\sigma},\bar{\theta},\bar{\nu}}\sum_{i=1}^{m_{\mathrm{dl}}}(M_{\mathrm{dl}}+N_{\mathrm{dl}}+1-2i)\mu_i+\sum_{j=1}^{m_{\mathrm{ul}}}(M_{\mathrm{ul}}+N_{\mathrm{ul}}+1-2j)\sigma_j-(M_{\mathrm{dl}}+N_{\mathrm{ul}})m_{\mathrm{I}}\alpha_{\mathrm{I}}\\
&+\sum_{k=1}^{m_{\mathrm{I}}}(M_{\mathrm{dl}}+N_{\mathrm{ul}}+M_{\mathrm{ul}}+N_{\mathrm{dl}}+1-2k)\theta_k+\sum_{l=1}^{m_{\mathrm{I}}}(M_{\mathrm{ul}}+N_{\mathrm{dl}}+1-2l)\nu_l\\
&+\sum_{i=1}^{m_{\mathrm{dl}}}\sum_{k=1}^{\min\{N_{\mathrm{dl}}-i,M_{\mathrm{ul}}\}}(\alpha_{\mathrm{I}}-\mu_i-\theta_k)^++\sum_{j=1}^{m_{\mathrm{ul}}}\sum_{k=1}^{\min\{M_{\mathrm{ul}}-j,N_{\mathrm{dl}}\}}(\alpha_{\mathrm{I}}-\sigma_j-\theta_k)^+;\\
\mathrm{Subject~to}\quad&\sum_{i=1}^{m_{\mathrm{dl}}}(\alpha_{1}-\mu_i)^++\sum_{j=1}^{m_{\mathrm{ul}}}(\alpha_{2}-\sigma_j)^++\sum_{k=1}^{m_{\mathrm{I}}}(\alpha_{\mathrm{I}}-\theta_k)^++W\sum_{l=1}^{m_{\mathrm{I}}}(\alpha_{\mathrm{S}}-\nu_l)^+< r_{\mathrm{sum}};\\
&0\leq\mu_{\mathrm{1}}\leq\cdots\mu_{m_{\mathrm{dl}}};~0\leq\sigma_{\mathrm{1}}\leq\cdots\sigma_{m_{\mathrm{ul}}};~0\leq\theta_{\mathrm{1}}\leq\cdots\theta_{m_{\mathrm{I}}};~0\leq\nu_{\mathrm{1}}\leq\cdots\nu_{m_{\mathrm{I}}};\\
&\mu_i+\theta_k\geq\alpha_{\mathrm{I}},~\forall (i+k)\geq N_{\mathrm{dl}}+1;\\
&\sigma_j+\theta_k\geq\alpha_{\mathrm{I}},~\forall (j+k)\geq M_{\mathrm{ul}}+1, \label{disum}
\end{aligned}
\end{gather}
where $\bar{\mu}=\{\mu_{\mathrm{1}},\cdots,\mu_{m_{\mathrm{dl}}}\},\bar{\sigma}=\{\sigma_{\mathrm{1}},\cdots,\sigma_{m_{\mathrm{ul}}}\},\bar{\theta}=\{\theta_{\mathrm{1}},\cdots,\theta_{m_{\mathrm{I}}}\}, \bar{\nu}=\{\nu_{\mathrm{1}},\cdots,\nu_{m_{\mathrm{I}}}\}$ and $m_{\mathrm{dl}},~m_{\mathrm{ul}}$ and $m_{\mathrm{I}}$ are defined in (\ref{gap}).
\end{lemma}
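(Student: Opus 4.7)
The plan is to cast $d_{\mathfrak{B}_{\mathrm{sum}}}(r_{\mathrm{sum}})$ as a finite-dimensional large-deviations minimization over the exponential orders of the eigenvalues of the four independent channel matrices, following the Zheng--Tse template.

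First, to remove the matrix inverse in the last summand of $C_{\mathrm{sum}}$, I would apply the Sylvester-type identity $\det(I+HM^{-1}H^\dagger)=\det(M+H^\dagger H)/\det(M)$ with $H=\sqrt{\bar{\lambda}\rho_{\mathrm{ul}}}\,H_{\mathrm{ul}}$ and $M=I+\bar{\lambda}\rho_{\mathrm{I}}H_{\mathrm{I}}^\dagger H_{\mathrm{I}}$, giving
\begin{equation*}
\log\det\!\big(I_{N_{\mathrm{ul}}}+\bar{\lambda}\rho_{\mathrm{ul}}H_{\mathrm{ul}}(I+\bar{\lambda}\rho_{\mathrm{I}}H_{\mathrm{I}}^\dagger H_{\mathrm{I}})^{-1}H_{\mathrm{ul}}^\dagger\big)=\log\det(I+\bar{\lambda}\rho_{\mathrm{I}}H_{\mathrm{I}}^\dagger H_{\mathrm{I}}+\bar{\lambda}\rho_{\mathrm{ul}}H_{\mathrm{ul}}^\dagger H_{\mathrm{ul}})-\log\det(I+\bar{\lambda}\rho_{\mathrm{I}}H_{\mathrm{I}}^\dagger H_{\mathrm{I}}).
\end{equation*}
Thus $C_{\mathrm{sum}}$ becomes a signed sum of four log-determinants of PSD Gram matrices, with $H_{\mathrm{I}}$ appearing additively in two of them; this repeated appearance is what ultimately produces the augmented $\theta_k$-weight $M_{\mathrm{dl}}+N_{\mathrm{ul}}+M_{\mathrm{ul}}+N_{\mathrm{dl}}+1-2k$ and the constant $-(M_{\mathrm{dl}}+N_{\mathrm{ul}})m_{\mathrm{I}}\alpha_{\mathrm{I}}$ in the objective.

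Second, I would introduce the Zheng--Tse parametrization: for each $k\in\{\mathrm{dl,ul,I,S}\}$ write the ordered non-zero eigenvalues of $H_kH_k^{\dagger}$ as $\rho^{-x_i}$ with $0\le x_1\le\cdots\le x_{m_k}$, collected into $\bar{\mu},\bar{\sigma},\bar{\theta},\bar{\nu}$. The joint density on the ordered simplex then behaves as $\rho^{-\sum_k\sum_i(M_k+N_k+1-2i)x_i^k}$, which supplies the baseline density weights in the objective. Routine high-SNR algebra yields $\log\det(I+\bar{\lambda}\rho_{\mathrm{I}}H_{\mathrm{I}}^\dagger H_{\mathrm{I}})\doteq\sum_k(\alpha_{\mathrm{I}}-\theta_k)^+\log\rho$ and $W\log\det\!\bigl(I+\tfrac{\lambda\rho_{\mathrm{S}}}{W}H_{\mathrm{S}}H_{\mathrm{S}}^{\dagger}\bigr)\doteq W\sum_l(\alpha_{\mathrm{S}}-\nu_l)^+\log\rho$.

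Third and most delicately, I would evaluate the two sum-of-Grams terms $\log\det(I+\rho_{\mathrm{dl}}H_{\mathrm{dl}}H_{\mathrm{dl}}^\dagger+\bar{\lambda}\rho_{\mathrm{I}}H_{\mathrm{I}}H_{\mathrm{I}}^\dagger)$ and $\log\det(I+\bar{\lambda}\rho_{\mathrm{I}}H_{\mathrm{I}}^\dagger H_{\mathrm{I}}+\bar{\lambda}\rho_{\mathrm{ul}}H_{\mathrm{ul}}^\dagger H_{\mathrm{ul}})$. The idea is to rotate $H_{\mathrm{I}}$ into the eigenbasis of $H_{\mathrm{dl}}$ (respectively $H_{\mathrm{ul}}$), which by unitary invariance leaves $H_{\mathrm{I}}$ Gaussian and independent of the diagonalized channel, and then to combine Weyl's interlacing $\lambda_{i+k-1}(A+B)\le\lambda_i(A)+\lambda_k(B)$ with the almost-sure genericity of Gaussian singular vectors to show that, on the outage region, $\log\det(I+A+B)\doteq\sum_i(\alpha_{\mathrm{dl}}-\mu_i)^+\log\rho+\sum_k(\alpha_{\mathrm{I}}-\theta_k)^+\log\rho$. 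The same Weyl bound forces the feasibility constraints $\mu_i+\theta_k\ge\alpha_{\mathrm{I}}$ whenever $i+k\ge N_{\mathrm{dl}}+1$ (and the analogous $\sigma_j+\theta_k\ge\alpha_{\mathrm{I}}$ when $j+k\ge M_{\mathrm{ul}}+1$) in order to preclude negative eigenvalue exponents of $A+B$. The double-sum corrections $\sum_{i,k}(\alpha_{\mathrm{I}}-\mu_i-\theta_k)^+$ over $1\le k\le\min\{N_{\mathrm{dl}}-i,M_{\mathrm{ul}}\}$ arise from the conditional density of the interference eigenvalues after the rotation, which picks up an additional Jacobian factor whenever an interference eigenvalue "collides" with a dimension already occupied by the desired signal. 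Substituting back and canceling the duplicated $\sum_k(\alpha_{\mathrm{I}}-\theta_k)^+$ contributions against the $-\log\det(I+\bar{\lambda}\rho_{\mathrm{I}}H_{\mathrm{I}}^\dagger H_{\mathrm{I}})$ term yields exactly the multiplexing-gain inequality of the lemma.

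Finally, collecting all the exponents, the outage probability takes the form $\int_{\Omega}\rho^{-\Phi(\bar\mu,\bar\sigma,\bar\theta,\bar\nu)}\,d\bar\mu\,d\bar\sigma\,d\bar\theta\,d\bar\nu$ over the region $\Omega$ carved out by the orderings, the Weyl feasibility constraints and the multiplexing-gain inequality; Laplace's principle then identifies $d_{\mathfrak{B}_{\mathrm{sum}}}(r_{\mathrm{sum}})$ with $\inf_{\Omega}\Phi$, which is precisely the displayed optimization. The hard part is Step three: tracking both the exponent and the rotation Jacobian of the sum-of-Grams log-determinant to recover simultaneously the objective corrections and the Weyl constraints. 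Once this is in place, the remainder is routine Zheng--Tse machinery, and the same argument specialized to the individual downlink and uplink outage events recovers the point-to-point diversity expressions in~(\ref{diP2P}).
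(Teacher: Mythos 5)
Your overall template (eigenvalue-exponent parametrization, outage region, Laplace principle) is the right one, but the route you take through Step three has a genuine gap, and it is exactly where the lemma's nontrivial content lives. The paper does the opposite of your first step: instead of un-whitening the uplink term via Sylvester, it \emph{whitens} the downlink term, writing $\log|I+\rho_{\mathrm{dl}}H_{\mathrm{dl}}H_{\mathrm{dl}}^\dagger+\bar{\lambda}\rho_{\mathrm{I}}H_{\mathrm{I}}H_{\mathrm{I}}^\dagger|$ as $\log|I+\rho_{\mathrm{dl}}H_{\mathrm{dl}}^\dagger(I+\bar{\lambda}\rho_{\mathrm{I}}H_{\mathrm{I}}H_{\mathrm{I}}^\dagger)^{-1}H_{\mathrm{dl}}|+\log|I+\bar{\lambda}\rho_{\mathrm{I}}H_{\mathrm{I}}H_{\mathrm{I}}^\dagger|$, so that $\bar{\mu},\bar{\sigma}$ are the exponents of the \emph{interference-whitened} matrices $G_1=H_{\mathrm{dl}}^\dagger(I+\bar{\lambda}\rho_{\mathrm{I}}H_{\mathrm{I}}H_{\mathrm{I}}^\dagger)^{-1}H_{\mathrm{dl}}$ and $G_2=H_{\mathrm{ul}}(I+\bar{\lambda}\rho_{\mathrm{I}}H_{\mathrm{I}}^\dagger H_{\mathrm{I}})^{-1}H_{\mathrm{ul}}^\dagger$, not of the raw Grams. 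The joint law is then factored as $p(\bar{\mu}|\bar{\theta})p(\bar{\sigma}|\bar{\theta})p(\bar{\theta})p(\bar{\nu})$, and the known conditional asymptotic density of whitened eigen-exponents given the interference exponents (Lemma~\ref{lemma3cite}, cited from the literature) supplies, in one stroke, the augmented $\theta_k$ weights, the constant $-(M_{\mathrm{dl}}+N_{\mathrm{ul}})m_{\mathrm{I}}\alpha_{\mathrm{I}}$, the double-sum corrections $(\alpha_{\mathrm{I}}-\mu_i-\theta_k)^+$ with the index ranges $\min\{N_{\mathrm{dl}}-i,M_{\mathrm{ul}}\}$, and the support constraints $\mu_i+\theta_k\geq\alpha_{\mathrm{I}}$ for $i+k\geq N_{\mathrm{dl}}+1$. (The paper also sandwiches the power split between $\lambda=1/2$ and $\lambda=1$ to argue $\lambda$ does not affect the SNR exponent, a point you omit.)

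In your parametrization by raw Gram exponents, two things break. First, the claimed asymptotics $\log\det(I+A+B)\doteq\big[\sum_i(\alpha_{\mathrm{dl}}-\mu_i)^++\sum_k(\alpha_{\mathrm{I}}-\theta_k)^+\big]\log\rho$ is false precisely in the regime that matters: already for $N_{\mathrm{dl}}=1$ the left side scales as the \emph{maximum}, not the sum, and in general the additive form fails whenever the combined ranks exceed $N_{\mathrm{dl}}$, which is exactly when the double-sum corrections and the coupling constraints are non-vacuous. Second, the raw Grams $H_{\mathrm{dl}}H_{\mathrm{dl}}^\dagger$, $H_{\mathrm{ul}}^\dagger H_{\mathrm{ul}}$ and $H_{\mathrm{I}}H_{\mathrm{I}}^\dagger$ are mutually independent, so no deterministic device (Weyl interlacing, genericity of eigenvectors) can force joint constraints such as $\mu_i+\theta_k\geq\alpha_{\mathrm{I}}$ on their exponents; in the paper these constraints are support conditions of the \emph{conditional} density of the whitened matrices, i.e., they are probabilistic, not spectral-geometric. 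Your appeal to "an additional Jacobian factor" at eigenvalue collisions is where this conditional-density computation would have to be redone from scratch; as written the proposal neither performs it nor cites it, so the objective and constraint set of the lemma are not actually recovered. To repair the argument, either keep your un-whitened decomposition and derive the joint large-deviation exponent of the sum-of-Gram determinants directly (a substantial random-matrix computation), or switch to the whitened parametrization and invoke the conditional eigenvalue-distribution result as the paper does.
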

\begin{proof}
We provide the proof in Appendix~\ref{dos}.
\end{proof}
With $d_{\mathfrak{B}_i}$ for $i\in\{{\mathrm{dl}},{\mathrm{ul}},{\mathrm{sum}}\}$ derived above, we have the following theorem which gives the optimal DMT in its most general form, allowing different channel parameters and multiplexing gains for uplink and downlink with arbitrary number of antennas at each node.

\begin{theorem}\label{the1}
The optimal DMT of $(M_{\mathrm{dl}},N_{\mathrm{dl}},M_{\mathrm{ul}},N_{\mathrm{ul}})$ side-channel assisted MIMO full-duplex network with CSIT denoted as $d^{\text{CSIT,opt}}$ is given by
 \[
d^{\text{CSIT,opt}}_{(M_{\mathrm{dl}},N_{\mathrm{dl}},M_{\mathrm{ul}},N_{\mathrm{ul}})}(r_{\mathrm{dl}},r_{\mathrm{ul}})=\min_{i\in\{{\mathrm{dl}},{\mathrm{ul}},{\mathrm{sum}}\}}d_{\mathfrak{B}_i}(r_i),
\]where $d_{\mathfrak{B}_i}(r_i)$ is given in (\ref{diP2P}) and Lemma~\ref{doss}.
\end{theorem}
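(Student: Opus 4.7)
The statement of Theorem~\ref{the1} is essentially an assembly theorem: all the ingredients have been set up in equations (\ref{optimald}), (\ref{diP2P}) and Lemma~\ref{doss}, so the plan is to (i) justify the decomposition of the overall DMT into the three per-constraint diversity orders, (ii) verify each $d_{\mathfrak{B}_i}$, and (iii) argue that vector bin-and-cancel actually attains the resulting curve. My proof would therefore be organized around converse and achievability matching at the SNR-exponent scale.

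For the converse, I would start from the standard slow-fading identity $P_e^*(r_{\mathrm{dl}},r_{\mathrm{ul}})\doteq\mathrm{Pr}(\mathfrak B)$ with $\mathfrak B=\{(R_{\mathrm{dl}},R_{\mathrm{ul}})\notin\mathcal{C(H)}\}$. Using the asymptotic capacity approximation from Corollary~\ref{capacity}, the outage event is a union of exactly three half-spaces (individual downlink, individual uplink, sum constraint). Because under the $\doteq$ relation a union of events has exponent equal to the maximum of their exponents,
\begin{equation*}
\mathrm{Pr}(\mathfrak B)\doteq \max_{i\in\{\mathrm{dl},\mathrm{ul},\mathrm{sum}\}}\mathrm{Pr}(C_i<W_m r_i\log\rho),
\end{equation*}
which immediately yields $d^{\mathrm{opt}}(r_{\mathrm{dl}},r_{\mathrm{ul}})=\min_i d_{\mathfrak{B}_i}(r_i)$. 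The per-event exponents $d_{\mathfrak{B}_{\mathrm{dl}}}$ and $d_{\mathfrak{B}_{\mathrm{ul}}}$ are then point-to-point MIMO outage computations at channel level $\alpha_{\mathrm{dl}}$ and $\alpha_{\mathrm{ul}}$ respectively; I would invoke the generalized Zheng--Tse exponent (the lemma of Appendix~\ref{citelemma} that is used to obtain equation~(\ref{diP2P})) to conclude $d_{\mathfrak{B}_i}(r_i)=\alpha_i\, d_{M_i,N_i}(r_i/\alpha_i)$. The sum-constraint exponent $d_{\mathfrak{B}_{\mathrm{sum}}}$ is exactly the content of Lemma~\ref{doss}, which I am allowed to cite as stated; its proof (deferred to Appendix~\ref{dos}) is where the real work lives and is the main obstacle, since the sum capacity is a sum of log-dets of correlated Wishart-like matrices (downlink plus interference, side-channel, uplink filtered by interference), and one must perform a joint large-deviations analysis in the eigenvalue exponents $(\bar\mu,\bar\sigma,\bar\theta,\bar\nu)$ with coupling constraints of the form $\mu_i+\theta_k\geq\alpha_{\mathrm{I}}$ coming from the $(I+\bar\lambda\rho_{\mathrm{I}}H_{\mathrm{I}}^\dagger H_{\mathrm{I}})^{-1}$ whitening.

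For achievability, Theorem~\ref{mainthe1} shows that vector bin-and-cancel achieves $\mathcal{R}_{\mathrm{BC}}(\mathcal H)$ within a constant number of bits of $\mathcal{C(H)}$ for every realization $\mathcal H$; constants are $\doteq 0$ on the SNR scale, so the outage exponent of the bin-and-cancel scheme coincides with that of the capacity region. Moreover, as noted in Section~\ref{csitcase}, only the uplink message splitting depends on $H_{\mathrm{I}}$, so the required CSIT is limited to the interference channel. Combining the converse decomposition, the per-event exponents from (\ref{diP2P}) and Lemma~\ref{doss}, and this achievability observation gives $d^{\mathrm{CSIT,opt}}(r_{\mathrm{dl}},r_{\mathrm{ul}})=\min_i d_{\mathfrak{B}_i}(r_i)$ as claimed. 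The principal technical hurdle is therefore packaged inside Lemma~\ref{doss}; once that lemma is in hand, the theorem reduces to pairing its optimization formula with the two point-to-point exponents and quoting the constant-gap result to close the achievability side.
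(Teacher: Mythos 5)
Your proposal is correct and follows essentially the same route as the paper: the paper also defines outage with respect to the asymptotic capacity region of Corollary~\ref{capacity}, uses $P_e^*\doteq\mathrm{Pr}(\mathfrak{B})\doteq\max_i\mathrm{Pr}(C_i<R_i)$ to reduce the optimal DMT to $\min_i d_{\mathfrak{B}_i}(r_i)$, obtains the individual-link exponents from the point-to-point lemma giving~(\ref{diP2P}), delegates the sum-constraint exponent to Lemma~\ref{doss}, and closes achievability by noting that vector bin-and-cancel is within a constant gap of capacity (hence asymptotically capacity achieving) and needs CSIT only of the interference channel. No gaps to report.
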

The optimization problem in Lemma~\ref{doss} is a convex optimization problem~\cite{convex} with linear constraints, which can be solved using linear programming. The general form of the optimal DMT with CSIT in Theorem~\ref{the1}, though can be calculated using numerical methods, does not result in a closed-form solution. In the following corollary, a closed-form DMT result is derived in the case of single-antenna mobiles communicating with multiple-antenna BS with $M$ transmit and receive antennas, i.e., $M_{\rm dl}=N_{\rm ul}=M$.  
\begin{corollary}\label{coroex1}
In the case of  $(M,1,1,M)$ with symmetric DMT $r_{\rm ul}=r_{\rm dl}=r$ when $\alpha_{\mathrm{dl}}=\alpha_{\mathrm{ul}}=\alpha_{\mathrm{I}}=1$. 
The closed-form optimal DMT with CSIT is given which completely characterizes the optimal DMT under all side-channel conditions:
\begin{itemize}
\item when $W\leq \frac{1}{2M+1}$ and $W\alpha_{\rm S}<1$, 
\begin{gather}
\begin{aligned}
d^{\text{CSIT,opt}}_{(M,1,1,M)}(r)=\left\{
  \begin{array}{l l}
   M(1-r),&0\leq r\leq \frac{M+1+(2M+1)W\alpha_{\rm S}}{3M+2}\\
 (2M+1)(1+W\alpha_{\mathrm{S}})-(4M+2)r,& \frac{M+1+(2M+1)W\alpha_{\rm S}}{3M+2}\leq r \leq \frac{1+W\alpha_{\rm S}}{2}
   \end{array} \right.
\end{aligned}
\end{gather}
\item when $ \frac{1}{2M+1}\leq W<\frac{2}{M}, \alpha_{\rm S}\geq \frac{M}{2}$, and $W\alpha_{\rm S}< 1$,
\begin{gather}
\begin{aligned}
d^{\text{CSIT,opt}}_{(M,1,1,M)}(r)=\left\{
  \begin{array}{l l}
   M(1-r),&0\leq r\leq \beta^*\\
 \alpha_{\mathrm{S}}+\frac{1}{W}(1-2r),&\beta^*\leq r\leq \frac{1+W\alpha_{\rm{S}}}{2}
   \end{array} \right.
\end{aligned}
\end{gather}
\item when $W\geq \frac{1}{2M+1}, \alpha_{\rm S}< \frac{M}{2}$, and $ W\alpha_{\rm S}< 1$,
\begin{gather}
\begin{aligned}
d^{\text{CSIT,opt}}_{(M,1,1,M)}(r)=\left\{
  \begin{array}{l l}
   M(1-r),&0\leq r\leq \frac{M+1+\alpha_{\rm S}}{3M+2}\\
    2M+1+\alpha_{\mathrm{S}}-(4M+2)r,& \frac{M+1+\alpha_{\rm S}}{3M+2}\leq r\leq \frac{1}{2}\\
    \alpha_{\mathrm{S}}+\frac{1}{W}(1-2r),&\frac{1}{2}\leq r\leq \frac{1+W\alpha_{\rm{S}}}{2}
   \end{array} \right.
\end{aligned}
\end{gather}
\item  when $W\geq \frac{1}{2M+1}, \alpha_{\rm S}< \frac{M}{2}$, and $ W\alpha_{\rm S}\geq 1$,
\begin{gather}
\begin{aligned}
d^{\text{CSIT,opt}}_{(M,1,1,M)}(r)=\left\{
  \begin{array}{l l}
   M(1-r),&0\leq r\leq \frac{M+1+\alpha_{\rm S}}{3M+2}\\
    2M+1+\alpha_{\mathrm{S}}-(4M+2)r,& \frac{M+1+\alpha_{\rm S}}{3M+2}\leq r\leq \frac{1}{2}\\
    \alpha_{\mathrm{S}}+\frac{1}{W}(1-2r),&\frac{1}{2}\leq r\leq \beta^*\\
    M(1-r), & \beta^*\leq r \leq 1
       \end{array} \right.
\end{aligned}
\end{gather}
\item  when $\alpha_{\rm S}\geq \frac{M}{2}$ and $W\alpha_{\rm S}\geq 1$,
\begin{gather}
\begin{aligned}
d^{\text{CSIT,opt}}_{(M,1,1,M)}(r)=
    M(1-r), 0\leq r \leq 1
\end{aligned}
\end{gather}
where $\beta^*=\frac{\alpha_{\rm S}+\frac{1}{W}-M}{\frac{2}{W}-M}.$
\end{itemize}
\end{corollary}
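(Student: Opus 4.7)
The plan is to apply Theorem~\ref{the1} so that under the symmetric choice $r_{\rm dl}=r_{\rm ul}=r$ we have $d^{\text{CSIT,opt}}_{(M,1,1,M)}(r)=\min\{d_{\mathfrak{B}_{\rm dl}}(r),d_{\mathfrak{B}_{\rm ul}}(r),d_{\mathfrak{B}_{\rm sum}}(2r)\}$. The two individual terms follow at once from~(\ref{diP2P}) with $\alpha_{\rm dl}=\alpha_{\rm ul}=1$ together with the classical point-to-point result $d_{M,1}(r)=M(1-r)$ of~\cite{DMTTse}, giving $d_{\mathfrak{B}_{\rm dl}}(r)=d_{\mathfrak{B}_{\rm ul}}(r)=M(1-r)$. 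The work is therefore concentrated on evaluating the sum-outage exponent $d_{\mathfrak{B}_{\rm sum}}(2r)$ of Lemma~\ref{doss} in this antenna configuration.

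Substituting $(M_{\rm dl},N_{\rm dl},M_{\rm ul},N_{\rm ul})=(M,1,1,M)$ and $\alpha_{\rm I}=1$ collapses $m_{\rm dl}=m_{\rm ul}=m_{\rm I}=1$ and annihilates the two double-sum penalty terms because $\min\{N_{\rm dl}-1,M_{\rm ul}\}=\min\{M_{\rm ul}-1,N_{\rm dl}\}=0$. What remains is a scalar program: minimize $M\mu+M\sigma+(2M+1)\theta+\nu-2M$ over $\mu,\sigma,\theta,\nu\geq 0$ subject to the outage constraint $(1-\mu)^++(1-\sigma)^++(1-\theta)^++W(\alpha_{\rm S}-\nu)^+\leq 2r$ together with the alignment constraints $\mu+\theta\geq 1$ and $\sigma+\theta\geq 1$. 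A short monotonicity argument (any optimum has $\mu\leq 1$, $\sigma\leq 1$, $\theta\leq 1$, $\nu\leq\alpha_{\rm S}$) lets us drop every $(\cdot)^+$ operator, so the problem reduces to a genuine linear program.

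I would solve this LP via the nonnegative-slack change of variables $a=1-\mu$, $b=1-\sigma$, $c=1-\theta$, $d=\alpha_{\rm S}-\nu$, which turns it into: maximize $Ma+Mb+(2M+1)c+d$ subject to $a+b+c+Wd\leq 2r$, $a+c\leq 1$, $b+c\leq 1$, $a,b,c\in[0,1]$, $d\in[0,\alpha_{\rm S}]$, after which $d_{\mathfrak{B}_{\rm sum}}(2r)=2M+1+\alpha_{\rm S}$ minus this maximum. The value-per-unit of $r$-budget is $M$ for $a$ or $b$, $2M+1$ for $c$, and $1/W$ for $d$, so the greedy optimum splits on whether $W\gtrless 1/(2M+1)$ (which swaps the ranking of $c$ and $d$) and on the saturation levels $c=1$ and $d=\alpha_{\rm S}$. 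By symmetry of the LP in $(a,b)$ we may take $a=b$ at the optimum, and a one-line perturbation (reducing $c$ by $\epsilon$ and adding $\epsilon$ to each of $a,b$ with a compensating drop in $d$ costs $\epsilon(1+1/W)$ in objective) shows that trading $c$ for $(a,b)$ is never beneficial. This yields a piecewise-linear $d_{\mathfrak{B}_{\rm sum}}(2r)$ with breakpoints at $r=1/2$ and $r=(1+W\alpha_{\rm S})/2$ when $W\geq 1/(2M+1)$, and at $r=W\alpha_{\rm S}/2$ and $r=(1+W\alpha_{\rm S})/2$ when $W<1/(2M+1)$; beyond the last breakpoint the sum-outage exponent is $0$.

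The final step is to take $\min\{M(1-r),d_{\mathfrak{B}_{\rm sum}}(2r)\}$ and identify, in each parameter regime, which affine piece is the overall minimum. Equating affine pieces produces the crossovers $r=\tfrac{M+1+(2M+1)W\alpha_{\rm S}}{3M+2}$, $r=\tfrac{M+1+\alpha_{\rm S}}{3M+2}$, and $\beta^*=\tfrac{\alpha_{\rm S}+1/W-M}{2/W-M}$; the thresholds $\alpha_{\rm S}\gtrless M/2$ and $W\alpha_{\rm S}\gtrless 1$ decide, respectively, whether the piece-1 crossover lies left or right of $r=1/2$ and whether the sum exponent hits zero before $r=1$ (triggering a return to the individual curve $M(1-r)$). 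Tracing the five feasible combinations of these three binary choices recovers the five regimes of the corollary. The main obstacle is purely combinatorial bookkeeping: the LP-active-regime partition and the overall-minimum-active-piece partition interact non-trivially through $W$ and $\alpha_{\rm S}$, and continuity at each breakpoint has to be checked in every case to certify that the stated piecewise formulas are correct and exhaustive.
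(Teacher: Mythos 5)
Your proposal is correct and follows essentially the same route as the paper: it specializes Lemma~\ref{doss} to the scalar convex program for the sum-outage exponent, solves it by comparing the per-unit-budget values of the variables (your slack-variable LP/greedy argument is just a reformulation of the paper's steepest-descent comparison of the gradients $M$, $2M+1$, $1/W$, including the split at $W\gtrless\frac{1}{2M+1}$ and the forced $\mu_1=\sigma_1=1$), and then intersects the resulting piecewise-linear $d_{\mathfrak{B}_{\rm sum}}(2r)$ with $M(1-r)$, with the same crossovers and the same parameter thresholds $\alpha_{\rm S}\gtrless\frac{M}{2}$, $W\alpha_{\rm S}\gtrless 1$ producing the five regimes.
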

\begin{proof}
The DMT of the point-to-point channel is $M(1-r), \forall r\in[0,1]$. Thus we only need to solve for the optimization problem given sum multiplexing gain.
One way to find the minimum of the optimization problem in Lemma~\ref{doss} is to apply the Karush-Kuhn-Tucker condition. Here we will provide another approach which is the key to the proof of a general case.
The method we adopt is gradient descent which finds the local optimum. Since the optimization problem we have is convex with linear constraints, the local optimum is actually the global optimum in convex optimization~\cite{convex}. Hence we can obtain the global optimum via gradient descent algorithm.

We first simplify the objective function of the diversity order in Lemma~\ref{doss} given sum multiplexing gain. By substituting $\nu_l^\prime=W\nu_l$ in (\ref{disum}), we can express the objective function as
\begin{gather}
\begin{aligned}
d^{\text{CSIT}}_{\mathrm{sum}}=& \min M\mu_1+M\sigma_1+(2M+1)\theta_1+\frac{\nu_1^\prime}{W}-2M,\\ \label{m1dmt1}
\mathrm{Subject~to}\quad& (1-\mu_1)^++(1-\sigma_1)^++(1-\theta_1)^++(W\alpha_{\mathrm{S}}-\nu_1^\prime)^+< r_{\mathrm{sum}};\\
&\mu_1,\sigma_1,\theta_1,\nu_1^\prime\geq 0;\\
&\mu_1+\theta_1\geq1;~\sigma_1+\theta_1\geq1.
\end{aligned}
\end{gather}
Next, we differentiate the objective function  in (\ref{m1dmt1}) with respect to different variables 
\begin{eqnarray}
\frac{ \partial d^{\text{CSIT}}_{\mathrm{sum}}}{\partial \nu_1^\prime }&=&\frac{1}{W};\\
\frac{ \partial d^{\text{CSIT}}_{\mathrm{sum}}}{\partial \theta_1}&=&2M+1;\\
\frac{ \partial d^{\text{CSIT}}_{\mathrm{sum}}}{\partial \mu_1}&=&
\frac{ \partial d^{\text{CSIT}}_{\mathrm{sum}}}{\partial \sigma_1}=M<\frac{ \partial d^{\text{CSIT}}_{\mathrm{sum}}}{\partial \theta_1}.
\end{eqnarray}
Comparing the gradient of each variable, when $W\leq\frac{1}{2M+1}$, the steepest descent of the objective function is along the decreasing value of $\nu_1^\prime$ with $\theta_1=\mu_1=\sigma_1=1$, for $r_{\rm sum}\leq W\alpha_s$. Thus we have
$d^{\text{CSIT}}_{{\mathrm{sum}}}(r)=2M+1+\alpha_{\mathrm{S}}-\frac{r_{\rm sum}}{W},\forall r_{\rm sum}\in[0,W\alpha_{\mathrm{S}}].$ This also implies that for $r_{\rm sum}\geq W\alpha_s$, $\nu_1^\prime=0$ in the optimal solution. Now the steepest descent of the objective function in (\ref{m1dmt1}) is along the decreasing value of $\theta_1$ with $\mu_1=\sigma_1=1$, and the corresponding minimum is 
$d^{\text{CSIT}}_{{\mathrm{sum}}}(r_{\rm sum})=(2M+1)(1+W\alpha_{\mathrm{S}})-(2M+1)r_{\rm sum},$ $\forall r_{\rm sum}\in[W\alpha_{\mathrm{S}},1+W\alpha_{\rm S}].$

When $W\geq \frac{1}{2M+1}$, the steepest descent of the objective function is along the decreasing value of $\theta_1$ with $\mu_1=\sigma_1=1,\nu_1^\prime=W\alpha_{\rm S}$, for $r_{\rm sum}\leq 1$. Thus we have
$d^{\text{CSIT}}_{{\mathrm{sum}}}(r_{\rm sum})=2M+1+\alpha_{\mathrm{S}}-(2M+1)r_{\rm sum},\forall r_{\rm sum}\in[0,1].$
Again, for $r_{\rm sum}\geq 1$,  the optimal solution has $\theta_1=0$. We will rewrite the objective function as
\begin{gather}
\begin{aligned}
 d^{\text{CSIT}}_{\mathrm{sum}}=&\min M\mu_1+M\sigma_1+\frac{\nu_1^\prime}{W}-2M,\\ 
\mathrm{Subject~to}\quad& (1-\mu_1)^++(1-\sigma_1)^++(W\alpha_{\mathrm{S}}-\nu_1^\prime)^+\leq r_{\mathrm{sum}}-1;\\
&\mu_1,\sigma_1,\nu_1^\prime\geq 0;\\
&\mu_1\geq1;~\sigma_1\geq1.
\end{aligned}
\end{gather}
To minimize the objective function above, we should let $\mu_1=\sigma_1=1.$  Hence the minimum of the objective function is $d^{\text{CSIT}}_{{\mathrm{sum}}}(r_{\rm sum})=\alpha_{\mathrm{S}}+\frac{1}{W}(1-r_{\rm sum}),\forall r_{\rm sum}\in[1,1+W\alpha_{\rm S}].$
Now combining all the results above, we have 
\begin{equation}
d^{\text{CSIT,opt}}_{(M,1,1,M)}(r)=\min\{M(1-r), d^{\text{CSIT}}_{{\mathrm{sum}}(M,1,1,M)}(r)\}~\text{for}~0\leq r\leq 1. \label{csitex}
\end{equation}
where $d^{\text{CSIT}}_{{\mathrm{sum}}(M,1,1,M)}(r)$ is given as
\begin{itemize}
\item when $W\leq \frac{1}{2M+1}$
\begin{gather}
\begin{aligned}
d^{\text{CSIT}}_{{\mathrm{sum}}(M,1,1,M)}(r)=\left\{
  \begin{array}{l l}
  2M+1+\alpha_{\mathrm{S}}-\frac{2r}{W},&0\leq r\leq \frac{W\alpha_{\mathrm{S}}}{2}\\
  (2M+1)(1+W\alpha_{\mathrm{S}})-(4M+2)r,&\frac{W\alpha_{\mathrm{S}}}{2}\leq r\leq \frac{1+W\alpha_{\rm{S}}}{2}
   \end{array} \right.
\end{aligned}
\end{gather}
\item when $W\geq \frac{1}{2M+1}$
\begin{gather}
\begin{aligned}
d^{\text{CSIT}}_{{\mathrm{sum}}(M,1,1,M)}(r)=\left\{
  \begin{array}{l l}
  2M+1+\alpha_{\mathrm{S}}-(4M+2)r,&0\leq r\leq \frac{1}{2}\\
  \alpha_{\mathrm{S}}+\frac{1}{W}(1-2r),&\frac{1}{2}\leq r\leq \frac{1+W\alpha_{\rm{S}}}{2}
   \end{array} \right.
\end{aligned}
\end{gather}
\end{itemize}
Further simplification of (\ref{csitex}) will lead to the analytical expression in Corollary~\ref{coroex1}.
\end{proof}
\begin{remark} \label{remark1}
The optimal DMT with CSIT in the no side-channel case is a special case of Corollary~\ref{coroex1} when $W=0$ , and is given as 
\begin{gather}
\begin{aligned}
d^{\text{No-SC,CSIT,opt}}_{(M,1,1,M)}(r)=\left\{
  \begin{array}{l l}
   M(1-r),&0\leq r\leq \frac{M+1}{3M+2}\\
  (2M+1)(1-2r),& \frac{M+1}{3M+2}\leq r \leq \frac{1}{2}
   \end{array} \right.
\end{aligned}
\end{gather}
\end{remark}

From Corollary~\ref{coroex1}, we can completely quantify the improvement of DMT with side-channel under all side-channel conditions. Fig.~\ref{plotcase1} depicts the comparison of DMT with/without~(w/wo) side-channel when $W=\frac{1}{2M+1},\alpha_{\mathrm{S}}=\frac{M}{2}$. We define the light loading threshold as the multiplexing gain threshold within which the system error event is dominated by single-user performance. In the case with CSIT, the light loading threshold of the system without side-channel is $B$ shown in Fig.~\ref{plotcase1}. When $r> B$, the dominant error event is that all users are in error. With the help of side-channel, the light loading threshold is increased by $\Delta_1$, where $\Delta_1=\frac{(2M+1)W\alpha_{\rm S}}{3M+2}$.  Moreover, we can see that the side-channel also improves system maximum multiplexing gain~(when the diversity order is zero) by $\Delta_3$, where $\Delta_3=\frac{W\alpha_{\rm S}}{2}.$ Both improvement amount $\Delta_1$ and $\Delta_3$ will scale with side-channel quality~$W\alpha_{\rm S}$~( for $W\leq \frac{1}{2M+1}$) till 
either point $C$ or $D$ reaches the symmetric maximum multiplexing gain of one which corresponds to the no-interference point. 
\begin{figure}[h!]
  \centering
    \includegraphics[width=0.5\textwidth,trim = 30mm
      25mm 32mm 20mm, clip]{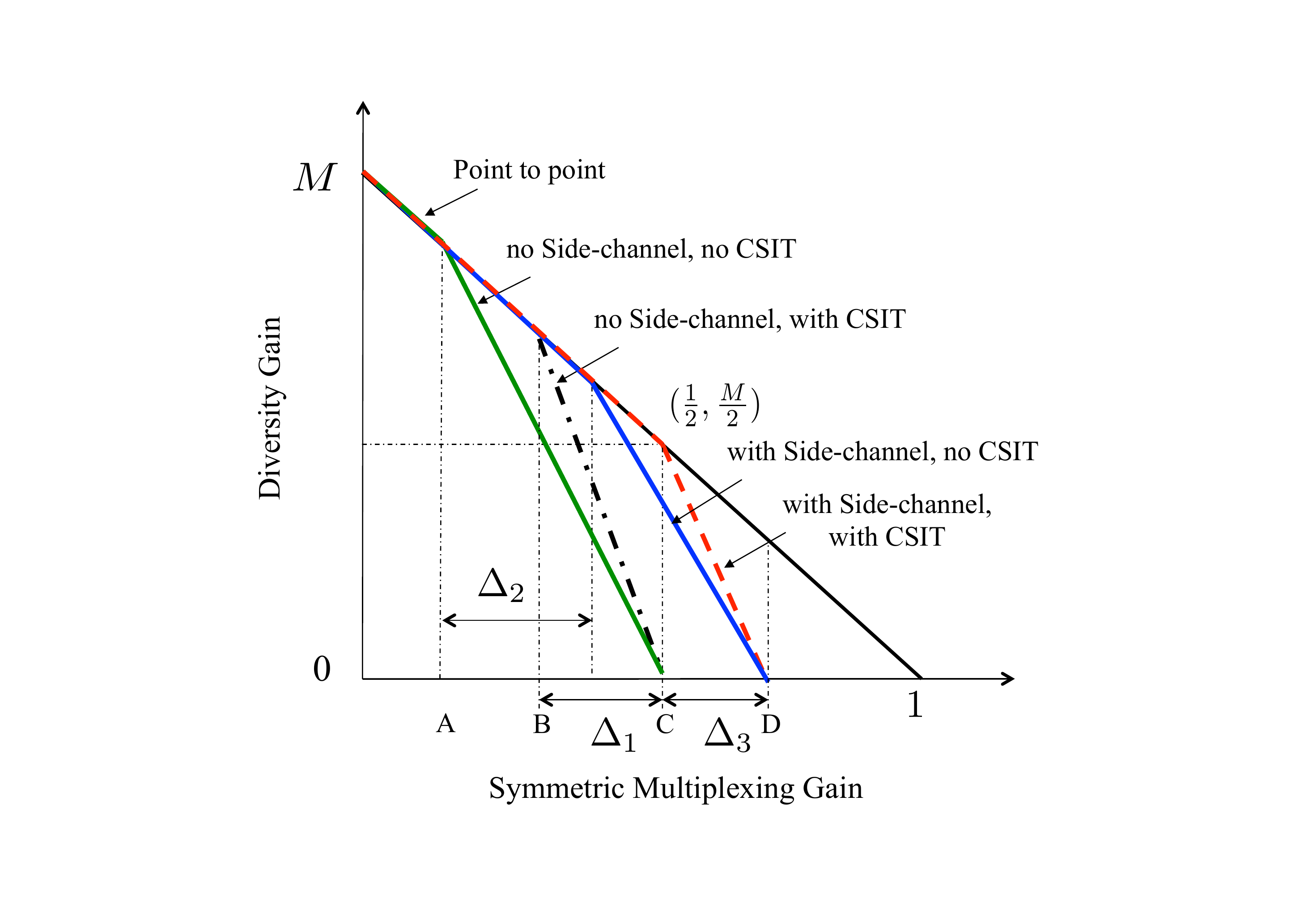}
  \caption{DMT comparison w/wo side-channel w/wo CSIT when $W=\frac{1}{2M+1},\alpha_{\mathrm{S}}=\frac{M}{2}$.}
\label{plotcase1}
\end{figure}

When $W=\frac{1}{2M+1},\alpha_{\mathrm{S}}=\frac{M}{2}$, we have $\Delta_1=\frac{M}{6M+4}$ and $\Delta_3=\frac{M}{8M+4}$. We conclude that in this case, both improvement amount $\Delta_1$ and $\Delta_3$ will scale with the number of antennas at the BS. In the limit of $M$~(as in massive MIMO, BS has unlimited number of antennas), we will have improvement of $\lim_{M\rightarrow\infty}\Delta_1=\frac{1}{6}$ and $\lim_{M\rightarrow\infty}\Delta_3=\frac{1}{8}$.

\subsection{Without CSIT Case}
%
We define the outage event $\mathbf{O}$ in the case without CSIT as the target rate pair does not lie in the achievable  rate region $\mathcal{R}^{\text{No-CSIT}}$: $\mathbf{O}\triangleq\{(R_{\mathrm{dl}},R_{\mathrm{ul}})\notin \mathcal{R} \}$, where $\mathcal{R}$ is given (with $\lambda=\bar{\lambda}=0.5$)
\begin{gather}
\begin{aligned}
\mathcal{R}=&\Bigg\{(R_{\mathrm{dl}},R_{\mathrm{ul}}):R_{\mathrm{dl}}\leq W_m\mathrm{log}\left|I_{N_{\mathrm{dl}}}+\frac{\rho_{\mathrm{dl}}}{M_{\mathrm{dl}}}H_{\mathrm{dl}}H_{\mathrm{dl}}^\dagger\right|;~
R_{\mathrm{ul}}\leq W_m\mathrm{log}\left|I_{N_{\mathrm{ul}}}+\frac{\bar{\lambda}\rho_{{\mathrm{ul}}}}{M_{\mathrm{ul}}}H_{\mathrm{ul}}H_{\mathrm{ul}}^\dagger\right|;\\
R_{\mathrm{dl}}+R_{\mathrm{ul}}&\leq W_m\bigg(\mathrm{log}\left|I_{N_{\mathrm{dl}}}+\frac{\rho_{\mathrm{dl}}}{M_{\mathrm{dl}}}H_{\mathrm{dl}}H_{\mathrm{dl}}^\dagger+\frac{\bar{\lambda}\rho_{\mathrm{I}}}{M_{\mathrm{ul}}}H_{\mathrm{I}}H_{\mathrm{I}}^\dagger\right|+W\mathrm{log}\left|I_{N_{\mathrm{dl}}}+\frac{\lambda\rho_{\mathrm{S}}}{W M_{\mathrm{ul}}}H_{\mathrm{S}}H_{\mathrm{S}}^{\dagger}\right|\bigg)\Bigg\}, \label{nocsitdmtar}
\end{aligned}
\end{gather}
The difference between (\ref{nocsitdmtar}) and the achievable rate region in (\ref{nocsitar}) is that (\ref{nocsitdmtar}) does not have a constraint on $R_{\rm ul}$ for the transmission from up- to downlink mobile. This is because the downlink mobile is not interested in the uplink's message, thereby the failure of decoding uplink's message alone will not be declared as an error event at the downlink receiver.  

Under the no-CSIT assumption, the diversity order of each MIMO downlink/uplink channel is still the same as given in~(\ref{diP2P}).
As for the diversity order for a given sum multiplexing gain, it is characterized by the following lemma.
\begin{lemma}\label{doss2}
The diversity order at a given sum multiplexing gain in the case without CSIT is the minimum of the following objective function:
\begin{gather}
\begin{aligned}
d_{o_{\mathrm{sum}}}(r_{\mathrm{sum}})=&\min_{\bar{\mu},\bar{\theta},\bar{\nu}} \sum_{i=1}^{m_{\mathrm{dl}}}(M_{\mathrm{dl}}+N_{\mathrm{dl}}+1-2i)\mu_i+\sum_{k=1}^{m_{\mathrm{I}}}(M_{\mathrm{ul}}+N_{\mathrm{dl}}+M_{\mathrm{dl}}+1-2k)\theta_k\\
&+\sum_{l=1}^{m_{\mathrm{I}}}(M_{\mathrm{ul}}+N_{\mathrm{dl}}+1-2l)\nu_l-M_{\mathrm{dl}}m_{\mathrm{I}}\alpha_{\mathrm{I}}+\sum_{i=1}^{m_{\mathrm{dl}}}\sum_{k=1}^{\min\{N_{\mathrm{dl}}-i,M_{\mathrm{ul}}\}}(\alpha_{\mathrm{I}}-\mu_i-\theta_k)^+\\
\mathrm{Subject~to}\quad&\sum_{i=1}^{m_{\mathrm{dl}}}(\alpha_{\mathrm{dl}}-\mu_i)^++\sum_{k=1}^{m_{\mathrm{I}}}(\alpha_{\mathrm{I}}-\theta_k)^++W\sum_{l=1}^{m_{\mathrm{I}}}(\alpha_{\mathrm{S}}-\nu_l)^+< r_{\mathrm{sum}};\\
&0\leq\mu_{\mathrm{1}}\leq\cdots\leq\mu_{m_{\mathrm{dl}}};~0\leq\theta_{\mathrm{1}}\leq\cdots\leq\theta_{m_{\mathrm{I}}};~0\leq\nu_{\mathrm{1}}\leq\cdots\leq\nu_{m_{\mathrm{I}}};\\
&\mu_i+\theta_k\geq\alpha_{\mathrm{I}},~\forall (i+k)\geq N_{\mathrm{dl}}+1; \label{disumNo}
\end{aligned}
\end{gather}
\end{lemma}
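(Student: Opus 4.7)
The proof follows the same outage-exponent template as Lemma~\ref{doss}, adapted to the no-CSIT achievable region $\mathcal{R}^{\text{No-CSIT}}$ in (\ref{nocsitdmtar}). The plan is to compute $\Pr(\mathbf{O})$ in the high-SNR limit, decompose it into three dominating events (downlink-only, uplink-only, sum-constraint violation), and show that the sum event produces the stated linear program. The individual downlink and uplink events are already handled by the point-to-point DMT of (\ref{diP2P}); only the sum event needs to be computed here.

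The main step is to parameterize the ordered eigenvalues of $H_{\mathrm{dl}}H_{\mathrm{dl}}^{\dagger}$, $H_{\mathrm{I}}H_{\mathrm{I}}^{\dagger}$, and $H_{\mathrm{S}}H_{\mathrm{S}}^{\dagger}$ as $\rho^{-\mu_i}$, $\rho^{-\theta_k}$, and $\rho^{-\nu_l}$, and invoke Lemma~\ref{lemmacite} to obtain the asymptotic joint density of these exponents. Each triple of independent Wishart-type matrices contributes an exponent of the form $\sum_i(M+N+1-2i)\mu_i$, which explains the three pure quadratic-in-index coefficients of $\mu_i$, $\theta_k$, and $\nu_l$ in (\ref{disumNo}). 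Writing the sum outage condition with $\log|I+\rho^\alpha Q|\doteq\sum(\alpha-d_i)^+\log\rho$ and applying Laplace's method then converts $\Pr(\mathbf{O}_{\mathrm{sum}})\doteq\rho^{-d_{o_{\mathrm{sum}}}}$ into the minimization in (\ref{disumNo}), while the strict inequality $\sum_i(\alpha_{\mathrm{dl}}-\mu_i)^++\sum_k(\alpha_{\mathrm{I}}-\theta_k)^++W\sum_l(\alpha_{\mathrm{S}}-\nu_l)^+<r_{\mathrm{sum}}$ comes directly from the sum-rate bound in (\ref{nocsitdmtar}) after rescaling by $W_m\log\rho$.

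The nontrivial bookkeeping lies in the coupled log-det $\log\bigl|I+\tfrac{\rho_{\mathrm{dl}}}{M_{\mathrm{dl}}}H_{\mathrm{dl}}H_{\mathrm{dl}}^{\dagger}+\tfrac{\bar{\lambda}\rho_{\mathrm{I}}}{M_{\mathrm{ul}}}H_{\mathrm{I}}H_{\mathrm{I}}^{\dagger}\bigr|$, where two independent Wishart-like matrices sit additively inside one determinant. Tracking the joint eigenstructure of this sum (as done in the appendix of Lemma~\ref{doss}) yields three consequences: the coefficient of $\theta_k$ is boosted by $M_{\mathrm{dl}}$ to $(M_{\mathrm{ul}}+N_{\mathrm{dl}}+M_{\mathrm{dl}}+1-2k)$; the saturation cross term $\sum_{i=1}^{m_{\mathrm{dl}}}\sum_{k=1}^{\min\{N_{\mathrm{dl}}-i,M_{\mathrm{ul}}\}}(\alpha_{\mathrm{I}}-\mu_i-\theta_k)^+$ appears, accounting for eigendirections in which $H_{\mathrm{I}}$ dominates $H_{\mathrm{dl}}$; and the rank-$N_{\mathrm{dl}}$ restriction on the combined matrix enforces the interlacing constraint $\mu_i+\theta_k\geq\alpha_{\mathrm{I}}$ whenever $i+k\geq N_{\mathrm{dl}}+1$. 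The constant offset $-M_{\mathrm{dl}}m_{\mathrm{I}}\alpha_{\mathrm{I}}$ tracks the baseline exponent contributed by the $\rho^{-\alpha_{\mathrm{I}}}$ scaling of the interference matrix.

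The main obstacle is precisely this joint eigenvalue analysis of $H_{\mathrm{dl}}H_{\mathrm{dl}}^{\dagger}+\rho^{\alpha_{\mathrm{I}}-\alpha_{\mathrm{dl}}}H_{\mathrm{I}}H_{\mathrm{I}}^{\dagger}$: one must separate the typical-set contributions of the two independent matrices and then treat the regime where one matrix's eigendirections align with the other's, which is what produces the $(\alpha_{\mathrm{I}}-\mu_i-\theta_k)^+$ terms and the interlacing inequality. Compared to Lemma~\ref{doss}, the calculation is strictly simpler because $H_{\mathrm{ul}}$ no longer appears in the sum constraint under no-CSIT (the downlink mobile does not decode the uplink codeword end-to-end), so the $\sigma_j$ variables, the second cross term $\sum_{j,k}(\alpha_{\mathrm{I}}-\sigma_j-\theta_k)^+$, and the companion constraint for $(j+k)\geq M_{\mathrm{ul}}+1$ are absent; the remaining derivation proceeds verbatim from the argument in Appendix~\ref{dos}.
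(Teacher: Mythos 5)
Your proposal is correct and follows essentially the same route as the paper: the paper's proof of Lemma~\ref{doss2} simply rewrites the sum-rate constraint of (\ref{nocsitdmtar}) via the determinant split $\log|I+\rho_{\rm dl}H_{\rm dl}H_{\rm dl}^\dagger+\bar{\lambda}\rho_{\rm I}H_{\rm I}H_{\rm I}^\dagger|=\log|I_{N_{\rm dl}}+\bar{\lambda}\rho_{\rm I}H_{\rm I}H_{\rm I}^\dagger|+\log|I_{M_{\rm dl}}+\rho_{\rm dl}H_{\rm dl}^\dagger(I_{N_{\rm dl}}+\bar{\lambda}\rho_{\rm I}H_{\rm I}H_{\rm I}^\dagger)^{-1}H_{\rm dl}|$, reuses the exponent machinery of Appendix~\ref{dos} with the $\bar{\sigma}$ block deleted (no $H_{\rm ul}$ in the sum constraint), and applies Laplace's principle, exactly as you outline. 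One precision point: in your second paragraph you parameterize the $\mu_i$ as eigen-exponents of $H_{\rm dl}H_{\rm dl}^\dagger$ and appeal to the independent Wishart density of Lemma~\ref{lemmacite}; taken literally this would fail, since the eigenvalues of the coupled matrix are not a function of the separate spectra, and independent marginals would never produce the cross term $(\alpha_{\rm I}-\mu_i-\theta_k)^+$, the $M_{\rm dl}$ boost on the $\theta_k$ coefficient, the offset $-M_{\rm dl}m_{\rm I}\alpha_{\rm I}$, or the interlacing constraint. In the paper the $\mu_i$ are the eigen-exponents of the interference-whitened matrix $H_{\rm dl}^\dagger(I_{N_{\rm dl}}+\bar{\lambda}\rho_{\rm I}H_{\rm I}H_{\rm I}^\dagger)^{-1}H_{\rm dl}$, and the joint law of $(\bar{\mu},\bar{\theta})$ is the \emph{conditional} density of Lemma~\ref{lemma3cite}, which is the sole source of those four features; since your third and fourth paragraphs defer to precisely that joint-eigenstructure analysis from Appendix~\ref{dos}, the slip is one of wording rather than substance and the argument goes through as the paper's does.
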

\begin{proof}
We provide the proof in Appendix~\ref{dos2}.
\end{proof}

\begin{theorem}
A lower bound of the DMT of $(M_{\mathrm{dl}},N_{\mathrm{dl}},M_{\mathrm{ul}},N_{\mathrm{ul}})$ side-channel assisted MIMO full-duplex network without CSIT is given as
 \[
d_{(M_{\mathrm{dl}},N_{\mathrm{dl}},M_{\mathrm{ul}},N_{\mathrm{ul}})}^{\text{No-CSIT}}(r_{\mathrm{dl}},r_{\mathrm{ul}})=\min_{i\in\{{\mathrm{dl}},{\mathrm{ul}},{\mathrm{sum}}\}}d_{o_i}(r_i).
\]
where $d_{o_i}(r_i)$ is given in (\ref{diP2P}) and Lemma~\ref{doss2}.
\end{theorem}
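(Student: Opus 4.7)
The plan is to mirror the CSIT derivation in Section~\ref{csitcase} but replace the instantaneous capacity region by the achievable rate region $\mathcal{R}$ in (\ref{nocsitdmtar}), yielding a \emph{lower} bound on the achievable DMT (the theorem statement claims a lower bound, not optimality). First I would invoke the standard outage argument of~\cite{DMTTse}: since the coding scheme uses fixed Gaussian inputs with equal power allocation (no CSIT required), the word-error probability is upper bounded up to SNR-exponent equivalence by $\Pr(\mathbf{O})$, so $d^{\text{No-CSIT}}(r_{\mathrm{dl}},r_{\mathrm{ul}}) \geq \lim_{\rho\to\infty} -\log \Pr(\mathbf{O})/\log\rho$. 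It then suffices to compute the SNR exponent of $\Pr(\mathbf{O})$.

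Next I would decompose the outage event as $\mathbf{O} = \mathbf{O}_{\mathrm{dl}} \cup \mathbf{O}_{\mathrm{ul}} \cup \mathbf{O}_{\mathrm{sum}}$, where each sub-event corresponds to the violation of the DL, UL, or sum-rate constraint in (\ref{nocsitdmtar}). By the union bound $\Pr(\mathbf{O}) \leq \sum_i \Pr(\mathbf{O}_i)$, and at high SNR this is exponentially tight, so
\begin{equation}
\Pr(\mathbf{O}) \doteq \max_{i\in\{\mathrm{dl,ul,sum}\}} \Pr(\mathbf{O}_i),
\end{equation}
which gives $d^{\text{No-CSIT}}(r_{\mathrm{dl}},r_{\mathrm{ul}}) = \min_i d_{o_i}(r_i)$ as claimed, once each $d_{o_i}$ is identified.

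For the downlink and uplink terms, the constraints in (\ref{nocsitdmtar}) are point-to-point MIMO mutual informations (with covariance $\tfrac{1}{M}I$, which is the isotropic input that is known to be DMT-optimal without CSIT). Invoking Lemma~\ref{lemmacite} for a scaled channel of level $\alpha_i$ gives immediately
\begin{equation}
d_{o_i}(r_i) = \alpha_i\, d_{M_i,N_i}\!\left(\tfrac{r_i}{\alpha_i}\right),\quad i\in\{\mathrm{dl,ul}\},
\end{equation}
exactly as in (\ref{diP2P}).

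The main obstacle is the sum-rate diversity order $d_{o_{\mathrm{sum}}}(r_{\mathrm{sum}})$. Here I would parameterize the singular values of $H_{\mathrm{dl}}$, $H_{\mathrm{I}}$ and $H_{\mathrm{S}}$ through their Zheng--Tse exponents $\{\mu_i\}$, $\{\theta_k\}$, $\{\nu_l\}$ via $\sigma^2 \doteq \rho^{-\mu}$, and compute the joint density exponent using the result from~\cite{DMTTse} adapted to the cross-terms (this gives the first three sums in (\ref{disumNo})). The determinant inside the sum-rate constraint, expanded as $\log|I + \tfrac{\rho}{M_{\mathrm{dl}}}H_{\mathrm{dl}}H_{\mathrm{dl}}^\dagger + \tfrac{\rho^{\alpha_{\mathrm{I}}}}{2M_{\mathrm{ul}}}H_{\mathrm{I}}H_{\mathrm{I}}^\dagger|$, contributes the $(\alpha_{\mathrm{I}}-\mu_i-\theta_k)^+$ cross-terms by a Laplace-style evaluation analogous to Appendix~\ref{dos}; the side-channel contributes independently via its $\nu_l$ exponents scaled by $W$, and the $-M_{\mathrm{dl}} m_{\mathrm{I}} \alpha_{\mathrm{I}}$ offset comes from normalizing out the interference strength in the conditional density. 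The coupling constraints $\mu_i + \theta_k \geq \alpha_{\mathrm{I}}$ for $i+k \geq N_{\mathrm{dl}}+1$ enforce non-negativity of the log-determinant contribution, inherited from the interlacing argument of~\cite{DMTTse}. Combining these steps with the outage constraint
\begin{equation}
\sum_{i}(\alpha_{\mathrm{dl}}-\mu_i)^+ + \sum_{k}(\alpha_{\mathrm{I}}-\theta_k)^+ + W\sum_{l}(\alpha_{\mathrm{S}}-\nu_l)^+ < r_{\mathrm{sum}}
\end{equation}
and applying Lemma~\ref{doss2} yields the sum-rate diversity order, completing the proof. The key subtlety compared with the CSIT case is that the uplink constraint in $\mathcal{R}$ drops the side-channel contribution (the downlink does not declare an error for UL-only decoding failures), which is why no $\bar{\sigma}$ variables enter $d_{o_{\mathrm{sum}}}$ in Lemma~\ref{doss2}.
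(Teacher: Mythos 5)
Your proposal is correct and follows essentially the same route as the paper: the theorem is obtained exactly by defining the outage event with respect to the no-CSIT achievable region in (\ref{nocsitdmtar}), using the standard argument that the error probability of the fixed Gaussian-input scheme has the same SNR exponent (as an upper bound) as $\mathrm{Pr}(\mathbf{O})$, decomposing $\mathbf{O}$ into the three rate constraints so that $\mathrm{Pr}(\mathbf{O})\doteq\max_i\mathrm{Pr}(\mathbf{O}_i)$, and then invoking (\ref{diP2P}) for the individual links and Lemma~\ref{doss2} for the sum constraint. One minor imprecision, immaterial since Lemma~\ref{doss2} is taken as given: the absence of the $\bar{\sigma}$ (uplink-channel) exponents in $d_{o_{\mathrm{sum}}}$ stems from the fact that the sum-rate constraint of $\mathcal{R}$ is a MAC bound at the downlink receiver containing no $H_{\mathrm{ul}}$ term, rather than from the dropped cross-link constraint on $R_{\mathrm{ul}}$ (the latter explains only why that separate constraint, and not an error event for it, appears).
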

In line with the analysis in Section~\ref{csitcase},  we also give the closed-form no-CSIT DMT in the case of single-antenna mobiles communicating with multiple-antenna BS.
\begin{corollary}\label{coroex2}
In the case of  $(M,1,1,M)$ with symmetric DMT $r_{\rm ul}=r_{\rm dl}=r$ when $\alpha_{\mathrm{dl}}=\alpha_{\mathrm{ul}}=\alpha_{\mathrm{I}}=1$. 
The closed-form lower bound of the DMT without CSIT is given that completely characterizes the achievable DMT under all side-channel conditions:
\begin{itemize}
\item when $W\leq \frac{1}{M+1}$ and $W\alpha_{\rm S}<1$,
\begin{gather}
\begin{aligned}
d^{\text{No-CSIT}}_{(M,1,1,M)}(r)=\left\{
  \begin{array}{l l}
   M(1-r),&0\leq r\leq \frac{1+(M+1)W\alpha_{\rm S}}{M+2}\\
   (M+1)(1+W\alpha_{\mathrm{S}})-(2M+2)r,& \frac{1+(M+1)W\alpha_{\rm S}}{M+2}\leq r \leq \frac{1+W\alpha_{\rm S}}{2}
   \end{array} \right.
\end{aligned}
\end{gather}
\item when $\frac{1}{M+1}\leq W<\frac{2}{M}, \alpha_{\rm S}\geq\frac{M}{2}$, and $W\alpha_{\rm S}< 1$,
\begin{gather}
\begin{aligned}
d^{\text{No-CSIT}}_{{\mathrm{sum}}(M,1,1,M)}(r)=\left\{
  \begin{array}{l l}
   M(1-r),&0\leq r\leq \beta^*\\
 \alpha_{\mathrm{S}}+\frac{1}{W}(1-2r),&\beta^*\leq r\leq \frac{1+W\alpha_{\rm{S}}}{2}
   \end{array} \right.
\end{aligned}
\end{gather}
\item when $W\geq \frac{1}{M+1}, \alpha_{\rm S}< \frac{M}{2}$, and $W\alpha_{\rm S}< 1$,
\begin{gather}
\begin{aligned}
d^{\text{No-CSIT}}_{{\mathrm{sum}}(M,1,1,M)}(r)=\left\{
  \begin{array}{l l}
   M(1-r),&0\leq r\leq \frac{1+\alpha_{\rm S}}{M+2}\\
    M+1+\alpha_{\mathrm{S}}-(2M+2)r,& \frac{1+\alpha_{\rm S}}{M+2}\leq r\leq \frac{1}{2}\\
    \alpha_{\mathrm{S}}+\frac{1}{W}(1-2r),&\frac{1}{2}\leq r\leq \frac{1+W\alpha_{\rm{S}}}{2}
   \end{array} \right.
\end{aligned}
\end{gather}

\item when $W\geq\frac{1}{M+1}, \alpha_{\rm S}< \frac{M}{2}$, and $W\alpha_{\rm S}\geq 1$,
\begin{gather}
\begin{aligned}
d^{\text{No-CSIT}}_{{\mathrm{sum}}(M,1,1,M)}(r)=\left\{
  \begin{array}{l l}
   M(1-r),&0\leq r\leq \frac{1+\alpha_{\rm S}}{M+2}\\
    M+1+\alpha_{\mathrm{S}}-(2M+2)r,& \frac{1+\alpha_{\rm S}}{M+2}\leq r\leq \frac{1}{2}\\
    \alpha_{\mathrm{S}}+\frac{1}{W}(1-2r),&\frac{1}{2}\leq r\leq \beta^*\\
    M(1-r), & \beta^*\leq r \leq 1
       \end{array} \right.
\end{aligned}
\end{gather}

\item  when $\alpha_{\rm S}\geq \frac{M}{2}$ and $W\alpha_{\rm S}\geq 1$,
\begin{gather}
\begin{aligned}
d^{\text{No-CSIT}}_{{\mathrm{sum}}(M,1,1,M)}(r)=
    M(1-r), 0\leq r \leq 1
\end{aligned}
\end{gather}
\end{itemize}
where $\beta^*=\frac{\alpha_{\rm S}+\frac{1}{W}-M}{\frac{2}{W}-M}.$
\end{corollary}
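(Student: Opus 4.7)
The plan is to mirror the gradient-descent argument used for Corollary~\ref{coroex1}, now specializing Lemma~\ref{doss2} instead of Lemma~\ref{doss}. The individual-link diversity orders are unchanged: since the uplink and downlink are point-to-point MIMO pieces that do not require CSIT to achieve their Zheng--Tse curves, Equation~(\ref{diP2P}) still yields $d_{\mathrm{dl}}(r)=d_{\mathrm{ul}}(r)=M(1-r)$ for $r\in[0,1]$. The real work is computing $d_{o_{\mathrm{sum}}}$ and then taking the pointwise minimum with $M(1-r)$.

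First I would instantiate Lemma~\ref{doss2} for $(M_{\mathrm{dl}},N_{\mathrm{dl}},M_{\mathrm{ul}},N_{\mathrm{ul}})=(M,1,1,M)$ with $\alpha_{\mathrm{dl}}=\alpha_{\mathrm{ul}}=\alpha_{\mathrm{I}}=1$. Then $m_{\mathrm{dl}}=m_{\mathrm{ul}}=m_{\mathrm{I}}=1$, the double-sum correction term vanishes because its inner range is $\min\{N_{\mathrm{dl}}-1,M_{\mathrm{ul}}\}=0$, and only the single constraint $\mu_1+\theta_1\geq1$ from the alignment inequalities survives. Substituting $\nu_1'=W\nu_1$ exactly as in the CSIT proof reduces the program to
\begin{gather*}
d_{o_{\mathrm{sum}}}(r_{\mathrm{sum}})=\min\;M\mu_1+(M+1)\theta_1+\tfrac{1}{W}\nu_1'-M,\\
\text{s.t. } (1-\mu_1)^++(1-\theta_1)^++(W\alpha_{\mathrm{S}}-\nu_1')^+<r_{\mathrm{sum}},\;\mu_1+\theta_1\geq 1,\;\mu_1,\theta_1,\nu_1'\geq0.
\end{gather*}
Note the structural simplification relative to the CSIT proof: there is no $\sigma$ variable (the isolated $R_{\mathrm{ul}}$ constraint drops out of the outage event, since the downlink receiver is indifferent to decoding uplink alone), so the gradient of $\theta_1$ is $M+1$ rather than $2M+1$.

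Next I would run the steepest-descent comparison of the three partials $\partial/\partial\mu_1=M$, $\partial/\partial\theta_1=M+1$, $\partial/\partial\nu_1'=1/W$. The critical threshold is now $W\lessgtr 1/(M+1)$. For $W\leq 1/(M+1)$, the cheapest descent is along $\nu_1'$ with $\mu_1=\theta_1=1$ until $\nu_1'=0$ at $r_{\mathrm{sum}}=W\alpha_{\mathrm{S}}$; thereafter $\theta_1$ falls with $\mu_1=1$, activating a slope change because relaxing $\theta_1$ below $1$ still leaves $\mu_1=1$ (the constraint $\mu_1+\theta_1\geq1$ just becomes tight at $\theta_1=0$). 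For $W\geq 1/(M+1)$, the order reverses: first drive $\theta_1$ to zero with $\nu_1'=W\alpha_{\mathrm{S}}$, $\mu_1=1$, covering $r_{\mathrm{sum}}\in[0,1]$; then, since $\mu_1\geq1$ is forced by $\mu_1+\theta_1\geq1$, only $\nu_1'$ can descend further, giving slope $-1/W$. Setting $r_{\mathrm{sum}}=2r$ then gives the piecewise-linear $d^{\text{No-CSIT}}_{\mathrm{sum}(M,1,1,M)}(r)$.

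Finally I would take $\min\{M(1-r),d^{\text{No-CSIT}}_{\mathrm{sum}}(r)\}$ and locate the crossover points, which produces the breakpoints $(1+(M+1)W\alpha_{\mathrm{S}})/(M+2)$, $(1+\alpha_{\mathrm{S}})/(M+2)$, $1/2$, and $\beta^{*}=(\alpha_{\mathrm{S}}+1/W-M)/(2/W-M)$ listed in the statement. The main obstacle, as in the CSIT version, is bookkeeping the five subcases cleanly: the optimal $(\mu_1,\theta_1,\nu_1')$ trajectory depends jointly on whether $W\lessgtr 1/(M+1)$, whether $\alpha_{\mathrm{S}}\lessgtr M/2$ (controlling whether the $M(1-r)$ curve re-emerges above the sum bound for large $r$), and whether $W\alpha_{\mathrm{S}}\lessgtr 1$ (controlling whether the sum constraint is active past $r=1/2$). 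For each subcase one must separately verify that the proposed optimizer satisfies $\mu_1+\theta_1\geq 1$ and $0\leq\nu_1'\leq W\alpha_{\mathrm{S}}$, and that the sum-DMT line actually dominates or is dominated by $M(1-r)$ in the claimed interval. Convexity of the program (linear objective on a polyhedron) guarantees that the gradient-descent argument yields the global optimum, so no further certificate is required.
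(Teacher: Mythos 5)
Your proposal is correct and follows essentially the paper's own route: the paper proves Corollary~\ref{coroex2} by the same gradient-descent specialization of Lemma~\ref{doss2} used for Corollary~\ref{coroex1}, and your instantiation (objective $M\mu_1+(M+1)\theta_1+\nu_1'/W-M$ with the single constraint $\mu_1+\theta_1\geq1$, threshold $W\lessgtr\frac{1}{M+1}$, then the pointwise minimum with $M(1-r)$ yielding the breakpoints $\frac{1+(M+1)W\alpha_{\mathrm{S}}}{M+2}$, $\frac{1+\alpha_{\mathrm{S}}}{M+2}$, $\frac{1}{2}$, $\beta^*$) matches the intended calculation exactly.
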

\begin{proof}
The proof is similar to that in Corollary~\ref{coroex1} which uses gradient descent method. 
\end{proof}
\begin{remark}\label{remark2}
The lower bound of the DMT without CSIT in the no side-channel case is a special case of Corollary~\ref{coroex2} when $W=0$, and is given by
\begin{gather}
\begin{aligned}
d^{\text{No-SC,No-CSIT}}_{(M,1,1,M)}(r)=\left\{
  \begin{array}{l l}
   M(1-r),&0\leq r\leq \frac{1}{M+2}\\
  (M+1)(1-2r),& \frac{1}{M+2}\leq r \leq \frac{1}{2}
   \end{array} \right.
\end{aligned}
\end{gather}
\end{remark}
Remark~\ref{remark1} and Remark~\ref{remark2} describe the DMT without side-channel under CSIT and no-CSIT assumptions. One can easily verify that the no-side-channel cases in~\cite{sezgin2009diversity} and \cite{DMTMIMOZ} w/wo CSIT are special cases incorporated in our derivation of DMT.

Now we compare the lower bound of the DMT w/wo side-channel under the no-CSIT assumption. When $W\leq \frac{1}{M+1}$, in the case without CSIT,  with the help of side-channel, the light loading threshold over the no-side-channel system is increased by $\Delta_2$, where $\Delta_2=\frac{(M+1)W\alpha_{\rm S}}{M+2}$.  In Fig.~\ref{plotcase1}, the DMT without CSIT w/wo side-channel is given when $W=\frac{1}{2M+1}$ and $\alpha_{\mathrm{S}}=\frac{M}{2}$. Compared with the light loading improvement under the CSIT assumption, we can see that the side-channel is more effective in increasing the DMT performance in the lack of CSIT as $\Delta_2\geq \Delta_1$.

\subsection{Spatial and Spectral Tradeoff in DMT}

In this section, we will derive symmetric DMT in closed form for a more general case where the mobiles have multiple antennas communicating with the BS with $M$ transmit and receive antennas. Using the closed-form DMT expressions, again we will compare the three systems: with and without CSIT and the no-interference idealized full-duplex network.
We will characterize the relationship between the spatial degrees of freedom of the antenna resources and the extra spectral degrees of freedom due to the side-channels under slow-fading channels. 

We still assume BS has more antennas i.e., $M\geq M_{\mathrm{ul}},~ N_{\mathrm{dl}}$. The closed-form symmetric DMT of the general $(M,N_{\mathrm{dl}},M_{\mathrm{ul}},M)$ system with 
$\alpha_{\mathrm{dl}}=\alpha_{\mathrm{ul}}=\alpha_{\mathrm{I}}=1$ and $r_{\mathrm{dl}}=r_{\mathrm{ul}}=r$ are given under CSIT and no-CSIT assumptions in Lemma~\ref{case3} and Lemma~\ref{case33}~(in Appendix~\ref{caldmt}), respectively. 

First we ask the question that how much side-channel bandwidth is required to compensate for the lack of CSIT such that the DMT of the system without CSIT achieves that of the system with CSIT. The sufficient condition is given in the following theorem.
\begin{theorem} \label{coro1}
In case of $(M,N_{\mathrm{dl}},M_{\mathrm{ul}},M)$,  sufficient conditions such that no CSIT DMT is same as full CSIT DMT are given by 
\vspace{2mm}
\begin{enumerate}
  \item
$W =\min\left\{\frac{N_{\mathrm{dl}}+M_{\mathrm{ul}}-1}{M+N_{\mathrm{dl}}-M_{\mathrm{ul}}+1}, \frac{1}{\alpha_{\mathrm{S}}}\left(2-\frac{N_{\mathrm{dl}}}{M_{\mathrm{ul}}}\right)^+\right\} ~\text{where}~\alpha_{\mathrm{S}}\!\geq\!\frac{d_{M,M_{\mathrm{ul}}}\left(\frac{M_{\mathrm{ul}}}{2}\right)\!-\!M(N_{\mathrm{dl}}\!-\!M_{\mathrm{ul}})}{M_{\mathrm{ul}}N_{\mathrm{dl}}},$  when $N_{\mathrm{dl}}\geq M_{\mathrm{ul}},~M_{\mathrm{ul}}=1,2$;
\vspace{1mm}
  \item $W=0,$ when $N_{\mathrm{dl}}\geq \frac{d_{M_{\mathrm{ul}},M}\left(\frac{M_{\mathrm{ul}}}{2}\right)}{M}+M_{\mathrm{ul}},~M_{\mathrm{ul}}=1,2$.\label{w0}
\end{enumerate}
\end{theorem}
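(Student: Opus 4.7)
The plan is to invoke the closed-form symmetric DMT expressions given in Lemma~\ref{case3} (CSIT) and Lemma~\ref{case33} (no-CSIT) for the $(M,N_{\mathrm{dl}},M_{\mathrm{ul}},M)$ configuration, and then determine, for each regime, the smallest $W$ (given $\alpha_{\mathrm{S}}$) such that the no-CSIT DMT curve coincides with the CSIT one on the entire range of multiplexing gains. Because, under both assumptions, the individual downlink and uplink diversity orders in Theorem~\ref{the1} are governed by the same point-to-point MIMO tradeoffs $d_{M,N_{\mathrm{dl}}}$ and $d_{M_{\mathrm{ul}},M}$, the matching problem reduces to comparing only the sum-rate diversity functions $d_{\mathfrak{B}_{\mathrm{sum}}}$ from Lemma~\ref{doss} and $d_{o_{\mathrm{sum}}}$ from Lemma~\ref{doss2}.

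First I would handle the $W=0$ subcase (item~\ref{w0}). Here the idea is that if $N_{\mathrm{dl}}$ is large enough, the downlink receiver alone has sufficient spatial degrees of freedom to treat the main-channel as a MAC without requiring private-message splitting at the uplink, so the sum-rate outage is never the dominant error event in the symmetric DMT. Concretely, I would locate the intersection of the single-user line (slope matching $d_{M_{\mathrm{ul}},M}$) with the no-CSIT sum-rate DMT line (slope governed by the first derivative of the optimum of Lemma~\ref{doss2}), and impose that this intersection lies at or beyond the maximum symmetric multiplexing gain; the algebra collapses precisely to $N_{\mathrm{dl}} \geq \frac{d_{M_{\mathrm{ul}},M}(M_{\mathrm{ul}}/2)}{M} + M_{\mathrm{ul}}$.

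For the first subcase ($N_{\mathrm{dl}} \geq M_{\mathrm{ul}}$, no antenna-excess condition holding), I would equate the no-CSIT light-loading threshold (the corner of the piecewise DMT in Lemma~\ref{case33}) to the CSIT one (from Lemma~\ref{case3}) and solve for $W$. This gives a linear expression $W = \frac{N_{\mathrm{dl}}+M_{\mathrm{ul}}-1}{M+N_{\mathrm{dl}}-M_{\mathrm{ul}}+1}$ coming from matching the sum-rate slope contributions in the main-channel portion of $d_{o_{\mathrm{sum}}}$ against $d_{\mathfrak{B}_{\mathrm{sum}}}$. Separately, I would equate the maximum achievable symmetric multiplexing gains (the $x$-intercepts of the two DMT curves), which via Corollary~\ref{coroIF} on the no-interference benchmark yields $W = \frac{1}{\alpha_{\mathrm{S}}}\bigl(2-\frac{N_{\mathrm{dl}}}{M_{\mathrm{ul}}}\bigr)^+$. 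Taking the minimum of the two covers both binding constraints. The accompanying lower bound on $\alpha_{\mathrm{S}}$ is produced by requiring the side-channel term $W\alpha_{\mathrm{S}}$ in $d_{o_{\mathrm{sum}}}$ to contribute enough diversity at the threshold $r=M_{\mathrm{ul}}/2$; substituting the CSIT threshold value $d_{M,M_{\mathrm{ul}}}(M_{\mathrm{ul}}/2)$ and solving gives the stated bound $\alpha_{\mathrm{S}}\geq \frac{d_{M,M_{\mathrm{ul}}}(M_{\mathrm{ul}}/2)-M(N_{\mathrm{dl}}-M_{\mathrm{ul}})}{M_{\mathrm{ul}}N_{\mathrm{dl}}}$.

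The main obstacle will be the restriction to $M_{\mathrm{ul}}\in\{1,2\}$. For these small values, the optimizations in Lemmas~\ref{doss} and~\ref{doss2} reduce to one- or two-variable gradient-descent problems (as in Corollaries~\ref{coroex1} and~\ref{coroex2}) whose minima can be written in closed form and compared term by term. For $M_{\mathrm{ul}}\geq 3$, the no-CSIT sum-rate DMT in Lemma~\ref{doss2} acquires additional piecewise-linear segments corresponding to partial rank-deficiency outages of the uplink channel, and these \emph{cannot} be compensated by the side-channel alone because the side-channel only carries a single common bin index; so the matching argument breaks down. Carefully verifying the two-variable reduction for $M_{\mathrm{ul}}\in\{1,2\}$, and separately confirming via the KKT conditions of Lemma~\ref{doss2} that $\mu_i$ and $\nu_l$ with $i,l\geq 2$ do not participate in the optimal solution in this regime, is the step I expect to require the most work.
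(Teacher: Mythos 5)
Your proposal takes essentially the same route as the paper: the paper's entire proof is a direct verification that, under the stated conditions, the closed-form sum-diversity curves of Lemma~\ref{case3} (CSIT) and Lemma~\ref{case33} (no-CSIT) coincide, and your reduction to the sum-rate constraint (since the individual-link diversities are identical), the corner check at $r=M_{\mathrm{ul}}/2$ producing the $\alpha_{\mathrm{S}}$ bound, and the $W=0$ argument via the sum constraint never dominating are precisely what that verification works out to. One small correction: your explanation of the $M_{\mathrm{ul}}\in\{1,2\}$ restriction (``the side-channel only carries a single common bin index'') is not the actual reason — the restriction simply mirrors the range over which the closed form in Lemma~\ref{case33} is stated, so once you invoke Lemmas~\ref{case3} and~\ref{case33} no additional KKT analysis of Lemmas~\ref{doss} and~\ref{doss2} is required.
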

\begin{proof}
With the conditions given above, we can verify that the symmetric DMT with CSIT in Lemma~\ref{case3} is the same as the DMT without CSIT in Lemma~\ref{case33}.
\end{proof}
\begin{corollary}\label{coro5}
When $M_{\mathrm{ul}}> N_{\mathrm{dl}}$, if $W<\frac{1}{\alpha_{\mathrm{S}}}$, the DMT without CSIT is strictly smaller than that with CSIT.
\end{corollary}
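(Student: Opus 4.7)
The plan is to compare the two DMT curves at their $d = 0$ intercepts---i.e., at the maximum symmetric multiplexing gains---which by the standard high-$\rho$ correspondence coincide with the symmetric sum $\mathsf{GDoF}$ derived in Section~III. Because each DMT curve is continuous, nonnegative, and monotonically non-increasing in $r$, a strict gap between the two intercepts immediately produces an open interval of $r$ on which $d^{\text{CSIT,opt}}(r) > 0 = d^{\text{No-CSIT}}(r)$, which is exactly the strict inequality claimed.

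Using Theorem~\ref{sc2} with $\alpha_{\mathrm{I}} = 1$ and $W\alpha_{\mathrm{S}} < 1$, the CSIT sum $\mathsf{GDoF}$ equals $M_{\mathrm{ul}} + N_{\mathrm{dl}} W\alpha_{\mathrm{S}}$; combined with the single-user bounds $\mathsf{DoF}_{\mathrm{dl}} \leq N_{\mathrm{dl}}$ and $\mathsf{DoF}_{\mathrm{ul}} \leq M_{\mathrm{ul}}$, this yields $r_{\text{CSIT}}^{\max} = \min\{N_{\mathrm{dl}},\, \tfrac{1}{2}(M_{\mathrm{ul}} + N_{\mathrm{dl}} W\alpha_{\mathrm{S}})\}$. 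In the same regime, Corollary~\ref{nocsitgdof} gives a dominating sum constraint of $N_{\mathrm{dl}}(1 + W\alpha_{\mathrm{S}})$, hence $r_{\text{No-CSIT}}^{\max} = \tfrac{1}{2}N_{\mathrm{dl}}(1 + W\alpha_{\mathrm{S}})$. A short case split on which term is active in the CSIT minimum shows that whenever $M_{\mathrm{ul}} > N_{\mathrm{dl}}$ and $W\alpha_{\mathrm{S}} < 1$,
\[
r_{\text{CSIT}}^{\max} - r_{\text{No-CSIT}}^{\max} \;\geq\; \tfrac{1}{2}\min\{N_{\mathrm{dl}}(1 - W\alpha_{\mathrm{S}}),\, M_{\mathrm{ul}} - N_{\mathrm{dl}}\} \;>\; 0,
\]
so any $r$ in the open interval between the two intercepts furnishes a point at which the DMT inequality is strict.

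The main technical point is justifying that the $d = 0$ intercepts of the DMT curves from Theorem~\ref{the1} and the no-CSIT analogue actually coincide with the $\mathsf{GDoF}$ maxima above. This reduces to checking that the LPs of Lemmas~\ref{doss} and~\ref{doss2} attain zero objective exactly at the sum $\mathsf{GDoF}$ boundary; the verification is carried out by saturating the variables $\mu_i,\sigma_j,\theta_k$ (and $\nu_l$ where active) at the respective channel exponents $\alpha_{\mathrm{dl}},\alpha_{\mathrm{ul}},\alpha_{\mathrm{I}},\alpha_{\mathrm{S}}$, so that the single remaining constraint collapses onto the $\mathsf{GDoF}$ sum expression already computed in Corollaries~\ref{csitgdof} and~\ref{nocsitgdof}. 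With this reduction in hand, the strict gap at $d = 0$ transfers directly to the strict DMT inequality claimed by the corollary.
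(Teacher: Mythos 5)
Your route is genuinely different from the paper's. The paper obtains Corollary~\ref{coro5} by directly comparing the closed-form symmetric DMT expressions with and without CSIT (Lemmas~\ref{case3} and~\ref{case33}), whereas you compare only the $d=0$ intercepts, i.e., the maximum symmetric multiplexing gains, read off from the $\mathsf{GDoF}$ results (Theorem~\ref{sc2} and Corollary~\ref{nocsitgdof}). Your intercept computations are correct: with $\alpha_{\mathrm{I}}=1$, $M_{\mathrm{ul}}>N_{\mathrm{dl}}$ and $W\alpha_{\mathrm{S}}<1$ one gets $r^{\max}_{\text{CSIT}}=\min\{N_{\mathrm{dl}},\tfrac12(M_{\mathrm{ul}}+N_{\mathrm{dl}}W\alpha_{\mathrm{S}})\}$ and $r^{\max}_{\text{No-CSIT}}=\tfrac12 N_{\mathrm{dl}}(1+W\alpha_{\mathrm{S}})$, the gap bound $\tfrac12\min\{N_{\mathrm{dl}}(1-W\alpha_{\mathrm{S}}),\,M_{\mathrm{ul}}-N_{\mathrm{dl}}\}>0$ is right, and on the open interval between the two intercepts the no-CSIT curve is zero while the CSIT curve is positive, which is exactly the strict inequality the corollary asserts (strictness can only hold on a range of $r$, since the two curves coincide in the single-user-dominated regime; this matches the paper's reading in Fig.~\ref{fig}). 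A side benefit of your argument is that it does not rely on the no-CSIT closed form of Lemma~\ref{case33}, which the paper states only under extra antenna conditions ($M_{\mathrm{ul}}\geq 2(N_{\mathrm{dl}}-1)$, or $N_{\mathrm{dl}}\geq M_{\mathrm{ul}}$ with $M_{\mathrm{ul}}\leq 2$); the paper's comparison, where it applies, yields more information, namely the full set of multiplexing gains over which the two curves differ.

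One step, as you wrote it, is incorrect though fixable: you say the $d=0$ intercepts are verified ``by saturating $\mu_i,\sigma_j,\theta_k$ (and $\nu_l$) at the respective channel exponents.'' Setting all variables equal to $\alpha_{\mathrm{dl}},\alpha_{\mathrm{ul}},\alpha_{\mathrm{I}},\alpha_{\mathrm{S}}$ drives the rate-constraint sum to zero and produces a large positive objective (that is the deep-outage corner), so it neither gives zero objective nor collapses the constraint onto the $\mathsf{GDoF}$ sum. The zero-exponent (typical) profile for $M\geq M_{\mathrm{ul}}>N_{\mathrm{dl}}$ is instead $\theta_k=\nu_l=0$ for all $k,l$, $\mu_i=\alpha_{\mathrm{I}}$ for all $i$, and $\sigma_j=0$ for $j\leq M_{\mathrm{ul}}-N_{\mathrm{dl}}$, $\sigma_j=\alpha_{\mathrm{I}}$ for $j>M_{\mathrm{ul}}-N_{\mathrm{dl}}$; substituting this into (\ref{disum}) (and, dropping the $\sigma$'s, into (\ref{disumNo})) gives zero objective with constraint sums $M_{\mathrm{ul}}+N_{\mathrm{dl}}W\alpha_{\mathrm{S}}$ and $N_{\mathrm{dl}}(1+W\alpha_{\mathrm{S}})$, respectively. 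This exhibits the direction you actually need for the no-CSIT curve, namely $d_{o_{\mathrm{sum}}}=0$ once $r_{\mathrm{sum}}$ exceeds $N_{\mathrm{dl}}(1+W\alpha_{\mathrm{S}})$; for the other direction, strict positivity of $d^{\text{CSIT,opt}}$ below its intercept, it is cleaner to cite the closed form in Lemma~\ref{case3} (valid for all $M\geq M_{\mathrm{ul}},N_{\mathrm{dl}}$) than to re-argue the linear program.
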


Corollary~\ref{coro5} can be readily obtained by comparing Lemma~\ref{case3} and Lemma~\ref{case33}. If $M_{\mathrm{ul}}> N_{\mathrm{dl}}$ and $W<\frac{1}{\alpha_{\mathrm{S}}}$, the availability of CSIT is crucial in performing transmit beamforming to yield higher DMT. 

The next question we will ask is how much side-channel bandwidth is required to eliminate the effect of interference such that the DMT of the system w/wo CSIT achieves that of a system without interference. The following theorem characterizes the effect of the side-channel bandwidth on the performance of the symmetric DMT to reach no-interference DMT. 
\begin{theorem}\label{coro3}
In case of $(M,N_{\mathrm{dl}},M_{\mathrm{ul}},M)$, the sufficient conditions are given under CSIT and no-CIST assumptions, respectively, where the effect of interference can be completely eliminated to achieve the optimal no-interference DMT:
\vspace{1mm}
\begin{enumerate}
\item
$W_{\text{CSIT}}=\frac{1}{\alpha_{\mathrm{S}}}\left(2-\frac{m_X}{m_{\rm I}}\right)^+,~\alpha_{\mathrm{S}}\!\geq\!\frac{(2m_{\rm I}-m_X)(M-m_{\rm I}+1)}{m_{\rm I}(2|N_{\mathrm{dl}}-M_{\mathrm{ul}}|+2)}; $\vspace{3mm}
\item
$
W_{\text{No-CSIT}}=\left\{
  \begin{array}{l l}
  \frac{1}{\alpha_{\mathrm{S}}},~\alpha_{\mathrm{S}}\geq\frac{M-N_{\mathrm{dl}}+1}{2(M_{\mathrm{ul}}-N_{\mathrm{dl}}+1)}, ~\text{when}~M_{\mathrm{ul}}\geq N_{\mathrm{dl}}\vspace{1mm}\\
  \frac{1}{\alpha_{\mathrm{S}}}\left(2-\frac{m_X}{m_{\rm I}}\right)^+,~\alpha_{\mathrm{S}}\!\geq\!\frac{(2m_{\rm I}-m_X)(M-m_{\rm I}+1)}{m_{\rm I}(2|N_{\mathrm{dl}}-M_{\mathrm{ul}}|+2)},~\text{when}~  N_{\mathrm{dl}}\geq M_{\mathrm{ul}}
   \end{array} \right.$\vspace{5mm}
\end{enumerate}
where $m_X=\max(M_{\rm ul},N_{\rm dl}), m_{\rm I}=\min(M_{\rm ul},N_{\rm dl}).$
\end{theorem}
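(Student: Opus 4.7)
The plan is to reduce the theorem to a comparison between the closed-form symmetric DMT expressions of the side-channel assisted system (stated as Lemma~\ref{case3} for CSIT and Lemma~\ref{case33} for no-CSIT in the appendix) and the DMT of an ideal no-interference system. I would first characterize the benchmark: when the inter-mobile interference is completely eliminated, the network decomposes into two independent MIMO point-to-point channels, an $(M_{\mathrm{ul}},M)$ uplink and an $(M,N_{\mathrm{dl}})$ downlink. Under the symmetric-rate constraint $r_{\mathrm{dl}}=r_{\mathrm{ul}}=r$, standard MIMO DMT results give the no-interference benchmark $d^{\text{NI}}(r)=\min\{d_{M_{\mathrm{ul}},M}(r),\,d_{M,N_{\mathrm{dl}}}(r)\}$ for $r\in[0,\min(M_{\mathrm{ul}},N_{\mathrm{dl}})]$.

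Next, I would recall from Theorem~\ref{the1} (and its no-CSIT analogue) that the achievable symmetric DMT is $\min\{d_{M,N_{\mathrm{dl}}}(r),\,d_{M_{\mathrm{ul}},M}(r),\,d_{\mathfrak{B}_{\mathrm{sum}}}(2r)\}$. Since the first two terms already coincide with the no-interference DMT, the sufficient condition for the network DMT to meet $d^{\text{NI}}(r)$ at every $r$ reduces to ensuring $d_{\mathfrak{B}_{\mathrm{sum}}}(2r)\geq d^{\text{NI}}(r)$ for all $r\in[0,m_{\mathrm{I}}]$. I would identify the breakpoints (piecewise-linear kinks) of the closed-form $d_{\mathfrak{B}_{\mathrm{sum}}}$ in Lemma~\ref{case3} and Lemma~\ref{case33}, and then determine the threshold values of $W$ and $\alpha_{\mathrm{S}}$ at which the steepest-slope segment of $d_{\mathfrak{B}_{\mathrm{sum}}}$ first rises above the single-user benchmark curve at the critical crossing point $r=m_{\mathrm{I}}$.

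For the CSIT case, I would substitute $W_{\text{CSIT}}=\tfrac{1}{\alpha_{\mathrm{S}}}(2-\tfrac{m_X}{m_{\mathrm{I}}})^+$ into the closed form and verify that the sum-rate DMT curve reaches zero at exactly $r=m_{\mathrm{I}}$, which is where the no-interference benchmark also hits zero; the stated lower bound on $\alpha_{\mathrm{S}}$ is then obtained by requiring that the slope of the sum-rate segment does not drop below the benchmark slope in the intermediate region. The no-CSIT case branches: when $N_{\mathrm{dl}}\geq M_{\mathrm{ul}}$, the receiver has enough spatial degrees of freedom to exploit the side-channel as effectively as in the CSIT case, so the same $W$ threshold applies; when $M_{\mathrm{ul}}>N_{\mathrm{dl}}$, the uplink cannot null interference by transmit beamforming, and the binding constraint of Lemma~\ref{doss2} forces a larger $W=\tfrac{1}{\alpha_{\mathrm{S}}}$.

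The main obstacle is the case analysis in the no-CSIT regime: the optimization in Lemma~\ref{doss2} has coupled variables $\bar{\mu},\bar{\theta},\bar{\nu}$ whose active constraints shift depending on the sign of $M_{\mathrm{ul}}-N_{\mathrm{dl}}$ and on whether $W\alpha_{\mathrm{S}}$ lies below or above $1$, producing several piecewise-linear segments whose endpoints must be matched to the no-interference curve simultaneously. The bookkeeping for the $\alpha_{\mathrm{S}}$ floor, which ensures that no interior segment of $d_{\mathfrak{B}_{\mathrm{sum}}}(2r)$ dips below $d^{\text{NI}}(r)$, is the most delicate part; I would handle it by checking the two endpoints of each segment and using convexity of the piecewise-linear curves to conclude that endpoint dominance implies pointwise dominance throughout the interval.
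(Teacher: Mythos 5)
Your proposal is correct and follows essentially the same route as the paper: reduce the claim to showing the sum-rate outage term $d_{\mathrm{sum}}(2r)$ never dominates the single-user benchmark $d_{M,m_{\mathrm{I}}}(r)$ on $[0,m_{\mathrm{I}}]$, obtain $W$ by forcing the maximum sum multiplexing gain to equal $2m_{\mathrm{I}}$ (zero-crossing at $r=m_{\mathrm{I}}$), and obtain the $\alpha_{\mathrm{S}}$ floor from a slope comparison, with the no-CSIT branch splitting on the sign of $M_{\mathrm{ul}}-N_{\mathrm{dl}}$ exactly as in the paper. The only difference is cosmetic: you certify pointwise dominance by endpoint checks over the piecewise-linear segments, whereas the paper argues via the decay slope of the last (side-channel-dominated) interval together with the fixed slope-decrement structure of DMT curves, which also lets it avoid leaning on the restricted parameter ranges of the closed-form Lemmas.
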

\begin{proof}
We need to show that with the conditions above, the DMT of our system w/wo CSIT is not dominated by the diversity order given sum multiplexing gain $d^{\text{w/wo~CSIT}}_{\mathrm{sum}(M,N_{\rm dl},M_{\rm ul},M)}(r_{\rm sum}),~\forall r\in[0,m_{\rm I}].$
It is sufficient if we show that the conditions above indicate that the decay slope of $d^{\text{w/wo~CSIT}}_{\mathrm{sum}(M,N_{\rm dl},M_{\rm ul},M)}(r_{\rm sum})$ is larger than that of the PTP channel~$d_{M,m_{\rm I}}(r)$ $\forall r$, and the maximum symmetric multiplexing gain of $d^{\text{w/wo~CSIT}}_{\mathrm{sum}(M,N_{\rm dl},M_{\rm ul},M)}(r_{\rm sum})$ is larger than $m_{\rm I}$.

The decay slope of the piecewise linear function $d^k_{M,N}(r)$ is $(M+N-2k+1)$ in each interval $r\in[k-1,k]$, where $k\in[1,\min(M,N)]$ is an integer. Thus the decay slope of $d^k_{M,N}(r)$ decreases as the interval $k$ increases. Also, the decay slope difference between $d^{k-1}_{M,N}(r)$ and $d^k_{M,N}(r)$ is a constant of 2. 
We know that the DMT performance will be improved as side-channel bandwidth ratio $W$ increases. Therefore with $W$ large enough, $d^{\text{w/wo~CSIT}}_{\mathrm{sum}(M,N_{\rm dl},M_{\rm ul},M)}(r_{\rm sum})$ will lastly be dominated by side-channel condition in the last admissible interval. With the special structure of the decay slope, in order to find the conditions where DMT w/wo achieves PTP performance,
it suffices to show: (A)~the decay slope of side-channel given sum multiplexing gain is larger than $d_{M,m_{\rm I}}(r)$ in their last admissible intervals, respectively; (B)~$\max (r_{\rm sum})\geq 2m_{\rm I}$.

Under the CSIT assumption, from Corollary~\ref{sc2}, we know the maximum sum multiplexing gain is $m_{X}+m_{\rm I}W\alpha_{\rm S}$. We set $m_{X}+m_{\rm I}W\alpha_{\rm S}=2m_{\rm I}$ to meet Condition~(B) thus $W=\frac{1}{\alpha_{\mathrm{S}}}\left(2-\frac{m_X}{m_{\rm I}}\right)^+$.
Next to meet condition (A),  the decay slope of the side-channel in the last interval $\alpha_{\mathrm{S}} d_{M_{\mathrm{ul}},N_{\mathrm{dl}}}\left(\frac{r_{\mathrm{sum}}-m_X}{W\alpha_{\mathrm{S}}}\right),~ \forall r_{\mathrm{sum}} \in[m_X, m_X+m_{\mathrm{I}}W\alpha_{\mathrm{S}}]$, i.e., $\frac{2}{W}(|M_{\rm ul}-N_{\rm dl}|+1)$ should be larger than the decay slope of $d_{M,m_{\rm I}}(r),~\forall r\in[0,m_{\rm I}]$ in its last interval, i.e., $M-m_{\rm I}+1$. Hence $\frac{2}{W}(|M_{\rm ul}-N_{\rm dl}|+1)\geq (M-m_{\rm I}+1)$. By substituting $W=\frac{1}{\alpha_{\mathrm{S}}}\left(2-\frac{m_X}{m_{\rm I}}\right)^+$ into the inequality above,  we have $\alpha_{\mathrm{S}}\!\geq\!\frac{(2m_{\rm I}-m_X)(M-m_{\rm I}+1)}{m_{\rm I}(2|N_{\mathrm{dl}}-M_{\mathrm{ul}}|+2)}.$
With the side-channel condition derived above, the DMT with CSIT achieves the PTP DMT.

Under the no-CSIT assumption,  when $N_{\mathrm{dl}}\geq M_{\mathrm{ul}}$, the results can be derived similarly. 
When $M_{\rm ul}>N_{\rm dl}$, the maximum multiplexing gain is $N_{\rm dl}(1+W\alpha_{\rm S})$ according to Corollary~\ref{nocsitgdof}. To satisfy condition II, we set $N_{\rm dl}(1+W\alpha_{\rm S})=2N_{\rm dl}$, hence we have $W=\frac{1}{\alpha_{\rm S}}$. To meet Condition~(A), the decay slope of the side-channel in the last interval $\alpha_{\mathrm{S}} d_{M_{\mathrm{ul}},N_{\mathrm{dl}}}\left(\frac{r_{\mathrm{sum}}-N_{\rm dl}}{W\alpha_{\mathrm{S}}}\right),~\forall r_{\mathrm{sum}} \in[N_{\rm dl}, N_{\rm dl}(1+W\alpha_{\mathrm{S}})]$, i.e., $\frac{2}{W}(M_{\rm ul}-N_{\rm dl}+1)$, should be greater than the decay slope of $d_{M,N_{\rm dl}}(r)$ in its last interval, i.e.,  $(M-N_{\rm dl}+1)$. By substituting 
$W=\frac{1}{\alpha_{\mathrm{S}}}$, we obtain that $\alpha_{\mathrm{S}}\geq\frac{M-N_{\mathrm{dl}}+1}{2(M_{\mathrm{ul}}-N_{\mathrm{dl}}+1)}$. 
\end{proof}
\subsection{Discussion of the Results}
Fig.~\ref{fig} illustrates the comparison of the three systems in DMT as a function of the side-channel bandwidth. When $N_{\mathrm{dl}}\geq M_{\mathrm{ul}}$, there are three regimes in comparison of DMT. In the first regime, the performance the system without CSIT is worse than that with CSIT. In the second regime, with side-channel bandwidth ratio $W$ greater than a threshold, CSIT is of no use. In the last regime, the use of side-channel helps reduce the probability of outage event where all users are in error such that the dominant error event is single-user error. On the other hand, when $M_{\mathrm{ul}}> N_{\mathrm{dl}}$, the availability of CSIT always provides an additional gain in performing transmit beamforming. However,  larger side-channel bandwidth aids the no-CSIT system to achieve the no-interference upper bound. 
Note that the strength of the side-channel level $\alpha_{\mathrm{S}}$ is implicitly incorporated in Theorems~\ref{coro1} and \ref{coro3}, thus is omitted in Fig.~\ref{fig}.

\begin{figure}
\centering
\subfigure[$N_{\mathrm{dl}}\geq M_{\mathrm{ul}}$]{\scalebox{0.62}{\input{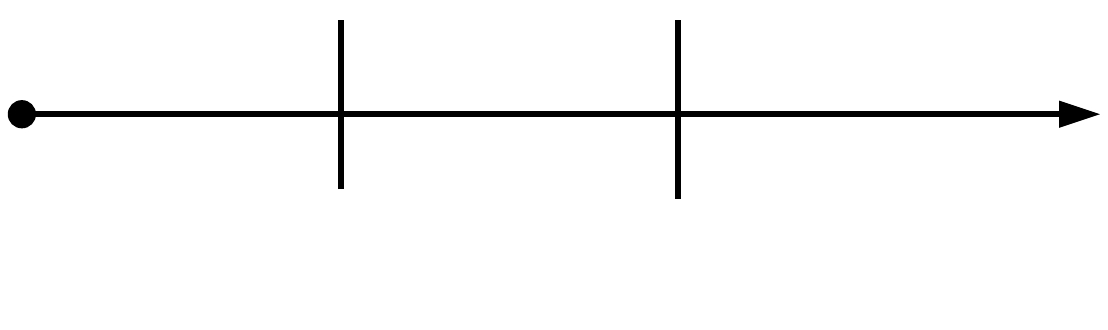tex_t}}\label{fig:subfig2}}
\hspace{0.6cm}
\subfigure[$M_{\mathrm{ul}}> N_{\mathrm{dl}}$]{\scalebox{0.62}{\input{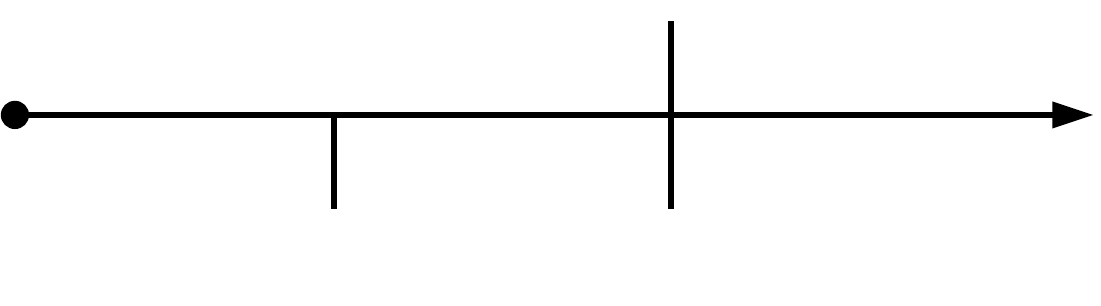tex_t}}\label{fig:subfig1}}
\caption{Comparison of the three systems in DMT as a function of the side-channel bandwidth.}\label{fig}
\end{figure}

In the following section, we will elaborate the findings in single-antenna mobiles and multiple-antenna mobiles cases, respectively.
\subsubsection{Single-antenna Mobiles}
We first show the symmetric DMT w/wo side-channel and w/wo CSIT when $\alpha_{\mathrm{dl}}=\alpha_{\mathrm{ul}}=\alpha_{\mathrm{I}}=1$. From Fig.~\ref{Spectral}, we can see that in the two-user uplink and downlink system, the full-duplex capable BS  is always superior to its half-duplex~(HD) counterpart where the BS adopts either time-division multiplexing~(TDM) or frequency-division multiplexing (FDM) for uplink and downlink. In the special case of $W=0$, i.e., no side-channel, having CSIT always yields a better DMT performance.
\begin{figure}[h!]
  \centering
    \includegraphics[width=0.5\textwidth,trim = 30mm
      30mm 32mm 20mm, clip]{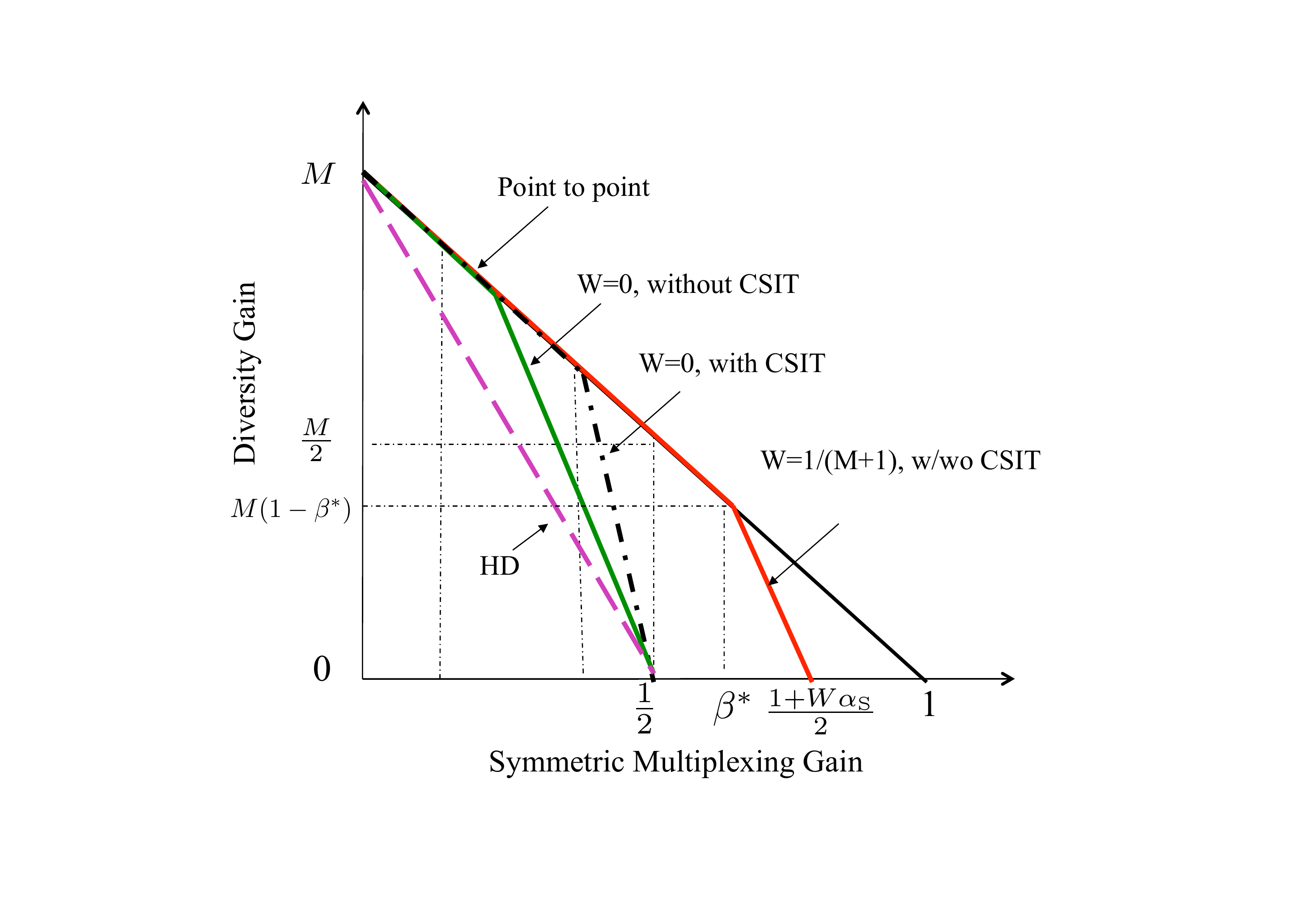}
  \caption{DMT of $(M,1,1,M)$ w/wo side-channel w/wo CSIT when $\alpha_{\mathrm{S}}\geq\frac{M}{2}$, where $\beta^*=\frac{\alpha_{\rm S}+\frac{1}{W}-M}{\frac{2}{W}-M}.$}
\label{Spectral}
\end{figure}
However, with the help of side-channel, as shown in Fig.~\ref{Spectral}, when $W= \frac{1}{M+1}$ and $\alpha_{\mathrm{S}}\geq \frac{M}{2}$, there is no benefit to obtain CSIT as the DMT without CSIT already achieves the optimal DMT with CSIT. Such result indicates that as BS accommodates more antennas (tens or hundreds of BS antennas as in massive MIMO), the required side-channel bandwidth can be reduced superinearly to combat interference.

Fig.~\ref{sst1} illustrates the side-channel bandwidth ratio required to compensate for CSIT as stated in Theorem~\ref{coro1} with single-antenna mobiles. The required $W$ is inversely proportional to the antenna resources at the BS. The caveat is that the side-channel level $\alpha_{\mathrm{S}}$, in the meantime, has to grow with increasing number of antennas at the BS.
\begin{figure}[h!]
  \centering
    \includegraphics[width=0.4\textwidth,trim = 10mm
      60mm 20mm 65mm, clip]{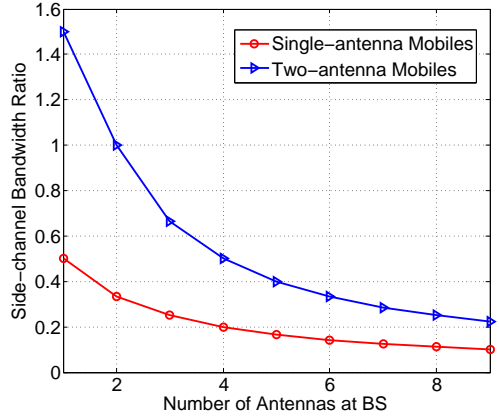}
  \caption{The required side-channel bandwidth ratio to compensate for CSIT as a function of the number of antennas at the BS with equal number of antennas at mobiles when $\alpha_{\mathrm{S}}=\frac{M}{2}$.}
\label{sst1}
\end{figure}

To understand the result above, let us look at the different decay slopes in DMT in the network. From the downlink's viewpoint, the channel is MAC with side-channel. The decay slope of MAC without CSIT is $M+1$, while the the decay slope of the side-channel is $\frac{1}{W}$. When the symmetric multiplexing gain $r\leq \frac{1}{2}$, if $W\geq \frac{1}{M+1}$, the users in MAC will first be in error followed by the users' error event in the side-channel. Moreover, if $\alpha_{\mathrm{S}}\geq \frac{M}{2}$, the error event w/wo CSIT is dominated by single-user performance when $r\leq \frac{1}{2}$. And when $r\geq \frac{1}{2}$, the dominant error event is determined by the side-channel, which is the same for both CSIT and no-CSIT cases.\footnote{The DMT of MAC channel with CSIT is different from that without CSIT as shown in Fig.~\ref{Spectral}.}


In order to eliminate the effect of interference such that the DMT w/wo CSIT achieve no-interference upper bound, it is sufficient if the side-channel condition satisfies that $W\alpha_{\mathrm{S}}\geq 1$ according to Theorem~\ref{coro3}. Hence the required side-channel bandwidth is inversely proportional to the strength of the side-channel as to eliminate the effect of interference.
The implication of such result is that in a highly clustered urban scenario, when the mobile devices are close to each other indicating higher side-channel strength, less side-channel bandwidth is required to achieve the single-user DMT performance.
\subsubsection{Multiple-antenna Mobiles}
Fig.~\ref{MIMO1} shows the DMT in the absence of the side-channel when both the mobiles have multiple antennas. First, we can find out that the gains due to the full-duplex capable BS over half-duplex BS is particularly larger for MIMO channels.
Second, a larger number of downlink receive antennas alone can completely eliminate the effect of CSIT such that the DMT w/wo CSIT have the same performance as stated in Theorem~\ref{coro1}. For example, the DMT of $(3,3,2,3)$ without CSIT is the same as that with CSIT. While in the case of $(3,2,3,3)$, the lack of CSIT will result in significant loss. 
\begin{figure}[h!]
\begin{minipage}[b]{0.5\linewidth}
  \centering
     \scalebox{0.4}{\includegraphics[trim = 50mm
      45mm 28mm 40mm, clip]{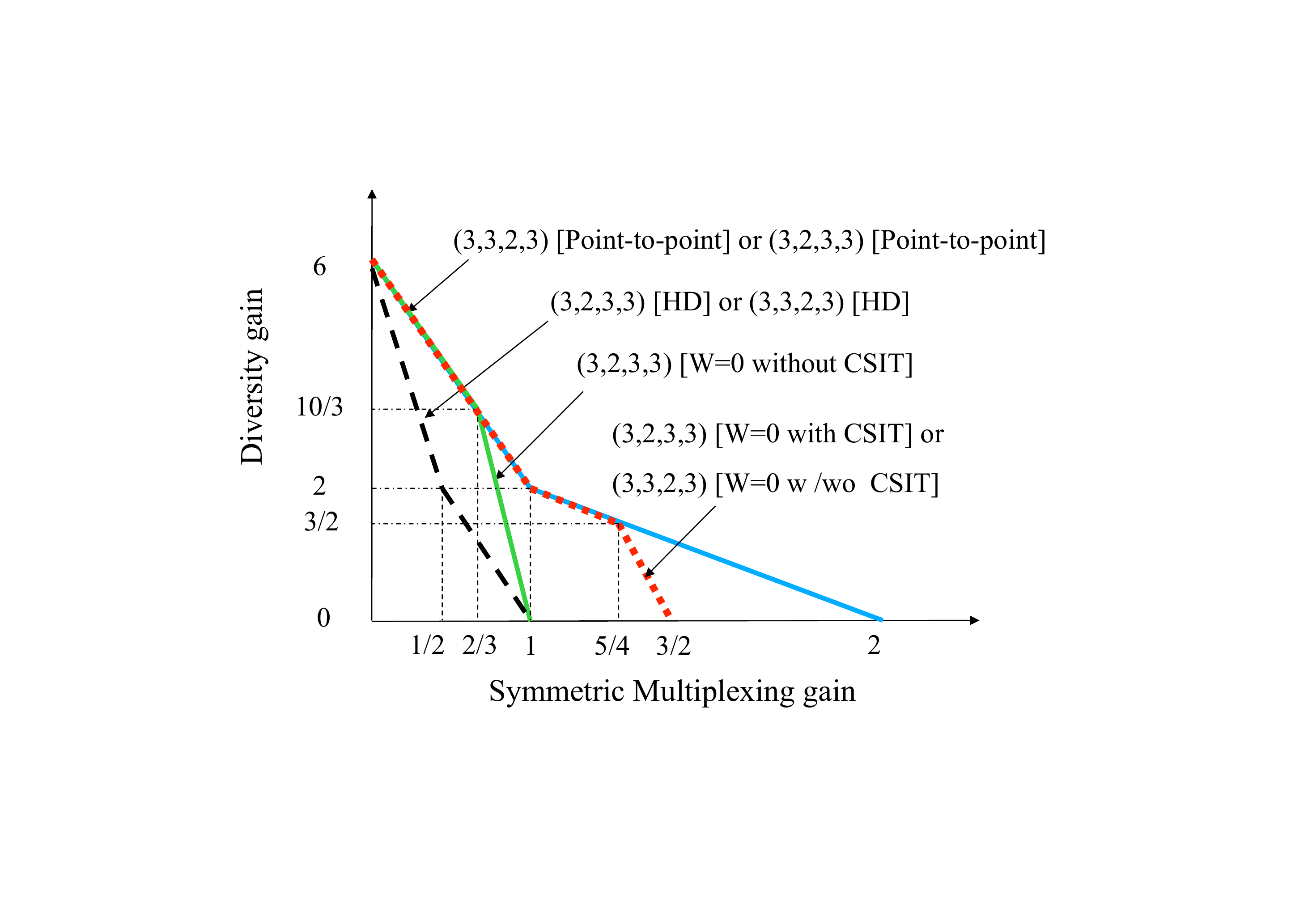}}
  \caption{The symmetric DMT of MIMO full-duplex network without side-channel for $\alpha_{\mathrm{dl}}=\alpha_{\mathrm{ul}}=\alpha_{\mathrm{I}}=1$.}
\label{MIMO1}
\end{minipage}
\hspace{0.1cm}
\begin{minipage}[b]{0.4\linewidth}
  \centering
    \scalebox{0.4}{\includegraphics[trim = 50mm
      45mm 40mm 45mm, clip]{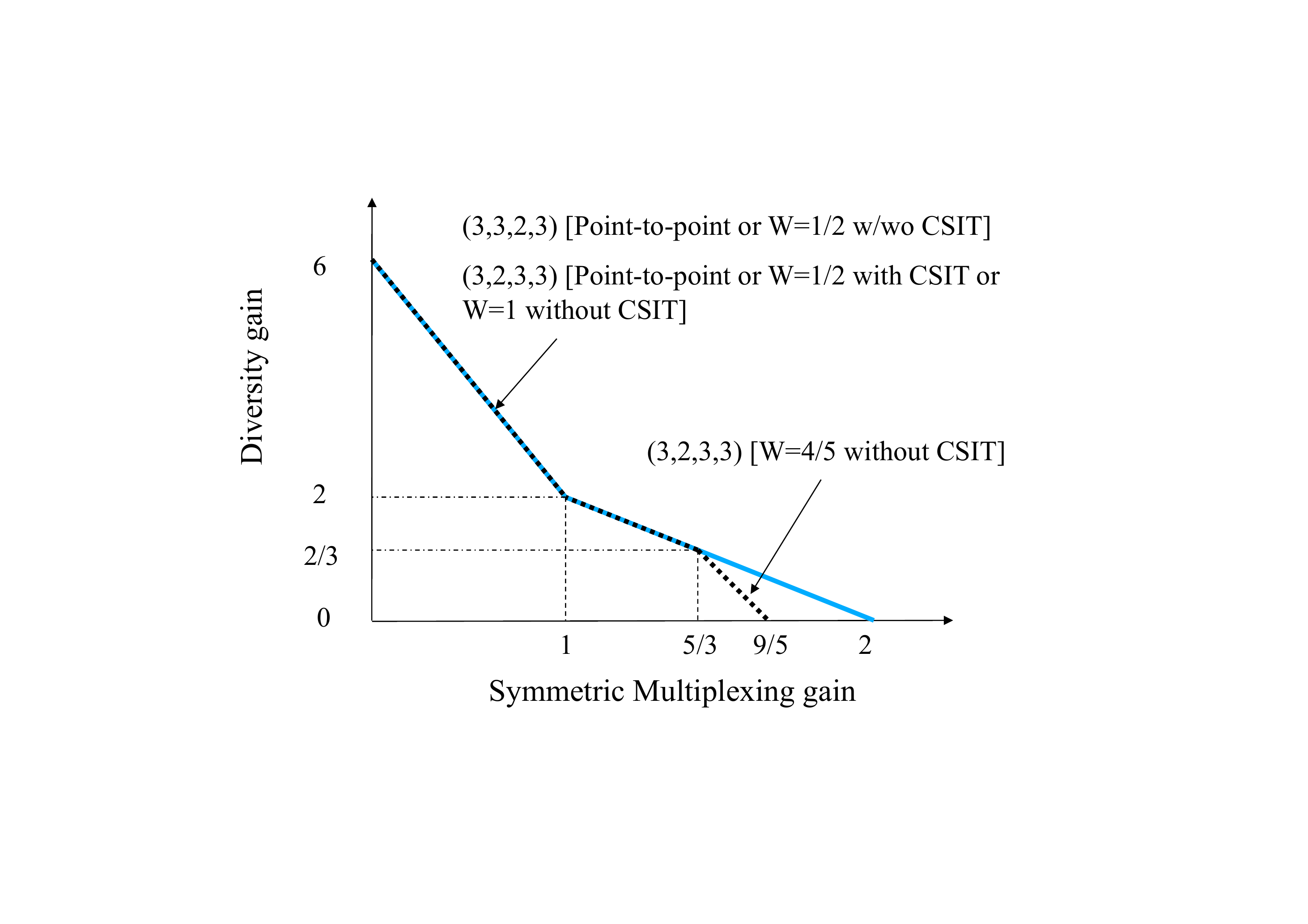}}
  \caption{The symmetric DMT with side-channel for $\alpha_{\mathrm{dl}}=\alpha_{\mathrm{ul}}=\alpha_{\mathrm{I}}=\alpha_{\mathrm{S}}=1$.}
\label{MIMO}
\end{minipage}
\end{figure}

Comparing Fig.~\ref{MIMO1} and Fig.~\ref{MIMO}, we can quantify the gains due to the extra side-channel bandwidth, which is significant especially in MIMO. In the case of $(3,2,3,3)$ when the system is lightly loaded, for instance, $r\leq 2/3$, there is no additional gain due to CSIT or $r\leq 5/4$, there is no gain due to the side-channel since the error event is dominated by single-user error. Beyond those points, the dominant error event is that all users are in error, thus leveraging the CSIT for transmit beamforming or side-channel to perform vector bin-and-cancel will reduce the probability that such outage event happens.

The required side-channel bandwidth ratio for compensation of CSIT in the case of two-antenna mobiles is also depicted in Fig.~\ref{sst1}, which again demonstrates that the required $W\propto \frac{1}{M}$ similar as in the single-antenna-mobile case.

From Theorem~\ref{coro3}, we conclude that with CSIT, as the antenna number ratio $\frac{\max\{M_{\mathrm{ul}},N_{\mathrm{dl}}\}}{\min\{M_{\mathrm{ul}},N_{\mathrm{dl}}\}}$ increases, the side-channel bandwidth required to completely eliminate the effect of interference reduces. 
Hence the spatial resources
of the multiple antennas at mobiles is
interchangeable with the spectral resources of the side-channel
bandwidth to reduce the outage probability at a given multiplexing gain such that single-user DMT can be achieved. 


We also infer from Theorem~\ref{coro3} that when $M_{\mathrm{ul}}> N_{\mathrm{dl}}$, the system with CSIT always outperforms that without CSIT by requiring less side-channel bandwidth to reach single-user performance.\footnote{The system with CSIT also has a weaker requirement of the side-channel strength level as compared to that without CSIT.} However, when $N_{\mathrm{dl}}\geq M_{\mathrm{ul}}$, there is no advantage due to CSIT to achieve the single-user DMT since, with and without CSIT require the same amount of side-channel bandwidth to achieve interference-free performance. Thus we conclude that having more spatial degree-of-freedom at the interfered downlink receiver or larger side-channel bandwidth can simplify transceiver design by ruling out the necessity of obtaining CSIT to null out the effect of inter-mobile interference.

\section{Conclusion}
In this paper, we propose the use of wireless side-channel to manage inter-mobile interference in MIMO full-duplex network where the BS supports both an up- and downlink flow in the same band simultaneously for the half-duplex mobile nodes. We study if and how the antennas resources at nodes will impact the spectral resource from the side-channel under different channel models. For time-invariant channels, we derive a constant-gap capacity region by a vector bin-and-cancel scheme and the corresponding $\mathsf{GDoF}$ region. And for slow-fading channels, we obtain DMT w/wo CSIT of the system. Both the $\mathsf{GDoF}$ and DMT results reveal various insights about the effect of the side-channels and the spatial and spectral tradeoff between antenna resources and bandwidth of the side-channels. Our future work will be to develop practical protocols guided by our analysis. 

\appendix
\subsection{Proof of Lemma~\ref{outerbound}} \label{outer}
First we complete the converse part.
Transmitters uniformly and independently generate the downlink and uplink messages $\omega_{\mathrm{dl}}$ and $\omega_{\mathrm{ul}}$, respectively. The messages will be delivered over $n$ time blocks. Since the full-duplex BS has an implicit feedback of infinite capacity link,  BS encodes the  $\omega_{\mathrm{dl}}$ by codeword $X_{\mathrm{dl},i}$ which is a function of $(\omega_{\mathrm{dl}},Y_{\mathrm{ul}}^{i-1})$,
for $i\in[1,n]$. The point-to-point outer bounds on $R_{\mathrm{dl}}$ and $R_{\mathrm{ul}}$ can be easily obtained following the same argument in Lemma 1 of \cite{MIMOZ}, which are given by
\begin{gather}
\begin{aligned}
R_{\mathrm{dl}}&\leq W_m\bigg(\mathrm{log}\left|I_{N_{\mathrm{dl}}}+\rho_{\mathrm{dl}}H_{\mathrm{dl}}H_{\mathrm{dl}}^\dagger\right|\bigg),\\
R_{\mathrm{ul}}&\leq W_m\bigg( \mathrm{log}\left|I_{N_{\mathrm{ul}}}+\bar{\lambda}\rho_{\mathrm{ul}}H_{\mathrm{ul}}H_{\mathrm{ul}}^\dagger\right|\bigg).
\end{aligned}
\end{gather}
Next we need to prove the sum-capacity upper bound. We define a genie $V_{\rm ul}=\sqrt{\bar{\lambda}\rho_{\mathrm{I}}}H_{\mathrm{I}}X_{\mathrm{ul}}+Z_{\mathrm{dl}}$. The sum-capacity upper bound is derived by providing the genie $V^n_{\mathrm{ul}}$ to the BS. 
By Fano's inequality, for any codebook of block length $n$,
\begin{eqnarray}
n(R_{\mathrm{dl}}+R_{\mathrm{ul}}-\epsilon_n) &\leq &I(\omega_{\mathrm{dl}};Y_{\mathrm{dl}}^n,Y_{\mathrm{S}}^n)+I(\omega_{\mathrm{ul}};Y_{\mathrm{ul}}^n|\omega_{\mathrm{dl}}) \label{a1}\\
&=&I(\omega_{\mathrm{dl}};Y_{\mathrm{dl}}^n)+I(\omega_{\mathrm{dl}};Y_{\mathrm{S}}^n|Y_{\mathrm{dl}}^n)+I(\omega_{\mathrm{ul}};Y_{\mathrm{ul}}{}^n|\omega_{\mathrm{dl}})\\
&=&I(\omega_{\mathrm{dl}};Y_{\mathrm{dl}}^n)+h(Y_{\mathrm{S}}^n|Y_{\mathrm{dl}}^n)-h(Y_{\mathrm{S}}^n|Y_{\mathrm{dl}}^n,\omega_{\mathrm{dl}})\nonumber\\
&&+I(\omega_{\mathrm{ul}};Y_{\rm ul}{}^n|\omega_{\mathrm{dl}}) \label{aa}\\
&\leq&I(\omega_{\mathrm{dl}};Y_{\mathrm{dl}}^n)+h(Y_{\mathrm{S}}^n)-h(Y_{\mathrm{S}}^n|X_{\rm S}^n,Y_{\mathrm{dl}}^n,\omega_{\mathrm{dl}})\nonumber\\
&&+I(\omega_{\mathrm{ul}};Y_{\rm ul}{}^n|\omega_{\mathrm{dl}})\label{bb}\\
&=&h(Y_{\mathrm{dl}}^n)+h(Y_{\mathrm{S}}^n)-h(Y_{\mathrm{S}}^n|X_{\rm S}^n)\nonumber\\
&&+\underbrace{h(Y_{\mathrm{ul}}^n|\omega_{\mathrm{dl}})-h(Y_{\mathrm{dl}}^n|\omega_{\mathrm{dl}})-h(Y_{\mathrm{ul}}^n|\omega_{\mathrm{ul}},\omega_{\mathrm{dl}})}_{U}\label{cc}
\end{eqnarray}
where (\ref{a1}) follows due to the independence of messages; (\ref{bb}) follows because conditioning reduces entropy; (\ref{cc}) follows because $(Y_{\rm dl}^n,\omega_{\rm dl})\rightarrow X_{\rm S}^n\rightarrow Y_{\rm S}^n$ forms a Markov chain.

We can rewrite $h(Y_{\mathrm{ul}}^n|\omega_{\mathrm{ul}},\omega_{\mathrm{dl}})$ in (\ref{cc}) in $U$ as
\begin{eqnarray}
h(Y_{\mathrm{ul}}^n|\omega_{\mathrm{ul}},\omega_{\mathrm{dl}})&=&\sum_{i=1}^{n}h(Y_{\mathrm {ul},i}|Y_{\rm ul}^{i-1},\omega_{\rm ul},\omega_{\rm dl})\\
&=&\sum_{i=1}^n h(Y_{\mathrm {ul},i}|X_{\mathrm{ul},i},Y_{\mathrm {ul}}^{i-1},\omega_{\rm ul},\omega_{\rm dl})\label{11}\\
&=&\sum_{i=1}^{n}h(Z_{\mathrm {ul},i})
\end{eqnarray}
where (\ref{11}) follows because $X_{\mathrm {ul},i}$ is a function of $\omega_{\rm ul}$ and conditioned on $X_{\mathrm {ul},i}$, $Y_{\mathrm {ul},i}$ is independent of everything else.

We also rewrite $h(Y_{\mathrm{ul}}^n|\omega_{\mathrm{dl}})-h(Y_{\mathrm{dl}}^n|\omega_{\mathrm{dl}})$ in (\ref{cc}) in $U$ as
\begin{eqnarray}
&&\!\!\!\!\!\!\!\!\!\!\!\!\!\!\!\!\!\!h(Y_{\mathrm{ul}}^n|\omega_{\mathrm{dl}})-h(Y_{\mathrm{dl}}^n|\omega_{\mathrm{dl}})\\
&=&h(Y_{\mathrm{ul}}^n,Y_{\rm dl}^n|\omega_{\mathrm{dl}})-h(Y_{\mathrm{dl}}^n|Y_{\rm ul}^n,\omega_{\mathrm{dl}})-\big(h(Y_{\mathrm{ul}}^n,Y_{\rm dl}^n|\omega_{\mathrm{dl}})-h(Y_{\mathrm{ul}}^n|Y_{\rm dl}^n,\omega_{\mathrm{dl}})\big)\\
&=& h(Y_{\mathrm{ul}}^n|Y_{\rm dl}^n,\omega_{\mathrm{dl}})-h(Y_{\mathrm{dl}}^n|Y_{\rm ul}^n,\omega_{\mathrm{dl}})\\
&=&\sum_{i=1}^n h(Y_{\mathrm{ul},i}|Y_{\rm dl}^n,Y_{\rm ul}^{i-1},\omega_{\rm dl})-\sum_{i=1}^n h(Y_{\mathrm {dl},i}|Y_{\rm ul}^n,Y_{\rm dl}^{i-1},\omega_{\rm dl}) \\
&\leq& \sum_{i=1}^n h(Y_{\mathrm {ul},i}|Y_{\rm dl}^n,Y_{\rm ul}^{i-1},\omega_{\rm dl})-\sum_{i=1}^n h(Y_{\rm dl,i}|X_{\mathrm{ul},i},X_{\mathrm{dl},i},Y_{\rm ul}^n,Y_{\rm dl}^{i-1},\omega_{\rm dl}) \label{22}\\
&=& \sum_{i=1}^n h(Y_{\mathrm {ul},i}|X_{\mathrm{dl},i},V_{\mathrm{ul},i},Y_{\rm dl}^n,Y_{\rm ul}^{i-1},\omega_{\rm dl})-\sum_{i=1}^n h(Z_{\mathrm {dl},i}) \label{33}\\
&\leq& \sum_{i=1}^n h(Y_{\mathrm {ul},i}|V_{\mathrm {ul},i})-\sum_{i=1}^n h(Z_{\mathrm {dl},i}), \label{44}
\end{eqnarray}
where (\ref{22}) follows because conditioning reduces entropy; (\ref{33}) follows since $X_{\mathrm {dl},i}$ is a function of $(\omega_{\rm dl}, Y_{\rm ul}^{i-1})$ and the genie $V_{\mathrm{ul},i}$ can be determined by $X_{\mathrm{dl},i}$ and $Y_{\mathrm{dl},i}$ as $Y_{\mathrm{dl}}=\sqrt{\rho_{\rm dl}}H_{\rm dl}X_{\rm dl}+V_{\rm ul}$. Also conditioned on $(X_{\mathrm {ul},i},X_{\mathrm {dl},i})$, $Y_{\mathrm {dl},i}$ is independent of everything else; (\ref{44}) follows since removing condition does not reduce entropy.

Thus $U$ can be upper bounded as
\begin{eqnarray}
U&\leq&  \sum_{i=1}^n h(Y_{{\mathrm{ul}},i}|V_{{\mathrm{ul}},i})-\sum_{i=1}^n\big(h(Z_{{\mathrm{dl}},i})+h(Z_{{\mathrm{ul}},i})\big). \label{e}
\end{eqnarray}
Combining the results above and applying the chain rule, we have
\begin{gather}
\begin{aligned}
R_{\mathrm{dl}}+R_{\mathrm{ul}}-\epsilon_n&\leq \frac{1}{n}\sum_{i=1}^n\bigg(h(Y_{{\mathrm{dl}},i})+h(Y_{{\mathrm{S}},i})+h(Y_{{\mathrm{ul}},i}|V_{{\mathrm{ul}},i})-\big[h(Z_{{\mathrm{dl}},i})\\
&+h(Z_{{\mathrm{ul}},i})+h(Z_{{\mathrm{S}},i})\big]\bigg).\nonumber
\end{aligned}
\end{gather}

Now by applying the standard time sharing argument, we can obtain
\begin{eqnarray}
R_{\mathrm{dl}}+R_{\mathrm{ul}}
&\!\!\leq\!\!&\underbrace{h(Y_{\mathrm{dl}})-h(Z_{\mathrm{dl}})}_{R_{us,1}}+\underbrace{h(Y_{\mathrm{ul}}|V_{\mathrm{ul}})-h(Z_{\mathrm{ul}})}_{R_{us,2}}+\underbrace{h(Y_{\mathrm{S}})-h(Z_{\mathrm{S}})}_{R_{us,3}}\label{prov4}.
\end{eqnarray}
We denote the covariance matrix of $Y_{\rm dl}$ as $K_{Y_{\rm dl}}=\mathbb{E}(Y_{\rm dl}Y_{\rm dl}^\dagger)$ that is maximized by Gaussian input in the presence of Gaussian noise. It can be easily shown that
\begin{eqnarray}
K_{Y_{\rm dl}}=I_{N_{\mathrm{dl}}}+\rho_{\mathrm{dl}}H_{\mathrm{dl}}Q_{\mathrm{dl}}H_{\mathrm{dl}}^\dagger+\bar{\lambda}\rho_{\mathrm{I}}H_{\mathrm{I}}Q_{\mathrm{ul}}H_{\mathrm{I}}^\dagger+\sqrt{\bar{\lambda}\rho_{\mathrm{dl}}\rho_{\mathrm{I}}}H_{\mathrm{dl}}Q_{\mathrm{d,u}}H_{\mathrm{I}}^\dagger \label{cov_y}
+\sqrt{\bar{\lambda}\rho_{\mathrm{dl}}\rho_{\mathrm{I}}}H_{\mathrm{I}}Q_{\mathrm{\mathrm{u,d}}}H_{\mathrm{dl}}^\dagger,
\end{eqnarray}
where $Q_{\mathrm{dl}}=\mathbb{E}(X_{\mathrm{dl}}X_{\mathrm{dl}}^\dagger), Q_{\mathrm{ul}}=\mathbb{E}(X_{\mathrm{ul}}X_{\mathrm{ul}}^\dagger), Q_{\mathrm{d,u}}=\mathbb{E}(X_{\mathrm{dl}}X_{\mathrm{ul}}^\dagger), Q_{\mathrm{\mathrm{u,d}}}=\mathbb{E}(X_{\mathrm{ul}}X_{\mathrm{dl}}^\dagger)$. 

Let $J= \begin{bmatrix} V_{\rm ul}  \\ Y_{\rm ul}  \end{bmatrix}$, the covariance matrix of $J$ denoted by $K_J$ can be maximized with Gaussian inputs, it can be verified that
\begin{gather}
K_{J}=\mathbb{E}(JJ^\dagger)=
\begin{bmatrix} 
I_{N_{\mathrm{dl}}}+\bar{\lambda}\rho_{\mathrm{I}}H_{\mathrm{I}}Q_{\mathrm{ul}}H_{\mathrm{I}}^\dagger & \bar{\lambda}\sqrt{\rho_{\mathrm{ul}}\rho_{\mathrm{I}}}H_{\mathrm{I}}Q_{\mathrm{\mathrm{ul}}}H_{\mathrm{ul}}^\dagger\\
 \bar{\lambda}\sqrt{\rho_{\mathrm{ul}}\rho_{\mathrm{I}}}H_{\mathrm{ul}}Q_{\mathrm{\mathrm{ul}}}H_{\mathrm{I}}^\dagger&
 I_{N_{\mathrm{ul}}}+\bar{\lambda}\rho_{\mathrm{ul}}H_{\mathrm{ul}}Q_{\mathrm{ul}}H_{\mathrm{ul}}^\dagger \label{cov_joint}
\end{bmatrix}.
\end{gather}

Likewise, the covariance matrix of  $Y_{\rm S}$ will be maximized by Gaussian input and computed as
 \begin{eqnarray}
K_{Y_{\rm S}}=\mathbb{E}(Y_{\rm S}Y_{\rm S}^\dagger)=WI_{N_{\mathrm{dl}}}+\lambda\rho_{\mathrm{S}}H_{\mathrm{S}}Q_{\mathrm{S}}H_{\mathrm{S}}^\dagger, \label{cov_side}
\end{eqnarray}
where $Q_{\mathrm{S}}=\mathbb{E}(X_{\mathrm{S}}X_{\mathrm{S}}^\dagger)$. 

Using the result in (\ref{cov_y}), we can upper bound the first term $R_{us,1}$ (bit/s) in (\ref{prov4}) as
\begin{eqnarray}
\frac{R_{us,1}}{W_m}&\leq&\mathrm{log}\bigg|I_{N_{\mathrm{dl}}}+\rho_{\mathrm{dl}}H_{\mathrm{dl}}Q_{\mathrm{dl}}H_{\mathrm{dl}}^\dagger+\bar{\lambda}\rho_{\mathrm{I}}H_{\mathrm{I}}Q_{\mathrm{ul}}H_{\mathrm{I}}^\dagger+\sqrt{\bar{\lambda}\rho_{\mathrm{dl}}\rho_{\mathrm{I}}}H_{\mathrm{dl}}Q_{\mathrm{d,u}}H_{\mathrm{I}}^\dagger
\nonumber\\
&&+\sqrt{\bar{\lambda}\rho_{\mathrm{dl}}\rho_{\mathrm{I}}}H_{\mathrm{I}}Q_{\mathrm{\mathrm{u,d}}}H_{\mathrm{dl}}^\dagger \bigg|\\
&\leq&\mathrm{log}\left|I_{N_{\mathrm{dl}}}+G_{\mathrm{dl}}+G_{\mathrm{ul}}\right|\label{prov6}\\
&=&\mathrm{log}\left|(I_{N_{\mathrm{dl}}}+G_{\mathrm{dl}})(I_{N_{\mathrm{dl}}}+(I_{N_{\mathrm{dl}}}+G_{\mathrm{dl}})^{-1}G_{\mathrm{ul}})\right|\\
&=&\mathrm{log}\left|I_{N_{\mathrm{dl}}}+G_{\mathrm{dl}}\right|+\mathrm{log}\left|I_{N_{\mathrm{dl}}}+(I_{N_{\mathrm{dl}}}+G_{\mathrm{dl}})^{-1}G_{\mathrm{ul}}\right|\\
&\leq &\mathrm{log}\left|I_{N_{\mathrm{dl}}}+G_{\mathrm{dl}}\right|+\mathrm{log}\left|2I_{N_{\mathrm{dl}}}\right|\label{prov7}\\
&=&\mathrm{log}\left|I_{N_{\mathrm{dl}}}+G_{\mathrm{dl}}\right|+N_{\mathrm{dl}},
\end{eqnarray}
where $G_{\mathrm{dl}}\!\!=\!\!\rho_{\mathrm{dl}}H_{\mathrm{dl}}H_{\mathrm{dl}}^\dagger+\bar{\lambda}\rho_{\mathrm{I}}H_{\mathrm{I}}H_{\mathrm{I}}^\dagger, G_{\mathrm{ul}}\!\!=\!\!\sqrt{\bar{\lambda}\rho_{\mathrm{dl}}\rho_{\mathrm{I}}}H_{\mathrm{dl}}Q_{\mathrm{d,u}}H_{\mathrm{I}}^\dagger
+\sqrt{\bar{\lambda}\rho_{\mathrm{dl}}\rho_{\mathrm{I}}}H_{\mathrm{I}}Q_{\mathrm{u,d}}H_{\mathrm{dl}}^\dagger$; (\ref{prov6}) follows because $\text{trace}(Q_i)\leq1, i\in\{\mathrm{ul,dl}\},$ thus $Q_i\preceq I$, and $\mathrm{log}|.|$ is an increasing function on the cone of positive-definite matrices; (\ref{prov7}) follows from the following lemma.
\begin{lemma} \label{Lemmamatrix}
For p.s.d. matrices $G_{\mathrm{dl}}$ and $G_{\mathrm{ul}}$, we have
\begin{eqnarray}
\mathrm{log}\left|I_{N_{\mathrm{dl}}}+(I_{N_{\mathrm{dl}}}+G_{\mathrm{dl}})^{-1}G_{\mathrm{ul}}\right|\leq\mathrm{log}\left|2I_{N_{\mathrm{dl}}}\right|.
\end{eqnarray}
\end{lemma}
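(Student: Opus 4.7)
The plan is to reduce the determinant inequality to a positive-semidefinite comparison between $G_{\mathrm{ul}}$ and $I+G_{\mathrm{dl}}$, and then invoke monotonicity of the determinant on the p.s.d.\ cone.

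First, I would use the multiplicativity of the determinant to rewrite
\[
\log\bigl|I_{N_{\mathrm{dl}}}+(I_{N_{\mathrm{dl}}}+G_{\mathrm{dl}})^{-1}G_{\mathrm{ul}}\bigr|=\log\bigl|I_{N_{\mathrm{dl}}}+G_{\mathrm{dl}}+G_{\mathrm{ul}}\bigr|-\log\bigl|I_{N_{\mathrm{dl}}}+G_{\mathrm{dl}}\bigr|,
\]
so the claim becomes $|I+G_{\mathrm{dl}}+G_{\mathrm{ul}}|\leq 2^{N_{\mathrm{dl}}}|I+G_{\mathrm{dl}}|$. By determinant monotonicity, this follows from the operator inequality $I+G_{\mathrm{dl}}+G_{\mathrm{ul}}\preceq 2(I+G_{\mathrm{dl}})$, i.e., from $G_{\mathrm{ul}}\preceq I+G_{\mathrm{dl}}$. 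Equivalently, letting $M=(I+G_{\mathrm{dl}})^{-1/2}$, one uses $|I+AB|=|I+BA|$ to write $|I+(I+G_{\mathrm{dl}})^{-1}G_{\mathrm{ul}}|=|I+MG_{\mathrm{ul}}M|$, and the Hermitian matrix $MG_{\mathrm{ul}}M$ has all eigenvalues at most $1$ iff $G_{\mathrm{ul}}\preceq I+G_{\mathrm{dl}}$.

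Second, I would establish $G_{\mathrm{ul}}\preceq I+G_{\mathrm{dl}}$ using the specific structure inherited from the covariance computation leading to \eqref{cov_y}. Let $A=\sqrt{\rho_{\mathrm{dl}}}H_{\mathrm{dl}}X_{\mathrm{dl}}$ and $B=\sqrt{\bar{\lambda}\rho_{\mathrm{I}}}H_{\mathrm{I}}X_{\mathrm{ul}}$; positivity of $\mathbb{E}[(A-B)(A-B)^\dagger]$ expands to
\[
\sqrt{\bar{\lambda}\rho_{\mathrm{dl}}\rho_{\mathrm{I}}}\bigl(H_{\mathrm{dl}}Q_{\mathrm{d,u}}H_{\mathrm{I}}^{\dagger}+H_{\mathrm{I}}Q_{\mathrm{u,d}}H_{\mathrm{dl}}^{\dagger}\bigr)\preceq \rho_{\mathrm{dl}}H_{\mathrm{dl}}Q_{\mathrm{dl}}H_{\mathrm{dl}}^{\dagger}+\bar{\lambda}\rho_{\mathrm{I}}H_{\mathrm{I}}Q_{\mathrm{ul}}H_{\mathrm{I}}^{\dagger}.
\]
The left-hand side is precisely $G_{\mathrm{ul}}$, while the right-hand side is dominated by $G_{\mathrm{dl}}$ after using the power-budget bounds $Q_{\mathrm{dl}},Q_{\mathrm{ul}}\preceq I$. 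Thus $G_{\mathrm{ul}}\preceq G_{\mathrm{dl}}\preceq I+G_{\mathrm{dl}}$, and the chain $I+G_{\mathrm{dl}}+G_{\mathrm{ul}}\preceq 2(I+G_{\mathrm{dl}})$ gives the desired determinant bound.

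The main obstacle is that the lemma, as stated for arbitrary p.s.d.\ $G_{\mathrm{dl}},G_{\mathrm{ul}}$, is actually false without extra structure: one can inflate $G_{\mathrm{ul}}$ while holding $G_{\mathrm{dl}}$ fixed and violate the inequality. So the nontrivial work is not the determinant manipulation, which is a one-line application of $|AB|=|A|\cdot|B|$ and monotonicity, but rather making explicit the structural constraint $G_{\mathrm{ul}}\preceq I+G_{\mathrm{dl}}$ that is inherited from $G_{\mathrm{ul}}$ being the symmetrized cross-covariance term in \eqref{cov_y}. Once that constraint is extracted from the joint positivity of the input covariance matrix, the rest of the argument is routine.
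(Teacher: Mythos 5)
Your proposal is correct and follows essentially the same route as the paper: both reduce the claim to the operator inequality $G_{\mathrm{ul}}\preceq G_{\mathrm{dl}}\preceq I_{N_{\mathrm{dl}}}+G_{\mathrm{dl}}$, obtained by completing a square and using the power constraints $Q_{\mathrm{dl}},Q_{\mathrm{ul}}\preceq I$, and then finish with a one-line determinant-monotonicity/congruence argument. The only (minor) difference is that you complete the square on the random signal vectors via $\mathbb{E}[(A-B)(A-B)^\dagger]\succeq 0$, whereas the paper squares the deterministic matrix $\sqrt{\rho_{\mathrm{dl}}}H_{\mathrm{dl}}Q_{\mathrm{d,u}}-\sqrt{\bar{\lambda}\rho_{\mathrm{I}}}H_{\mathrm{I}}$ and additionally invokes $Q_{\mathrm{d,u}}Q_{\mathrm{d,u}}^\dagger\preceq I$; your observation that the bound genuinely relies on this cross-covariance structure (and not merely on $G_{\mathrm{dl}},G_{\mathrm{ul}}$ being p.s.d.) is accurate and consistent with how the lemma is actually used in the paper.
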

\begin{proof}
First we show that $G_{\mathrm{ul}}\preceq G_{\mathrm{dl}}$. Let $A=\sqrt{\rho_{\mathrm{dl}}}H_{\mathrm{dl}}Q_{\rm d,u}-\sqrt{\bar{\lambda}\rho_{\mathrm{I}}}H_{\mathrm{I}}$, the product of matrices $AA^\dagger$ is always p.s.d., because for any vector $x$, 
$x^\dagger AA^\dagger x= (A^\dagger x)^\dagger (A^\dagger x)\geq 0.$
Hence we have the following
\begin{gather}
\begin{aligned}
\rho_{\rm dl}H_{\rm dl}Q_{\rm d,u}Q_{\rm d,u}^\dagger H_{\mathrm{dl}}^\dagger+\bar{\lambda}\rho_{\mathrm{I}}H_{\mathrm{I}}H_{\mathrm{I}}^\dagger\geq \sqrt{\bar{\lambda}\rho_{\mathrm{dl}}\rho_{\mathrm{I}}}H_{\mathrm{dl}}Q_{\mathrm{d,u}}H_{\mathrm{I}}^\dagger
+\sqrt{\bar{\lambda}\rho_{\mathrm{dl}}\rho_{\mathrm{I}}}H_{\mathrm{I}}Q_{\mathrm{d,u}}^\dagger H_{\mathrm{dl}}^\dagger.
\end{aligned}
\end{gather}
Since $Q_{\rm d,u}Q_{\rm d,u}^\dagger\preceq I$, and $Q_{\rm d,u}^\dagger=Q_{\rm u,d}$, now we can obtain that
\begin{gather}
\begin{aligned}
&\rho_{\rm dl}H_{\rm dl}H_{\mathrm{dl}}^\dagger+\bar{\lambda}\rho_{\mathrm{I}}H_{\mathrm{I}}H_{\mathrm{I}}^\dagger\geq \sqrt{\bar{\lambda}\rho_{\mathrm{dl}}\rho_{\mathrm{I}}}H_{\mathrm{dl}}Q_{\mathrm{d,u}}H_{\mathrm{I}}^\dagger
+\sqrt{\bar{\lambda}\rho_{\mathrm{dl}}\rho_{\mathrm{I}}}H_{\mathrm{I}}Q_{\mathrm{u,d}}H_{\mathrm{dl}}^\dagger.
\end{aligned}
\end{gather}
Hence we have verified that $G_{\mathrm{ul}}\preceq G_{\mathrm{dl}}$, which also leads to the fact that $G_{\mathrm{ul}}\preceq I_{N_{\mathrm{dl}}}+G_{\mathrm{dl}}$. As a result, for any given vector $x$, we have
\begin{gather}
\begin{aligned}
x^\dagger\left[\left(I+G_{\mathrm{dl}}\right)^{\frac{1}{2}}\left(I-(I+G_{\mathrm{dl}})^{-\frac{1}{2}}G_{\mathrm{ul}}(I+G_{\mathrm{dl}})^{-\frac{1}{2}}\right)(I+G_{\mathrm{dl}})^{\frac{1}{2}}\right]x&\geq0,~\text{or} \\
\left((I+G_{\rm dl})^{\frac{1}{2}}x\right)^\dagger\left[I-(I+G_{\mathrm{dl}})^{-\frac{1}{2}}G_{\mathrm{ul}}(I+G_{\mathrm{dl}})^{-\frac{1}{2}}\right]\left((I+G_{\rm dl})^{\frac{1}{2}}x\right)&\geq0.
\end{aligned}
\end{gather}
From the definition of partial order of p.s.d. matrices~\cite{horn1990matrix}, we have $(I+G_{\mathrm{dl}})^{-\frac{1}{2}}G_{\mathrm{ul}}(I+G_{\mathrm{dl}})^{-\frac{1}{2}}\preceq I$.
Hence we have that
\begin{eqnarray}
&&\mathrm{log}\left|I_{N_{\mathrm{dl}}}+(I_{N_{\mathrm{dl}}}+G_{\mathrm{dl}})^{-1}G_{\mathrm{ul}}\right|\\
&=&\mathrm{log}\left|(I_{N_{\mathrm{dl}}}+G_{\mathrm{dl}})^{-\frac{1}{2}}\left(I_{N_{\mathrm{dl}}}+(I_{N_{\mathrm{dl}}}+G_{\mathrm{dl}})^{-\frac{1}{2}}G_{\mathrm{ul}}(I_{N_{\mathrm{dl}}}+G_{\mathrm{dl}})^{-\frac{1}{2}}\right)(I_{N_{\mathrm{dl}}}
+G_{\mathrm{dl}})^{\frac{1}{2}}\right|\nonumber\\
\!\!\!\!\!&=\!\!\!\!\!&\mathrm{log}\left|I_{N_{\mathrm{dl}}}+(I_{N_{\mathrm{dl}}}+G_{\mathrm{dl}})^{-\frac{1}{2}}G_{\mathrm{ul}}(I_{N_{\mathrm{dl}}}+G_{\mathrm{dl}})^{-\frac{1}{2}}\right|\leq\mathrm{log}\left|2I_{N_{\mathrm{dl}}}\right|.\nonumber
\end{eqnarray}
\end{proof}
The second term in (\ref{prov4}) can be written as 
\begin{eqnarray}
R_{us,2}=h(Y_{\mathrm{ul}},V_{\mathrm{ul}})-h(V_{\rm ul})-h(Z_{\mathrm{ul}}).\label{sec}
\end{eqnarray}
Using the covariance matrix we derived in (\ref{cov_joint}), and invoking Lemma 8 in \cite{MIMOZ}, we can upper bound (\ref{sec}) as follows,
\begin{eqnarray}
\frac{R_{us,2}}{W_m}\leq \mathrm{log}\left|I_{N_{\mathrm{ul}}}+\bar{\lambda}\rho_{\mathrm{ul}}H_{\mathrm{ul}}(I_{M_{\mathrm{ul}}}+\bar{\lambda}\rho_{\mathrm{I}}H_{\mathrm{I}}^\dagger H_{\mathrm{I}})^{-1}H_{\mathrm{ul}}^\dagger\right|.
\end{eqnarray}

Finally, we can upper bound the third term in (\ref{prov4}) with the covariance matrix in (\ref{cov_side}),
\begin{eqnarray}
\frac{R_{us,3}}{W_s}&\leq &\mathrm{log}\left|WI_{N_{\mathrm{dl}}}+\lambda\rho_{\mathrm{S}}H_{\mathrm{S}}H_{\mathrm{S}}^\dagger\right|-\mathrm{log}\left|W I_{N_{\rm dl}}\right|\\
&=&\mathrm{log}\left|I_{N_{\mathrm{dl}}}+\frac{\lambda\rho_{\mathrm{S}}}{W}H_{\mathrm{S}}H_{\mathrm{S}}^\dagger\right|.
\end{eqnarray}
Combining all the results we derived above, we can prove Lemma~\ref{outerbound}.

\subsection{Rate Calculation in Lemma~\ref{innerbound}} \label{inner}
For the Gaussian inputs with the covariance matrices given in (\ref{powerallocate}), the achievable rate in (\ref{discreteAR}) can be calculated as
\begin{eqnarray}
I(X_{\mathrm{dl}};Y_{\mathrm{dl}}|S_{\mathrm{ul}})&=&W_m\bigg(\mathrm{log}\left|I_{N_{\mathrm{dl}}}+\frac{\rho_{\mathrm{dl}}}{M_{\mathrm{dl}}}H_{\mathrm{dl}}H_{\mathrm{dl}}^\dagger+\bar{\lambda}\rho_{\mathrm{I}}H_{\mathrm{I}}K_u H_{\mathrm{I}}^\dagger\right|\nonumber\\
&&-\mathrm{log}\left|I_{N_{\mathrm{dl}}}+\bar{\lambda}\rho_{\mathrm{I}}H_{\mathrm{I}}K_u H_{\mathrm{I}}^\dagger\right|\bigg)\\
&\geq&W_m\bigg(\mathrm{log}\left|I_{N_{\mathrm{dl}}}+\frac{\rho_{\mathrm{dl}}}{M_{\mathrm{dl}}}H_{\mathrm{dl}}H_{\mathrm{dl}}^\dagger+\bar{\lambda}\rho_{\mathrm{I}}H_{\mathrm{I}}K_u H_{\mathrm{I}}^\dagger\right|-\hat{m}_{\mathrm{I}}\bigg)\label{step2}\\
&\geq&W_m\bigg(\mathrm{log}\left|I_{N_{\mathrm{dl}}}+\rho_{\mathrm{dl}}H_{\mathrm{dl}}H_{\mathrm{dl}}^\dagger\right|-m_{\mathrm{dl}}\mathrm{log}M_{\mathrm{dl}}-\hat{m}_{\mathrm{I}}\bigg),
\end{eqnarray}
where $K_u=\frac{1}{M_{\mathrm{ul}}}(I_{M_{\mathrm{ul}}}+\bar{\lambda}\rho_{\mathrm{I}}H_{\mathrm{I}}^\dagger H_{\mathrm{I}})^{-1}, \hat{m}_{\mathrm{I}}=m_{\mathrm{I}}\mathrm{log}\left(1+\frac{1}{M_{\mathrm{ul}}}\right)$; $m_{\mathrm{I}}=\min\{M_{\mathrm{ul}},N_{\mathrm{dl}}\}, m_{\mathrm{dl}}=\min\{M_{\mathrm{dl}},N_{\mathrm{dl}}\}$, which are the rank of $H_{\mathrm{I}}$ and $H_{\mathrm{dl}}$, respectively.
Step~(\ref{step2}) is established because of the following argument: applying the singular value decomposition to $H_{\mathrm{I}}$ such that $H_{\mathrm{I}}=U\Lambda V^{\dagger}$, where $U$ and $V$ are unitary matrices, $\Lambda$ is $N_{\mathrm{dl}}\times M_{\mathrm{ul}}$ diagonal matrix containing singular values of $H_{\mathrm{I}}$. Now we can rewrite $\bar{\lambda}\rho_{\mathrm{I}}H_{\mathrm{I}}K_u H_{\mathrm{I}}^\dagger$ as
\begin{eqnarray}
\bar{\lambda}\rho_{\mathrm{I}}H_{\mathrm{I}}K_u H_{\mathrm{I}}^\dagger=\frac{\bar{\lambda}\rho_{\rm I }}{M_{\rm ul}}U\Lambda(I_{M_{\mathrm{ul}}}+\bar{\lambda}\rho_{\mathrm{I}}\Lambda^\dagger \Lambda)^{-1}\Lambda^\dagger U^\dagger.
\end{eqnarray}
Since $\bar{\lambda}\rho_{\rm I}\Lambda(I_{M_{\mathrm{ul}}}+\bar{\lambda}\rho_{\mathrm{I}}\Lambda^\dagger \Lambda)^{-1}\Lambda^\dagger \leq I_{N_{\rm dl}}$, for p.s.d. matrices, given a vector $x$, we can show that
\begin{eqnarray}
x^\dagger\left(\bar{\lambda}\rho_{\mathrm{I}}H_{\mathrm{I}}K_u H_{\mathrm{I}}^\dagger\right)x&=&\frac{1}{M_{\rm ul}}(U^\dagger x)^\dagger \bar{\lambda}\rho_{\rm I}\Lambda(I_{M_{\mathrm{ul}}}+\bar{\lambda}\rho_{\mathrm{I}}\Lambda^\dagger \Lambda)^{-1}\Lambda^\dagger (U^\dagger x)\\
&\leq&\frac{1}{M_{\rm ul}}(U^\dagger x)^\dagger I_{N_{\rm dl}} (U^\dagger x).
\end{eqnarray}
Thus $\bar{\lambda}\rho_{\mathrm{I}}H_{\mathrm{I}}K_u H_{\mathrm{I}}^\dagger\leq \frac{1}{M_{\rm ul}}I_{N_{\rm dl}}$, which implies that
\begin{eqnarray}
\mathrm{log}\left|I_{N_{\mathrm{dl}}}+\bar{\lambda}\rho_{\mathrm{I}}H_{\mathrm{I}}K_u H_{\mathrm{I}}^\dagger\right|\leq \min\{M_{\mathrm{ul}},N_{\mathrm{dl}}\}\mathrm{log}\left(1+\frac{1}{M_{\rm ul}}\right)\triangleq \hat{m}_{\mathrm{I}}.\label{psd}
\end{eqnarray}

Next we compute $I(S_{\mathrm{ul}},U_{\mathrm{ul}};Y_{\mathrm{ul}})$ as follows,
\begin{eqnarray}
I(S_{\mathrm{ul}},U_{\mathrm{ul}};Y_{\mathrm{ul}})&\!\!\!\!=\!\!\!\!&W_m\mathrm{log}\left|I_{N_{\mathrm{ul}}}+\frac{\bar{\lambda}\rho_{\mathrm{ul}}}{M_{\mathrm{ul}}}H_{\mathrm{ul}}H_{\mathrm{ul}}^\dagger\right|\\
&\!\!\!\!\geq\!\!\!\!&W_m\left(\mathrm{log}\left|I_{N_{\mathrm{ul}}}+\bar{\lambda}\rho_{\mathrm{ul}}H_{\mathrm{ul}}H_{\mathrm{ul}}^\dagger\right|-m_{\mathrm{ul}}\mathrm{log}M_{\mathrm{ul}}\right),
\end{eqnarray}
where $m_{\mathrm{ul}}=\min\{M_{\mathrm{ul}},N_{\mathrm{ul}}\}$, which is the rank of $H_{\mathrm{ul}}$.
Similarly,
\begin{eqnarray}
I(X_{\mathrm{S}};Y_{\mathrm{S}})&\!\!\!\!=\!\!\!\!&W_s\mathrm{log}\left|I_{N_{\mathrm{dl}}}+\frac{\lambda\rho_{\mathrm{S}}}{M_{\mathrm{ul}}W}H_{\mathrm{S}}H_{\mathrm{S}}^{\dagger}\right|\\
&\!\!\!\!\geq\!\!\!\!&W_m\left(W\mathrm{log}\left|I_{N_{\mathrm{dl}}}+\frac{\lambda\rho_{\mathrm{S}}}{W}H_{\mathrm{S}}H_{\mathrm{S}}^{\dagger}\right|-m_{\mathrm{I}}W\mathrm{log}M_{\mathrm{ul}}\right).
\end{eqnarray}
And
\begin{eqnarray}
I(U_{\mathrm{ul}};Y_{\mathrm{ul}}|S_{\mathrm{ul}})&\!\!\!\!=\!\!\!\!&W_m\mathrm{log}\left|I_{N_{\mathrm{ul}}}+\bar{\lambda}\rho_{\mathrm{ul}}H_{\mathrm{ul}}K_u H_{\mathrm{ul}}^\dagger\right|\\
&\!\!\!\!\geq\!\!\!\!&W_m\left(\mathrm{log}\left|I_{N_{\mathrm{ul}}}+\bar{\lambda}\rho_{\mathrm{ul}}H_{\mathrm{ul}}(I_{M_{\mathrm{ul}}}+\bar{\lambda}\rho_{\mathrm{I}}H_{\mathrm{I}}^\dagger H_{\mathrm{I}})^{-1}H_{\mathrm{ul}}^\dagger\right|-m_{\mathrm{ul}}\mathrm{log}M_{\mathrm{ul}}\right),
\end{eqnarray}
\begin{eqnarray}
I(S_{\mathrm{ul}};Y_{\mathrm{dl}}|X_{\mathrm{dl}})&\!\!\!\!=\!\!\!\!&W_m\bigg(\mathrm{log}\left|I_{N_{\mathrm{dl}}}+\frac{\bar{\lambda}\rho_{\mathrm{I}}}{M_{\mathrm{ul}}}H_{\mathrm{I}}H_{\mathrm{I}}^\dagger\right|
-\mathrm{log}\left|I_{N_{\mathrm{dl}}}+\bar{\lambda}\rho_{\mathrm{I}}H_{\mathrm{I}}K_u H_{\mathrm{I}}^\dagger\right|\bigg) \\
&\geq&W_m\bigg(\mathrm{log}\left|I_{N_{\mathrm{dl}}}+\frac{\bar{\lambda}\rho_{\mathrm{I}}}{M_{\mathrm{ul}}}H_{\mathrm{I}}H_{\mathrm{I}}^\dagger\right|
-\hat{m}_{\mathrm{I}}\bigg) \label{step13}\\ 
&\geq&W_m\bigg(\mathrm{log}\left|I_{N_{\mathrm{dl}}}+\bar{\lambda}\rho_{\mathrm{I}}H_{\mathrm{I}}H_{\mathrm{I}}^\dagger\right|
-m_{\mathrm{I}}\mathrm{log}M_{\mathrm{ul}}-\hat{m}_{\mathrm{I}}\bigg)\\
&=&W_m\bigg(\mathrm{log}\left|I_{N_{\mathrm{dl}}}+\bar{\lambda}\rho_{\mathrm{I}}H_{\mathrm{I}}H_{\mathrm{I}}^\dagger\right|-m_{\mathrm{I}}\mathrm{log}(\!M_{\mathrm{ul}}+1)\bigg)
\end{eqnarray}
where (\ref{step13}) follows from step~(\ref{psd}).

Now we can calculate $I(U_{\mathrm{ul}};Y_{\mathrm{ul}}|S_{\mathrm{ul}})+I(S_{\mathrm{ul}};Y_{\mathrm{dl}}|X_{\mathrm{dl}})$ as
\begin{eqnarray}
&&\!\!\!\!\!\!\!\!\!I(U_{\mathrm{ul}};Y_{\mathrm{ul}}|S_{\mathrm{ul}})+I(S_{\mathrm{ul}};Y_{\mathrm{dl}}|X_{\mathrm{dl}})\\
&\geq&W_m\bigg(\mathrm{log}\left|I_{N_{\mathrm{ul}}}+\bar{\lambda}\rho_{\mathrm{ul}}H_{\mathrm{ul}}(I_{M_{\mathrm{ul}}}+\bar{\lambda}\rho_{\mathrm{I}}H_{\mathrm{I}}^\dagger H_{\mathrm{I}})^{-1}H_{\mathrm{ul}}^\dagger\right|+\mathrm{log}\left|I_{N_{\mathrm{dl}}}+\bar{\lambda}\rho_{\mathrm{I}}H_{\mathrm{I}}H_{\mathrm{I}}^\dagger\right|\nonumber\\
&&-m_{\mathrm{ul}}\mathrm{log}M_{\mathrm{ul}}-m_{\mathrm{I}}\mathrm{log}(\!M_{\mathrm{ul}}+1)\bigg)\\
&=&W_m\bigg(\mathrm{log}\left|I_{M_{\mathrm{ul}}}+\bar{\lambda}\rho_{\mathrm{ul}}H_{\mathrm{ul}}^\dagger H_{\mathrm{ul}}(I_{M_{\mathrm{ul}}}+\bar{\lambda}\rho_{\mathrm{I}}H_{\mathrm{I}}^\dagger H_{\mathrm{I}})^{-1}\right|+\mathrm{log}\left|I_{M_{\mathrm{ul}}}+\bar{\lambda}\rho_{\mathrm{I}}H_{\mathrm{I}}^\dagger H_{\mathrm{I}}\right|\nonumber\label{step3}\\
&&-m_{\mathrm{ul}}\mathrm{log}M_{\mathrm{ul}}-m_{\mathrm{I}}\mathrm{log}(\!M_{\mathrm{ul}}+1)\bigg)\\
&=&W_m\bigg(\mathrm{log}\left|I_{M_{\mathrm{ul}}}+\bar{\lambda}\rho_{\mathrm{ul}}H_{\mathrm{ul}}^\dagger H_{\mathrm{ul}}+\bar{\lambda}\rho_{\mathrm{I}}H_{\mathrm{I}}^\dagger H_{\mathrm{I}}\right|-m_{\mathrm{ul}}\mathrm{log}M_{\mathrm{ul}}+m_{\mathrm{I}}\mathrm{log}(M_{\mathrm{ul}}+1)\bigg) \\
&\geq&W_m\bigg(\mathrm{log}\!\left|I_{N_{\mathrm{ul}}}+\bar{\lambda}\rho_{\mathrm{ul}}H_{\mathrm{ul}}H_{\mathrm{ul}}^\dagger\right|-m_{\mathrm{ul}}\mathrm{log}M_{\mathrm{ul}}-m_{\mathrm{I}}\mathrm{log}(\!M_{\mathrm{ul}}+1)\bigg),\label{step4}
\end{eqnarray}
where (\ref{step3}) and (\ref{step4}) follow from Sylvester's determinant theorem.

Finally, we compute $I(X_{\mathrm{dl}},S_{\mathrm{ul}};Y_{\mathrm{dl}})$ as follows,
\begin{eqnarray}
I(X_{\mathrm{dl}},S_{\mathrm{ul}};Y_{\mathrm{dl}})&=&W_m\bigg(\mathrm{log}\left|I_{N_{\mathrm{dl}}}+\frac{\rho_{\mathrm{dl}}}{M_{\mathrm{dl}}}H_{\mathrm{dl}}H_{\mathrm{dl}}^\dagger+\frac{\bar{\lambda}\rho_{\mathrm{I}}}{M_{\mathrm{ul}}}H_{\mathrm{I}} H_{\mathrm{I}}^\dagger\right|\nonumber\\
&&-\mathrm{log}\left|I_{N_{\mathrm{dl}}}+\bar{\lambda}\rho_{\mathrm{I}}H_{\mathrm{I}}K_u H_{\mathrm{I}}^\dagger\right|\bigg)\\
&\geq&W_m\Big(\mathrm{log}\left|I_{N_{\mathrm{dl}}}+\rho_{\mathrm{dl}}H_{\mathrm{dl}}H_{\mathrm{dl}}^\dagger+\bar{\lambda}\rho_{\mathrm{I}}H_{\mathrm{I}} H_{\mathrm{I}}^\dagger\right|-\hat{m}_{\mathrm{I}}\nonumber\\
&&-\min\{M_{\mathrm{dl}}+M_{\mathrm{ul}},N_{\mathrm{dl}}\}\mathrm{log}(\max\{M_{\mathrm{dl}},M_{\mathrm{ul}}\})\Big)\label{last},
\end{eqnarray}
where (\ref{last}) holds because the rank of the matrix $I_{N_{\mathrm{dl}}}+\frac{\rho_{\mathrm{dl}}}{M_{\mathrm{dl}}}H_{\mathrm{dl}}H_{\mathrm{dl}}^\dagger+\frac{\bar{\lambda}\rho_{\mathrm{I}}}{M_{\mathrm{ul}}}H_{\mathrm{I}} H_{\mathrm{I}}^\dagger$ is less than the rank of an enhanced multiple-access channel matrix by allowing full-cooperation between transmitters which is $\min\{M_{\mathrm{dl}}+M_{\mathrm{ul}},N_{\mathrm{dl}}\}$.

Combining all the expressions we derived above, we can obtain the capacity region inner bound as
 \footnote{When $W=0$, we define $W\mathrm{log}\left(1+\frac{x}{W}\right)\triangleq0$.}
 $\mathcal{R}_{\rm BC}(\mathcal{H})=\left\{(R_{\rm dl},R_{\rm ul}): R_{\rm dl}\leq\overline{C}_{\rm dl}-c_1,R_{\rm ul}\leq\overline{C}_{\rm ul}-c_2,R_{\rm sum}\leq\overline{C}_{\rm sum}-(c_1+c_2)\right\}$,
where $c_1$ and $c_2$ are given in (\ref{gap}).

\subsection{Useful Lemmas}\label{citelemma}
Random matrix theory plays a critical role in the analysis of MIMO wireless networks. Here we will restate some important properties of random matrices in the following lemmas which will be used for our derivation. 
\begin{lemma}(Lemma 3 in \cite{DMTMIMOZ})\label{lemmacite}
For a PTP channel, where $H\in \mathbb{C}^{M\times N}$ with i.i.d, $\mathcal{CN}(0,1)$ entries and the channel level is $\alpha$, the optimal DMT is equivalent to the minimum of the following optimization problem, 
\begin{gather}
\begin{aligned}
d(r)=&\min\sum_{i=1}^{\min(M,N)}(M+N+1-2i)x_i\\
\text{s.t}&~\sum_{i=1}^{\min(M,N)}(\alpha-x_i)^+\leq r\\
& 0\leq x_1\leq \cdots \leq x_{\min(M,N)},
\end{aligned}
\end{gather}
and the optimal solution is $d(r)=\alpha d_{M,N}(\frac{r}{\alpha})$, for $0\leq r\leq \min(M,N)\alpha$,
where $d_{M,N}(r)=(M-r)(N-r)$ is a piecewise linear curve joining the integer point $r\in[0,\min(M,N)]$.
\end{lemma}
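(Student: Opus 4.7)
The strategy is the standard Zheng--Tse DMT analysis applied to the eigenvalues of $HH^\dagger$, together with a simple scaling in the channel level $\alpha$. Write the outage event at target rate $R = r\log\rho$ as
\[
\mathfrak{O}(r) = \bigl\{H : \log\det\bigl(I + \rho^\alpha H H^\dagger\bigr) < r\log\rho\bigr\},
\]
so that $d(r) = \lim_{\rho\to\infty} -\log\Pr\{\mathfrak{O}(r)\}/\log\rho$. Let $\lambda_1\geq\cdots\geq\lambda_m$ be the ordered non-zero eigenvalues of $HH^\dagger$ with $m=\min(M,N)$, and introduce the SNR-exponent variables $\lambda_i = \rho^{-x_i}$, which reverses the ordering to $x_1\leq\cdots\leq x_m$. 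At high SNR, $\log(1+\rho^\alpha\lambda_i) \doteq (\alpha-x_i)^+\log\rho$, so $\mathfrak{O}(r)$ is exponentially equivalent to $\{\sum_i(\alpha-x_i)^+ < r\}$, which is exactly the constraint in the stated program.

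Next I would substitute the joint Wishart density
\[
f_\Lambda(\lambda) = K_{M,N}\prod_i \lambda_i^{|M-N|} e^{-\sum_i\lambda_i}\prod_{i<j}(\lambda_i-\lambda_j)^2
\]
together with the Jacobian $|d\lambda/dx| = (\log\rho)^m \rho^{-\sum_i x_i}$ and keep only the polynomial-in-$\rho$ terms. A short computation gives
\[
f_X(x) \doteq \rho^{-\sum_{i=1}^m (M+N+1-2i)\,x_i}, \qquad x_1\leq\cdots\leq x_m,\; x_i\geq 0,
\]
while $f_X$ is \emph{super}-polynomially small wherever some $x_i<0$, because $e^{-\lambda_i} = \exp(-\rho^{-x_i})$ decays doubly-exponentially in that regime; this justifies restricting the optimization to $x_i\geq 0$. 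Applying Laplace's principle to integrate $f_X$ over the (high-SNR limit of the) outage set yields
\[
-\lim_{\rho\to\infty}\frac{\log\Pr\{\mathfrak{O}(r)\}}{\log\rho} = \inf_{x\in\mathcal{O}} \sum_{i=1}^m (M+N+1-2i)\,x_i,
\]
which is precisely the stated optimization problem.

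For the closed-form, I would perform the affine rescaling $y_i = x_i/\alpha$: the feasible set becomes $\{0\leq y_1\leq\cdots\leq y_m,\; \sum_i(1-y_i)^+ \leq r/\alpha\}$, and the objective scales by $\alpha$, giving $d(r) = \alpha\,d^{(1)}(r/\alpha)$, where $d^{(1)}(\cdot)$ denotes the optimum with $\alpha=1$. The $\alpha=1$ problem is exactly the original Zheng--Tse program, whose minimum is the piecewise-linear curve joining the integer points $r=k \mapsto (M-k)(N-k)$; hence $d(r) = \alpha\,d_{M,N}(r/\alpha)$ on $0\leq r\leq m\alpha$.

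I expect the main technical obstacle to be the change-of-variables step: showing rigorously that the contribution from $\{x_i<0\}$ is negligible on the SNR-exponent scale, and that the Laplace reduction holds with a \emph{strict} inequality in the outage constraint (so that interior infima agree with the boundary value). Both are handled by a Varadhan-type argument as in Zheng--Tse; once past that, the scaling reduction and the closed-form evaluation are routine.
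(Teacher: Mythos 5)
The paper never proves this statement: it is quoted verbatim as Lemma~3 of \cite{DMTMIMOZ} and used as a black box in Appendix~C, so there is no in-paper proof to compare against. Your argument is the standard Zheng--Tse outage-exponent derivation and it is correct: with decreasing eigenvalues and $\lambda_i=\rho^{-x_i}$, the Wishart density contributes $|M-N|$ per variable from $\prod_i\lambda_i^{|M-N|}$, the Vandermonde-squared factor contributes $2(m-i)$ to the exponent of $x_i$ under the ordering $x_1\leq\cdots\leq x_m$, and the Jacobian adds one, giving exactly $\sum_i(M+N+1-2i)x_i$ on $x_i\geq 0$; the restriction to nonnegative exponents is justified, as you say, by the doubly-exponential decay of $e^{-\rho^{-x_i}}$; the mutual-information asymptotics give the constraint $\sum_i(\alpha-x_i)^+\leq r$ (the strict-versus-nonstrict issue is immaterial since the infimum is continuous in $r$); and the rescaling $y_i=x_i/\alpha$ reduces the level-$\alpha$ program to the $\alpha=1$ Zheng--Tse program, whose value at integer $r=k$ is $(m-k)(M+N-m-k)=(M-k)(N-k)$, yielding $d(r)=\alpha\, d_{M,N}(r/\alpha)$ on $[0,\min(M,N)\alpha]$. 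The one step you take on faith is identifying the optimal DMT with the outage exponent $\lim_{\rho\to\infty}-\log\Pr\{\mathfrak{O}(r)\}/\log\rho$; that coding-theoretic equivalence (outage converse plus Gaussian random-coding achievability at sufficient block length) is part of what the cited lemma packages, and the paper itself invokes it separately via $P_e^*(r)\doteq\Pr(\mathfrak{B})$ citing \cite{DMTTse}, so quoting it rather than re-deriving it is consistent with how the result is used here.
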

\begin{lemma}(Theorem 4 in \cite{DMTTse})\label{lemma2cite}
Let $H\in \mathbb{C}^{M\times N}$ have i.i.d, $\mathcal{CN}(0,1)$ entries. Suppose the nonzero ordered eigenvalues of $R=HH^\dagger$ are denoted by $\beta_1\geq\cdots\beta_q>0$, where $q=\min(M,N)$. Let $\beta_i=\rho^{-\mu_i},i\in[1,q]$, assuming that all the eigenvalues vary exponentially with SNR. Let $\bar{\mu}=\{\mu_{\mathrm{1}},\cdots,\mu_q\}$, thus the asymptotic   distribution of $\bar{\mu}$ is 
\begin{gather}
\begin{aligned}
p(\bar{\mu})\doteq
  \begin{cases}
   \rho^{-\sum_{i=1}^{q}(M+N+1-2i)\mu_i}&  ~~~~\text{if}~~0\leq\mu_1\leq\cdots\mu_q \\
   0  &~~~~\text{Otherwise},
   \end{cases}
\end{aligned}
\end{gather}
\end{lemma}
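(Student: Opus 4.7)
The plan is to start from the classical closed-form joint density of the ordered non-zero eigenvalues of the Wishart matrix $R=HH^{\dagger}$ and then perform a high-SNR change of variables. For $H\in\mathbb{C}^{M\times N}$ with i.i.d.\ $\mathcal{CN}(0,1)$ entries and $q=\min(M,N)$, the ordered positive eigenvalues $\beta_1\geq\cdots\geq\beta_q$ obey the well-known density
\[
p(\beta_1,\ldots,\beta_q)=K_{M,N}\,\exp\!\Big(-\sum_{i=1}^{q}\beta_i\Big)\prod_{i=1}^{q}\beta_i^{|M-N|}\prod_{1\leq i<j\leq q}(\beta_i-\beta_j)^2,
\]
which follows from the joint density of a complex Wishart matrix combined with the Weyl integration formula. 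This is the starting point I would use.

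Next I would apply the change of variables $\beta_i=\rho^{-\mu_i}$. The ordering $\beta_1\geq\cdots\geq\beta_q$ becomes $\mu_1\leq\cdots\leq\mu_q$, and the Jacobian produces $\prod_i(\ln\rho)\,\rho^{-\mu_i}$. The exponential factor $\exp(-\sum_i\rho^{-\mu_i})$ tends to $1$ as $\rho\to\infty$ provided every $\mu_i\geq 0$, whereas it decays at doubly-exponential rate in $\rho$ whenever some $\mu_i<0$; on the exponential-equivalence scale $\doteq$ the latter region is asymptotically negligible, so the effective support reduces to $0\leq\mu_1\leq\cdots\leq\mu_q$.

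The heart of the argument lies in evaluating the Vandermonde factor asymptotically. For $i<j$, the ordering $\beta_i\geq\beta_j$ gives $\beta_i-\beta_j\doteq\beta_i=\rho^{-\mu_i}$ because the larger eigenvalue dominates, so $(\beta_i-\beta_j)^2\doteq\rho^{-2\mu_i}$. Aggregating these contributions yields $\prod_{i<j}(\beta_i-\beta_j)^2\doteq\rho^{-2\sum_{i=1}^{q}(q-i)\mu_i}$. Multiplying by $\prod_i\beta_i^{|M-N|}\doteq\rho^{-|M-N|\sum_i\mu_i}$ and the Jacobian contribution $\rho^{-\sum_i\mu_i}$ (the $(\ln\rho)^q$ prefactor is subpolynomial and is absorbed by $\doteq$), the total $\rho$-exponent multiplying $\mu_i$ becomes $-(|M-N|+2(q-i)+1)$. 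Using $|M-N|=M+N-2q$ together with $q=\min(M,N)$ simplifies this to $-(M+N+1-2i)$, matching the claimed exponent.

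The main obstacle would be rigorously justifying the approximation $(\beta_i-\beta_j)^2\doteq\beta_i^2$ near the boundary $\mu_i=\mu_j$, where the true difference actually vanishes. This is handled by observing that the locus $\{\mu_i=\mu_j\}$ is a measure-zero boundary inside the $\bar{\mu}$-simplex; by a Laplace/Varadhan-type argument, any probability of interest is determined on the exponential scale only by its interior behaviour, where the dominant-eigenvalue approximation is valid. The resulting expression is precisely Theorem~4 of~\cite{DMTTse} and serves as the foundational estimate feeding the outage-exponent minimizations in Lemmas~\ref{doss} and~\ref{doss2} as well as the single-link DMT formula in Lemma~\ref{lemmacite}.
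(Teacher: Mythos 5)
The paper offers no proof of this lemma: it is imported verbatim as Theorem~4 of \cite{DMTTse}, so the only ``paper proof'' is the citation. Your argument correctly reconstructs the standard derivation behind that cited result --- the complex Wishart ordered-eigenvalue density, the substitution $\beta_i=\rho^{-\mu_i}$ with its Jacobian, the doubly-exponential suppression when some $\mu_i<0$, and the dominant-term treatment of the Vandermonde factor giving the exponent $|M-N|+2(q-i)+1=M+N+1-2i$ --- and your handling of the coincidence set $\{\mu_i=\mu_j\}$ (measure zero, irrelevant on the exponential scale for the outage integrals in Lemmas~\ref{doss} and~\ref{doss2}) matches how the original reference disposes of that boundary case.
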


\begin{lemma}(Theorem 1 and 2 in \cite{dmtdf})\label{lemma3cite}
Let $H_1\in \mathbb{C}^{N_2\times N_1}$ and $H_2\in \mathbb{C}^{N_2\times N_3}$ be two mutually independent random matrices with i.i.d, $\mathcal{CN}(0,1)$ entries.  Suppose the ordered eigenvalues of $V_1=H_1^\dagger(I_{N_2}+\rho^{\alpha} H_2H_2^\dagger)^{-1}H_1, V_2=H_2H_2^\dagger$ are denoted by $\beta_1\geq\cdots\beta_q>0, \eta_{\mathrm{1}}\geq\cdots\eta_p>0$ where $q=\min(N_1,N_2),p=\min(N_2,N_3)$. Let $\beta_i=\rho^{-\mu_i},i\in[1,q];\eta_k=\rho^{-\theta_k},k\in[0,p]$, and $\bar{\mu}=\{\mu_{\mathrm{1}},\cdots,\mu_q\},~\bar{\theta}=\{\theta_1,\cdots,\theta_p\}$. Hence the conditional distribution of $\bar{\mu}$ given $\bar{\theta}$ is
\begin{gather}
p(\bar{\mu}|\bar{\theta})\doteq
  \begin{cases}
   \rho^{-E_1(\bar{\mu},\bar{\theta})}&  ~~~~\text{if}~~(\bar{\mu},\bar{\theta})\in\mathcal{D}_1 \\
   0  &~~~~\text{Otherwise},
   \end{cases}
\end{gather}
where 
\begin{gather}
\begin{aligned}
E_1(\bar{\mu},\bar{\theta})=&\sum_{i=1}^{q}(N_1+N_2+1-2i)\mu_i+\sum_{i=1}^{q}\sum_{k=1}^{\min(N_2-i,N_3)}(\alpha-\mu_i-\theta_k)^+-N_1\sum_{k=1}^{p}(\alpha-\theta_k)^+\\
\mathcal{D}_1=&\left\{0\leq\mu_{\mathrm{1}}\leq\cdots\leq\mu_q;~0\leq\theta_{\mathrm{1}}\leq\cdots\leq\theta_p;~\mu_i+\theta_k\geq \alpha,\forall (i+k)\geq N_2+1\right\}. 
\end{aligned}
\end{gather}
\end{lemma}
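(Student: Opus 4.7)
The plan is to compute the conditional density $p(\bar\mu\mid\bar\theta)$ in three steps: (i) reduce to a colored-Wishart eigenvalue problem by conditioning on $H_2$ and exploiting the rotational invariance of $H_1$; (ii) write down the joint eigenvalue density of the resulting colored-Wishart matrix via the Itzykson--Zuber integral; and (iii) extract the exponential scaling by Laplace's method in the high-SNR limit.

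First I would condition on the SVD of $H_2$, writing $H_2H_2^\dagger=U\,\mathrm{diag}(\eta_1,\dots,\eta_p,0,\dots,0)\,U^\dagger$, so that $W\triangleq(I_{N_2}+\rho^\alpha H_2H_2^\dagger)^{-1}$ has eigenvalues $w_k=(1+\rho^{\alpha-\theta_k})^{-1}\doteq\rho^{-(\alpha-\theta_k)^+}$ for $k\le p$ and $w_k=1$ for $k>p$. Because $H_1$ is unitarily invariant in distribution, the eigenvalues of $V_1=H_1^\dagger WH_1$ are distributed as those of $H_1^\dagger\,\mathrm{diag}(w_1,\dots,w_{N_2})\,H_1$, so $p(\bar\mu\mid H_2)$ depends on $H_2$ only through $\bar\theta$, justifying the claimed form $p(\bar\mu\mid\bar\theta)$.

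Next I would invoke the joint density of ordered eigenvalues of the colored-Wishart matrix $H_1^\dagger DH_1$ with diagonal $D$, which, after the Itzykson--Zuber integration over the unitary group, factors into a Vandermonde factor $\prod_{i<j}(\beta_i-\beta_j)^2$, a power factor $\prod_i\beta_i^{|N_1-N_2|}$, a product $\prod_k w_k^{-N_1}$ (the Jacobian of the change of variables in the $D$-direction), and a determinant of exponentials $\det[\exp(-\beta_i/w_k)]$. Substituting $\beta_i=\rho^{-\mu_i}$ with the Jacobian $\prod_i\beta_i\ln\rho$, the Vandermonde and power factors produce the standard Zheng--Tse exponent $\sum_i(N_1+N_2+1-2i)\mu_i$. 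The determinant of exponentials, when expanded in the high-SNR limit, collapses term by term: each $\exp(-\beta_i/w_k)=\exp(-\rho^{-\mu_i+(\alpha-\theta_k)^+})$ equals $1$ when $\mu_i>(\alpha-\theta_k)^+$ and decays super-polynomially otherwise, yielding the pairwise contribution $\sum_{i,k}(\alpha-\mu_i-\theta_k)^+$. The normalization $-N_1\sum_k(\alpha-\theta_k)^+$ arises from $\prod_k w_k^{-N_1}\doteq\rho^{N_1\sum_k(\alpha-\theta_k)^+}$ moved into the exponent with the opposite sign by our convention of writing the density as $\rho^{-E_1}$.

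Finally, the support constraint $\mu_i+\theta_k\ge\alpha$ for $i+k\ge N_2+1$ I would derive from a Weyl/Courant--Fischer multiplicative eigenvalue inequality: for Hermitian p.s.d.\ $W$ and any $H_1$, $\beta_i(H_1^\dagger WH_1)\le\beta_j(H_1H_1^\dagger)\,\beta_{i-j+1}(W)$ for suitable indices, and taking $j$ so that $i-j+1=N_2-k+1$ (with $k\le p$) forces any configuration violating $\mu_i+\theta_k\ge\alpha$ to have vanishing probability, i.e., the density is zero outside $\mathcal D_1$. The main obstacle will be the careful Laplace bookkeeping in step (ii): identifying the dominant permutation in the expansion of $\det[\exp(-\beta_i/w_k)]$ at high SNR and checking that the dominant contribution, summed over all permutations with their signs, still collapses to the positive-part sum $\sum_{i,k}(\alpha-\mu_i-\theta_k)^+$ without cancellation; the standard trick is to use the ordering $\mu_1\le\cdots\le\mu_q$ and $\theta_1\le\cdots\le\theta_p$ to pin down the identity permutation as the one producing the smallest exponent.
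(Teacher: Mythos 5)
This lemma is nowhere proved in the paper: it is quoted, with its exponent and support set, as Theorem 1 and 2 of the reference cited for it, and is simply listed among the ``Useful Lemmas'' in the appendix. So the only meaningful comparison is with the route that the cited literature takes, which is essentially the one you propose: condition on $H_2$, use unitary invariance of $H_1$ to reduce to a one-sided weighted Wishart $\sum_k w_k h_k h_k^\dagger$ with $w_k\doteq\rho^{-(\alpha-\theta_k)^+}$, insert the known joint eigenvalue density (an HCIZ-type determinantal formula), and extract the SNR exponent. Your step (i) is correct, and your Weyl/Courant--Fischer argument for the support constraint $\mu_i+\theta_k\ge\alpha$ for $i+k\ge N_2+1$ does work (it shows super-polynomial decay of the probability outside $\mathcal{D}_1$, which is what $\doteq 0$ means here).

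The genuine gap is in step (ii), which is the heart of the derivation. The density you write down is not the correct one: for $H_1^\dagger D H_1$ with diagonal $D$, the HCIZ integration leaves a \emph{single} Vandermonde $\prod_{i<j}(\beta_i-\beta_j)$ in the numerator (not the square) together with $\det\left[e^{-\beta_i/w_k}\right]$ divided by a Vandermonde in the weights $w_k$; moreover, here $N_2-p$ of the weights are identically $1$, so the generic formula is degenerate and must be taken in a confluent limit. Because of this, your bookkeeping (``Vandermonde squared $+$ $\prod_i\beta_i^{|N_1-N_2|}$ $+$ Jacobian gives the Zheng--Tse term; the exponential determinant gives $\sum_{i,k}(\alpha-\mu_i-\theta_k)^+$'') does not go through: with the correct density the Zheng--Tse term is not produced by the $\beta$-factors alone, and, as you implicitly concede, each entry $e^{-\beta_i/w_k}$ is individually either $\doteq 1$ or super-polynomially small, so no entrywise evaluation can yield the polynomial cross terms $\rho^{-(\alpha-\mu_i-\theta_k)^+}$. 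Those arise only from the near-cancellation of rows and columns of nearly equal entries combined with the $w$-Vandermonde in the denominator --- precisely the dominant-permutation/confluent analysis you defer as ``the main obstacle.'' Until that analysis is carried out, and shown to regenerate simultaneously the missing part of the $\mu$-exponent, the cross terms with the correct index range, and the factor $\rho^{N_1\sum_k(\alpha-\theta_k)^+}$, the proposal does not establish $E_1$; as written, the middle step would produce a different exponent.
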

\subsection{Proof of Lemma~\ref{doss}}\label{dos}
From Corollary~1, we can express the high SNR asymptotic sum-capacity as 
\begin{gather}
\begin{aligned}
 {C}_{\rm sum}(\mathcal{H})\doteq&\max_{\substack{0\leq\lambda\leq 1}}~ F(\mathcal{H},\lambda,\bar{\lambda})
\end{aligned}
\end{gather}
where 
\begin{gather}
\begin{aligned}
F(\mathcal{H},\lambda,\bar{\lambda})&=W_m\bigg(\mathrm{log}\left|I_{N_{\mathrm{dl}}}+\rho_{\mathrm{dl}}H_{\mathrm{dl}}H_{\mathrm{dl}}^\dagger+\bar{\lambda}\rho_{\mathrm{I}}H_{\mathrm{I}}H_{\mathrm{I}}^\dagger\right|+W\mathrm{log}\left|I_{N_{\mathrm{dl}}}+\frac{\lambda\rho_{\mathrm{S}}}{W}H_{\mathrm{S}}H_{\mathrm{S}}^{\dagger}\right|\\
&+\mathrm{log}\left|I_{N_{\mathrm{ul}}}+\bar{\lambda}\rho_{\mathrm{ul}}H_{\mathrm{ul}}(I_{M_{\mathrm{ul}}}+\bar{\lambda}\rho_{\mathrm{I}}H_{\mathrm{I}}^\dagger H_{\mathrm{I}})^{-1}H_{\mathrm{ul}}^\dagger\right|\bigg),\\
&=W_m\bigg(\mathrm{log}\left|I_{M_{\mathrm{dl}}}+\rho_{\mathrm{dl}}H_{\mathrm{dl}}^\dagger(I_{N_{\mathrm{dl}}}+\bar{\lambda}\rho_{\mathrm{I}}H_{\mathrm{I}} H_{\mathrm{I}}^\dagger)^{-1}H_{\mathrm{dl}}\right|+\mathrm{log}\left|I_{N_{\mathrm{dl}}}+\bar{\lambda}\rho_{\mathrm{I}}H_{\mathrm{I}}H_{\mathrm{I}}^\dagger\right|\\
&+\mathrm{log}\left|I_{N_{\mathrm{ul}}}+\bar{\lambda}\rho_{\mathrm{ul}}H_{\mathrm{ul}}(I_{M_{\mathrm{ul}}}+\bar{\lambda}\rho_{\mathrm{I}}H_{\mathrm{I}}^\dagger H_{\mathrm{I}})^{-1}H_{\mathrm{ul}}^\dagger\right|+W\mathrm{log}\left|I_{N_{\mathrm{dl}}}+\frac{\lambda\rho_{\mathrm{S}}}{W}H_{\mathrm{S}}H_{\mathrm{S}}^{\dagger}\right|\bigg).\nonumber
\end{aligned}
\end{gather}
The ordered eigenvalues of $G_1=H_{\mathrm{dl}}^\dagger(I_{N_{\mathrm{dl}}}+\bar{\lambda}\rho_{\mathrm{I}}H_{\mathrm{I}}H_{\mathrm{I}}^\dagger )^{-1}H_{\mathrm{dl}}, G_2=H_{\mathrm{ul}}(I_{M_{\mathrm{ul}}}+\bar{\lambda}\rho_{\mathrm{I}}H_{\mathrm{I}}^\dagger H_{\mathrm{I}})^{-1}H_{\mathrm{ul}}^\dagger$, $G_3=H_{\mathrm{I}}H_{\mathrm{I}}^\dagger$ and $G_4=H_{\mathrm{S}}H_{\mathrm{S}}^{\dagger}$ are denoted by $\beta_{\mathrm{1}}\geq\cdots\beta_{m_{\mathrm{dl}}}>0, \gamma_{\mathrm{1}}\geq\cdots\gamma_{m_{\mathrm{ul}}}>0, \eta_{\mathrm{1}}\geq\cdots\eta_{m_{\mathrm{I}}}>0$ and $\xi_{\mathrm{1}}\geq\cdots\xi_{m_{\mathrm{I}}}>0$. Let $\beta_i=\rho^{-\mu_i},i\in[1,m_{\mathrm{dl}}]; \gamma_j=\rho^{-\sigma_j}, j\in[1,m_{\mathrm{ul}}];\eta_k=\rho^{-\theta_k},k\in[0,m_{\mathrm{I}}];~\xi_l=\rho^{-\nu_l},l\in[0,m_{\mathrm{I}}].$ When $\rho\rightarrow\infty$, we have
\begin{gather}
\begin{aligned}
&\rho^{-d_{\mathfrak{B}_{\mathrm{sum}}}(r_{\mathrm{sum}})}\doteq\mathrm{Pr}\left(C_{\mathrm{sum}}<W_mr_{\mathrm{sum}}\mathrm{log}\rho\right)\\
&\doteq\mathrm{Pr}\bigg(\max_{0\leq \lambda\leq 1}~\prod_{i=1}^{m_{\mathrm{dl}}}(1+\rho^{\alpha_{\mathrm{dl}}}\beta_i)\prod_{j=1}^{m_{\mathrm{ul}}}(1+\bar{\lambda}\rho^{\alpha_{\mathrm{ul}}}\gamma_j)\prod_{k=1}^{m_{\mathrm{I}}}(1+\bar{\lambda}\rho^{\alpha_{\mathrm{I}}}\eta_k)
\left(\prod_{l=1}^{m_{\mathrm{I}}}(1+\frac{\lambda}{W}\rho^{\alpha_{\mathrm{S}}}\xi_l)\right)^W<\rho^{r_{\mathrm{sum}}}\bigg) \\
&\doteq\mathrm{Pr}\bigg(\max_{0\leq \lambda\leq 1}~(\bar{\lambda})^{m_{\mathrm{ul}}+m_{\mathrm{I}}}\left(\frac{\lambda}{W}\right)^{Wm_{\mathrm{I}}}\prod_{i=1}^{m_{\mathrm{dl}}}\rho^{(\alpha_{\mathrm{dl}}-\mu_i)^+}\prod_{j=1}^{m_{\mathrm{ul}}}\rho^{(\alpha_{\mathrm{ul}}-\sigma_j)^+}\prod_{k=1}^{m_{\mathrm{I}}}\rho^{(\alpha_{\mathrm{I}}-\theta_k)^+}
\prod_{l=1}^{m_{\mathrm{I}}}\rho^{W(\alpha_{\mathrm{S}}-\nu_l)^+}<\rho^{r_{\mathrm{sum}}}\bigg) \label{p0}
\end{aligned}
\end{gather}
where $m_{\mathrm{dl}},~m_{\mathrm{ul}}$ and $m_{\mathrm{I}}$ are defined in (\ref{gap}).

For any channel realization $\mathcal{H}$ in a particular fade period, we have $F(\mathcal{H}, \lambda=\bar{\lambda}=0.5)\leq {C}_{\rm sum}(\mathcal{H})< F(\mathcal{H}, \lambda=\bar{\lambda}=1)$, hence the sum-capacity outage event $\mathfrak{B}_{\rm sum}\triangleq\{R_{\mathrm{sum}}\notin {C}_{\rm sum}(\mathcal{H}) \}$ can be bounded as $\{R_{\mathrm{sum}}\notin F(\mathcal{H}, \lambda=\bar{\lambda}=1)\}\subset\mathfrak{B}_{\rm sum}\subseteq\{R_{\mathrm{sum}}\notin F(\mathcal{H}, \lambda=\bar{\lambda}=0.5)\}$. Consequently, we have 
\begin{gather}
\text{Pr}\left(R_{\mathrm{sum}}\notin F(\mathcal{H}, \lambda=\bar{\lambda}=1)\right)<\rho^{-d_{\mathfrak{B}_{\mathrm{sum}}}(r_{\mathrm{sum}})}\leq \text{Pr}\left(R_{\mathrm{sum}}\notin F(\mathcal{H}, \lambda=\bar{\lambda}=0.5)\right). \label{sumpr}
\end{gather}

From (\ref{sumpr}), we can see that when $\rho\rightarrow\infty$, $\rho^{-d_{\mathfrak{B}_{\mathrm{sum}}}(r_{\mathrm{sum}})}$
converges to the following result as $\frac{\lambda}{W},\bar{\lambda}$ do not grow at the same rate as $\rho$ thus can be ignored on the scale of interest
\begin{gather}
\begin{aligned}
\rho^{-d_{\mathfrak{B}_{\mathrm{sum}}}(r_{\mathrm{sum}})}&\doteq\mathrm{Pr}\left(\sum_{i=1}^{m_{\mathrm{dl}}}(\alpha_{\mathrm{dl}}-\mu_i)^++\sum_{j=1}^{m_{\mathrm{ul}}}(\alpha_{\mathrm{ul}}-\sigma_j)^++\sum_{k=1}^{m_{\mathrm{I}}}(\alpha_{\mathrm{I}}-\theta_k)^++W\sum_{l=1}^{m_{\mathrm{I}}}(\alpha_{\mathrm{S}}-\nu_l)^+
<r_{\mathrm{sum}}\right). \label{keydmt}
\end{aligned}
\end{gather}
Let $\bar{\mu}=\{\mu_{\mathrm{1}},\cdots,\mu_{m_{\mathrm{dl}}}\},~\bar{\sigma}=\{\sigma_{\mathrm{1}},\cdots,\sigma_{m_{\mathrm{ul}}}\},~\bar{\theta}=\{\theta_{\mathrm{1}},\cdots,\theta_{m_{\mathrm{I}}}\}$ and $\bar{\nu}=\{\nu_{\mathrm{1}},\cdots,\nu_{m_{\mathrm{I}}}\}$. The joint distribution of $\bar{\mu}, \bar{\sigma}, \bar{\theta}$ and $\bar{\nu}$ can be calculated as
\begin{eqnarray}
p(\bar{\mu},\bar{\sigma},\bar{\theta},\bar{\nu})&=&p(\bar{\mu},\bar{\sigma},\bar{\theta})p(\bar{\nu})\label{p1}\\
&=&p(\bar{\mu}\bar{\sigma}|\bar{\theta})p(\bar{\theta})p(\bar{\nu})\\
&=&p(\bar{\mu}|\bar{\theta})p(\bar{\sigma}|\bar{\theta})p(\bar{\theta})p(\bar{\nu}) \label{p2}
\end{eqnarray}
where (\ref{p1}) follows from the fact that matrix $G_4$ is independent of other matrices; (\ref{p2}) follows from random matrix theory that the dependence of $G_1$ and $G_2$ is only through the eigenvalues of $G_3$. Thus given the eigenvalues of $G_3$, the eigenvalues of $G_1$ and $G_2$ are conditionally independent.  

By invoking Lemma~\ref{lemma2cite} and Lemma~\ref{lemma3cite}, we have
\[p(\bar{\mu},\bar{\sigma},\bar{\theta},\bar{\nu})\doteq
  \begin{cases}
   \rho^{-E(\bar{\mu},\bar{\sigma},\bar{\theta},\bar{\nu})}&  ~~~\text{if}~~(\bar{\mu},\bar{\sigma},\bar{\theta},\bar{\nu})\in\mathcal{D} \\
   0  &~~~\text{Otherwise},
   \end{cases}
\]
where \begin{gather}
\begin{aligned}
E(\bar{\mu},\bar{\sigma},\bar{\theta},\bar{\nu})=&\left\{\sum_{i=1}^{m_{\mathrm{dl}}}(M_{\mathrm{dl}}+N_{\mathrm{dl}}+1-2i)\mu_i+\sum_{j=1}^{m_{\mathrm{ul}}}(M_{\mathrm{ul}}+N_{\mathrm{ul}}+1-2j)\sigma_j-(M_{\mathrm{dl}}+N_{\mathrm{ul}})m_{\mathrm{I}}\alpha_{\mathrm{I}}\right.\\
&\left.+\sum_{k=1}^{m_{\mathrm{I}}}(M_{\mathrm{dl}}+N_{\mathrm{ul}}+M_{\mathrm{ul}}+N_{\mathrm{dl}}+1-2k)\theta_k+\sum_{l=1}^{m_{\mathrm{I}}}(M_{\mathrm{ul}}+N_{\mathrm{dl}}+1-2l)\nu_l\right.\\
&\left. \sum_{i=1}^{m_{\mathrm{dl}}}\sum_{k=1}^{\min\{N_{\mathrm{dl}}-i,M_{\mathrm{ul}}\}}(\alpha_{\mathrm{I}}-\mu_i-\theta_k)^++\sum_{j=1}^{m_{\mathrm{ul}}}\sum_{k=1}^{\min\{M_{\mathrm{ul}}-j,N_{\mathrm{dl}}\}}(\alpha_{\mathrm{I}}-\sigma_j-\theta_k)^+\right\},
\end{aligned}
\end{gather}
\begin{gather}
\begin{aligned}
\mathcal{D}=&\Bigg\{\sum_{i=1}^{m_{\mathrm{dl}}}(\alpha_{\mathrm{dl}}-\mu_i)^++\sum_{j=1}^{m_{\mathrm{ul}}}(\alpha_{\mathrm{ul}}-\sigma_j)^++\sum_{k=1}^{m_{\mathrm{I}}}(\alpha_{\mathrm{I}}-\theta_k)^++W\sum_{l=1}^{m_{\mathrm{I}}}(\alpha_{\mathrm{S}}-\nu_l)^+
<r_{\mathrm{sum}};\\
&0\leq\mu_{\mathrm{1}}\leq\cdots\leq\mu_{m_{\mathrm{dl}}};~0\leq\sigma_{\mathrm{1}}\leq\cdots\leq\sigma_{m_{\mathrm{ul}}};~0\leq\theta_{\mathrm{1}}\leq\cdots\leq\theta_{m_{\mathrm{I}}};~0\leq\nu_{\mathrm{1}}\leq\cdots\leq\nu_{m_{\mathrm{I}}};\\
&\mu_i+\theta_k\geq\alpha_{\mathrm{I}},\forall (i+k)\geq N_{\mathrm{dl}}+1;~\sigma_j+\theta_k\geq\alpha_{\mathrm{I}},\forall (j+k)\geq M_{\mathrm{ul}}+1\Bigg\}. 
\end{aligned}
\end{gather}
With the joint distribution of $p(\bar{\mu},\bar{\sigma},\bar{\theta},\bar{\nu})$ we have obtained above, the outage probability is:
\begin{gather}
\begin{aligned}
\mathrm{Pr}(\mathfrak{B}_{\rm sum})&\doteq \int_{\mathcal{D}} p(\bar{\mu},\bar{\sigma},\bar{\theta},\bar{\nu})\doteq\int_{\mathcal{D}} \rho^{-E(\bar{\mu},\bar{\sigma},\bar{\theta},\bar{\nu})}.
\end{aligned}
\end{gather}
Using Laplace's principle, step (\ref{keydmt}) can be calculated by minimizing the SNR exponent $E(\bar{\mu},\bar{\sigma},\bar{\theta},\bar{\nu})$ which has the dominant probability.
Thus we have 
\begin{gather}
\begin{aligned}
d_{\mathfrak{B}_{\mathrm{sum}}}=\min_{(\bar{\mu},\bar{\sigma},\bar{\theta},\bar{\nu})\in\mathcal{D}} E(\bar{\mu},\bar{\sigma},\bar{\theta},\bar{\nu}),
\end{aligned}
\end{gather}
which proves Lemma~\ref{doss}.
\subsection{Proof of Lemma~\ref{doss2}}\label{dos2}
We first express the asymptotic achievable sum-rate (with $\lambda=\bar{\lambda}=0.5$) as follows
\begin{gather}
\begin{aligned}
I_{\mathrm{sum}}\doteq&W_m\bigg(\mathrm{log}\left|I_{M_{\mathrm{dl}}}+\rho_{\mathrm{dl}}H_{\mathrm{dl}}^\dagger(I_{N_{\mathrm{dl}}}+\bar{\lambda}\rho_{\mathrm{I}}H_{\mathrm{I}}H_{\mathrm{I}}^\dagger )^{-1}H_{\mathrm{dl}}\right|\\
&+\mathrm{log}\left|I_{N_{\mathrm{dl}}}+\bar{\lambda}\rho_{\mathrm{I}}H_{\mathrm{I}}H_{\mathrm{I}}^\dagger\right|+W\mathrm{log}\left|I_{N_{\mathrm{dl}}}+\frac{\lambda\rho_{\mathrm{S}}}{W}H_{\mathrm{S}}H_{\mathrm{S}}^{\dagger}\right|\bigg).
\end{aligned}
\end{gather}
We still use the same notations defined in Appendix~\ref{dos} to represent the ordered eigenvalue of $G_1=H_{\mathrm{dl}}^\dagger(I_{N_{\mathrm{dl}}}+\bar{\lambda}\rho_{\mathrm{I}}H_{\mathrm{I}}H_{\mathrm{I}}^\dagger )^{-1}H_{\mathrm{dl}}, G_3=H_{\mathrm{I}}H_{\mathrm{I}}^\dagger$ and $G_4=H_{\mathrm{S}}H_{\mathrm{S}}^{\dagger}$.
Thus we obtain that
\[\rho^{-d_{\mathfrak{O}_{\mathrm{sum}}}(r_{\mathrm{sum}})}\doteq\mathrm{Pr}\left(\sum_{i=1}^{m_{\mathrm{dl}}}(\alpha_{\mathrm{dl}}-\mu_i)^++\sum_{k=1}^{m_{\mathrm{I}}}(\alpha_{\mathrm{I}}-\theta_k)^++W\sum_{l=1}^{m_{\mathrm{I}}}(\alpha_{\mathrm{S}}-\nu_l)^+< r_{\mathrm{sum}}\right).\]
The joint distribution of $(\bar{\mu},\bar{\theta},\bar{\nu})$ can be derived by following the same steps in Appendix~\ref{dos}. Likewise, Lemma~\ref{doss2} can be proved and we omit the steps to avoid redundancy.

\subsection{DMT calculation of $(M,N_{\rm dl},M_{\rm ul},M)$ with and without CSIT}\label{caldmt}
\begin{lemma}\label{case3}
For the $(M,N_{\mathrm{dl}},M_{\mathrm{ul}},M)$ side-channel assisted full-duplex network with $ \alpha_{\mathrm{dl}}=\alpha_{\mathrm{ul}}=\alpha_{\mathrm{I}}=1$ and with CSIT, the optimal DMT at multiplexing gain pair $(r_{\mathrm{dl}},r_{\mathrm{ul}})$ is
 \begin{equation}
d^{\text{CSIT,opt}}_{(M,N_{\mathrm{dl}},M_{\mathrm{ul}},M)}(r_{\mathrm{dl}},r_{\mathrm{ul}})=\min\{d_{M,N_{\mathrm{dl}}}(r_{\mathrm{dl}}), d_{M_{\mathrm{ul}},M}(r_{\mathrm{ul}}),d_{{\mathrm{sum}}(M,N_{\mathrm{dl}},M_{\mathrm{ul}},M)}^{\text{CSIT}}(r_{\mathrm{sum}})\}.
\end{equation}
where $d_{{\mathrm{sum}}(M,N_{\mathrm{dl}},M_{\mathrm{ul}},M)}^{\text{CSIT}}(r_{\mathrm{sum}})$ is given as:\begin{itemize}
\item when $W\leq\frac{\left|M_{\mathrm{ul}}-N_{\mathrm{dl}}\right|+1}{2M+M_{\mathrm{ul}}+N_{\mathrm{dl}}-1}$,
\[\!\!\!\!\!\!\!\!\!\!\!\!\!\!d_{{\mathrm{sum}}(M,N_{\mathrm{dl}},M_{\mathrm{ul}},M)}^{\text{CSIT}}(r_{\mathrm{sum}})\!\!=\!\!\left\{
  \begin{array}{l l}
  \!\!   \alpha_{\mathrm{S}} d_{M_{\mathrm{ul}},N_{\mathrm{dl}}}\left(\frac{r_{\mathrm{sum}}}{W\alpha_{\mathrm{S}}}\right)\!+\!M_{\mathrm{ul}}N_{\mathrm{dl}}\!+\!M(M_{\mathrm{ul}}\!+\!N_{\mathrm{dl}}),~r_{\mathrm{sum}}\! \leq \!m_{\mathrm{I}}W\alpha_{\mathrm{S}}\\
   \!\!  d_{m_{\mathrm{I}},2M+m_X}\left(r_{\mathrm{sum}}\!-\!m_{\mathrm{I}}W\alpha_{\mathrm{S}}\right)\!+\!M|M_{\mathrm{ul}}\!-\!N_{\mathrm{dl}}|, ~ m_{\mathrm{I}}W\alpha_{\mathrm{S}}\!\leq\! r_{\mathrm{sum}}\!\leq\! m_{\mathrm{I}}(1\!+\!W\alpha_{\mathrm{S}})\\
  \!\!   d_{|M_{\mathrm{ul}}\!-\!N_{\mathrm{dl}}|,M}\left(r_{\mathrm{sum}}\!-\!m_{\mathrm{I}}(1\!+\!W\alpha_{\mathrm{S}})\right), ~ m_{\mathrm{I}}(1\!\!+\!\!W\alpha_{\mathrm{S}})\!\leq \!r_{\mathrm{sum}}\!\leq \! m_X\!+\!m_{\mathrm{I}}W\alpha_{\mathrm{S}} \end{array} \right.
\]
\item when $W\in\left[\frac{M_{\mathrm{ul}}+N_{\mathrm{dl}}-1}{2M+|M_{\mathrm{ul}}-N_{\mathrm{dl}}|+1},\frac{|M_{\mathrm{ul}}-N_{\mathrm{dl}}|+1}{M+|M_{\rm ul}-N_{\rm dl}|-1}\right]$,
\[\!\!\!\!\!\!\!\!\!\!\!\!\!\!d_{{\mathrm{sum}}(M,N_{\mathrm{dl}},M_{\mathrm{ul}},M)}^{\text{CSIT}}(r_{\mathrm{sum}})\!\!=\!\!\left\{
  \begin{array}{l l}
\!\!   d_{m_{\mathrm{I}},2M+m_X}\left(r_{\mathrm{sum}}\right)\!+\!M_{\mathrm{ul}}N_{\mathrm{dl}}\alpha_{\mathrm{S}}\!+\!M|M_{\mathrm{ul}}\!-\!N_{\mathrm{dl}}|, ~ r_{\mathrm{sum}}\!\leq \!m_{\mathrm{I}}\\
 \!\!     \alpha_{\mathrm{S}} d_{M_{\mathrm{ul}},N_{\mathrm{dl}}}\left(\frac{r_{\mathrm{sum}}-m_{\mathrm{I}}}{W\alpha_{\mathrm{S}}}\right)\!+\!M|M_{\mathrm{ul}}\!-\!N_{\mathrm{dl}}|, ~ m_{\mathrm{I}}\!\leq\! r_{\mathrm{sum}} \!\leq\! m_{\mathrm{I}}(1\!+\!W\alpha_{\mathrm{S}})\\
  \!\!   d_{|M_{\mathrm{ul}}\!-\!N_{\mathrm{dl}}|,M}\left(r_{\mathrm{sum}}\!-\!m_{\mathrm{I}}(1\!+\!W\alpha_{\mathrm{S}})\right), ~ m_{\mathrm{I}}(1\!\!+\!\!W\alpha_{\mathrm{S}})\!\leq \!r_{\mathrm{sum}}\!\leq \! m_X\!+\!m_{\mathrm{I}}W\alpha_{\mathrm{S}}
   \end{array} \right.
\]
\item when $W\geq \frac{M_{\mathrm{ul}}+N_{\mathrm{dl}}-1}{M-|M_{\rm ul}-N_{\rm dl}|+1}$,
\[\!\!\!\!\!\!\!\!\!\!\!\!\!\!d_{{\mathrm{sum}}(M,1,1,M)}^{\text{CSIT}}(r_{\mathrm{sum}})\!\!=\!\!\left\{
  \begin{array}{l l}
 \!\!  d_{m_{\mathrm{I}},2M+m_X}\left(r_{\mathrm{sum}}\right)\!\!+\!\!M_{\mathrm{ul}}N_{\mathrm{dl}}\alpha_{\mathrm{S}}\!+\!M|M_{\mathrm{ul}}\!-\!N_{\mathrm{dl}}|, ~  r_{\mathrm{sum}}\!\leq\! m_{\mathrm{I}}\\
  \!\!   d_{|M_{\mathrm{ul}}-N_{\mathrm{dl}}|,M}\left(r_{\mathrm{sum}}\!-\!m_{\mathrm{I}}\right)\!+\!M_{\mathrm{ul}}N_{\mathrm{dl}}\alpha_{\mathrm{S}}, ~ m_{\mathrm{I}}\!\leq \!r_{\mathrm{sum}}\!\leq \!m_X\\
 \!\!    \alpha_{\mathrm{S}} d_{M_{\mathrm{ul}},N_{\mathrm{dl}}}\left(\frac{r_{\mathrm{sum}}-m_X}{W\alpha_{\mathrm{S}}}\right),~ m_X\!\leq \!r_{\mathrm{sum}}\! \leq\! m_X\!+\!m_{\mathrm{I}}W\alpha_{\mathrm{S}}
   \end{array} \right.
\]
\end{itemize}
where $m_{\mathrm{I}}=\min\{M_{\mathrm{ul}},N_{\mathrm{dl}}\}, m_X=\max\{M_{\mathrm{ul}},N_{\mathrm{dl}}\}.$
\end{lemma}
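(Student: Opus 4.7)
The plan is to specialize the convex optimization problem in Lemma~\ref{doss} to the antenna configuration $(M,N_{\mathrm{dl}},M_{\mathrm{ul}},M)$ and apply the gradient-descent argument used in Corollary~\ref{coroex1}. Since the program is convex with linear constraints, any local minimum is global, so it suffices to identify the steepest descent direction at each stage and follow it until a new constraint becomes active. The individual DMT bounds $d_{M,N_{\mathrm{dl}}}(r_{\mathrm{dl}})$ and $d_{M_{\mathrm{ul}},M}(r_{\mathrm{ul}})$ come directly from (\ref{diP2P}), so the work is to characterize the sum-rate term $d^{\text{CSIT}}_{\mathrm{sum}}(r_{\mathrm{sum}})$.

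First, I would substitute $\nu'_l = W\nu_l$ to absorb the bandwidth factor and specialize $m_{\mathrm{dl}} = N_{\mathrm{dl}}$, $m_{\mathrm{ul}} = M_{\mathrm{ul}}$, $m_{\mathrm{I}} = \min(M_{\mathrm{ul}},N_{\mathrm{dl}})$. Next, I would inspect the partial derivatives of the objective with respect to each $\mu_i$, $\sigma_j$, $\theta_k$, and $\nu'_l$. The coefficients that appear --- essentially $M+N_{\mathrm{dl}}+1-2i$ for $\mu_i$, $M+M_{\mathrm{ul}}+1-2j$ for $\sigma_j$, $2M+M_{\mathrm{ul}}+N_{\mathrm{dl}}+1-2k$ for $\theta_k$, and $1/W$ for $\nu'_l$ --- admit a simple ordering that depends on $W$, and this ordering determines which variable should be reduced first as $r_{\mathrm{sum}}$ grows.

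Then, for each of the three bandwidth regimes stated in the lemma, I would run the descent phase by phase. Each phase corresponds to decreasing one block of variables while the others remain pinned to their active constraints, and contributes a linear segment whose slope matches one of the building-block functions $\alpha_{\mathrm{S}}d_{M_{\mathrm{ul}},N_{\mathrm{dl}}}(\cdot)$, $d_{m_{\mathrm{I}},2M+m_X}(\cdot)$, or $d_{|M_{\mathrm{ul}}-N_{\mathrm{dl}}|,M}(\cdot)$. The transition points between segments are exactly the $r_{\mathrm{sum}}$ values at which the next block becomes active or the previous block saturates at zero; these coincide with the breakpoints $m_{\mathrm{I}}W\alpha_{\mathrm{S}}$, $m_{\mathrm{I}}(1+W\alpha_{\mathrm{S}})$, $m_{\mathrm{I}}$, $m_X$, and $m_X+m_{\mathrm{I}}W\alpha_{\mathrm{S}}$ in the statement, and the regime thresholds on $W$ follow from comparing ratios of the gradient coefficients.

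The main obstacle will be handling the coupling constraints $\mu_i+\theta_k\geq\alpha_{\mathrm{I}}$ and $\sigma_j+\theta_k\geq\alpha_{\mathrm{I}}$ together with the monotonicity constraints $\mu_1\leq\cdots\leq\mu_{m_{\mathrm{dl}}}$ and their analogues, particularly in the asymmetric case $M_{\mathrm{ul}}\neq N_{\mathrm{dl}}$ where the effective null-space dimensions seen by the downlink and uplink differ. Careful bookkeeping is needed to identify which pairs $(i,k)$ and $(j,k)$ bind at each stage of the descent and to verify that summing the segmented contributions collapses into the compact closed-form expression. Once the active-set structure is pinned down, combining the sum-rate bound with the two individual bounds via $\min\{\cdot,\cdot,\cdot\}$ yields the stated optimal DMT.
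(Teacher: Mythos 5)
Your proposal follows essentially the same route as the paper's proof in Appendix G: specialize the convex program of Lemma~\ref{doss} to $(M,N_{\mathrm{dl}},M_{\mathrm{ul}},M)$, substitute $\nu_l'=W\nu_l$, compare the gradient coefficients of the blocks $\bar\mu,\bar\sigma,\bar\theta,\bar\nu'$ to determine the descent order in each $W$ regime, and solve each phase via the point-to-point optimization of Lemma~\ref{lemmacite} while the coupling constraints pin the remaining variables. The only slip is that the gradient in $\nu_l'$ is $\frac{1}{W}(M_{\mathrm{ul}}+N_{\mathrm{dl}}+1-2l)$ rather than $\frac{1}{W}$, which is exactly what produces the stated regime thresholds, so keep the per-index coefficients when ordering the blocks.
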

\begin{proof}
The details of the proof can be found in Appendix~\ref{dos4}.
\end{proof}

The achievable DMT of $(M,N_{\mathrm{dl}},M_{\mathrm{ul}},M)$ without CSIT is given in the following lemma.
\begin{lemma} \label{case33}
Consider the case in Lemma~\ref{case3} under no-CSIT assumption, the achievable DMT at multiplexing gain pair $(r_{\mathrm{dl}},r_{\mathrm{ul}})$ is
 \[
d_{(M,N_{\mathrm{dl}},M_{\mathrm{ul}},M)}^{\text{No-CSIT}}(r_{\mathrm{dl}},r_{\mathrm{ul}})=\min\{d_{M,N_{\mathrm{dl}}}(r_{\rm dl}), d_{M_{\mathrm{ul}},M}(r_{\mathrm{ul}}),d_{{\mathrm{sum}}(M,N_{\mathrm{dl}},M_{\mathrm{ul}},M)}^{\text{No-CSIT}}(r_{\mathrm{sum}})\}.
\]
where $d_{{\mathrm{sum}}(M,N_{\mathrm{dl}},M_{\mathrm{ul}},M)}^{\text{No-CSIT}}(r_{\mathrm{sum}})$ is given:
if $M_{\mathrm{ul}}\geq 2(N_{\mathrm{dl}}-1)$,
\begin{itemize}
\item when $W\leq \frac{M_{\mathrm{ul}}-N_{\mathrm{dl}}+1}{M+M_{\mathrm{ul}}+N_{\mathrm{dl}}-1}$,
\[\!\!\!\!d_{{\mathrm{sum}}(M,N_{\mathrm{dl}},M_{\mathrm{ul}},M)}^{\text{CSIT}}(r_{\mathrm{sum}})\!\!=\!\!\left\{
  \begin{array}{l l}
    \!\! \alpha_{\mathrm{S}} d_{M_{\mathrm{ul}},N_{\mathrm{dl}}}\left(\frac{r_{\mathrm{sum}}}{W\alpha_{\mathrm{S}}}\right)\!+\!N_{\mathrm{dl}}(M_{\mathrm{ul}}\!+\!M), ~ r_{\mathrm{sum}} \!\leq \!N_{\mathrm{dl}}W\alpha_{\mathrm{S}}\\
     \!\! d_{N_{\mathrm{dl}},M+M_{\mathrm{ul}}}\left(r_{\mathrm{sum}}\!-\!N_{\mathrm{dl}}W\alpha_{\mathrm{S}}\right), ~  N_{\mathrm{dl}}W\alpha_{\mathrm{S}}\!\leq\! r_{\mathrm{sum}}\leq N_{\mathrm{dl}}(1+W\alpha_{\mathrm{S}})
   \end{array} \right.
\]
\item when $W\geq \frac{M_{\mathrm{ul}}+N_{\mathrm{dl}}-1}{M+M_{\mathrm{ul}}-N_{\mathrm{dl}}+1}$,
\[\!\!\!\!d_{{\mathrm{sum}}(M,N_{\mathrm{dl}},M_{\mathrm{ul}},M)}^{\text{No-CSIT}}(r_{\mathrm{sum}})\!\!=\!\!\left\{
  \begin{array}{l l}
\!\!     d_{N_{\mathrm{dl}},M+M_{\mathrm{ul}}}\left(r_{\mathrm{sum}}\right)+M_{\mathrm{ul}}N_{\mathrm{dl}}\alpha_{\mathrm{S}}, ~  r_{\mathrm{sum}}\!\leq\! N_{\mathrm{dl}}\\
    \!\!  \alpha_{\mathrm{S}}d_{M_{\mathrm{ul}},N_{\mathrm{dl}}}\left(\frac{r_{\mathrm{sum}}-N_{\mathrm{dl}}}{W\alpha_{\mathrm{S}}}\right), ~ N_{\mathrm{dl}}\leq r_{\mathrm{sum}}\! \leq \!N_{\mathrm{dl}}(1+W\alpha_{\mathrm{S}})
   \end{array} \right.
\]
\end{itemize}
And if $N_{\mathrm{dl}}\geq M_{\mathrm{ul}}$, and $M_{\rm ul}\leq 2$:
\begin{itemize}
\item when $W\leq\frac{N_{\mathrm{dl}}-M_{\mathrm{ul}}+1}{M+M_{\mathrm{ul}}+N_{\mathrm{dl}}-1}$,
\[\!\!\!\!\!\!\!\!\!\!\!\!\!\!\!d_{{\mathrm{sum}}(M,N_{\mathrm{dl}},M_{\mathrm{ul}},M)}^{\text{No-CSIT}}(r_{\mathrm{sum}})\!\!=\!\!\left\{
  \begin{array}{l l}
\!\!     \alpha_{\mathrm{S}} d_{M_{\mathrm{ul}},N_{\mathrm{dl}}}\left(\frac{r_{\mathrm{sum}}}{W\alpha_{\mathrm{S}}}\right)\!+\!N_{\mathrm{dl}}(M_{\mathrm{ul}}\!+\!M), ~ r_{\mathrm{sum}} \!\leq \!M_{\mathrm{ul}}W\alpha_{\mathrm{S}}\\
  \!\!   d_{M_{\mathrm{ul}},M+N_{\mathrm{dl}}}\left(r_{\mathrm{sum}}\!-\!M_{\mathrm{ul}}W\alpha_{\mathrm{S}}\right)\!+\!M(N_{\mathrm{dl}}\!-\!M_{\mathrm{ul}}), M_{\mathrm{ul}}W\alpha_{\mathrm{S}}\!\leq \!r_{\mathrm{sum}}\!\leq \!M_{\mathrm{ul}}(1\!+\!W\alpha_{\mathrm{S}})\\
  \!\!   d_{N_{\mathrm{dl}}-M_{\mathrm{ul}},M}\left(r_{\mathrm{sum}}\!-\!M_{\mathrm{ul}}(1\!+\!W\alpha_{\mathrm{S}})\right),~ M_{\mathrm{ul}}(1\!+\!W\alpha_{\mathrm{S}})\!\leq\! r_{\mathrm{sum}}\!\leq\! N_{\mathrm{dl}}\!+\!M_{\mathrm{ul}}W\alpha_{\mathrm{S}}
   \end{array} \right.
\]
\item when $W\in\left[\frac{M_{\mathrm{ul}}+N_{\mathrm{dl}}-1}{M+N_{\mathrm{dl}}-M_{\mathrm{ul}}+1},\frac{N_{\mathrm{dl}}-M_{\mathrm{ul}}+1}{M+N_{\rm dl}-M_{\mathrm{ul}}-1}\right]$,
\[\!\!\!\!\!\!\!\!\!\!\!\!\!\!\!d_{{\mathrm{sum}}(M,N_{\mathrm{dl}},M_{\mathrm{ul}},M)}^{\text{No-CSIT}}(r_{\mathrm{sum}})\!\!=\!\!\left\{
  \begin{array}{l l}
  \!\! d_{M_{\mathrm{ul}},M+N_{\mathrm{dl}}}\left(r_{\mathrm{sum}}\right)\!+\!M_{\mathrm{ul}}N_{\mathrm{dl}}\alpha_{\mathrm{S}}\!+\!M(N_{\mathrm{dl}}\!-\!M_{\mathrm{ul}}), ~ r_{\mathrm{sum}}\!\leq\! M_{\mathrm{ul}}\\
   \!\!   \alpha_{\mathrm{S}} d_{M_{\mathrm{ul}},N_{\mathrm{dl}}}\left(\frac{r_{\mathrm{sum}}-M_{\mathrm{ul}}}{W\alpha_{\mathrm{S}}}\right)\!+\!M(N_{\mathrm{dl}}\!-\!M_{\mathrm{ul}}), ~ M_{\mathrm{ul}}\!\leq \!r_{\mathrm{sum}}\! \leq\! M_{\mathrm{ul}}(1\!+\!W\alpha_{\mathrm{S}})\\
 \!\!     d_{N_{\mathrm{dl}}-M_{\mathrm{ul}},M}\left(r_{\mathrm{sum}}\!-\!M_{\mathrm{ul}}(1\!+\!W\alpha_{\mathrm{S}})\right), ~  M_{\mathrm{ul}}(1\!+\!W\alpha_{\mathrm{S}})\!\leq\! r_{\mathrm{sum}}\!\leq \!N_{\mathrm{dl}}\!+\!M_{\mathrm{ul}}W\alpha_{\mathrm{S}}
   \end{array} \right.
\]
\item when $W\geq \frac{M_{\mathrm{ul}}+N_{\mathrm{dl}}-1}{M-N_{\mathrm{dl}}+M_{\rm ul}+1}$,
\[\!\!\!\!\!\!\!\!\!\!\!\!\!\!\!d_{{\mathrm{sum}}(M,N_{\mathrm{dl}},M_{\mathrm{ul}},M)}^{\text{CSIT}}(r_{\mathrm{sum}})=\left\{
  \begin{array}{l l}
   d_{M_{\mathrm{ul}},M+N_{\mathrm{dl}}}\left(r_{\mathrm{sum}}\right)+M_{\mathrm{ul}}N_{\mathrm{dl}}\alpha_{\mathrm{S}}+M(N_{\mathrm{dl}}-M_{\mathrm{ul}}), ~ r_{\mathrm{sum}}\leq M_{\mathrm{ul}}\\
    d_{N_{\mathrm{dl}}-M_{\mathrm{ul}},M}\left(r_{\mathrm{sum}}-M_{\mathrm{ul}})\right)+M_{\mathrm{ul}}N_{\mathrm{dl}}\alpha_{\mathrm{S}}, ~  M_{\mathrm{ul}}\leq r_{\mathrm{sum}}\leq N_{\mathrm{dl}}\\
      \alpha_{\mathrm{S}} d_{M_{\mathrm{ul}},N_{\mathrm{dl}}}\left(\frac{r_{\mathrm{sum}}-N_{\mathrm{dl}}}{W\alpha_{\mathrm{S}}}\right), ~N_{\mathrm{dl}}\leq r_{\mathrm{sum}}\leq N_{\mathrm{dl}}+M_{\mathrm{ul}}W\alpha_{\mathrm{S}}
   \end{array} \right.
\]
\end{itemize}
\end{lemma}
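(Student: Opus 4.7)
The plan is to mirror the derivation of Lemma~\ref{case3} (the with-CSIT companion) and Corollary~\ref{coroex2}, adapted to the no-CSIT outage region in~(\ref{nocsitdmtar}). Since $\mathcal{R}^{\text{No-CSIT}}$ retains only three rate constraints (the $R_{\rm ul}$-alone constraint from~(\ref{nocsitar}) is dropped because the downlink receiver does not care about decoding uplink in isolation), the overall diversity exponent decomposes as the minimum of three pieces. The first two, corresponding to $d_{M,N_{\rm dl}}(r_{\rm dl})$ and $d_{M_{\rm ul},M}(r_{\rm ul})$, follow immediately from Lemma~\ref{lemmacite} applied to the marginal downlink and uplink MIMO channels. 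The remaining and principal task is to evaluate $d^{\text{No-CSIT}}_{\mathrm{sum}(M,N_{\mathrm{dl}},M_{\mathrm{ul}},M)}(r_{\mathrm{sum}})$ by solving the convex program in Lemma~\ref{doss2} under the specialization $M_{\rm dl}=N_{\rm ul}=M$ and $\alpha_{\rm dl}=\alpha_{\rm ul}=\alpha_{\rm I}=1$.

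The first step is the change of variables $\nu_l'=W\nu_l$, which pushes $W$ into the single objective coefficient $\tfrac{1}{W}$ on $\nu_l'$ and removes it from the rate constraint. This puts the three variable families $(\bar\mu,\bar\theta,\bar\nu')$ on a common footing. The marginal cost of driving down each variable is then its coefficient in the objective: $(M+N_{\rm dl}+1-2i)$ for $\mu_i$, $(M+M_{\rm ul}+N_{\rm dl}+1-2k)$ for $\theta_k$, and $\tfrac{1}{W}$ for $\nu_l'$, with a partial offset coming from the coupling reward $(1-\mu_i-\theta_k)^+$; the ordering constraints together with $\mu_i+\theta_k\ge 1$ whenever $i+k\ge N_{\rm dl}+1$ complete the linear program.

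Because the program is convex with linear constraints, any local optimum is global, so gradient descent suffices. Starting from the feasible point $\mu_i=\theta_k=1$, $\nu_l'=W\alpha_{\rm S}$ (where the rate budget is slack) I would decrease variables in order of smallest marginal cost, stopping once $r_{\rm sum}$ is exhausted. As $r_{\rm sum}$ grows the active descent direction switches, which traces out the stated piecewise-linear curve. The threshold values of $W$ in the lemma arise precisely by equating gradients, e.g.\ $\tfrac{1}{W}$ against $(M+M_{\rm ul}+N_{\rm dl}-1)$ for the first-unit comparison between $\nu_1'$ and $\theta_1$. Each breakpoint corresponds to a variable hitting zero or activating a coupling constraint, and the slope on each segment is the sum of currently active marginal costs, reproducing the factors $d_{M_{\rm ul},N_{\rm dl}}$, $d_{N_{\rm dl},M+M_{\rm ul}}$, $d_{M_{\rm ul},M+N_{\rm dl}}$, and $d_{N_{\rm dl}-M_{\rm ul},M}$ that appear in the formula.

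The main obstacle is the coupling constraint $\mu_i+\theta_k\ge 1$ and the positive-part reward $(1-\mu_i-\theta_k)^+$, which entangle the $\bar\mu$ and $\bar\theta$ updates non-separably. This is also why the lemma restricts attention to either $M_{\rm ul}\ge 2(N_{\rm dl}-1)$ or $N_{\rm dl}\ge M_{\rm ul}$ with $M_{\rm ul}\le 2$: under these inequalities the set of active index pairs $(i,k)$ along the descent path collapses to a small, explicitly describable family, so the trajectory admits a closed form. The split into two regimes further reflects that when $M_{\rm ul}\ge N_{\rm dl}$ only $N_{\rm dl}$ interference streams are visible at the downlink receiver, whereas for $N_{\rm dl}\ge M_{\rm ul}$ there is an additional $(N_{\rm dl}-M_{\rm ul})$-dimensional null-space direction, which is precisely what produces the extra third segment in the second regime. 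Assembling $d^{\text{No-CSIT}}_{\mathrm{sum}}$ with $d_{M,N_{\rm dl}}(r_{\rm dl})$ and $d_{M_{\rm ul},M}(r_{\rm ul})$ via the outer minimum completes the proof.
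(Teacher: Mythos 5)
Your overall route is the same as the paper's: the paper proves this lemma by simply repeating, on the no-CSIT program of Lemma~\ref{doss2} specialized to $M_{\rm dl}=N_{\rm ul}=M$ and $\alpha_{\rm dl}=\alpha_{\rm ul}=\alpha_{\rm I}=1$, the gradient-descent computation used for Lemma~\ref{case3} in Appendix~\ref{dos4} (substitution $\nu_l'=W\nu_l$, comparison of the objective's slopes in the variable families, $W$-regime split from those comparisons), with the individual-link exponents coming from Lemma~\ref{lemmacite}. So your skeleton — three-way minimum, then solving the sum-rate linear program by steepest descent — is exactly the intended proof.

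Two execution details in your sketch, however, are stated in a way that would derail the calculation if carried out literally. First, the descent rule: since decreasing any variable by one unit consumes one unit of the outage budget $r_{\rm sum}$ (after the substitution $\nu_l'=W\nu_l$), the minimizing trajectory decreases the variable with the \emph{largest} gradient first (steepest descent, as in the proof of Corollary~\ref{coroex1} and Appendix~\ref{dos4}), not "in order of smallest marginal cost." Second, the gradient of $\nu_l'$ is $\frac{1}{W}\left(M_{\rm ul}+N_{\rm dl}+1-2l\right)$, not $\frac{1}{W}$ (that value is the $m_{\rm I}=1$ special case), and the regime thresholds come from comparing the \emph{weakest} direction of one family against the \emph{strongest} of the next: e.g.\ the side-channel family is exhausted before any $\theta_k$ moves iff $\frac{|M_{\rm ul}-N_{\rm dl}|+1}{W}\geq M+M_{\rm ul}+N_{\rm dl}-1$, which is precisely the stated condition $W\leq\frac{|M_{\rm ul}-N_{\rm dl}|+1}{M+M_{\rm ul}+N_{\rm dl}-1}$; your comparison of $\frac{1}{W}$ against $M+M_{\rm ul}+N_{\rm dl}-1$ would give a different (wrong) threshold. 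Finally, you should make explicit how the coupling terms are disposed of along the trajectory: on the segments where $\theta_k$ or $\nu_l'$ are being reduced one sets $\mu_i=1$ so that $(\alpha_{\rm I}-\mu_i-\theta_k)^+=0$, and once the budget forces $\theta_k=0$ the constraints $\mu_i+\theta_k\geq 1$ for $i+k\geq N_{\rm dl}+1$ pin the remaining $\mu_i$'s, which is what collapses the residual problem to the effective exponents $d_{N_{\rm dl}-M_{\rm ul},M}$ (respectively $d_{N_{\rm dl},M+M_{\rm ul}}$, $d_{M_{\rm ul},M+N_{\rm dl}}$) appearing in the statement; this bookkeeping, done in Appendices~\ref{dos2} and~\ref{dos4}, is the part your plan only gestures at.
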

\begin{proof}
The results can be derived by following the similar steps in the proof of Lemma~\ref{case3}.
\end{proof}

\subsection{Proof of Lemma~\ref{case3}}\label{dos4}
As demonstrated in the proof of Corollary~\ref{coroex1}, we use gradient descent method to find the local optimum value  for each value of the multiplexing gain which is equivalent to global optimum in the convex optimization problem. This method is also used in \cite{DMTMIMOZ} to derive the DMT for MIMO Z-interference channel for some special cases. In our setting of $(M,N_{\mathrm{dl}},M_{\mathrm{ul}},M)$ with $\alpha_{\mathrm{dl}}=\alpha_{\mathrm{ul}}=\alpha_{\mathrm{I}}=1$ and $r_{\mathrm{dl}}=r_{\mathrm{ul}}=r$, we can simplify the objective function in Lemma~\ref{doss} given sum multiplexing gain. We will first give the analysis when $M_{\mathrm{ul}}\geq N_{\mathrm{dl}}$. By substituting $\nu_l^\prime=W\nu_l$ in (\ref{disum}), we can express the objective function as
\begin{gather}
\begin{aligned}
 d^{\text{CSIT}}_{\mathrm{sum}}=&\min \sum_{k=1}^{N_{\mathrm{dl}}}(2M+M_{\mathrm{ul}}+N_{\mathrm{dl}}+1-2k)\theta_k+\frac{1}{W}\sum_{l=1}^{N_{\mathrm{dl}}}(M_{\mathrm{ul}}+N_{\mathrm{dl}}+1-2l)\nu_l^\prime \\ &+\sum_{i=1}^{N_{\mathrm{dl}}}(M+N_{\mathrm{dl}}+1-2i)\mu_i+\sum_{j=1}^{M_{\mathrm{ul}}}(M+M_{\mathrm{ul}}+1-2j)\sigma_j-2MN_{\mathrm{dl}}\\
&+\sum_{i=1}^{N_{\mathrm{dl}}}\sum_{k=1}^{N_{\mathrm{dl}}-i}(1-\mu_i-\theta_k)^++\sum_{j=1}^{M_{\mathrm{ul}}}\sum_{k=1}^{\min\{M_{\mathrm{ul}}-j,N_{\mathrm{dl}}\}}(1-\sigma_j-\theta_k)^+,\\\label{dmt1}
\mathrm{Subject~to}\quad&\sum_{i=1}^{N_{\mathrm{dl}}}(1-\mu_i)^++\sum_{j=1}^{M_{\mathrm{ul}}}(1-\sigma_j)^++\sum_{k=1}^{N_{\mathrm{dl}}}(1-\theta_k)^++\sum_{l=1}^{N_{\mathrm{dl}}}(W\alpha_{\mathrm{S}}-\nu_l^\prime)^+< r_{\mathrm{sum}};\\
&0\leq\mu_{\mathrm{1}}\leq\cdots\leq\mu_{N_{\mathrm{dl}}};~0\leq\sigma_{\mathrm{1}}\leq\cdots\leq\sigma_{M_{\mathrm{ul}}};~0\leq\theta_{\mathrm{1}}\leq\cdots\leq\theta_{N_{\mathrm{1}}};~0\leq\nu_{\mathrm{1}}^\prime\leq\cdots\leq\nu_{N_{\mathrm{dl}}}^\prime;\\
&\mu_i+\theta_k\geq1,~\forall (i+k)\geq N_{\mathrm{dl}}+1;\\
&\sigma_j+\theta_k\geq1,~\forall (j+k)\geq M_{\mathrm{ul}}+1.
\end{aligned}
\end{gather}
Next, we differentiate the objective function in (\ref{dmt1}) with respect to different variables,
\begin{eqnarray}
&&\frac{ \partial  d^{\text{CSIT}}_{\mathrm{sum}}}{\partial \nu_l^\prime }=\frac{1}{W}(M_{\mathrm{ul}}+N_{\mathrm{dl}}+1-2l),~ l\leq N_{\mathrm{dl}};\\
&&\frac{ \partial  d^{\text{CSIT}}_{\mathrm{sum}}}{\partial \theta_k }\bigg|_{\mu_i=\sigma_j=1,\forall i,j}=2M+M_{\mathrm{ul}}+N_{\mathrm{dl}}+1-2k,~ k\leq N_{\mathrm{dl}};\\
&&\frac{ \partial  d^{\text{CSIT}}_{\mathrm{sum}}}{\partial \mu_1 }\bigg|_{\theta_k=1, \forall k}=M+N_{\mathrm{dl}}-1<\frac{ \partial  d^{\text{CSIT}}_{\mathrm{sum}}}{\partial \theta_k },\forall k;\\
&&\frac{ \partial  d^{\text{CSIT}}_{\mathrm{sum}}}{\partial \sigma_1 }\bigg|_{\theta_k=1, \forall k}=M+M_{\mathrm{ul}}-1< \frac{ \partial  d^{\text{CSIT}}_{\mathrm{sum}}}{\partial \theta_k }, \forall k.
\end{eqnarray}
Since the slope of the objective function decreases with the increasing index of $\mu_i,\sigma_j$, it suffices to only consider the decay of the function with $\mu_{\mathrm{1}},\sigma_{\mathrm{1}}$.
We can also easily verify that the decay slopes of $\mu_{\mathrm{1}}$ and $\sigma_{\mathrm{1}}$ are smaller than that of $\theta_k, \forall k$. 
\subsubsection{Case 1}\label{case1}
In this case, $\nu_l^\prime$ has the steepest descent, i.e., $\frac{ \partial  d^{\text{CSIT}}_{\mathrm{sum}}}{\partial \nu_{N_{\rm dl}}^\prime }\geq \frac{ \partial  d^{\text{CSIT}}_{\mathrm{sum}}}{\partial \theta_1 }$.
Thus when $W\leq\frac{M_{\mathrm{ul}}-N_{\mathrm{dl}}+1}{2M+M_{\mathrm{ul}}+N_{\rm dl}-1}$,  for $(l-1)W\alpha_{\mathrm{S}}\leq r_{\mathrm{sum}}\leq lW\alpha_{\mathrm{S}}~,\forall l$, the steepest descent of the objective function is along the decreasing value of $\nu_l^\prime$ with $\mu_i=\sigma_j=\theta_k=1,\forall i,j,k$. Now the optimization problem becomes
\begin{gather}
\begin{aligned}
d^{\text{CSIT}}_{\mathrm{sum}}=&\min \frac{1}{W}\sum_{l=1}^{N_{\mathrm{dl}}}(M_{\mathrm{ul}}+N_{\mathrm{dl}}+1-2l)\nu_l^\prime+M_{\mathrm{ul}}N_{\mathrm{dl}}+M(M_{\mathrm{ul}}+N_{\mathrm{dl}}),\\
\mathrm{Subject~to}\quad&\sum_{l=1}^{N_{\mathrm{dl}}}(W\alpha_{\mathrm{S}}-\nu_l^\prime)^+\leq r_{\mathrm{sum}};\\
&0\leq\nu_{\mathrm{1}}^\prime\cdots\leq\nu_{N_{\mathrm{dl}}}^\prime.\nonumber
\end{aligned}
\end{gather}
Invoking Lemma~\ref{lemmacite}, the solution to the optimization problem above is
\[d_{\mathfrak{B}_{\mathrm{sum}}}^{\text{CSIT}}=\alpha_{\mathrm{S}} d_{M_{\mathrm{ul}},N_{\mathrm{dl}}}\left(\frac{r_{\mathrm{sum}}}{W\alpha_{\mathrm{S}} }\right)+M_{\mathrm{ul}}N_{\mathrm{dl}}+M(M_{\mathrm{ul}}+N_{\mathrm{dl}}),~\forall r_{\mathrm{sum}}\leq N_{\mathrm{dl}}W\alpha_{\mathrm{S}}.\]
If $r_{\mathrm{sum}}\geq N_{\mathrm{dl}}W\alpha_{\mathrm{S}}$, it can be implied from the solution above that $\nu_l^\prime=0,~\forall l$ are in the optimal solution. We can see that now the steepest descent of the objective function in (\ref{dmt1}) is along the decreasing value of $\theta_k$ with $\mu_i=\sigma_j=1,\forall i,j$, and the corresponding optimization function becomes
\begin{gather}
\begin{aligned}
d^{\text{CSIT}}_{\mathrm{sum}}=&\min \sum_{k=1}^{N_{\mathrm{dl}}}(2M+M_{\mathrm{ul}}+N_{\mathrm{dl}}+1-2k)\theta_k+MM_{\mathrm{ul}}-MN_{\mathrm{dl}}\\
\mathrm{Subject~to}\quad&\sum_{k=1}^{N_{\mathrm{dl}}}(1-\theta_k)^+\leq r_{\mathrm{sum}}-N_{\mathrm{dl}}W\alpha_{\mathrm{S}};\\
&0\leq\theta_{\mathrm{1}}\leq\cdots\leq\theta_{N_{\mathrm{dl}}}.
\end{aligned}
\end{gather}
Again, invoking Lemma~\ref{lemmacite}, we have
\[d_{\mathfrak{B}_{\mathrm{sum}}}^{\text{CSIT}}=d_{N_{\mathrm{dl}},2M+M_{\mathrm{ul}}}(r_{\mathrm{sum}}-N_{\mathrm{dl}}W\alpha_{\mathrm{S}})+M(M_{\mathrm{ul}}-N_{\mathrm{dl}}),~ N_{\mathrm{dl}}W\alpha_{\mathrm{S}}\leq r_{\mathrm{sum}}\leq N_{\mathrm{dl}}W\alpha_{\mathrm{S}}+N_{\mathrm{dl}}.\]
Likewise, when $r_s\geq N_{\mathrm{dl}}W\alpha_{\mathrm{S}}+N_{\mathrm{dl}}$, $\theta_k=0~\forall k$, the optimization problem is given as
\begin{gather}
\begin{aligned}
d^{\text{CSIT}}_{\mathrm{sum}}=&\min \sum_{i=1}^{N_{\mathrm{dl}}}(M+N_{\mathrm{dl}}+1-2i)\mu_i+\sum_{j=1}^{M_{\mathrm{ul}}}(M+M_{\mathrm{ul}}+1-2j)\sigma_j\\
&-2MN_{\mathrm{dl}}+\sum_{i=1}^{N_{\mathrm{dl}}}\sum_{k=1}^{N_{\mathrm{dl}}-i}(1-\mu_i)^++\sum_{j=1}^{M_{\mathrm{ul}}}\sum_{k=1}^{\min\{M_{\mathrm{ul}}-j,N_{\mathrm{dl}}\}}(1-\sigma_j)^+;\\
\mathrm{Subject~to}\quad&\sum_{i=1}^{N_{\mathrm{dl}}}(1-\mu_i)^++\sum_{j=1}^{M_{\mathrm{ul}}}(1-\sigma_j)^+\leq r_{\mathrm{sum}}-N_{\mathrm{dl}}W\alpha_{\mathrm{S}}-N_{\mathrm{dl}};\\
&0\leq\mu_{\mathrm{1}}\leq\cdots\leq\mu_{N_{\mathrm{dl}}};~0\leq\sigma_{\mathrm{1}}\leq\cdots\leq\sigma_{M_{\mathrm{ul}}};\\
&\mu_i\geq1,~\forall i+k\geq N_{\mathrm{dl}}+1,\forall k\\
&\sigma_j\geq 1,\forall j+k\geq M_{\mathrm{ul}}+1,\forall k.
\label{obj1}
\end{aligned}
\end{gather}
Apparently, to minimize the objective function above, we should let $\mu_i=1, \forall i$ and $\sigma_j=1, \forall j\geq M_{\mathrm{ul}}-N_{\mathrm{dl}}+1$. Hence the last term in (\ref{obj1}) can be rewritten as
\begin{gather}
\begin{aligned}
\sum_{j=1}^{M_{\mathrm{ul}}}\sum_{k=1}^{\min\{M_{\mathrm{ul}}-j,N_{\mathrm{dl}}\}}(1-\sigma_j)^+&=\sum_{j=1}^{M_{\mathrm{ul}}-N_{\mathrm{dl}}}\min\{M_{\mathrm{ul}}-j,N_{\mathrm{dl}}\}(1-\sigma_j)^+\\
&= \sum_{j=1}^{M_{\mathrm{ul}}-N_{\mathrm{dl}}}N_{\mathrm{dl}}(1-\sigma_j)^+.\nonumber
\end{aligned}
\end{gather}
Combining the results above, the objective function in (\ref{obj1}) reduces to
\begin{gather}
\begin{aligned}
d^{\text{CSIT}}_{\mathrm{sum}}&=\min \sum_{j=1}^{M_{\mathrm{ul}}-N_{\mathrm{dl}}}(M+M_{\mathrm{ul}}+1-2j)\sigma_j+N_{\rm dl}(N_{\mathrm{dl}}-M_{\rm ul})+\sum_{j=1}^{M_{\mathrm{ul}}-N_{\mathrm{dl}}}N_{\mathrm{dl}}(1-\sigma_j)^+\\
&=\sum_{j=1}^{M_{\mathrm{ul}}-N_{\mathrm{dl}}}(M+M_{\mathrm{ul}}-N_{\mathrm{dl}}+1-2j)\sigma_j\\
\mathrm{Subject~to}\quad&\sum_{j=1}^{M_{\mathrm{ul}}-N_{\mathrm{dl}}}(1-\sigma_j)^+\leq r_{\mathrm{sum}}-N_{\mathrm{dl}}W\alpha_{\mathrm{S}}-N_{\mathrm{dl}},\\
&0\leq\sigma_{\mathrm{1}}\leq\cdots\leq\sigma_{M_{\mathrm{ul}}-N_{\mathrm{dl}}}
\end{aligned}
\end{gather}
Thus the optimization problem above has the following solution 
\[d_{\mathfrak{B}_{\mathrm{sum}}}^{\text{CSIT}}= d_{M_{\mathrm{ul}}-N_{\mathrm{dl}},M}\left(r_{\mathrm{sum}}-N_{\mathrm{dl}}(W\alpha_{\mathrm{S}}+1)\right),~N_{\mathrm{dl}}(W\alpha_{\mathrm{S}}+1)\leq r_{\mathrm{sum}}\leq N_{\mathrm{dl}}W\alpha_{\mathrm{S}}+M_{\mathrm{ul}}.\]
\subsubsection{Case 2}
In this case, $\theta_k$ has the steepest descent, i.e., $\frac{ \partial  d^{\text{CSIT}}_{\mathrm{sum}}}{\partial \theta_{N_{\rm dl}}}\geq \frac{ \partial  d^{\text{CSIT}}_{\mathrm{sum}}}{\partial \nu_1^\prime }$.
Thus when $W\geq \frac{M_{\mathrm{ul}}+N_{\mathrm{dl}}-1}{2M+M_{\mathrm{ul}}-N_{\rm dl}+1}$, for $k-1\leq r_{\mathrm{sum}}\leq k$, the objective function in (\ref{dmt1}) decays fastest first along the decreasing values of $\theta_k$ with $\mu_i=\sigma_j=1,\nu_{l}^\prime=W\alpha_{\mathrm{S}},\forall i,j,l$.  The optimization problem becomes
\begin{gather}
\begin{aligned}
d^{\text{CSIT}}_{\mathrm{sum}}=&\min \sum_{k=1}^{N_{\mathrm{dl}}}(2M+M_{\mathrm{ul}}+N_{\mathrm{dl}}+1-2k)\theta_k+M_{\mathrm{ul}}N_{\mathrm{dl}}\alpha_{\rm S}+M(M_{\mathrm{ul}}-N_{\mathrm{dl}}),\\
\mathrm{Subject~to}\quad&\sum_{k=1}^{N_{\mathrm{dl}}}(1-\theta_k)^+\leq r_{\mathrm{sum}},\\
&0\leq\theta_{\mathrm{1}}\leq\cdots\leq\theta_{N_{\mathrm{dl}}}.
\end{aligned}
\end{gather}
Invoking Lemma~\ref{lemmacite}, the solution to the  optimization problem above is
\[d_{\mathfrak{B}_{\mathrm{sum}}}^{\text{CSIT}}=d_{N_{\mathrm{dl}},2M+M_{\mathrm{ul}}}(r_{\mathrm{sum}})+M_{\mathrm{ul}}N_{\mathrm{dl}}\alpha_{\rm S}+M(M_{\mathrm{ul}}-N_{\mathrm{dl}}),~\forall r_{\mathrm{sum}}\leq N_{\mathrm{dl}}.\]

If $r_{\mathrm{sum}}\geq N_{\mathrm{dl}}$, the optimal solution has $\theta_k=0~\forall k$. We rewrite the objective function as
\begin{gather}
\begin{aligned}
 d^{\text{CSIT}}_{\mathrm{sum}}=&\min \sum_{i=1}^{N_{\mathrm{dl}}}(M+N_{\mathrm{dl}}+1-2i)\mu_i+\sum_{j=1}^{M_{\mathrm{ul}}}(M+M_{\mathrm{ul}}+1-2j)\sigma_j-2MN_{\mathrm{dl}}\\
&+\frac{1}{W}\sum_{l=1}^{N_{\mathrm{dl}}}(M_{\mathrm{ul}}+N_{\mathrm{dl}}+1-2l)\nu_l^\prime +\sum_{i=1}^{N_{\mathrm{dl}}}\sum_{k=1}^{N_{\mathrm{dl}}-i}(1-\mu_i)^++\sum_{j=1}^{M_{\mathrm{ul}}}\sum_{k=1}^{\min\{M_{\mathrm{ul}}-j,N_{\mathrm{dl}}\}}(1-\sigma_j)^+,\\\label{obj2}
\mathrm{Subject~to}\quad&\sum_{i=1}^{N_{\mathrm{dl}}}(1-\mu_i)^++\sum_{j=1}^{M_{\mathrm{ul}}}(1-\sigma_j)^++\sum_{l=1}^{N_{\mathrm{dl}}}(W\alpha_{\mathrm{S}}-\nu_l^\prime)^+\leq r_{\mathrm{sum}}-N_{\rm dl},\\
&0\leq\mu_{\mathrm{1}}\leq\cdots\leq\mu_{N_{\mathrm{dl}}};~0\leq\sigma_{\mathrm{1}}\leq\cdots\leq\sigma_{M_{\mathrm{ul}}};~0\leq\nu_{\mathrm{1}}^\prime\leq\cdots\leq\nu_{N_{\mathrm{dl}}}^\prime,\\
&\mu_i\geq1,~\forall (i+k)\geq N_{\mathrm{dl}}+1,\forall k,\\
&\sigma_j\geq1,~\forall (j+k)\geq M_{\mathrm{ul}}+1,\forall k.
\end{aligned}
\end{gather}
Again, in order to minimize the objective function above, it is clearly that $\mu_i=1, \forall i$ and $\sigma_j=1, \forall j\geq M_{\mathrm{ul}}-N_{\mathrm{dl}}+1$. Hence the objective function in (\ref{obj2}) reduces to
\begin{gather}
\begin{aligned}
d^{\text{CSIT}}_{\mathrm{sum}}&=\min \sum_{j=1}^{M_{\mathrm{ul}}-N_{\mathrm{dl}}}(M+M_{\mathrm{ul}}-N_{\mathrm{dl}}+1-2j)\sigma_j+\frac{1}{W}\sum_{l=1}^{N_{\mathrm{dl}}}(M_{\mathrm{ul}}+N_{\mathrm{dl}}+1-2l)\nu_l^\prime\\
\mathrm{Subject~to}\quad&\sum_{j=1}^{M_{\mathrm{ul}}-N_{\mathrm{dl}}}(1-\sigma_j)^++\sum_{l=1}^{N_{\mathrm{dl}}}(W\alpha_{\mathrm{S}}-\nu_l^\prime)^+\leq r_{\mathrm{sum}}-N_{\rm dl},\\ \label{obj3}
&0\leq\sigma_{\mathrm{1}}\leq\cdots\leq\sigma_{M_{\mathrm{ul}}-N_{\mathrm{dl}}},~0\leq\nu_{\mathrm{1}}^\prime\leq\cdots\leq\nu_{N_{\mathrm{dl}}}^\prime.
\end{aligned}
\end{gather}
Now we have two subcases for the optimization problem in (\ref{obj3}) when $r_{\mathrm{sum}}\geq N_{\mathrm{dl}}$. 

\emph{Subcase A:} Let $\nu_l^\prime$ have steeper descent than $\sigma_1$, i.e., $\frac{ \partial  d^{\text{CSIT}}_{\mathrm{sum}}}{\partial \nu_{N_{\rm dl}}^\prime }\geq \frac{ \partial  d^{\text{CSIT}}_{\mathrm{sum}}}{\partial \sigma_1 }$. Thus when $W\leq \frac{M_{\rm ul}-N_{\rm dl}+1}{M+M_{\rm ul}-N_{\rm dl}-1}$,
the steepest descent of the objective function in (\ref{obj3}) is along the decreasing value of $\nu_l^\prime$ with $\sigma_j=1, \forall j.$
Thus the solution to the optimization problem above is
\[d_{\mathfrak{B}_{\mathrm{sum}}}^{\text{CSIT}}=\alpha_{\mathrm{S}} d_{M_{\mathrm{ul}},N_{\mathrm{dl}}}\left(\frac{r_{\mathrm{sum}}-N_{\rm dl}}{W\alpha_{\mathrm{S}} }\right)+M(M_{\mathrm{ul}}-N_{\mathrm{dl}}),~N_{\rm dl}\leq r_{\mathrm{sum}}\leq N_{\mathrm{dl}}(1+W\alpha_{\mathrm{S}}).\]
It is obvious that when $r\geq N_{\mathrm{dl}}(1+W\alpha_{\mathrm{S}})$, $\nu_l^\prime=0, \forall l$. We can further simplify the optimization problem in~(\ref{obj3}) as
\begin{gather}
\begin{aligned}
d^{\text{CSIT}}_{\mathrm{sum}}&=\min\sum_{j=1}^{M_{\mathrm{ul}}-N_{\mathrm{dl}}}(M+M_{\mathrm{ul}}-N_{\mathrm{dl}}+1-2j)\sigma_j\\
\mathrm{Subject~to}\quad&\sum_{j=1}^{M_{\mathrm{ul}}-N_{\mathrm{dl}}}(1-\sigma_j)^+\leq r_{\mathrm{sum}}-N_{\mathrm{dl}}(1+W\alpha_{\mathrm{S}}),\\
&0\leq\sigma_{\mathrm{1}}\leq\cdots\leq\sigma_{M_{\mathrm{ul}}-N_{\mathrm{dl}}}
\end{aligned}
\end{gather}
Hence the solution to the optimization problem above is
\[d_{\mathfrak{B}_{\mathrm{sum}}}^{\text{CSIT}}= d_{M_{\mathrm{ul}}-N_{\mathrm{dl}},M}\left(r_{\mathrm{sum}}-N_{\mathrm{dl}}(W\alpha_{\mathrm{S}}+1)\right),~N_{\mathrm{dl}}(W\alpha_{\mathrm{S}}+1)\leq r_{\mathrm{sum}}\leq N_{\mathrm{dl}}W\alpha_{\mathrm{S}}+M_{\mathrm{ul}}.\]

\emph{Subcase B:} Let $\sigma_j$ have steeper descent than $\nu_1^\prime$, i.e., $ \frac{ \partial  d^{\text{CSIT}}_{\mathrm{sum}}}{\partial \sigma_{M_{\rm ul}-N_{\rm dl}} }\geq \frac{ \partial  d^{\text{CSIT}}_{\mathrm{sum}}}{\partial \nu_1^\prime }$. Thus when $W\geq \frac{M_{\rm ul}+N_{\rm dl}-1}{M-M_{\rm ul}+N_{\rm dl}+1}$,
the steepest descent of the objective function in (\ref{obj3}) is along the decreasing value of $\sigma_j$ with $\nu_l^\prime=W\alpha_{\rm S}, \forall l.$  
Now the solution is given as
\[d_{\mathfrak{B}_{\mathrm{sum}}}^{\text{CSIT}}= d_{M_{\mathrm{ul}}-N_{\mathrm{dl}},M}\left(r_{\mathrm{sum}}-N_{\mathrm{dl}}\right)+M_{\rm ul}N_{\mathrm{dl}}\alpha_{\mathrm{S}},~N_{\mathrm{dl}}\leq r_{\mathrm{sum}}\leq M_{\mathrm{ul}}.\]
The result above implies that when $r\geq M_{\rm ul}$, $\sigma_j=0, \forall j$, hence the optimization problem in~(\ref{obj3}) further reduces to 
\begin{gather}
\begin{aligned}
d^{\text{CSIT}}_{\mathrm{sum}}=&\min \frac{1}{W}\sum_{l=1}^{N_{\mathrm{dl}}}(M_{\mathrm{ul}}+N_{\mathrm{dl}}+1-2l)\nu_l^\prime\\
\mathrm{Subject~to}\quad&\sum_{l=1}^{N_{\mathrm{dl}}}(W\alpha_{\mathrm{S}}-\nu_l^\prime)^+\leq r_{\mathrm{sum}}-M_{\rm ul},\\
&0\leq\nu_{\mathrm{1}}^\prime\cdots\leq\nu_{N_{\mathrm{dl}}}^\prime.\nonumber
\end{aligned}
\end{gather}
Consequently, we have
\[d_{\mathfrak{B}_{\mathrm{sum}}}^{\text{CSIT}}=\alpha_{\mathrm{S}} d_{M_{\mathrm{ul}},N_{\mathrm{dl}}}\left(\frac{r_{\mathrm{sum}}-M_{\rm ul}}{W\alpha_{\mathrm{S}} }\right),~M_{\rm ul}\leq r_{\mathrm{sum}}\leq M_{\rm ul}+N_{\mathrm{dl}}W\alpha_{\mathrm{S}}.\]

The proof will be complete with the analysis for $N_{\mathrm{dl}}> M_{\mathrm{ul}}$, which can be derived following the same argument and thus is skipped to avoid redundancy. By combining all the cases above, we will obtain the results in Lemma~\ref{case3}.

\bibliographystyle{IEEEtran}
\scriptsize
\bibliography{Reference}
\end{document}